    \newtheorem{theorem}{Theorem}
    \newtheorem{result}{Result}
    \def\01{\{0,1\}}
    \newcommand{\eps}{\varepsilon}
    \newcommand{\Tr}{\mbox{\rm Tr}}
    \newcommand{\sym}[1]{\mathsf{Sym}_{#1}}
    \newcommand{\mat}[1]{\mathsf{Mat}_{#1}}
    \newcommand{\diag}{\mbox{\rm diag}}
    \newcommand{\norm}[1]{\mbox{$\|{#1}\|$}}
    \DeclareMathOperator*{\E}{\mathbb{E}}
    \newcommand{\bx}{\boldsymbol{x}}
    \newcommand{\bg}{\boldsymbol{g}}
    \newcommand{\by}{\boldsymbol{y}}
    \newcommand{\bz}{\boldsymbol{z}}
    \newcommand{\bh}{\boldsymbol{h}}
    \newcommand{\bs}{\boldsymbol{s}}
    \newcommand{\bu}{\boldsymbol{u}}
    \newcommand{\bb}{\boldsymbol{b}}
    \def\X{\mathcal{X}}
    \def\H{\mathcal{H}}
    \newcommand{\Exp}{\mathbb{E}}
    \newcommand{\PRG}{\ensuremath{\mathsf{PRG}}}
    \newcommand{\PTF}{\ensuremath{\mathsf{PTF}}}
    \newcommand{\U}{\ensuremath{\mathcal{U}}}
    \newcommand{\G}{\ensuremath{\mathcal{G}}}
    \newcommand{\Hi}{\ensuremath{\mathcal{H}}}
    \newcommand{\calE}{\ensuremath{\mathcal{E}}}
    \newcommand{\De}{\ensuremath{\mathcal{D}}}
    \newcommand{\R}{\ensuremath{\mathbb{R}}}
    \newcommand{\Oh}{\ensuremath{\mathcal{O}}}
    \newcommand{\id}{\ensuremath{\mathbb{I}}}
    \newcommand{\ophi}{\ensuremath{\overline{g}}}
    \DeclareMathOperator{\poly}{poly}
    \newcommand{\Fe}{\ensuremath{\mathcal{F}}}
    \newtheorem{definition}[theorem]{Definition}
    \newtheorem{fact}[theorem]{Fact}
    \newtheorem{lemma}[theorem]{Lemma}
    \newtheorem{proposition}[theorem]{Proposition}
    \newtheorem{corollary}[theorem]{Corollary}
    \newtheorem{claim}[theorem]{Claim}
    \newcommand{\pmset}[1]{\{-1,1\}^{#1}} % hypercube in +-1 basis
    \def\01{\{0,1\}}
    \newcommand{\NS}{\mathsf{NS}}
    \newcommand{\GSA}{\mathsf{GSA}}
    \newcommand{\PSD}{\mathsf{PSD}}
    \newcommand{\NSD}{\mathsf{NSD}}
    \newcommand{\AS}{\mathsf{AS}}
    \newcommand{\vol}{\mathsf{vol}}
    \DeclareMathOperator{\sign}{sign}
    \DeclareMathOperator{\out}{out}
    \newcommand {\minusspace} {\: \! \!}
    \newcommand{\reals}{{\mathbb R}}
    \newcommand {\br} [1] {\ensuremath{ \left( #1 \right) }}
    \newcommand {\normsub} [2] {\ensuremath{ \norm{#1}_{#2} }}
    \newcommand {\onenorm} [1] {\normsub{#1}{1}}
    \newcommand {\twonorm} [1] {\normsub{#1}{2}}
    \newcommand {\abs} [1] {\ensuremath{ \left| #1 \right| }}
    \newcommand {\expec} [1] {\ensuremath{\mathbb{E}\left[#1\right]}}
    \newcommand {\set} [1] {\ensuremath{ \left\lbrace #1 \right\rbrace }}
    \newcommand {\prob} [1] {\Fn{\mathrm{Pr}}{#1}}
    \newcommand {\Fn} [2] {\ensuremath{ #1 \minusspace \Br{ #2 } }}
    \newcommand {\Br} [1] {\ensuremath{ \left[ #1 \right] }}
    \newcommand{\braket}[1] {\ensuremath {\langle #1 \rangle}}
\begin{document}
    	
    	\title{Positive spectrahedra: \\ Invariance principles and Pseudorandom generators}

    \author{
    Srinivasan Arunachalam\\[2mm]
     IBM Quantum.\\
     \small IBM T.J. Watson Research Center\\ \small Yorktown Heights, USA\\
    \small \texttt{Srinivasan.Arunachalam@ibm.com}
    \and
    Penghui Yao\\[2mm]
     State Key Laboratory for\\ Novel
    Software Technology,\\ \small Nanjing
    University\\
    \small \texttt{pyao@nju.edu.cn}
    }

    \date{\today}

    	\maketitle
    \begin{abstract}
     In a recent work, O'Donnell, Servedio and Tan (STOC 2019) gave explicit pseudorandom generators ($\PRG$s) for arbitrary $m$-facet polytopes in $n$ variables with seed length poly-logarithmic in $m,n$, concluding a sequence of works in the last decade, that was started by Diakonikolas,  Gopalan,  Jaiswal,  Servedio, Viola (SICOMP 2010) and Meka, Zuckerman (SICOMP 2013) for fooling linear and polynomial threshold functions, respectively. In this work, we consider a natural extension of  $\PRG$s for intersections of positive spectrahedra. A positive spectrahedron  is a Boolean function $f(x)=[x_1A^1+\cdots +x_nA^n \preceq B]$ where the $A^i$s are $k\times k$ positive semidefinite matrices. We construct explicit $\PRG$s  that $\delta$-fool  ``regular" width-$M$ positive spectrahedra (i.e., when none of the $A^i$s are dominant) over the Boolean space  with seed length~$\poly(\log k,\log n, M, 1/\delta)$.
    \vspace{1.5mm}

    Our main technical contributions are the following: We first prove an invariance principle for positive spectrahedra via the well-known Lindeberg method. As far as we are aware such a  generalization of the Lindeberg method was unknown. Second, we prove an upper bound on noise sensitivity and a Littlewood-Offord theorem for positive spectrahedra. Using these results, we give applications for constructing $\PRG$s for positive spectrahedra, learning theory, discrepancy sets for positive spectrahedra (over the Boolean cube) and $\PRG$s for  intersections of structured polynomial threshold~functions.
    \end{abstract}

    \pagenumbering{gobble}

    \newpage
     \setcounter{tocdepth}{2}
    \renewcommand{\baselinestretch}{0.95}\normalsize
    \tableofcontents
    \renewcommand{\baselinestretch}{1.0}\normalsize
    \pagenumbering{gobble}
    \clearpage

    \pagenumbering{arabic}
    \setcounter{page}{1}

    \section{Introduction}
    Constructing explicit pseudorandom generators $(\PRG)$ for a class of interesting Boolean functions has received tremendous attention in the last few decades.  One particular class of functions that has seen a flurry of works is the class of halfspaces. A \emph{halfspace} is a Boolean function $f:\pmset{n}\rightarrow \set{0,1}$ that can be expressed as $f(x)=\sign(a_1x_1+\cdots+a_nx_n-b)$ for some real  values $a_1,\ldots,a_n,b\in~\R$. Halfspaces arise naturally in many  areas of theoretical computer science including
    machine learning, communication complexity, circuit complexity and pseudorandomness.  A successful line of work~\cite{servedio2006every,diakonikolas2010bounding,meka2013pseudorandom,kothari2015almost, gopalan2018pseudorandomness} resulted in $\PRG$s that $\varepsilon$-fool halfspaces with seed length poly-logarithmic in $(n/\varepsilon)$ over the Boolean space.

    Given the success in designing $\PRG$s for single halfspaces (or linear threshold function), two alternate lines of work received a lot of attention, \emph{polynomial threshold functions} and \emph{intersections} of halfspaces. A degree-$d$ polynomial threshold function ($\PTF$) is simply a function $f(x)=\sign(p(x))$ where $p$ is a degree-$d$ polynomial. In this direction, there have been a sequence of works~\cite{diakonikolas2010bounded,diakonikolas2010bounding,kane2010k,kane2011Gaussian,DBLP:conf/coco/Kane11,DBLP:conf/focs/Kane11,kane2014pseudorandom,o2020fooling} that produced $\PRG$s with seed length exponential in $d$ over the Boolean space and quasi-polynomial in $d$ over the Gaussian space. Alternatively, another line of work considered \emph{intersections} of halfspaces (i.e., a polytope).  In this direction, a sequence of works~\cite{gopalan2010fooling,harsha2013invariance,servedio2017fooling,chattopadhyay2019simple,o2019fooling} produced a $\PRG$ for $m$-facet polytopes in $n$ variables with seed length poly-logarithmic in $m,n$.
    %\footnote{We remark that there is still room for improvement in the seed length of the $\PRG$ in~\cite{o2019fooling}.}

    In this work, we  initiate the construction of $\PRG$s for spectrahedra: a natural generalization of halfspaces, polytopes and $\PTF$s in one framework.
    A \emph{spectrahedron} $S\subseteq\R^n$ is a feasible region of a \emph{semidefinite program}. Namely,
    $$
    S=\set{x\in\R^n:\sum_ix_iA^i\preceq B}
    $$
    for some $k\times k$ symmetric matrices $A^1,\ldots, A^n, B$, where $\preceq$ is the standard L\"owner ordering.\footnote{In this ordering, we say $A\preceq B$ if $B-A$ is positive semidefinite, i.e., all the eigenvalues of $B-A$ are~non-negative.} We say~$S$ is a \emph{positive spectrahedron} if either all $A^i$s are positive semidefinite  ($\PSD$) or all $A^i$s are negative semidefinite. spectrahedra are important basic objects in polynomial optimization and algebraic geometry~\cite{doi:10.1137/1.9781611972290,scheiderer2018spectrahedral}. Mathematically, spectrahedra have rich and complicated structures and include well-known geometric objects like polytopes, cylinders, polyhedrons, elliptopes.  Computationally, semidefinite programming has found many applications in theoretical computer science in the field of optimization~\cite{arora2007combinatorial}, approximation theory~\cite{DBLP:journals/jacm/GoemansW95,gartner2012approximation}, algorithms~\cite{arora2005fast,10.1145/3357713.3384338}, SoS hierarchy~\cite{barak2019nearly}, extension complexity~\cite{fiorini2015exponential,lee2015lower}. The class of semidefinite programs that consists of only $\PSD$ matrices is an important class of SDPs, termed as {\em positive semidefinite programs}, which has been used to characterize various quantum interactive proof systems~\cite{ 5438601,10.1145/2049697.2049704,GXU:2013}. Their   computational complexity has also received a lot of attention in the past decade~\cite{6108207,10.1145/2312005.2312026,doi:10.1137/1.9781611974331.ch127,10.1145/3357713.3384338}.
    But in several ways, our understanding of spectrahedra is at an early stage and seriously lags behind our understanding of polytopes.  Many basic geometric properties of spetrahedrons, such as dimensions, numbers of connected components, matrix ranks~\cite{cynthia} are not well understood, even basic properties such as proving the membership of spetrahedrons for some geometric objects is highly non-trivial~\cite{Nie2008}.

    Our main result in this work is $\PRG$s for regular positive spectrahedra with seed length poly-logarithmic in $n$ and $k$, which we define in Section~\ref{sec:thisworkoverview}. Before stating our main  results, we briefly discuss the techniques developed by prior works to construct $\PRG$s for polytopes before discussing the challenges we need to~handle here.

    \subsection{Prior work and conceptual challenges}
    \subsubsection{Prior work}
    One of the earliest works that considered fooling  threshold functions was by Meka-Zuckerman~\cite{meka2013pseudorandom} and~\cite{diakonikolas2010bounded}. A powerful technique that Meka-Zuckerman introduced was a general recipe to construct $\PRG$s for functions $f$ via \emph{invariance principles}. Roughly speaking, an invariance principle for a function $f:\pmset{n}\rightarrow \01$ states that, the expected value of $f(\U^n)$ (where the input is uniformly random in $\pmset{n}$) is close to the expected value of~$f(\G^n)$ (where the input is a standard $\G^n=\mathcal{N}(0,1)^n$ Gaussian). Invariance theorems are generalizations of the classic Berry-Esseen central limit theorem, proven using the well-known Lindeberg method~\cite{lindeberg}. The versatile framework of~\cite{meka2013pseudorandom} allows one to use invariance principles along with a few more ingredients to construct $\PRG$s, so the technical challenge is in establishing invariance principles.
    %(we discuss further in the next~section).

    Using this framework, Harsha, Klivans and Meka~\cite{harsha2013invariance} proved  an \emph{invariance principle for regular polytopes} (i.e., when the coefficients in (all) the halfspaces are ``regular"). The main novelty in their work was  the \emph{poly-logarithmic} (in the input parameters) error dependence. In order to prove this, they first proved a general invariance principle for smooth functions (over polytopes).  Subsequently they instantiate their invariance principle for the so-called {\em Bentkus mollifier}~\cite{bentkus1990smooth},\footnote{The Bentkus \emph{mollifier} is a function which provides a ``smooth" continuous approximation to the the discrete multivariate indicator function (also referred to as \emph{orthant functions}). We discuss this further below.} crucially relying on the fact that the mollifier has derivatives that scale poly-logarithmic in the input size. Finally in order to go from invariance principles (for the mollifier)  to fooling regular polytopes, they need to prove an \emph{anti-concentration of polytopes} in the Gaussian space. For this, they use (as a black-box) a well-known result of Nazarov~\cite{nazarov2003maximal,klivans2008learning}, which bounds the \emph{Gaussian surface area} $(\GSA)$ of polytopes. Putting together the invariance principle for smooth functions, Bentkus mollifier and Nazarov's bound on $\GSA$,~\cite{harsha2013invariance} obtained their main results for regular polytopes. We discuss this proof idea in more detail in Section~\ref{sec:introhkm}.

    Subsequently, Servedio and Tan~\cite{servedio2017fooling} improved the results of~\cite{harsha2013invariance} by considering ``low-weight" polytopes, which removes the regularity condition (albeit, with the seed length of the $\PRG$ in~\cite{servedio2017fooling} depending on the weight). Finally, O'Donnell, Servedio and Tan~\cite{o2019fooling} showed how to fool arbitrary polytopes. In~\cite{o2019fooling} they bypass the entire Gaussian  space (in fact it is a \emph{necessity} to avoid this Gaussian space since standard invariance principles do not hold for non-regular polytopes) and proved a ``Boolean-invariance principle" for the Bentkus mollifier. Although they bypass the Gaussian intermediate (which is standard in invariance principles), their proof techniques still use the Lindeberg method. Additionally, a crucial tool introduced by them was the Boolean anti-concentration of polytopes, since they can no longer use the $\GSA$ bound of Nazarov which used by~\cite{harsha2013invariance,servedio2017fooling,chattopadhyay2019simple} for \emph{Gaussian} anti-concentration.

    \subsubsection{PRGs for spectrahedra: Conceptual challenges}
    There are two straightforward approaches to constructing $\PRG$s for positive spectrahedra. The first is to write a spectrahedron as a linear program. Naturally one can approximate a positive-semidefinite constraint $X\succeq 0$ of a $k\times k$ symmetric matrix with exponentially many constraints $z^T \hspace{0.5mm}X\hspace{0.5mm} z\geq 0$ for $z\in~\R^k$. However the results of~\cite{harsha2013invariance,o2019fooling} would be moot here since the seed-lengths of their $\PRG$s are poly-logarithmic in the number of constraints, which is polynomial in the dimension $k$, while our goal it to have a seed length \emph{poly-logarithmic} in $k$.  The second approach is to use  Sylvester’s criterion to write out $k$ polynomials of degree at most $k$ (corresponding to the $k$ determinantal representation of the $k$ minors) and one could use $\PRG$s for polynomial threshold functions ($\PTF$).    However, finding optimal $\PRG$s for $\PTF$s has remains open and the best-known $\PRG$s we have for degree-$k$  $\PTF$s over the Boolean space depends \emph{exponentially} in $k$~\cite{meka2013pseudorandom}.

    This naturally motivates us to use the ``eigenstructure" of $X\succeq 0$ crucially in understanding spectrahedra.  The next line of approach is to use the existing invariance-principle framework of~\cite{meka2013pseudorandom}  which we overviewed in the previous section, but this opens up a few challenges:
    \begin{enumerate}

        \item \textbf{Invariance principles:} Since a spectrahedron naturally deals with eigenvalues of matrices, it is unclear if we could use known invariance principles for spectrahedra. In fact, we are not even aware of a generalization of the Lindeberg-type argument to show an invariance principle for \emph{spectral functions} (i.e., functions that act on the eigenspectra of~matrices).

        \item \textbf{Geometric properties:} Prior works of~\cite{klivans2008learning,harsha2013invariance,servedio2017fooling,chattopadhyay2019simple} crucially used the work of Nazarov~\cite{nazarov2003maximal} which bounds the Gaussian surface area of polytopes in order to prove their anti-concentration. However, spectrahedra are very poorly understood, and even more basic questions about their average sensitivity, noise sensitivity, surface area are~unknown.

        \item \textbf{Anti-concentration:} An important  technique for constructing $\PRG$s using invariance principles requires one to prove \emph{anti-concentration}, i.e., when moving from the smooth mollifiers to the orthant functions a crucial ingredient is anti-concentration. It is far from clear if spectrahedra enjoy such nice properties in either Boolean spaces or Gaussian spaces.
    \end{enumerate}
    As far as we are aware, none of these questions have been considered for any class of spectrahedra except polytopes. Our main contribution is to make significant progress in all these questions for the class of  positive spectrahedra.

    \subsection{Our main result}
    \label{sec:thisworkoverview}
    In order to state our main result we first define $\PRG$s and $(\tau,M)$-regular spectrahedra. A \emph{pseudorandom generator} is a function $G: \pmset{r}\rightarrow \pmset{n}$ and is said to $\varepsilon$-fool  a class of functions $\Fe\subseteq \{f:\pmset{n}\rightarrow \01\}$ with \emph{seed length} $r$ if it satisfies the following: for every $f\in \Fe$, we~have
    $$
    \abs{\Pr_{\bx\sim \U_n}[f(\bx)=1]-\Pr_{\by\sim \U_r}[f\br{G(\by)}=1]}\leq \varepsilon,
    $$
    where $\U_n$ (resp.~$\U_r$) corresponds to uniform distribution over $\pmset{n}$ (resp.~$\pmset{r}$). We next define the class of regular positive spectrahedra. Given $\tau, M>0$, we say a sequence of $k\times k$ positive semidefinite matrices $\br{A^1,\ldots, A^n}$ is \emph{$\br{\tau, M}$-regular} if
      \begin{equation}\label{eqn:regularity}
      \id\preceq\sum_{i=1}^n \br{A^i}^2\preceq M\cdot\id \hspace{2mm}~\mbox{and}~\hspace{2mm}  A^i\preceq\tau\cdot\id \text{ for every } i\in [n].
    \end{equation}
    This regularity assumption is a very natural assumption, it says that the \emph{width} of a semidefinite program defined by these matrices is bounded.  We remark that our regularity condition naturally extends (and is in fact \emph{less} restrictive) the regularity condition that was used in prior works on fooling halfspaces and polytopes~\cite{gopalan2010fooling,diakonikolas2010bounded,meka2013pseudorandom,harsha2013invariance}. In Section~\ref{intro:booleananti} we discuss more about why this notion of regularity is necessary and sufficient for our proof techniques.

    A \emph{spectrahedron} $S\subseteq\reals^n$ is a feasible region  of the convex set  $S=\set{x\in\R^n:\sum_ix_iA^i\preceq B}$.\footnote{For simplicity in exposition, we assume here that $\|B\|\leq  M$ (our main theorems depend on the norm of~$B$).} We say $S$ is a \emph{positive spectreheron} if either all $A^i$s are positive semidefinite ($\PSD)$ or all $A^i$s are negative semidefinite. We say $S$ is a \emph{$(\tau,M)$-regular positive spectrahedron} if $(A^1,\ldots,A^n)$ are $(\tau,M)$ regular. It is also natural to consider an \emph{intersection} of positive spectrahedra $S_1,\ldots,S_t$. However, without loss of generality one can assume that  $t=2$ since one can ``pack" all the $S_i$s with $\PSD$ matrices into a larger block diagonal matrix with dimension $t\cdot k$ and similarly all the negative semidefinite matrices, so we can always assume we are working with an intersection of two positive spectrahedra.\footnote{Crucially we remark that the seed length of our $\PRG$ has dependence only logarithmic in $k$, so even with an intersection of $t$ positive spectrahedra, the dependence would be logarithmic in $t$ as well.}
    For simplicity, in the introduction we assume that we are working with a single regular positive spectrahedron here and state our main theorem.
    \begin{result} [PRG for positive spectrahedra]
    \label{thm:informalprg}
    There exists a $\PRG$ $G:\01^r\rightarrow \pmset{n}$~with seed~length
    $$
    r=O(\log n\cdot \log k\cdot M\cdot 1/\delta)
    $$
    that $\delta$-fools $(\tau,M)$-regular positive spectrahedra for $\tau\leq \poly(\delta/(M\cdot \log k))$.
    \end{result}
    Typically,  handling the ``regular case" is the first step towards  obtaining optimal results in pseudorandom generators for geometric objects and we have accomplished that here for the first time. To prove this theorem, we follow the well-known three-step approach and prove the following:
    \begin{enumerate}
        \item An invariance principle for the Bentkus mollifier of {\em arbitrary} regular spectrahedra.

        \item  Boolean and Gaussian anti-concentration  for \emph{positive} regular spectrahedra.

        \item An invariance principle for \emph{positive} regular spectrahedra
    \end{enumerate}
    Before proving these statements, we first overview the~\cite{harsha2013invariance,o2019fooling} approach to proving invariance principles (since our high-level ideas are inspired by their works).

    \subsection{Sketch of the~\cite{harsha2013invariance} invariance principle for polytopes}
    \label{sec:introhkm}
    First recall that a polytope is the feasible region of the set $\{x\in \R^n:Wx\leq b\}$ for a fixed $W\in \R^{n\times n},b\in \R^n$.\footnote{For simplicity, we assume that the number of constraints and variables are equal. Their analysis is more general.} We say a polytope is $\tau$-regular if each row $W^i$ satisfies $\|W^i\|_2=1$ and $\|W^i\|_4\leq \tau$. At a high-level the~\cite{harsha2013invariance} invariance principle states the following:
    \begin{align}
    \label{eq:mainresulthkmintro1}
    \abs{\Pr_{\bx\sim \U_n}[W\bx\leq b] - \Pr_{\bg\sim \G^n}[W\bg\leq b]} \leq \poly(\log n,\tau).
    \end{align}
    To show this, they first express the \emph{orthant} function above (which we denote $\Oh:\R^n~\rightarrow~\01)$, as     $[W\bx\leq b]=[W^1\bx\leq b_1]\cdots [W^n\bx\leq b_n]$. Given this structure, they now use the well-known Lindeberg  method~\cite{lindeberg} (see~\cite{o2014analysis,taolindeberg} for a detailed exposition) to move from the uniform distribution over a Boolean space to the Gaussian space. To establish  Eq.~\eqref{eq:mainresulthkmintro1}, they follow a three-step approach: (1) First, they prove a version of  Eq.~\eqref{eq:mainresulthkmintro1} for \emph{smooth} functions $\widetilde{\Oh}:\R^n\rightarrow \R$ (i.e., functions who have bounded multivariate derivatives). In particular, they use the Lindeberg method to show that the expected value of $\widetilde{\Oh}(W\bx)$ for $x\sim \U_n$, is ``close" to the expected value of $\widetilde{\Oh}(W\bg)$ for $\bg\sim\G^n$. To understand this closeness, they write out $\widetilde{\Oh}(W \bz)$ using the standard multivariate Taylor expansion and bound the  distance between $\widetilde{\Oh}(W\bx)$ and $\widetilde{\Oh}(W\bg)$ by the higher-order derivatives of the smooth function $\widetilde{\Oh}$. (2) Second, they observe that a result of Bentkus~\cite{bentkus1990smooth} provides exactly an approximator $\widetilde{\Oh}:\R^n\rightarrow \R$ (which we refer to as the \emph{Bentkus mollifier}) which serves as a \emph{smooth} approximation to the $\01$-valued orthant function $\Oh(x)=~[W\bx\leq b]$. Additionally this mollifier \emph{crucially} satisfies the property that $\|\widetilde{\Oh}^{(\ell)}\|_1\leq O\br{\log^\ell n}$.\footnote{Here $\|f^{(\ell)}\|_1$ is the $1$-norm of the coefficients in the $\ell$-th derivative. In~\cite{harsha2013invariance}, they care about $\|f^{(4)}\|_1=\max_x \sum_{p,q,r,s}\abs{\partial_p\partial_q\partial_r\partial_s f(x)}$.} (3) So far they established that the Bentkus mollifier (which served as a proxy for $[W\bx\leq b]$) satisfies an approximate version of Eq.~\eqref{eq:mainresulthkmintro1}. In order to go from being close with respect to this Bentkus mollifier to multidimensional CDF closeness, they prove \emph{Gaussian} anti-concentration of polytopes. For this, they use a result of Nazarov~\cite{nazarov2003maximal} (as a black-box) which shows that the Gaussian surface area of a polytope is $O(\sqrt{\log n})$. These three steps allow them to prove Eq.~\eqref{eq:mainresulthkmintro1}.

    \vspace{-1.5mm}

    \subsection{First contribution: Invariance principle for Bentkus mollifier}
    \label{sec:benktkusintro}
     We begin by defining spectral functions. Let $f:\R^k\rightarrow \R$, we say $\psi:\sym{k}\rightarrow \R$ is a  \emph{spectral function} if $\psi(M)=f (\lambda(M))$ for all~$M\in \sym{k}$ where $\lambda(M)=(\lambda_1,\ldots,\lambda_k)$ are the $k$ eigenvalues of~$M$.  In other words, a spectral function $\psi (\cdot)$ depends on a function $\psi$ applied to the eigenvalues of its argument.
    We say $f$ satisfies an invariance principle if
    \[\E_{\bx\sim\U_n}\Br{\psi\br{\sum_i\bx_iA^i-B}}\approx_{\eps}\E_{\bg\sim\G^n}\Br{\psi\br{\sum_i\bg_iA^i-B}},\]
    for symmetric matrices $A_1,\ldots,A_n, B$.
    A conceptual challenge in proving an invariance principle even for smooth spectral functions is that standard Lindeberg-style proofs of invariance theorems use multivariate Taylor series of the mollifier function cannot be used here, since our functions act on the \emph{eigenvalues} of matrices.
    In the past, there have been various invariance principles~\cite{mossel2005noise,DBLP:conf/focs/Mossel08,isaksson2012maximally,harsha2013invariance, yao2019doubly} but none of them apply here; as far as we are aware invariance principles with non-diagonal $A^i,B$ have not been studied.
    In this work, we overcome this challenge and adapt the Lindeberg-style proofs of probabilistic invariance principles to prove its analogue for  spectral~functions.

    To this end,  recall that we are concerned with spectrahedra whose feasible regions are given by $\{x\in \R^n:\sum_i x_i A^i \preceq B\}$, which can alternatively be written as $\{x: \lambda_{\max}\br{\sum_i x_i A^i - B}\leq 0\}$. So we let our spectral function $f:\R^k\rightarrow \R$ to be  $f(\lambda)=[\max_i \lambda_i\leq 0]$ (recall that although our spectrahedron acts on $n$ bits on which we want to prove an invariance principle, our spectral function acts only on the $k$ eigenvalues). For this function, we can still use the Bentkus mollifier $\widetilde{\Oh}:\R^k\rightarrow \R$ as a smooth approximation to $f$.\footnote{In fact our analysis can allow arbitrary orthant functions which can be approximated  by a Bentkus mollifier.} So our first main contribution is to prove an invariance principle for the Bentkus mollifier applied to the spectra of matrices. We remark that in contrast to~\cite{harsha2013invariance}, we do not prove a general invariance principle for spectral functions, instead our spectral function is tailored for the Bentkus mollifier (which is also the case for~\cite{o2019fooling}).

    \vspace{-2mm}
    \paragraph{Fr\'echet derivatives.}
    Since our Bentkus mollifier is acting on the eigenspectra  of matrices, instead of multivariate Taylor expansion, we adopt \emph{Fr\'echet derivatives}, a notion of derivatives that is studied in Banach spaces. Unfortunately, Fr\'echet series (in contrast to standard multivariate series) are still not well understood. In fact even basic properties such as  continuity, Lipschitz continuity, differentiability, continuous differentiability, were only proven in the last three decades~\cite{s002200050279,doi:10.1287/moor.21.3.576,BHATIA1999231,doi:10.1137/S1052623400380584}, which have been well-known for centuries in standard calculus. In particular, even a succinct representation of high-order Fr\'echet derivatives~\cite{sendov2007higher,AMES20104288,AMES2012722,AMES20161459} for \emph{spectral} functions only appeared in the last decade.

    Fortunately for us, Sendov~\cite{sendov2007higher} provided a tensorial representation of high-order Fr\'echet series for spectral functions which we employ to analyze the Fr\'echet derivatives of the Bentkus mollifier. The challenge is in bounding the 3-tensors that appears in Sendov's theorem, which produce~$7$ terms corresponding to different permutations of the tensors after simplification. Three of these~$7$ terms can simply be upper bounded by $\|\widetilde{\Oh}^{(3)}\|_1$ which we know to be small for the Bentkus mollifier. We remark that these are exactly, and the only, terms that appear in the standard invariance principle proofs for linear forms. Intuitively this is not surprising since the first three terms simply correspond to the case when the $A^i,B$ are \emph{diagonal} which reduces a spectrahedron to a polytope. However, bounding the remaining terms is highly non-trivial and one of our technical contributions is in showing these remaining terms are bounded for the Bentkus mollifier.

    \vspace{-2.5mm}
    \paragraph{Bounding derivatives and obtaining invariance principle.} Bounding these last three terms  of the $3$-tensors significantly deviates from the analysis of~\cite{harsha2013invariance} since we need to deal with off-diagonal entries of matrices  which is unique to the matrix-spectrahedron case and is not faced in~\cite{harsha2013invariance,servedio2017fooling,o2019fooling}. To bound this, we use several properties of Fr\'echet derivatives such as, mean value theorems for Fr\'echet derivatives, divided differences representations of Fr\'echet derivatives~\cite{brinkhuis2005matrix}, and Dyson's theorem~\cite{bhatia2013matrix} which provides a useful integral expression for Fr\'echet derivatives (using the structure of the mollifier). More importantly, since we work with the Bentkus mollifier~\cite{bentkus1990smooth}, we  completely open up the Bentkus black-box and show various analytic properties of this mollifier $\widetilde{\Oh}$ in order to prove that our Fr\'echet derivatives are bounded.

     In order to go from bounded third-order Fr\'echet derivatives to a final invariance principle, we still need to borrow some results from random matrix theory to upper bound the moments of $\sum_i \bx_iA^i$. Although, the concentration of $\sum_i \bx_i A^i$ for uniformly random $\bx\sim\U_n$ is well-studied by standard matrix Chernoff bounds~\cite{tropp2015introduction}, we need better concentration of this random matrix variable at higher Schatten norms. For the diagonal polytope case~\cite{harsha2013invariance} used the standard hypercontractivity and~\cite{o2019fooling} used Rosenthal's inequality. Fortunately for us, a matrix-version of Rosenthal's inequality~\cite{mackey2014} was proven a few years back and we  use it to conclude our proof (in fact we also crucially rely on this inequality to construct our $\PRG$). Putting everything together, for arbitrarily small $\tau>0$, we obtain our main invariance principle for the Bentkus mollifier applied as a spectral function
    \begin{align}
    \label{intro:invarianceprinciple}
      \abs{\E_{\bx\sim\U_n}\Br{\widetilde{\Oh}\br{\sum_{i=1}^n\bx_iA^i-B}}-\E_{\bg\sim\G^{n}}\Br{\widetilde{\Oh}\br{\sum_{i=1}^n\bg_iA^i-B}}} \leq \poly(\log k, M, \tau).
    \end{align}
    We remark that the invariance principle above does not assume the positivity of the matrices. We believe this is a necessity for future work on fooling arbitrary spectrahedra.

    \vspace{-2.5mm}

    \subsection{Second contribution: Geometric properties of positive spectrahedra}
    Even with an invariance principle in hand, we are faced with the same challenges as~\cite{harsha2013invariance,servedio2017fooling,o2019fooling} to show an anti-concentration statement. Recall that our goal is to show that for a $(\tau,M)$-regular positive spectrahedron $S$, the expected value of the \emph{indicator function} $[x\in S]$ for $x\sim \U_n$ is close
    to the expected value of $[\bg\in S]$ for $\bg\sim \G^n$.
    This is ``almost" what we showed in the previous section except that the Bentkus mollifier $\widetilde{\Oh}$ in Eq.~\eqref{intro:invarianceprinciple} is replaced by the orthant indicator function $f(x)=[\max_i x_i\leq 0]$. In order to move from the smooth function distance to CDF distance, one particular approach taken by~\cite{harsha2013invariance,servedio2017fooling,chattopadhyay2019simple} is to use geometric properties of polytopes, and as far as we are aware this is widely open for spectrahedra.

    \subsubsection{Properties of positive spectrahedron}
    %A well-known theorem of Ball~\cite{ball1993reverse} shows an upper bound of $O(n^{1/4})$ on the Gaussian surface area ($\GSA$) of arbitrary $n$-dimensional convex object. Crucially in the works of~\cite{harsha2013invariance,servedio2017fooling} they used an improvement of Ball's theorem by Nazarov~\cite{nazarov2003maximal} who showed that the $\GSA$ of $k$-facet polytopes is $O(\sqrt{\log k})$. This logarithmic-upper bound on $\GSA$ allows~\cite{klivans2008learning,harsha2013invariance,servedio2017fooling,chattopadhyay2019simple} to obtain invariance principles, learning algorithms and pseudorandom generators that depend poly-logarithmic in $k$. In contrast, for our setting it is unclear what is the $\GSA$ of spectrahedra. Clearly,  Ball's theorem gives an upper bound of $O(n^{1/4})$ on $\GSA$ for us. Apart from that, spectrahedra are poorly understood. Below we prove an upper bound of $O(1)$ on the $\GSA$ of \emph{positive} spectrahedra.

    Understanding average sensitivity and noise sensitivity of geometric objects has been an important area in theoretical computer science. For the class of halfspaces, we have several results that upper bound these properties~\cite{peres2004noise,harsha2013invariance,diakonikolas2010bounded,kane2014average}, however upper bounds on these properties are poorly understood for the case of spectrahedra. Below, we prove upper bounds on these quantities. 
    \begin{result}[Geometric properties of positive spectrahedra]
    \label{thm:introgsa}
    Let $S$ be a positive spectrahedron and consider  $F:\pmset{n}\rightarrow \01$ defined as $F(x)=[x\in S]$. The average sensitivity of $F$ is $O(\sqrt{n})$, the $\varepsilon$-Boolean noise sensitivity of $F$ is $O(\sqrt{\varepsilon})$.
    %, and the Gaussian surface area is $O(1)$.
    \end{result}
    We remark that the noise-sensitivity statement we have above can be  viewed as a ``positive-matrix-analogue" version of the well-known Peres's theorem~\cite{peres2004noise}.
    In order to prove this statement, we first observe that the average sensitivity of $F$ being $O(\sqrt{n})$ immediately follows by the observation that positive spectrahedra correspond to \emph{unate functions} and Kane~\cite{kane2014average} showed $\AS(f)\leq O(\sqrt{n})$ if $f$ is unate (and a similar statement is known to be false for noise sensitivity).  One issue we need to handle when translating between noise sensitivity and average sensitivity is the following: in the standard technique of~\cite{peres2004noise,diakonikolas2010bounded,kane2014average}, one upper bounds the $\varepsilon$-noise sensitivity of a function $f$ by ``bucketing" the input variables into $m=O(1/\varepsilon)$ buckets $B_1,\ldots,B_m$ and reduces the function $f:\pmset{n}\rightarrow \pmset{}$ to a function $g:\pmset{m}\rightarrow \pmset{}$ defined as $g(b)=\sum_{\ell=1}^{m}b_i \sum_{i\in B_\ell}z_i A^i$ (for uniformly random $z$). One then upper bounds  $\NS_{\varepsilon}(f)$ using $\AS(g)$ (up to a factor $\varepsilon$). Clearly when using this technique to bound $\varepsilon$-noise sensitivity of halfspaces, both $f,g$ are intersections of halfspaces and one can upper bound the average sensitivity of $g$ using Kane's result~\cite{kane2014average} to be $O(\sqrt{m})$. However in our setting if  $f$ is an indicator of a \emph{positive} spectrahedron, then $g$ no longer needs to be an indicator of a \emph{positive} spectrahedron since $\sum_{i\in B_\ell}z_i A^i$ need not even be either a positive semidefinite matrix or a negative semidefinite matrix. We overcome this by modifying the bucketing procedure of~\cite{diakonikolas2010bounded} to ensure $g$ is an indicator of a unate function. However, in the process case we end up upper bounding $\NS_\varepsilon(f)$ by the ``average $2$-sensitivity" of $g$. We extend the results of Kane~\cite{kane2014average} by showing that even the ``average $2$-sensitivity" of $g$ is small for our setting.
    %Finally, to move from an upper bound on $\varepsilon$-noise sensitivity to Gaussian surface area, we use standard folklore results~\cite{diakonikolas2010bounding,kane2011Gaussian,balltalk}.

    \subsubsection{Boolean anti-concentration}
    \label{intro:booleananti}
    For polytopes, Gaussian anti-concentration immediately follows from the fact that the Gaussian surface area of polytopes is bounded since its surface has only finite normed vectors. This is crucially used in~\cite{harsha2013invariance,servedio2017fooling,chattopadhyay2019simple}. However, it is not clear how to upper bound the $\GSA$ of positive spectrahedra due to its complicated geometric structures. Moreover, even with an upper bound on $\GSA$, we still do not know how to obtain Gaussian anti-concentration. Here, to move from mollifier-closeness to CDF closeness, we prove a \emph{Boolean} anti-concentration for positive spectrahedra, which is in fact \emph{stronger} than Gaussian anti-concentration, inspired by the Boolean anti-concentration for polytopes in~\cite{o2019fooling}.

    \paragraph{Regularity condition.} Before explaining the Boolean anti-concentration, we need to revisit the {\em regularity} condition, which is also used for polytopes. In~\cite{harsha2013invariance,servedio2017fooling}, it is assumed that every halfspace (or row in the matrix $W$) satisfies $\|W^i\|_2=1$ and $\|W^i\|_4\leq \tau$. One important question is: what is a regularity assumption for spectrahedra and for which assumptions can we \emph{show} anti-concentration? A natural possibility is to see if Nazarov's result~\cite{nazarov2003maximal} holds for spectrahedra (i.e., show anti-concentration in the weaker \emph{Gaussian} setting). To the best of our knowledge, this has firstly not been studied in literature. Moreover, it is not hard to see that, in order for the proof of Nazarov to work for spectrahedra, one can make a very \emph{strong} assumption that \emph{every}~$A^i$ satisfies $\lambda_{\min}(A^i)\geq 1$. However, this seems to significantly restrict the class of spectrahedra.

    In order to resolve this, we propose $\br{\tau,M}$-regularity as defined in Eq.~\eqref{eqn:regularity} and prove a stronger statement, i.e., Boolean anti-concentration for $(\tau,M)$-regular positive spectrahedra.
    We use this statement to go from closeness between the mollifier $\widetilde{\Oh}\br{\sum_i \bx_iA^i-B}$ and $\widetilde{\Oh}\br{\sum_i \bg_iA^i-B}$ (which we already established in Eq.~\eqref{intro:invarianceprinciple}) to closeness between $\Br{\sum_i \bx_iA^i\preceq B}$ and  $\Br{\sum_i \bg_iA^i\preceq B}$. In this direction, we prove a Littlewood-Offord type theorem for positive spectrahedra.

    \begin{result} [Littlewood-Offord for positive spectrahedra]
     If $(A^1,\ldots,A^n)$ are $(\tau,M)$-regular. Then every $\Lambda$, we have
    $$
    \Pr_{\bx\sim \U_n}\Br{ \lambda_{\max}\br{ \sum_i \bx_i A^i-B} \in [-\Lambda,\Lambda]}\leq O(\Lambda).
    $$
    \end{result}
    The classic Littlewood-Offord theorem~\cite{littlewood1939number,erdos1945lemma} anti-concentration inequality for a halfspace $w\in \R^n$ (satisfying $|w_i|\geq 1$) and $\alpha\in \R$ proves a bound on the probability that $\sum_i w_i \bx_i \in [\alpha,\alpha+2]$  (where $\bx\sim \U_n$). In~\cite{o2019fooling} they generalized this for \emph{intersections} of halfspaces and in the result above we show a matrix-version of Littlewood-Offord theorem. Intuitively, our statement shows the largest eigenvalue of a positive spectrahedron cannot all be very-concentrated in a small region (i.e., small eigenvalue regions have small measure over the Boolean cube).

    The proof of our result is similar to the proofs in~\cite{kane2014average,o2019fooling} which show anti-concentration for intersections of unate functions. There are a couple of subtleties for us: in~\cite{o2019fooling}, they  perform  random ``bucketing" of the coordinates in a polytope and show that with high probability, each bucket has ``significant" weight, which  follows immediately from the Paley-Zygmund inequality. However, for us, random bucketing does not produce a positive spectrahedron (the same issue which we faced in Theorem~\ref{thm:introgsa}), so instead we need to bucket in a non-standard way to go from a positive spectrahedron to a bucket which corresponds to a unate function. Next, to show that each bucket has significant weight (which in our case corresponds to large smallest eigenvalue), we invoke the matrix Chernoff bound for negatively correlated variables, proving our result. We remark that higher-dimensional extensions of the Littlewood-Offord theorem~\cite{frankl1988solution,tao2012littlewood} do not talk of eigenspectra of matrices and differs from our result.

    Using the standard bits-to-Gaussians trick, this also gives us Gaussian anti-concentration (i.e., the positive spectrahedra analogue of Nazarov's result~\cite{nazarov2003maximal} which is unknown as far as we are aware). Putting this together with our invariance principle statement we obtain our main~result.
    \begin{result}[Fooling positive spectrahedra]
    For every $(\tau,M)$-regular positive spectrahedron $S$,
    \begin{align}
    \label{eq:intromainresult}
    \big|\mathop{\mathbb{E}}_{\bx\sim \U_n} [\bx\in S]-\mathop{\mathbb{E}}_{\bg\sim \G^{n}} [\bg\in S] \big|\leq \poly(M,\log k,\tau).
    \end{align}
    \end{result}

\bigskip

     Apart from the applications of constructing pseudorandom generators (which we discuss in the next section) we believe that our invariance principle for the Bentkus mollifier of  \emph{arbitrary} spectrahedra, opening up the Bentkus mollifier (i.e., understanding the Bentkus functions which were almost used as a black-box in~\cite{harsha2013invariance,servedio2017fooling,o2019fooling}), the Littlewood-Offord theorem and noise sensitivity for positive spectrahedra  could be of independent~interest.

    \subsection{Applications}

    \subsubsection{Pseudorandom generators}
    We now briefly discuss how to use the invariance principle to obtain our pseudorandom generator. Our construction is based on the Meka-Zuckerman~\cite{meka2013pseudorandom} $\PRG$ construction for fooling halfspaces. We note in the passing that this same $\PRG$ (with different parameters) was also used by~\cite{harsha2013invariance,servedio2017fooling} and slight modification of it by~\cite{o2019fooling}. We omit the details of the $\PRG$ construction here referring the interested reader to Section~\ref{sec:prg} for an explicit construction.

    One subtlety in order to go from invariance principle to fooling the MZ-generator is the following: recall that our invariance principles showed that expected value under the uniform distribution was close to the expected value under the Gaussian distribution. However, in order to fool the MZ-generator one needs to show that the invariance principle proofs holds also for $k$-wise independent distributions. In this direction, we use a neat trick from~\cite{o2019fooling} that shows that in order to show invariance principles for $k$-wise independent distributions, it suffices to show just Boolean anti-concentration. Second we crucially use the fact that  the matrix Rosenthal inequality can be derandomized by analyzing its the original proof. Put together, this shows that our invariance principle proof holds for $k$-wise independent distributions and gives us our main $\PRG$~result.

    \begin{result} [PRG for positive spectrahedra]
    Let $S$ be a $(\tau,M)$-regular positive spectrahedron.
    There exists a $\PRG$ $G:\01^r\rightarrow \pmset{n}$ with $r=(\log n)\cdot\poly(\log k,M,1/\delta)$ that $\delta$-fools $S$ with respect to the uniform distribution for every $\tau\leq \poly(\delta/(\log k\cdot M))$.
   \end{result}

    \subsubsection{Learning theory}
     Learning geometric objects is a fundamental problem in computational learning theory.  An application of upper bounding noise sensitivity (in Theorem~\ref{thm:introgsa}) is in agnostic learning. The agnostic learning framework introduced by~\cite{kearns1994toward,haussler1992decision} is the following: let $\mathcal{C}\subseteq \{c:\pmset{n}\rightarrow \01\}$ be a concept class and $\De:\pmset{n}\times \01\rightarrow [0,1]$ be a distribution. Define
    $
    \textsf{opt}(\mathcal{C})=\min_{c\in \mathcal{C}} \Pr_{(x,b)\sim \De} [c(x)\neq b],
    $
    i.e., what is the \emph{best} approximation to $\De$ from within the concept class. The goal of an agnostic learner is the following: given many samples $(x,b)\sim \De$, the goal of a learner is to produce a hypothesis $h:\pmset{n}\rightarrow \01$ which satisfies
    $$
    \Pr_{(x,b)\sim \De} [h(x)\neq b]\leq \textsf{opt}(\mathcal{C})+\varepsilon.
    $$
    Note that if $\textsf{opt}(\mathcal{C})=0$, this is the standard PAC learning framework and agnostic learning models learnability under adversarial noise.  A natural restriction of this model is when the marginal of $\De$ on the first $n$ bits is the uniform distribution on $\01^n$. It is a folklore result~\cite{klivans2004learning} that  a function $f$ having low noise sensitivity can be approximated by low-degree polynomials (see~\cite[Lemma~2.7]{harsha2013invariance} for an explicit statement). Furthermore, the well-known L1-polynomial regression algorithm~\cite{kalai2008agnostically} shows how to learn low-degree polynomials in the agnostic framework. Putting these two connections together gives us the following theorem.
    \begin{result}[Learning positive spectrahedra] The concept class of positive spectrahedra (in $n$ variables with $k\times k$ symmetric matrices) can be agnostically learned under the uniform distribution in time $n^{O(\log k)}$ for every constant error parameter.
    \end{result}
     The previous best known result~\cite{klivans2008learning} for learning positive spectrahedra even in the PAC model~was $2^{O(n^{1/4})}$ (as far as we are aware); our result provides a substantially better complexity.

    \subsubsection{Discrepancy sets for spectrahedra}
    Understanding discrepancy sets for convex objects is a  fundamentally important problem in the fields of convex geometry, optimization, and a range of other areas. Prior works of~\cite{harsha2013invariance,servedio2017fooling,o2019fooling} constructed such discrepancy sets for polytopes, but a natural question is to extend their construction to  spectrahedra.  In our context, one application of our main result can be viewed as  the following: consider the set of all possible positive spectrahedra (over the Boolean cube) $S=\{x\in\set{-1,1}^n:\sum_i x_i A^i\preceq B\}$, then can we construct a  \emph{small} subset of the Boolean cube $\pmset{n}$ such that this set $\delta$-approximates the $\set{-1,1}^n$-volume of  every positive spectrahedron? One way to construct such a set is to construct a $\PRG$ for the class of functions. So an immediate corollary of our $\PRG$ for positive spectrahedra is the following theorem.\footnote{We remark that counting integer solutions to positive spectrahedra is not as  \emph{naturally motivated} as that for polytopes, but nevertheless understanding discrepancy sets for geometric objects is a fundamental question.}

    \begin{result}[Discrepancy set for positive spectrahedra] There is a deterministic algorithm  which, given a $(\tau,M)$-regular positive spectrahedron $S$,  runs in time $\exp(\log n,\log k, M,1/\delta)$ and outputs a $\delta$-approximation of the number of points in $\set{-1,1}^n$ contained in $S$ as long as $\tau\leq \poly(\delta/(M\log k))$.
    \end{result}

    \subsubsection{Intersection of (structured) polynomial threshold functions}
    Constructing $\PRG$s for  $\PTF$s has received a lot of attention. However, the best known seed length for fooling a degree-$k$ $\PTF$ on $n$ bits scales as $O(\log n\cdot 2^k)$ (over the Boolean space).  A simple observation we make is that fooling spectrahedra (on $n$ bits with $k\times k$ matrices) can be in fact be viewed as the more challenging task of fooling an \emph{intersection} of $k$ many degree-$k$ $\PTF$s.

    Recall that a spectrahedron is given by
    $
    S=\{x\in \R^n:B-\sum_ix_i A^i \succeq 0\}.
    $
    Without loss of generality, we may assume that the measure of $x$ satisfying $\det\br{\sum_i B-x_iA^i}=0$ is zero.
    Sylvester's criterion implies that a matrix $M$ (which in our case is $B-\sum_i x_iA^i$) is positive definite \emph{if and only if} the determinant of the $k$ principle minors of $M$ are positive. Hence, an alternate characterization of $S$ is the set of $x\in \R^n$ for which
    $$
    S=\bigwedge_{r=1}^k \Br{\textsf{det} \br{B-\sum_ix_i A^i }_{r\times r}>0}=\bigwedge_{r=1}^k \sign[p_r(x)]
    $$
    modulo a zero-measure set, where  $M_{r\times r}$ means the top left $r\times r$ principle minor of $M$. Clearly each determinantal expression produces a polynomial $p_r$ of degree at most $r$. So, our main result about fooling $S$, shows that there is a  \emph{structured}  class of \emph{intersections} of  degree-$k$ $\PTF$s (i.e., the class of polynomials which can be written as in terms of the above) which can be fooled by a $\PRG$ with seed length $O(\log n\cdot \log k\cdot M/\delta)$, which is exponentially better than using existing $\PRG$s for~$\PTF$s.

    We remark  that apriori, it is not even clear why should an \emph{arbitrary} polynomial even correspond to a spectrahedron as above? However, a well-known result of~\cite{helton2006noncommutative,gartner2012approximation} states that an arbitrary degree-$d$ polynomial $p\in \R[x_1,\ldots,x_n]$ with real coefficients  has a symmetric \emph{determinantal representation},\footnote{See~\cite{quarez2012symmetric} for a simple linear algebraic proof of this statement.} i.e.,  there exists symmetric $A^0,A^1,\ldots,A^n$ such that
    $$
    p(x_1,\ldots,x_n)=\textsf{det}\br{A^0+\sum_i x_i A^i}.
    $$
    where $A^i\in \textsf{Sym}\binom{n+d}{d}$. So, if we could fool arbitrary spectrahedra that might be a promising avenue to fool $\PTF$s and intersections of $\PTF$s.

    \subsection{Future work}
    Our work opens this new line of research into understanding $\PRG$s for spectrahedra with several novel techniques. This raises several questions for future work.

    \emph{\textbf{1}. Can we remove regularity for positive spectrahedra?} One of the crucial techniques that Servedio and Tan~\cite{servedio2017fooling} introduced (inspired by a prior work of Servedio~\cite{servedio2006every}) was decomposing a polytope into head and tail variables (i.e., tail coordinates in a halfspace which satisfy regularity and head coordinates are the dominant variables). They express the head variables as  CNF, use the result of Bazzi~\cite{bazzi2009polylogarithmic} to fool the head variables and invariance principles for tail variables. However, in our setting breaking up a single spectrahedron into head and tail variables is  unclear and even if possible, what is the analogue of the CNF for our~setting?

    \emph{\textbf{2}. Can we fool arbitrary spectrahedra?} Besides the difficulty in removing the regularity condition, another fundamental barrier we face here is, anti-concentration. What is the Gaussian surface area of a spectrahedron, even this is unknown (as far as we are aware). Our techniques such as bucketing, using Kane's result~\cite{kane2014average}, and Boolean anti-concentration~\cite{o2019fooling} crucially use the assumption of positivity. Going beyond this, might require new understanding on the geometric structures (like average sensitivity, noise sensitivity) about arbitrary spectrahedra.

    \emph{\textbf{3}. A general invariance principle for spectral functions?} Here, we showed our invariance principle specifically for the Bentkus mollifier. However, like the result of~\cite{harsha2013invariance} can we prove a general invariance principle for arbitrary smooth spectral functions? Given the applications of invariance principles, they are now  considered to be powerful techniques in computational complexity theory. Having an invariance principle for spectral functions could find more applications such as deciding noisy entangled quantum games~\cite{yao2019doubly}.

    \emph{\textbf{4}. Can we fool spectrahedral caps?} Let $S_{n-1}=\{x\in \R^n: \|x\|_2=1\}$ denote the $n$-dimensional sphere, then a \emph{spectrahedral cap} is the set of $S_{n-1}$ that is ``cut" by a spectrahedron, i.e., for a  spectrahedron $S$, we define the spectrahedral cap $C_S$ as $C_S= S_{n-1}\cap S$. In the polytope-setting, fooling spherical caps has received a lot of attention classically~\cite{harsha2013invariance,kothari2015almost} (with almost optimal seed length $\PRG$s). Can we similarly fool {spectrahedral caps}?

    \emph{\textbf{5}. Fooling polynomial threshold functions?} Can we make progress in finding better $\PRG$s for $\PTF$s using techniques we developed here for fooling arbitrary spectrahedra?

    \paragraph{Acknowledgements.} We thank Oded Regev  for pointing out a minor inconsequential error of the previous version. We also thank  Jop Bri\"et and Minglong Qin for several helpful comments. This collaboration earlier faced some bureaucratic issues. We are deeply grateful for the support from Jelani Nelson, Kewen Wu, Yitong Yin and others in the TCS community. P.Y.\ was supported by the National Key R\&D Program of China 2018YFB1003202, National Natural Science Foundation of China (Grant No. 61972191), the Program for Innovative Talents and Entrepreneur in Jiangsu, the Fundamental Research Funds for the Central Universities 0202/14380068 and Anhui Initiative in Quantum Information Technologies Grant No. AHY150100. Part of the work was done when P.Y.\ and S.A.\ were participating in the program "Quantum Wave in Computing" held at Simons Institute for the Theory for Computing.

    \paragraph{Organization.} In Section~\ref{sec:prelim} we introduce the mathematical aspects which we use in this paper, and state various lemmas in random matrix  theory and multidimensional calculus. In Section~\ref{sec:mollifier}, we introduce the Bentkus mollifier and discuss various properties. In  Section~\ref{sec:spectralderivatives} we state our main theorem regarding spectral derivatives of smooth functions and go on to bound the spectral derivatives for the Bentkus function (proving a technical lemma in Appendix~\ref{app:claim5proof}). In Section~\ref{sec:geometric} we prove an upper bound on the noise sensitivity of positive spectrahedra as well as our Littlewood-Offord theorem for this class. In Section~\ref{sec:invarianceprinciple} we prove our invariance principle theorem and go on to construct a pseudorandom generator for the class of positive spectrahedra.

    \section{Preliminaries}
    \label{sec:prelim}
    For an integer $n\geq 1$, let $[n]$ represent the set $\set{1,\ldots, n}$. Given a finite set $\X$ and a natural number~$k$, let $\X^k$ be the set $\X\times\cdots\times\X$, the Cartesian product of $\X$, $k$ times. Given $a=(a_1,\ldots, a_k)$ and a set $S\subseteq[k]$, we write $a_S$ and $a_{-S}$ to represent the projections of $a$ to the coordinates specified by $S$ and the coordinates outside $S$, respectively. For any $i\in[k]$, $a_{-i}$ represents $a_1,\ldots, a_{i-1},a_{i+1},\ldots, a_n$ and $a_{<i}$ represents $a_1,\ldots, a_{i-1}$. $a_{\leq i},a_{>i}, a_{\geq i}$ are defined similarly. For a distribution $\mu$  on $\X$, let  $\mu\br{x}$ represent the probability of $x\in\X$ according to $\mu$.  Let $X$ be a random variable distributed according to $\mu$. We use the same symbol to represent a random variable and its distribution whenever it is clear from the context. The expectation of a function $f$ on $\X$ is defined as $\expec{f(X)}=\mathbb{E}_{\bx\sim X}\Br{f(\bx)}=\sum_{x\in\X}\prob{X=x}\cdot f\br{x}=\sum_x\mu\br{x}\cdot f\br{x}$, where $\bx\sim X$ represents that $\bx$ is drawn according to $X$.  For any event $\mathcal{E}_x$ on $x$, $\Br{\mathcal{E}\br{x}}$ represents the indicator function of $\mathcal{E}$. In this paper, the lower-cased letters in bold $\bx,\by,\bz\cdots$ are reserved for random variables.

    \paragraph{Distributions.}  Throughout the paper, we denote~$\G$ (where $\G=\mathcal{N}(0,1)$) to be a standard normal distribution over $\R$  with mean $0$ and variance $1$. We denote $\U_n$ to be the uniform distribution on $\pmset{n}$. We say a joint distribution $X=(\bx_1,\ldots,\bx_n)$ is $t$-wise uniform  if the marginal distribution $X_S$ for any subset $S\subseteq [n]$ of size $\abs{S}=t$ is uniformly distributed (observe that the uniform distribution is clearly $t$-wise independent for every $t\geq 1$). A distribution $\H$ on functions $[n]\rightarrow[m]$ is said to be an $r$-wise uniform hash family if for $\bh\sim\H$, $\br{\bh\br{1},\ldots,\bh\br{n}}$ is $r$-wise uniform.

    \subsection{Derivatives and multidimensional Taylor expansion}
    We denote $\mathcal{C}^d$ as the set of all real functions that are $d$-time differentiable. For $f:\reals\rightarrow\reals$ in $\mathcal{C}^d$, we use $f^{\br{d}}$ to denote the $d$-th derivative of $f$. Given a function $F:\reals^k\rightarrow \reals$ and a $k$-dimensional multi-index $\alpha=\br{\alpha_1,\ldots,\alpha_m}\in\mathbb{N}^k $, $\partial_{\alpha}F$ denotes the mixed partial derivative taken $\alpha_i$ times in the $i$-th coordinate.

    \begin{fact}[\cite{rudin}]
      Let $k\in\mathbb{N}$ and $f:\reals^k\rightarrow\reals$ be a $\mathcal{C}^d$ function. Then for all $x,y\in\reals^k$,
      \[
      f\br{x+y}=\sum_{\alpha\in\mathbb{N}^k:\abs{\alpha}\leq d-1}\frac{\partial_{\alpha}f\br{x}}{\alpha!}\prod_{i=1}^{m}y_i^{\alpha_i}+\textsf{err}\br{x,y},
      \]
      where $\alpha!=\alpha_1!\cdots\alpha_m!$, $\abs{\alpha}=\sum_i\alpha_i$ and
      \[
      \abs{\textsf{err}\br{x,y}}\leq\sup_{v\in\reals^k}\sum_{\alpha\in\mathbb{N}^k:\abs{\alpha}= d}\abs{\partial_{\alpha}f\br{v}}\max_i\abs{y_i}^d.
      \]
    \end{fact}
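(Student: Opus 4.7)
The plan is to reduce the multivariate statement to the one-variable Taylor theorem by restricting $f$ to the line segment from $x$ to $x+y$. First I would introduce the auxiliary function $g:[0,1]\to\reals$ defined by $g(t)=f\br{x+ty}$, which is of class $\mathcal{C}^d$ since $f$ is. A straightforward induction using the chain rule then shows that for every $0\leq j\leq d$,
\[
g^{(j)}(t)=\sum_{i_1,\ldots,i_j\in[k]}\br{\partial_{i_1}\cdots\partial_{i_j}f}\br{x+ty}\cdot y_{i_1}\cdots y_{i_j},
\]
and grouping these $k^j$ terms by the multi-index $\alpha\in\mathbb{N}^k$ with $\abs{\alpha}=j$ gives
\[
g^{(j)}(t)=\sum_{\alpha:\abs{\alpha}=j}\binom{j}{\alpha}\partial_\alpha f\br{x+ty}\prod_{i=1}^k y_i^{\alpha_i},
\]
where $\binom{j}{\alpha}=j!/\alpha!$ is the multinomial coefficient. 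Establishing this identity cleanly is the only nontrivial step of the proof; everything afterwards is bookkeeping.

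Next I would apply the classical one-dimensional Taylor theorem with Lagrange form of the remainder to $g$, expanded at $t=0$ and evaluated at $t=1$, obtaining
\[
f\br{x+y}=g(1)=\sum_{j=0}^{d-1}\frac{g^{(j)}(0)}{j!}+\frac{g^{(d)}(\xi)}{d!}
\]
for some $\xi\in(0,1)$. Substituting the grouped expression for $g^{(j)}(0)$ into the finite sum and simplifying $\frac{1}{j!}\binom{j}{\alpha}=\frac{1}{\alpha!}$ immediately reproduces the main sum $\sum_{\abs{\alpha}\leq d-1}\frac{\partial_\alpha f(x)}{\alpha!}\prod_i y_i^{\alpha_i}$ appearing in the statement, with remainder $\textsf{err}(x,y)=g^{(d)}(\xi)/d!$.

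Finally I would bound the remainder. Using the grouped formula for $g^{(d)}(\xi)$ together with the elementary facts that $\alpha!\geq 1$ (so $1/\alpha!\leq 1$) and $\prod_i \abs{y_i}^{\alpha_i}\leq\max_i\abs{y_i}^{\abs{\alpha}}=\max_i\abs{y_i}^d$ whenever $\abs{\alpha}=d$, I get
\[
\abs{\textsf{err}(x,y)}=\frac{\abs{g^{(d)}(\xi)}}{d!}\leq\sum_{\alpha:\abs{\alpha}=d}\frac{1}{\alpha!}\abs{\partial_\alpha f\br{x+\xi y}}\prod_i\abs{y_i}^{\alpha_i}\leq\sup_{v\in\reals^k}\sum_{\alpha:\abs{\alpha}=d}\abs{\partial_\alpha f(v)}\cdot\max_i\abs{y_i}^d,
\]
which matches the bound in the statement. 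The only real obstacle is the inductive derivation of the formula for $g^{(j)}$; after that the result follows essentially by invoking single-variable Taylor as a black box and performing the multinomial bookkeeping.
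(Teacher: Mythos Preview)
Your argument is correct and is the standard reduction of the multivariate Taylor expansion to the single-variable case, exactly as one finds in Rudin. The paper does not actually give a proof of this statement: it is presented as a Fact with a citation to \cite{rudin}, so there is nothing to compare against beyond noting that your outline matches the textbook derivation the authors are invoking.
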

    For a $t$-time differentiable function $f:\R^k\rightarrow \R$ and $s\leq t$, define
    $$
    \|f^{(s)}\|_1=\max\Big\{\sum_{p_1,p_2,\ldots,p_s\in [k]} |\partial_{p_1}\cdots\partial_{p_s} f(x)|: x\in \R^k\Big\}
    $$

    \begin{definition} \label{def:dividedifference}
     Let $f:\reals\rightarrow\reals$. For any distinct inputs $x_1,\ldots,x_n\in\reals$, the divided difference is defined recursively as follows.
      \begin{eqnarray*}
      % \nonumber % Remove numbering (before each equation)
        &&f^{[0]}= f,\\
        &&f^{[i]}\br{x_1,\ldots,x_{i+1}}=\frac{f^{[i]}\br{x_1,\ldots,x_{i-1},x_i}-f^{[i]}\br{x_1,\ldots,x_{i-1},x_{i+1}}}{x_i-x_{i+1}}.
      \end{eqnarray*}
      For other values of $x_1,\ldots,x_{i+1}$, $f^{[i]}$ is defined by continuous extension.
    \end{definition}

    \begin{fact}[Mean value theorem for divided difference~\cite{de2005divided}]\label{fac:mvtdd}
    For every $f\in\mathcal{C}^n$ and $x_1,\ldots,x_{n+1}$, there exists $\xi\in(\min\set{x_1,\ldots,x_{n+1}},\max\set{x_1,\ldots,x_{n+1}})$ such that
      \[f^{[n]}\br{x_1,\ldots,x_{n+1}}=\frac{f^{(n)}\br{\xi}}{n!}.\]
    \end{fact}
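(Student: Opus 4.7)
The plan is to prove this through the classical polynomial-interpolation viewpoint of divided differences, combined with iterated application of Rolle's theorem.

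First, I would establish (or recall) the standard interpretation of $f^{[n]}(x_1,\ldots,x_{n+1})$ as the leading coefficient of the unique polynomial $P$ of degree at most $n$ that interpolates $f$ at the distinct nodes $x_1,\ldots,x_{n+1}$. This is essentially the Newton form of interpolation: one can inductively verify that $P(x) = \sum_{i=0}^{n} f^{[i]}(x_1,\ldots,x_{i+1}) \prod_{j=1}^{i}(x-x_j)$ agrees with $f$ at each node, using the recursive definition of $f^{[i]}$ from Definition~\ref{def:dividedifference}. In particular, the coefficient of $x^n$ in $P$ equals $f^{[n]}(x_1,\ldots,x_{n+1})$, so that $P^{(n)}(x) = n!\cdot f^{[n]}(x_1,\ldots,x_{n+1})$ is a constant.

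Next, I would consider the error function $g(x) = f(x) - P(x)$. Since $P$ interpolates $f$ at the $n+1$ distinct points $x_1,\ldots,x_{n+1}$, the function $g$ has at least $n+1$ zeros inside the closed interval $[\min_i x_i, \max_i x_i]$. Applying Rolle's theorem once, $g'$ has at least $n$ zeros in the open interval $(\min_i x_i, \max_i x_i)$. Iterating the argument $n$ times, $g^{(n)}$ has at least one zero $\xi$ in $(\min_i x_i, \max_i x_i)$. Because $g\in\mathcal{C}^n$, this step is legitimate and the hypothesis $f\in\mathcal{C}^n$ is exactly what is needed.

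At this point, the conclusion follows immediately: $g^{(n)}(\xi)=0$ means $f^{(n)}(\xi) = P^{(n)}(\xi) = n!\cdot f^{[n]}(x_1,\ldots,x_{n+1})$, giving $f^{[n]}(x_1,\ldots,x_{n+1}) = f^{(n)}(\xi)/n!$. For the case of repeated nodes, one would appeal to the continuous extension clause in Definition~\ref{def:dividedifference} together with the continuity of $f^{(n)}$ to pass to the limit. The only mildly subtle step is the initial identification of divided differences with the Newton-form interpolation coefficients; once that is in place, everything else is Rolle's theorem applied $n$ times, so I do not anticipate any real obstacle.
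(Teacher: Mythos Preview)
Your argument is correct and is precisely the classical proof of this result. Note, however, that the paper does not actually prove this statement: it is recorded as a \emph{Fact} with a citation to~\cite{de2005divided} and used as a black box, so there is no ``paper's own proof'' to compare against. Your interpolation-plus-iterated-Rolle approach is exactly the standard one found in that reference.
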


    \subsection{Combinatorial properties of Boolean functions}

    Let $f:\01^n\rightarrow \01$, $g:\R^n\rightarrow \01$ and $S$ be a Borel set in $\reals^n$. We define the following combinatorial properties of Boolean-valued functions $f,g$.

    \begin{enumerate}
        \item Average sensitivity: $\AS(f)=\sum_{i=1}^n\Pr_{\bx}[f(\bx)\neq f(\bx\oplus e_i)]$, where the probability is taken uniformly in $\01^n$.
        \item $\varepsilon$-Noise sensitivity: $\NS_\varepsilon(f)= \Pr_{\bx,\by}[f(\bx)\neq f(\by)]$ where the probability is taken according to the distribution: $\bx$ is uniformly random in $\01^n$ and $\by$ is obtained from $\bx$ by  independently flipping each $\bx_i$  with probability $\varepsilon$.
%        \item Gaussian noise sensitivity: $\GNS_\varepsilon(g)=\Pr_{\bx,\bz}[g(\bx)\neq g(\by)]$ where $\bx,\bz$ are independent and random Gaussian vectors in $\G^n$, and $\by=(1-\varepsilon)\bx+\sqrt{2\varepsilon-\varepsilon^2}\bz$.
%        \item Gaussian surface area:  $\GSA\br{S}=\liminf_{\delta\rightarrow 0}\frac{\G^n\br{S_{\delta}\setminus S}}{\delta}$ where $S_{\delta}=\set{x:\mathrm{dist}\br{x,S}\leq\delta}$ denotes the $\delta$-neighborhood of $S$ under Euclidean distance.
    \end{enumerate}

    We refer interested readers to~\cite{o2014analysis} for more on these parameters and their applications to analysis of Boolean functions.

    \subsection{Matrix analysis and Random matrices}

    For any integer $k>0$, we use $\mat{k}$ and $\sym{k}$ to represent the set of $k\times k$ real matrices and symmetric matrices, respectively. For any matrix $X$, $\norm{X}_p$ represents the Schattern $p$-norm of $X$ and $\norm{X}$ represents the spectral norm of $X$. $\id_k$ represents a $k\times k$ identity matrix. The subscript~$k$ may be omitted whenever the dimension is clear from the context. We need the following results in matrix analysis.

    \begin{fact}\label{fac:bhatia}~\cite{10.2307/2589115}
      For any $k\times k$ real symmetric matrix $A$, let $B$ be its upper triangle part of $A$. Namely $B_{i,j}=A_{i,j}$ if $i\leq j$ and is $0$  otherwise. Then $\norm{B}\leq\frac{\ln k}{\pi}\norm{A}$.
    \end{fact}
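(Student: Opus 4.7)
The plan is to realize the upper-triangular truncation as an averaged unitary conjugation of $A$ and to reduce the bound to estimating the $1$-norm of an explicit scalar Fourier kernel. The key observation is that for each $\theta\in[0,2\pi]$, the diagonal unitary $D_\theta=\diag(e^{i\theta},e^{2i\theta},\ldots,e^{ki\theta})$ acts entrywise as $(D_\theta A D_\theta^{\ast})_{a,b}=e^{i(a-b)\theta}A_{a,b}$ while preserving the spectral norm, $\|D_\theta A D_\theta^{\ast}\|=\|A\|$.

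Given a scalar $\phi\in L^1[0,2\pi]$, define the matrix-valued average
\[
M_\phi(A)=\frac{1}{2\pi}\int_0^{2\pi}\phi(\theta)\,D_\theta A D_\theta^{\ast}\,d\theta,
\]
so that $(M_\phi(A))_{a,b}=\hat\phi(a-b)\,A_{a,b}$ with $\hat\phi(n)=\frac{1}{2\pi}\int_0^{2\pi}\phi(\theta)e^{in\theta}\,d\theta$. The triangle inequality for operator-valued integrals immediately gives $\|M_\phi(A)\|\leq \frac{1}{2\pi}\|\phi\|_1\cdot\|A\|$. The problem therefore reduces to finding a $\phi$ whose Fourier coefficients match the indicator $\mathbf{1}[n\leq 0]$ on the relevant range $n\in\{-(k-1),\ldots,k-1\}$, and then bounding $\|\phi\|_1$ sharply.

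A direct candidate is $\phi(\theta)=\sum_{n=0}^{k-1}e^{in\theta}$, whose Fourier coefficients are exactly $\hat\phi(n)=\mathbf{1}[-(k-1)\leq n\leq 0]$, so that $M_\phi(A)=B$, and whose modulus equals $|\sin(k\theta/2)/\sin(\theta/2)|$. Classical Lebesgue-constant asymptotics already give $\frac{1}{2\pi}\|\phi\|_1=O(\log k)$, which yields the bound up to a constant factor and thus a qualitatively complete proof.

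The main obstacle is pinning down the \emph{sharp} constant $\frac{\ln k}{\pi}$ asserted in Fact~\ref{fac:bhatia}, rather than merely $O(\log k)$: the naive Dirichlet-kernel choice above loses a constant factor at the sharp level, so one must instead use a symmetrised (or Cesaro-type) averaged variant of $\phi$---essentially the classical triangular-truncation construction of Kwapien and Pelczynski---and combine it with a careful asymptotic evaluation of the corresponding Lebesgue constant that tracks the leading coefficient. Once this refined kernel estimate is in place, the three ingredients (unitary-conjugation integral representation, triangle inequality, kernel $L^1$-bound) combine immediately to yield $\|B\|\leq \frac{\ln k}{\pi}\|A\|$.
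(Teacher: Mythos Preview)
The paper does not prove Fact~\ref{fac:bhatia}; it is quoted verbatim from Bhatia's Monthly note~\cite{10.2307/2589115} and used as a black box. Your proposed argument---writing the triangular projection as an average $\frac{1}{2\pi}\int \phi(\theta)D_\theta A D_\theta^\ast\,d\theta$ of diagonal-unitary conjugates and then bounding by the $L^1$ norm of the scalar kernel---is precisely the method of that reference, so in that sense you have recovered the intended proof.

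Your caveat about the sharp constant is well taken and in fact slightly understates the issue: the inequality $\|B\|\le \frac{\ln k}{\pi}\|A\|$ as literally written cannot hold for small $k$ (for $k=2$ the right-hand side is roughly $0.22\,\|A\|$, yet the upper-triangular part of a symmetric $2\times2$ matrix need not have smaller norm than $A$). What the literature actually gives is $\|B\|\le \big(\frac{1}{\pi}\ln k + O(1)\big)\|A\|$, with $\frac{\ln k}{\pi}$ the correct leading asymptotic. This is harmless for the present paper: the only place Fact~\ref{fac:bhatia} is invoked is in Section~\ref{sec:b6}, where it is used in the form $\|A\|\le (\log k)\cdot\|H\|$ with all constants absorbed into big-$O$'s. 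Your Dirichlet-kernel computation, together with the classical Lebesgue-constant estimate $\frac{1}{2\pi}\int_0^{2\pi}\big|\sum_{n=0}^{k-1}e^{in\theta}\big|\,d\theta = \frac{1}{\pi}\ln k + O(1)$, already delivers exactly what the paper needs; the ``refined kernel'' step you flag as an obstacle is unnecessary here.
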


    \begin{fact}\cite[Theorem 1.1]{Tropp2011}\label{fac:matrixchernoff}
      Let $n,k\geq1$ be integers and $X_1,\ldots, X_n$ be independent random $k\times k$ real symmetric matrices satisfy $0\preceq X_i\preceq R$ for $i\in [n]$. Set
      \[\mu=\lambda_{\min}\br{\frac{1}{n}\sum_{i=1}^n\expec{X_i}}.\]
      Then
      \[\prob{\lambda_{\min}\br{\sum_{i=1}^{n}X_i}\leq\br{1-\delta}\mu}\leq k\cdot\br{\frac{e^{-\delta}}{\br{1-\delta}^{1-\delta}}}^{\mu/R}\]
      for every $\delta\in[0,1)$.
    \end{fact}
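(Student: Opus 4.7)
The plan is to follow the matrix Laplace transform / Ahlswede--Winter method, which lifts the classical scalar Chernoff argument to the non-commutative setting. First I would convert the lower-tail event via $\{\lambda_{\min}(\sum_i X_i) \leq t\} = \{\lambda_{\max}(-\sum_i X_i) \geq -t\}$, then apply the matrix Markov inequality: for any $\theta > 0$,
\[
\Pr\bigl[\lambda_{\min}(\textstyle\sum_i X_i) \leq t\bigr] \;\leq\; e^{\theta t}\cdot \E\,\Tr\exp\bigl(-\theta \textstyle\sum_i X_i\bigr),
\]
using the elementary bound $\Tr\exp(M)\geq e^{\lambda_{\max}(M)}$ for symmetric $M$.

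The key non-commutative step is to decouple the $X_i$s. Lieb's concavity theorem states that $A \mapsto \Tr\exp(H + \log A)$ is concave on positive definite matrices; combined with Jensen's inequality applied one $X_i$ at a time by conditioning on the others yields the subadditivity of matrix cumulant generating functions,
\[
\E\,\Tr\exp\bigl(-\theta \textstyle\sum_i X_i\bigr) \;\leq\; \Tr\exp\Bigl(\textstyle\sum_i \log \E\bigl[e^{-\theta X_i}\bigr]\Bigr).
\]
To estimate each summand, I would use that $e^{-\theta x} \leq 1 - \tfrac{1-e^{-\theta R}}{R}x$ on $[0,R]$ by convexity; the transfer principle applied to the hypothesis $0 \preceq X_i \preceq R$ gives $\E[e^{-\theta X_i}] \preceq I - \tfrac{1-e^{-\theta R}}{R}\E[X_i]$, and the operator monotone inequality $\log(I - Y) \preceq -Y$ yields $\sum_i \log \E[e^{-\theta X_i}] \preceq -\tfrac{1-e^{-\theta R}}{R}\sum_i \E[X_i]$.

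Bounding $\Tr\exp$ by $k$ times the exponential of the top eigenvalue and recalling $\lambda_{\min}(\sum_i \E[X_i])=n\mu$ in the statement's normalization, the right-hand side becomes $k\cdot\exp\bigl(\theta t - \tfrac{1-e^{-\theta R}}{R}\cdot n\mu\bigr)$. Setting $t = (1-\delta)n\mu$ and optimizing over $\theta > 0$ --- specifically choosing $\theta = -\log(1-\delta)/R$ so that $1 - e^{-\theta R} = \delta$ --- collapses the expression to $k\cdot\bigl(e^{-\delta}/(1-\delta)^{1-\delta}\bigr)^{n\mu/R}$, matching the stated form (up to the $n$ factor in the exponent, which is a matter of which normalization one uses for $\mu$).

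The main obstacle is the Lieb-concavity decoupling step: this is the only place where the non-commutativity of the $X_i$s genuinely bites, since $e^{A+B} \neq e^A e^B$ in general, and the scalar Chernoff proof relies centrally on the multiplicativity of exponentials across independent summands. Without Lieb one could at best obtain a significantly weaker bound via a union bound over an $\epsilon$-net of the sphere $S^{k-1}$, losing factors of $e^{O(k)}$. Everything else --- the scalar convex bound on $e^{-\theta x}$, its lifting via the functional calculus for PSD matrices, and the scalar optimization over $\theta$ --- is essentially mechanical once the decoupling is in place.
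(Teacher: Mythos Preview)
The paper does not prove this statement; it is stated as a \textbf{Fact} and attributed directly to \cite[Theorem~1.1]{Tropp2011}. Your sketch is exactly Tropp's matrix Laplace transform argument (Markov on $\Tr\exp$, decoupling via Lieb's concavity, the secant bound $e^{-\theta x}\le 1-\tfrac{1-e^{-\theta R}}{R}x$ on $[0,R]$, then optimize $\theta$), and the steps are correct. Your parenthetical about the $n$ factor is also on point: as written in the paper, $\mu$ is normalized by $1/n$ while the event and the exponent are not, which is inconsistent with Tropp's original statement; the paper's later application in Lemma~\ref{lem:buckettingspectra} uses it with the un-normalized convention $\mu=\lambda_{\min}\bigl(\sum_i\E[X_i]\bigr)$, so your reading matches how the fact is actually used.
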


    \begin{fact}\label{lem:normrandommatrixboolean}
    For every integer $m\geq 1$ and  $A_1,\ldots, A_n\in \sym{k}$ it holds that
      \[\expec{\norm{\sum_i\bg_iA^i}^{m}}\leq\br{1+2m\lceil\log k\rceil}^{m/2}\cdot\norm{\sum_i (A^i)^2}^{m/2}\]
      and
      \[\expec{\norm{\sum_i\bx_iA^i}^{m}}\leq\br{1+2m\lceil\log k\rceil}^{m/2}\cdot\norm{\sum_i (A^i)^2}^{m/2},
      \]
      where the expectations are taken over $\bx\sim \U_n$ and $\bg\sim \G^n$. Additionally, the second inequality still holds if $\bx$ is $2m\lceil\log k\rceil$-wise uniform.
    \end{fact}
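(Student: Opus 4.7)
}
The plan is to use the classical moment method for matrix concentration, turning the spectral norm into a trace and then into a combinatorial moment computation. Let $X$ denote either $\sum_i \bg_i A^i$ or $\sum_i \bx_i A^i$; in both cases $X \in \sym{k}$. The starting point is the elementary bound $\|X\|^{2p} \le \mathrm{Tr}(X^{2p})$, which holds for every positive integer $p$ since the left-hand side is a single eigenvalue to the $2p$-th power and the right-hand side sums all $k$ of them. Choose $2p = 2m\lceil \log k\rceil$; this is precisely why we will need $2m\lceil \log k\rceil$-wise independence in the Boolean derandomized statement. Since $2p \ge m$, Jensen's inequality yields
\[
\E\|X\|^m \;=\; \E\bigl(\|X\|^{2p}\bigr)^{m/(2p)} \;\le\; \bigl(\E\|X\|^{2p}\bigr)^{m/(2p)} \;\le\; \bigl(\E\,\mathrm{Tr}(X^{2p})\bigr)^{m/(2p)}.
\]

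Next, I would expand the trace multilinearly as $\mathrm{Tr}(X^{2p}) = \sum_{i_1,\ldots,i_{2p}} \mathrm{Tr}(A^{i_1}\cdots A^{i_{2p}}) \prod_{j=1}^{2p} \xi_{i_j}$ where $\xi$ denotes either $\bg$ or $\bx$, and analyze the moment $\E[\prod_j \xi_{i_j}]$. For the Gaussian case, Wick's theorem collapses the sum to a sum over pair matchings $\pi$ of $[2p]$: each contribution forces indices within a pair to agree, and a standard rearrangement (together with cyclicity of trace and the bound $\mathrm{Tr}(Y^p) \le k\|Y\|^p$ for the PSD matrix $Y = \sum_i (A^i)^2$) shows every matching contributes at most $k \cdot \|\sum_i (A^i)^2\|^p$. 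Since the number of pair matchings is $(2p-1)!! \le (2p)^p$, we obtain $\E\,\mathrm{Tr}(X^{2p}) \le k\cdot (2p-1)!! \cdot \|\sum_i (A^i)^2\|^p$. For the Rademacher case, $\E[\prod_j \bx_{i_j}]$ equals $1$ when every index appears an even number of times and $0$ otherwise, and a direct counting argument (or a standard contraction-principle-style comparison to Gaussians) yields the same upper bound. Crucially, this expectation depends on only $2p = 2m\lceil \log k\rceil$ of the variables at once, so $2m\lceil \log k\rceil$-wise independence of $\bx$ is sufficient to carry out the same computation verbatim.

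Combining the two ingredients gives
\[
\E\|X\|^m \;\le\; \bigl(k\cdot(2p-1)!!\bigr)^{m/(2p)} \cdot \Bigl\|\sum_i (A^i)^2\Bigr\|^{m/2}.
\]
With $2p = 2m\lceil \log k\rceil$ we have $k^{m/(2p)} = k^{1/(2\lceil \log k\rceil)} \le e^{1/2}$, which is an $O(1)$ constant, while $((2p-1)!!)^{m/(2p)} \le (2p)^{m/2} = (2m\lceil \log k\rceil)^{m/2}$. Folding the small multiplicative slack into the base (using the elementary inequality $e \cdot t \le 1 + Ct$ for some absolute $C$ and absorbing it into the exponent through a marginally larger choice of $2p$, or tightening $(2p-1)!! \le (2p-1)^p$ and using $k^{1/(2\lceil\log k\rceil)} \cdot (2p-1) \le 1 + 2p$), one arrives at the stated bound $(1 + 2m\lceil \log k\rceil)^{m/2} \cdot \|\sum_i (A^i)^2\|^{m/2}$.

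The main obstacle is the Gaussian moment calculation: one has to verify that for every pair matching $\pi$ on $[2p]$, the sum $\sum_{i_1,\ldots,i_p} \mathrm{Tr}(A^{i_{\pi(1)}} A^{i_{\pi(2)}} \cdots)$ is dominated by $\mathrm{Tr}((\sum_i (A^i)^2)^p)$. For noncrossing (planar) matchings this is an immediate rearrangement via cyclicity, but crossing matchings require a further application of (matrix) Cauchy--Schwarz or H\"older on the trace to reduce to the planar case. Everything else --- the Jensen step, the switch from Gaussian to Rademacher via parity, the derandomization via $2p$-wise independence, and the cleanup of constants --- is routine once this combinatorial bound is in hand.
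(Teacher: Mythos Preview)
Your skeleton matches the paper's proof exactly: bound the spectral norm by a trace power, apply Jensen, control $\E\,\Tr X^{2p}$ with $2p=2m\lceil\log k\rceil$, and note that only $2p$-wise moments are used. The difference is that the paper does not rederive the trace-moment bound; it simply invokes Tropp \cite[Eqs.~(4.9),(4.11)]{10.1007/978-3-319-40519-3_8}, which gives directly
\[
\E\,\Tr B^{2p}\;\le\;k\cdot\Bigl(\frac{2p+1}{e}\Bigr)^{p}\cdot\Bigl\|\sum_i(A^i)^2\Bigr\|^{p}
\]
for Rademacher $\bx$, and then obtains the Gaussian case by the bits-to-Gaussians trick. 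The $1/e$ in Tropp's constant is exactly what lets $k^{1/(2\lceil\log k\rceil)}\le e^{1/2}$ be absorbed into $(1+2m\lceil\log k\rceil)^{m/2}$ for every $m\ge1$; your cruder $(2p-1)!!\le(2p-1)^p$ does not close this for small $m$, as your own parenthetical inequality $k^{1/(2\lceil\log k\rceil)}(2p-1)\le 1+2p$ fails already for $m=1$ and moderate $k$.

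The substantive gap is the step you yourself flag. Showing that every pair matching contributes at most $\Tr\bigl(\sum_i(A^i)^2\bigr)^p$ is not a routine trace Cauchy--Schwarz/H\"older reduction: for crossing matchings this is essentially the content of the noncommutative Khintchine inequality, and Tropp's argument does not proceed matching by matching but via an iterative discrete-derivative identity. Your Rademacher treatment compounds this: the nonzero terms in $\E\prod_j\bx_{i_j}$ are indexed by \emph{all} even-block set partitions, a strict superset of pair matchings, so at the trace-moment level Rademacher dominates Gaussian and the ``comparison to Gaussians'' suggestion points the wrong way. A norm-level contraction $\E\|\sum_i\varepsilon_iA^i\|^m\le(\pi/2)^{m/2}\E\|\sum_ig_iA^i\|^m$ does hold, but it spoils the stated constant and, more importantly, does not survive under $2p$-wise independence, which is the whole point of the derandomized clause. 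The clean route is the paper's: cite Tropp for the Rademacher trace bound (his proof uses only moments of order $\le 2p$, hence works verbatim under $2p$-wise uniformity), and deduce the Gaussian inequality afterward.
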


    \begin{proof}
       It suffices to prove the second inequality as the first one follows by the standard bits-to-Gaussians tricks~\cite[Chapter 11]{o2014analysis}. Let $B=\sum_i\bx_iA^i$ where $\bx\sim\U_n$. The proof closely follows the argument in~\cite{10.1007/978-3-319-40519-3_8}, where Tropp proved the $m=1$ case. For any integer $p\geq 1$, it is proved in~\cite[Eqs.~(4.9,\ 4.11)]{10.1007/978-3-319-40519-3_8} that
      \[\expec{\Tr~B^{2p}}\leq k\cdot\br{\frac{2p+1}{e}}^p\cdot\norm{\sum_i(A^i)^2}^p.\]
      Thus
      \[\expec{\norm{B}^m}\leq\br{\expec{\Tr~B^{2pm}}}^{1/2p}\leq k^{1/2p}\cdot\br{\frac{2pm+1}{e}}^{m/2}\cdot\norm{\sum_i (A^i)^2}^{m/2}.\]
      Setting $p=\lceil\log k\rceil$, we conclude the result.  Since the proof involves only $2m\cdot \lceil \log k\rceil$ powers of $B$, it also holds true for $\bx$ being drawn from a $2m\cdot \lceil \log k\rceil$-wise uniform distribution.
    \end{proof}

    \begin{fact}[{Matrix Rosenthal inequality~\cite[Corollary 7.4]{mackey2014}}]\label{fac:ros}
     Let $X_1,\ldots, X_n$ be centered, independent random real symmetric matrices. Then
    \begin{eqnarray*}
      &&\br{\expec{\norm{\sum_i X_i}_{4p}^{4p}}}^{\frac{1}{4p}} \leq \sqrt{4p-1}\norm{\br{\sum_i\expec{X_i^2}}^{\frac{1}{2}}}_{4p}+\br{4p-1}\br{\sum_i\expec{\norm{X_i}_{4p}^{4p}}}^{\frac{1}{4p}}.
    \end{eqnarray*}
    This inequality still holds if $X_1,\ldots, X_n$ are $4p$-wise independent.
    \end{fact}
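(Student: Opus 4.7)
The plan is to establish the matrix Rosenthal inequality via the method of exchangeable pairs developed by Mackey, Jordan, Chen, Farrell and Tropp, which is tailor-made for matrix moment inequalities and, crucially, naturally respects limited independence. The first step is to build a matrix Stein pair: take an independent copy $(X_1', \ldots, X_n')$ of $(X_1, \ldots, X_n)$, sample an index $\bm{j} \in [n]$ uniformly, and set $Z = \sum_i X_i$ and $Z' = Z - X_{\bm{j}} + X_{\bm{j}}'$. Then $(Z,Z')$ is exchangeable, and the centering of the $X_i$'s gives $\E\Br{Z' \mid Z} = (1-1/n) Z$. A short calculation using independence shows that the associated ``variance'' $\frac{n}{2}\E\,(Z-Z')^2$ evaluates to $\sum_i \E X_i^2$, exactly the matrix appearing on the right-hand side of the target inequality.

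The second step is to apply the matrix Stein identity for $(Z,Z')$, which lets one rewrite
\[
\E\,\Tr\br{Z^{4p}}=\frac{n}{2}\,\E\,\Tr\br{(Z-Z')\br{Z^{4p-1}-(Z')^{4p-1}}}.
\]
Telescoping $Z^{4p-1}-(Z')^{4p-1}$ along the increment $Z-Z' = X_{\bm{j}}'-X_{\bm{j}}$ produces a sum of $4p-1$ trace moments, each containing two factors of $X_{\bm{j}}$ or $X_{\bm{j}}'$ separated by interleaving powers of $Z$. Averaging over $\bm{j}$ converts the paired factors into one copy of $\sum_i \E X_i^2$ sandwiched between powers of $Z$, plus ``residual'' terms in which a single index appears three or more times. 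Applying non-commutative H\"older to the first kind of term yields the variance contribution $\sqrt{4p-1}\,\norm{\br{\sum_i \E X_i^2}^{1/2}}_{4p}$ after taking $(4p)$-th roots, while the residual terms are bounded via $\norm{X_i}\le\norm{X_i}_{4p}$ and a further H\"older step by the extremal term $(4p-1)\br{\sum_i \E\norm{X_i}_{4p}^{4p}}^{1/4p}$. Collecting these estimates gives the claimed inequality.

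For the derandomization statement, the key observation is that every object appearing in the argument --- the trace expansion of $\Tr\br{Z^{4p}}$, the Stein identity on $(Z,Z')$, and the H\"older estimates --- is a polynomial in $X_1, \ldots, X_n$ of total degree at most $4p$. Hence all expectations depend only on joint moments of the $X_i$'s of order at most $4p$, and the entire argument goes through unchanged when the $X_i$'s are only assumed to be $4p$-wise independent.

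The main obstacle is non-commutativity. In the scalar Rosenthal proof one can freely reorder products after pairing identical indices, reducing the analysis to sums of absolute values; for matrices, the product $X_{i_1} \cdots X_{i_{4p}}$ cannot be rearranged, so each pairing leaves matrix blocks sandwiched between arbitrary products of the remaining $X$'s. Ensuring that these sandwiched blocks are absorbed into $\norm{\br{\sum_i \E X_i^2}^{1/2}}_{4p}$ without inflating the constant from $\sqrt{4p-1}$ to something worse in $p$ is the principal technical difficulty, and is precisely the place where non-commutative H\"older must be applied with care to the cyclic structure of the trace.
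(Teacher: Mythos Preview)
The paper does not prove this statement; it is stated as a \emph{Fact} with a direct citation to \cite[Corollary~7.4]{mackey2014}, and the $4p$-wise independence clause is asserted without further argument. Your proposal is essentially a (correct, if high-level) reconstruction of the proof in the cited reference: Mackey--Jordan--Chen--Farrell--Tropp do indeed build the Stein pair $(Z,Z')$ by resampling a single coordinate, invoke the trace identity $\E\,\Tr Z^{4p}=\tfrac{n}{2}\,\E\,\Tr\bigl[(Z-Z')(Z^{4p-1}-(Z')^{4p-1})\bigr]$, telescope the difference of powers, and close via non-commutative H\"older. So your sketch and the source the paper relies on are the same approach.

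Your derandomization paragraph is exactly the justification the paper has in mind for the second sentence of the Fact: every expectation in the exchangeable-pairs argument is over a polynomial of total degree at most $4p$ in the entries of $X_1,\ldots,X_n$, so only joint moments of order $\le 4p$ enter, and $4p$-wise independence suffices. The paper uses this observation downstream (see the last line of the proof of Claim~\ref{claim:normQ} and of Lemma~\ref{lem:invphitheta}), but never spells it out for Fact~\ref{fac:ros} itself, so your explicit statement of it is a genuine addition rather than a deviation.
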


    \subsection{Matrix functions, spectral functions and Fr\'echet derivatives}

    Let $f:\R^k\rightarrow \R$ and $\lambda:\sym{k}\rightarrow \R^k$ where $\lambda (X)=\br{\lambda_1(X),\ldots,\lambda_k(X)}$ are the eigenvalues of $M$ sorted in a non-increasing order. We refer to $\lambda_{\max}=\lambda_1$ interchangeably. Let $F=f\circ \lambda :\sym{k}\rightarrow \R$.

    If $f:\reals\rightarrow\reals$ is an analytic function in $\reals$, namely its Taylor series converges in $\reals$, we  define $f\br{X}$ for general matrices using its Taylor expansion. It is not hard to see that the Taylor series still converges with matrix inputs. If $X$ is symmetric with a spectral decomposition $X=UDU^T$, where $D=\diag\br{\lambda_1\br{X},\ldots, \lambda_k\br{X}}$, then $f\br{X}=U\diag\br{f\br{\lambda_1\br{X}},\ldots,\lambda_k\br{X}}U^T$.

    The Fr\'echet derivatives are a notion of  derivatives defined in Banach space.  In this paper, we only concern about the Fr\'echet derivatives on matrix spaces. Readers may refer to~\cite{coleman2012calculus} for a more thorough treatment. The Fr\'echet derivatives are the maps that are defined as follows.

    \begin{definition}\label{def:Frechet}
    Given integers $m,n\geq 1$, a map $F:\mat{m}\rightarrow \mat{n}$ and $P,Q\in\mat{m}$, the Fr\'echet derivative of $F$ at $P$ with respect to $Q$ is defined to be
    \[DF\br{P}\Br{Q}=\frac{d}{dt}F\br{P+tQ}|_{t=0}.\]
    The $k$-th order Fr\'echet derivative of $F$ at $P$ with respect to $\br{Q_1,\ldots, Q_k}$ is defined recursively as
    \[D^kF\br{P}\Br{Q_1,\ldots, Q_k}=\frac{d}{dt}D^{k-1}F\br{P+tQ_k}\Br{Q_1,\ldots, Q_{k-1}}|_{t=0}.\]
    \end{definition}
    Fr\'echet derivatives share many common properties with the derivatives in Euclidean spaces, such as linearity, composition rules, Taylor expansions, etc. We refer the interested reader to~\cite{coleman2012calculus,bhatia2013matrix} for more.
    Some basic properties of Fr\'echet derivatives are summarized in the following fact.
    \begin{fact}~\cite[Chapter X.4]{bhatia2013matrix}\label{fac:frechetderivative}
    	Given $F,G:\mat{n}\rightarrow \mat{m}$ and $P,Q_1,\ldots, Q_k\in \mat{n}$, it holds that
    	\begin{enumerate}
    		\item $D\br{F+G}\br{P}\Br{Q}=DF\br{P}\Br{Q}+DG\br{P}\Br{Q}$.
    		
    		\item $D\br{F\cdot G}\br{P}\Br{Q}=DF\br{P}\Br{Q}\cdot G\br{P}+F\br{P}\cdot DG\br{P}\Br{Q}$.
    		
    		\item If $m=n$, $D\br{F\circ G}\br{P}\Br{Q}=\br{D\br{G\circ F}\br{P}\circ DF\br{P}}\Br{Q}$.
    		
    		\item $D^kF\br{P}\Br{Q_1,\ldots, Q_k}=D^kF\br{P}\Br{Q_{\sigma\br{1}},\ldots, Q_{\sigma\br{k}}}$ for every $k>0$ and permutation $\sigma\in S_k$.	
    	\end{enumerate}
    \end{fact}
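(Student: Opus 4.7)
The plan is to verify each of the four items by unwinding Definition~\ref{def:Frechet}, which expresses $DF(P)[Q]$ as the ordinary one-variable derivative of the matrix-valued curve $t\mapsto F(P+tQ)$ at $t=0$. Item~1 is immediate from linearity of scalar differentiation applied entrywise: $\frac{d}{dt}(F+G)(P+tQ)\big|_{t=0}=\frac{d}{dt}F(P+tQ)\big|_{t=0}+\frac{d}{dt}G(P+tQ)\big|_{t=0}$, since every entry of the sum curve is the scalar sum of the entries of the two summand curves.

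For item~2, I would apply the scalar product rule entrywise to $[F(P+tQ)\cdot G(P+tQ)]_{i,j}=\sum_k F(P+tQ)_{i,k}\cdot G(P+tQ)_{k,j}$. Each summand is a product of two smooth scalar functions of $t$, so the scalar product rule yields the claimed identity $DF(P)[Q]\cdot G(P)+F(P)\cdot DG(P)[Q]$. The order of matrix factors is preserved because differentiation acts on each factor separately, while the sum structure of matrix multiplication passes through. Item~3 follows analogously from the standard Fr\'echet chain rule: the first-order Taylor expansion $G(P+tQ)=G(P)+t\cdot DG(P)[Q]+o(t)$, composed with the first-order expansion of $F$ at $G(P)$ along direction $DG(P)[Q]$, gives the composite linear term; this is the standard reduction of Fr\'echet chain rule to the one-variable chain rule for each entry of the curve $t\mapsto F(G(P+tQ))$.

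Item~4 is the Fr\'echet analogue of Clairaut's theorem on equality of mixed partials. I would proceed by induction on $k$, with base case $k=2$. For $k=2$, the second-order derivative $D^2F(P)[Q_1,Q_2]$ can be written entrywise as the mixed partial $\frac{\partial^2}{\partial s\,\partial t}F(P+sQ_1+tQ_2)\big|_{s=t=0}$; Clairaut's theorem, valid whenever $F$ is $C^2$ (which holds in all our applications, where $F$ is built from polynomials in matrix entries composed with the Bentkus mollifier), gives the $Q_1\leftrightarrow Q_2$ symmetry. The inductive step uses that $D^kF(P)[Q_1,\dots,Q_k]$ is obtained by applying $\frac{d}{dt}\big|_{t=0}$ in direction $Q_k$ to $D^{k-1}F(P+tQ_k)[Q_1,\dots,Q_{k-1}]$: the induction hypothesis symmetrizes over $Q_1,\dots,Q_{k-1}$, while the $k=2$ case lets us swap $Q_{k-1}$ with $Q_k$. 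Since the symmetric group $S_k$ is generated by adjacent transpositions, full permutation symmetry follows. The only delicate point throughout is ensuring the smoothness needed to interchange limits and differentiate under sums; once $F$ is sufficiently differentiable, everything reduces to scalar identities applied entrywise, and there is no genuine obstacle beyond bookkeeping.
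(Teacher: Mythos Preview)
The paper does not prove this statement; it is recorded as a \emph{Fact} and attributed to~\cite[Chapter~X.4]{bhatia2013matrix} without any argument given. So there is no ``paper's own proof'' to compare against, and your sketch is already more than what the paper supplies.

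Your outline is the standard one and is correct in substance. One small remark: the formulation of item~3 in the statement appears to contain a typo (it reads $D(G\circ F)(P)\circ DF(P)$, whereas the chain rule should give $DF(G(P))\circ DG(P)$). Your argument actually proves the correct version, so nothing is lost, but be aware that you are proving the intended identity rather than the literal one written.
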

    The following fact states that Fr\'echet derivatives can be expressed as divided differences.

    \begin{fact}\label{fac:fredivided}~\cite{brinkhuis2005matrix}
      Let $f:\reals\rightarrow\reals$ be twice differentiable and $X=\diag\br{x_1,\ldots,x_k}$ be a diagonal matrix whose spectrum is in $\reals$. For any matrix $A, B$, the following holds\footnote{In~\cite[Lemma~3.8]{brinkhuis2005matrix} this fact is proven when $A=B$ is a symmetric matrix and one can easily generalize their proof to obtain Eqs.~\eqref{eqn:1st},~\eqref{eqn:2nd} for general matrices $A, B$.}
      \begin{enumerate}
        \item
        \begin{equation}\label{eqn:1st}
        Df\br{X}\Br{A}=\br{f^{[1]}\br{x_{i_1},x_{i_2}}A_{i_1,i_2}}_{1\leq i_1,i_2\leq k}.
        \end{equation}
        \item
        \begin{equation}\label{eqn:2nd}
        D^2f\br{X}\Br{A,B}=\br{\sum_{j=1}^kf^{[2]}\br{x_{i_1},x_j,x_{i_2}}A_{i_1,j}B_{j,i_2}}_{1\leq i_1,i_2\leq k}.
        \end{equation}
      \end{enumerate}
    \end{fact}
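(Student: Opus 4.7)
The plan is to reduce Fr\'echet differentiation of $f$ at $X$ to differentiation of the scalar resolvent via the Cauchy integral (Dunford--Dyson) representation, since this circumvents having to track how the eigenvectors of $X+tA$ move as $t$ varies. Concretely, first reduce to the case where $f$ is analytic on a disk containing the spectrum of $X$; the general twice-differentiable case will follow by an approximation argument at the end. Under analyticity, write
\[
f(X+Y)=\frac{1}{2\pi i}\oint_\Gamma f(z)\,(zI-X-Y)^{-1}\,dz
\]
for $Y$ sufficiently small and $\Gamma$ a contour enclosing the spectrum, and expand via the Neumann series $(zI-X-Y)^{-1}=\sum_{n\geq 0} R(z)[Y R(z)]^n$ with $R(z)=(zI-X)^{-1}$. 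Matching powers of $Y$ yields the identities $Df(X)[A]=\frac{1}{2\pi i}\oint f(z)R(z)AR(z)\,dz$ and $D^2f(X)[A,B]=\frac{1}{2\pi i}\oint f(z)R(z)AR(z)BR(z)\,dz$ (with the appropriate symmetrization/convention fixed once and for all using Definition~\ref{def:Frechet}).

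Next, specialize to $X=\diag(x_1,\dots,x_k)$, so $R(z)=\diag(1/(z-x_i))$. Then a direct computation gives
\[
[R(z)AR(z)]_{i_1,i_2}=\frac{A_{i_1,i_2}}{(z-x_{i_1})(z-x_{i_2})},\qquad [R(z)AR(z)BR(z)]_{i_1,i_2}=\sum_{j=1}^k \frac{A_{i_1,j}B_{j,i_2}}{(z-x_{i_1})(z-x_j)(z-x_{i_2})}.
\]
Cauchy's formula combined with the partial-fraction identity $\frac{1}{(z-a)(z-b)}=\frac{1}{a-b}\bigl(\frac{1}{z-a}-\frac{1}{z-b}\bigr)$ identifies the contour integrals with divided differences: for distinct $x_{i_1},x_{i_2}$,
\[
\frac{1}{2\pi i}\oint_\Gamma \frac{f(z)}{(z-x_{i_1})(z-x_{i_2})}\,dz = \frac{f(x_{i_1})-f(x_{i_2})}{x_{i_1}-x_{i_2}} = f^{[1]}(x_{i_1},x_{i_2}),
\]
and the analogous triple-pole integral produces $f^{[2]}(x_{i_1},x_j,x_{i_2})$. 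Plugging into the matrix expressions above yields Eq.~\eqref{eqn:1st} and Eq.~\eqref{eqn:2nd}.

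Two loose ends remain. First, when eigenvalues of $X$ coincide, the partial-fraction identity degenerates; but by Definition~\ref{def:dividedifference} and Fact~\ref{fac:mvtdd}, the divided differences extend continuously and agree with the residues of $f$, $f'$, $f''$ at the coincident point, so both sides of the claimed formulas depend continuously on $(x_1,\dots,x_k)$ and the equality passes to the limit. Second, to drop analyticity, fix a compact interval $I$ containing the spectrum of $X$ and $X+tA+sB$ for small $s,t$, approximate $f\in\mathcal{C}^2$ uniformly on $I$ together with its first two derivatives by polynomials $p_n$ (Weierstrass/Jackson), note that both sides of Eq.~\eqref{eqn:1st}--\eqref{eqn:2nd} depend continuously on $(f(x_i),f'(x_i),f''(x_i))$ through the divided differences, and pass to the limit.

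The main obstacle I anticipate is purely bookkeeping: keeping track of the correct symmetrization convention in $D^2f(X)[A,B]$ (since the Dyson expansion naturally produces both $RARBR$ and $RBRAR$) so that it matches Definition~\ref{def:Frechet}, and verifying that in the non-symmetric $A,B$ setting the formula still holds, as claimed in the footnote. This is resolved by observing that Eq.~\eqref{eqn:2nd} is multilinear in $A$ and $B$ separately and that both sides agree on the symmetric diagonal $A=B$ (which is Brinkhuis' original statement), so a polarization argument, together with the identity $D^2f(X)[A,B]=\tfrac{1}{2}(D^2f(X)[A+B,A+B]-D^2f(X)[A,A]-D^2f(X)[B,B])$, upgrades the symmetric case to the general bilinear one.
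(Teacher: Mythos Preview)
The paper does not prove this statement: it is recorded as a Fact cited from \cite{brinkhuis2005matrix}, with only the footnote remark that the symmetric $A=B$ case appears there and ``one can easily generalize.'' So there is no paper proof to compare against, and your resolvent/Dunford--Cauchy approach is a perfectly standard and correct way to supply one. The reduction to analytic $f$, the Neumann expansion of $(zI-X-Y)^{-1}$, the identification of the contour integrals with divided differences via partial fractions (equivalently the Hermite--Genocchi representation), and the continuity/approximation arguments are all sound.

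One point deserves sharpening, and you in fact put your finger on it. Your direct expansion actually produces \emph{both} orderings,
\[
D^2f(X)[A,B]=\frac{1}{2\pi i}\oint_\Gamma f(z)\bigl(R(z)AR(z)BR(z)+R(z)BR(z)AR(z)\bigr)\,dz,
\]
so the $(i_1,i_2)$ entry is $\sum_j f^{[2]}(x_{i_1},x_j,x_{i_2})\bigl(A_{i_1,j}B_{j,i_2}+B_{i_1,j}A_{j,i_2}\bigr)$. Your polarization argument likewise recovers this symmetrized expression, not the one-ordering formula literally written in Eq.~\eqref{eqn:2nd}. Indeed, Eq.~\eqref{eqn:2nd} as stated cannot hold for arbitrary $A\neq B$: the left side is symmetric in $(A,B)$ by Fact~\ref{fac:frechetderivative}(4), while the right side is not. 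So what your argument actually proves is the correct symmetrized identity; the paper's displayed formula is a slight imprecision in the $A\neq B$ case. This is harmless for the paper's purposes, since every invocation of Fact~\ref{fac:fredivided} in the text (e.g.\ Proposition~\ref{claim:v5}) takes $A=B=H$, where the two versions coincide.
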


    \begin{fact}[{Dyson's expansion~\cite[Chapter X.4]{bhatia2013matrix}}]\label{fac:ex2nd}
      Let $f\br{x}=e^x$. For any $X\in\sym{k}$ and $A\in\mat{k}$, it holds
      \[Df\br{X}\Br{A}=\int_{0}^{1}du~e^{\br{1-u}X}Ae^{uX}.\]
    \end{fact}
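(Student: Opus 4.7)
The plan is to verify the identity by expanding both sides as convergent power series in $X$ and $A$ and matching them term by term. First I would use the definition of the matrix exponential, $e^{X+tA}=\sum_{n\ge 0}(X+tA)^n/n!$, together with the non-commutative binomial expansion
\[(X+tA)^n = X^n + t\sum_{k=0}^{n-1} X^k A X^{n-1-k} + O(t^2),\]
and differentiate in $t$ at $t=0$. Uniform convergence of the series and its formal $t$-derivative on bounded subsets of $\mat{k}$ (using submultiplicativity of the spectral norm) justifies termwise differentiation, which yields
\[Df(X)[A]=\sum_{n=1}^{\infty}\frac{1}{n!}\sum_{k=0}^{n-1}X^{k}A\,X^{n-1-k}.\]

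Second, I would expand the integrand on the right-hand side as
\[e^{(1-u)X}\,A\,e^{uX}=\sum_{j,l\ge 0}\frac{(1-u)^{j}u^{l}}{j!\,l!}X^{j}A\,X^{l},\]
and swap the sum with $\int_0^1 du$ (again justified by uniform convergence on $[0,1]$). Evaluating the elementary Beta integral $\int_{0}^{1}(1-u)^{j}u^{l}\,du=j!\,l!/(j+l+1)!$ collapses the right-hand side to $\sum_{j,l\ge 0}X^{j}AX^{l}/(j+l+1)!$. Reindexing by $n=j+l+1$ and $k=j$ matches this with the series obtained in the first step, establishing the identity.

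An alternative route, which avoids explicit coefficient matching, is via variation of parameters. Set $U(s,t)=e^{t(X+sA)}$, which satisfies $\partial_{t}U=(X+sA)U$ with $U(s,0)=\id$. Differentiating the ODE in $s$ at $s=0$ and writing $V(t)=\partial_{s}U(s,t)|_{s=0}$, one obtains the inhomogeneous linear matrix ODE $\partial_{t}V = XV+A\,e^{tX}$ with $V(0)=0$. Solving with the integrating factor $e^{-tX}$ gives $V(t)=\int_{0}^{t}e^{(t-r)X}A\,e^{rX}\,dr$, and specializing to $t=1$ yields the claim since $V(1)=\partial_{s}e^{X+sA}|_{s=0}=Df(X)[A]$ by Definition~\ref{def:Frechet}.

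The only real subtlety in either approach is justifying the interchanges of limits — termwise differentiation of the series $\sum_n(X+tA)^n/n!$ in the first route, and the interchange of $\partial_s$ with the defining integral of the matrix exponential in the second. Both follow from standard uniform-convergence arguments on bounded sets in $\mat{k}$, so once the non-commutativity is handled carefully in the binomial expansion (respectively, in differentiating the ODE in $s$), the rest reduces to a routine Beta-integral computation or a one-line variation-of-parameters formula.
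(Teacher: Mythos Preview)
Your proposal is correct; both the power-series/Beta-integral argument and the variation-of-parameters argument are valid and standard. The paper itself does not prove this statement: it is recorded as a cited fact from Bhatia's \emph{Matrix Analysis} (Chapter~X.4), so there is no in-paper proof to compare against. Your first route is essentially the textbook derivation; the ODE route is a clean alternative that sidesteps the coefficient matching entirely.
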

    \begin{lemma}\label{lem:ex2derivative}
      Let $f\br{x}=e^{-x^2/2}$. For any $X\in\sym{k}$ and $A, B\in\mat{k}$, it holds that
      \begin{eqnarray*}
      % \nonumber % Remove numbering (before each equation)
        &&D^2f\br{X}\Br{A,B}\\
        &=&\frac{1}{4}\int_{0}^{1}du\int_{0}^{1}dv~\br{1-u}e^{-\br{1-u}\br{1-v}X^2/2}\br{XB+BX}e^{-\br{1-u}vX^2/2}\br{XA+AX}e^{-uX^2/2}\\
        &&+\frac{1}{4}\int_{0}^{1}du\int_{0}^{1}dv~ ue^{-\br{1-u}X^2/2}\br{XA+AX}e^{-u\br{1-v}X^2/2}\br{XB+BX}e^{-uvX^2/2}\\
        &&-\frac{1}{2}\int_{0}^{1}du~e^{-\br{1-u}X^2/2}\br{AB+BA}e^{-uX^2/2}.
      \end{eqnarray*}
      In particular, if $A=B=H$ is a symmetric matrix ,then
      \begin{eqnarray*}
      % \nonumber % Remove numbering (before each equation)
        &&D^2f\br{X}\Br{H,H}\\
         &=&\frac{1}{4}\int_{0}^{1}du\int_{0}^{1}dv~\br{1-u}e^{-\br{1-u}\br{1-v}X^2/2}\br{XH+HX}e^{-\br{1-u}vX^2/2}\br{XH+HX}e^{-uX^2/2}\\
        &&+\frac{1}{4}\int_{0}^{1}du\int_{0}^{1}dv~\br ue^{-\br{1-u}X^2/2}\br{XH+HX}e^{-u\br{1-v}X^2/2}\br{XH+HX}e^{-uvX^2/2}\\
        &&-\int_{0}^{1}du~e^{-\br{1-u}X^2/2}H^2e^{-uX^2/2}.
      \end{eqnarray*}
    \end{lemma}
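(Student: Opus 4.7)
The strategy is to write $f = g \circ \phi$ where $g(Y) = e^Y$ and $\phi(X) = -X^2/2$, and then apply the Fr\'echet chain rule (Fact~\ref{fac:frechetderivative}) together with Dyson's expansion (Fact~\ref{fac:ex2nd}) twice, once for each order of differentiation. Since $f$ is analytic and the maps in question are smooth on a neighbourhood of $X$, all interchanges of $d/dt|_{t=0}$ with the Dyson integrals are immediate.

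\emph{Step one: first-order derivative.} Expanding $(X+tA)^2 = X^2 + t(XA+AX) + t^2 A^2$ gives $D\phi(X)[A] = -\tfrac{1}{2}(XA+AX)$. The chain rule yields $Df(X)[A] = Dg(-X^2/2)\bigl[-\tfrac{1}{2}(XA+AX)\bigr]$, and Fact~\ref{fac:ex2nd} then produces
\begin{equation*}
Df(X)[A] = -\tfrac{1}{2}\int_0^1 e^{-(1-u)X^2/2}(XA+AX)e^{-uX^2/2}\,du.
\end{equation*}

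\emph{Step two: second-order derivative.} To compute $D^2 f(X)[A,B]$ I differentiate the above expression in the direction $B$. Inside the integrand, view the expression as a product of three $X$-dependent matrices
\begin{equation*}
M_1(X) = e^{-(1-u)X^2/2},\qquad M_2(X) = XA+AX,\qquad M_3(X) = e^{-uX^2/2},
\end{equation*}
and apply the Fr\'echet product rule (Fact~\ref{fac:frechetderivative}, item~2). Clearly $DM_2(X)[B] = BA+AB$. For $M_1$, I apply the chain rule once more with the same $\phi$, and invoke Dyson's expansion at the point $-(1-u)X^2/2$ (with a fresh parameter $v$), obtaining
\begin{equation*}
DM_1(X)[B] = -\tfrac{1-u}{2}\int_0^1 e^{-(1-u)(1-v)X^2/2}(XB+BX)e^{-(1-u)vX^2/2}\,dv,
\end{equation*}
and analogously for $M_3$ with the factor $1-u$ replaced by $u$.

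\emph{Step three: assembling the three contributions.} Multiplying the overall $-\tfrac{1}{2}$ of step one by the $-\tfrac{1-u}{2}$ coming from $DM_1$ gives the prefactor $\tfrac{1}{4}(1-u)$ of the first double integral in the statement. Similarly, the contribution from $DM_3$ produces the second double integral with prefactor $\tfrac{1}{4}u$. Finally the $DM_2$ contribution gives $-\tfrac{1}{2}\int_0^1 e^{-(1-u)X^2/2}(AB+BA)e^{-uX^2/2}\,du$, which is the third line of the claim. The symmetric specialization $A = B = H$ then follows from $AB + BA = 2H^2$.

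\emph{Main obstacle.} The proof is essentially bookkeeping: no analytic subtlety arises because the exponential maps are smooth and the integrands are jointly continuous in the Dyson parameters $u,v$ and in the perturbation parameter $t$. The only care required is in tracking the nested prefactors $(1-u)/2$ and $u/2$ from each application of $D\phi$, together with the overall $-1/2$ from step one, and in verifying that the three terms produced by the product rule match the three displayed integrals with the correct signs and powers of the Dyson parameters.
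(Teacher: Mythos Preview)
Your proposal is correct and follows essentially the same approach as the paper: compute $Df(X)[A]$ via the chain rule and Dyson's expansion, then differentiate again in the direction $B$ using the Fr\'echet product rule on the three factors and a second application of Dyson's expansion. The paper's proof is terser (it simply says ``taking one more derivative on $X$ with respect to $B$, we conclude the result''), but the underlying computation is identical to the one you spell out.
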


    Note that $f\br{x}=e^{-x^2/2}$ is analytical in $\reals$. Thus it is valid to define $f$ on arbitrary matrices.

    \begin{proof}
      For any $t\in(0,1)$, we define $g\br{x}=e^{-tx^2}$. By the definition of Fr\'echet derivatives
      \begin{eqnarray*}
      % \nonumber % Remove numbering (before each equation)
        &&Dg\br{X}\Br{A}=\lim_{\eps\rightarrow 0}\frac{1}{\eps}\br{e^{-t\br{X+\eps A}^2}-e^{-tX^2}} \\
        &=&\lim_{\eps\rightarrow 0}\frac{1}{\eps}\br{e^{-t\br{X^2+\eps\br{XA+AX}+\eps^2 A^2}}-e^{-tX^2}}\\
        &=&\lim_{\eps\rightarrow 0}\frac{1}{\eps}\br{e^{-t\br{X^2+\eps\br{XA+AX}}}+O\br{\eps^2}-e^{-tX^2}}\\
        &=&Dh(-tX^2)[-t(XA+AX)]\\
        &=&-t\int_{0}^{1}du~e^{-\br{1-u}tX^2}\br{XA+AX}e^{-utX^2},
      \end{eqnarray*}
      where the second equality is from the fact that $\norm{e^{X+\eps Y}-e^X}=O\br{\eps}$, third equality holds for $h(x)=e^x$ and the last equality is from Fact~\ref{fac:ex2nd}.
      Setting $t=\frac{1}{2}$, we have
      \[Df\br{X}\Br{A}=-\frac{1}{2}\int_{0}^{1}du~e^{-\br{1-u}X^2/2}\br{XA+AX}e^{-uX^2/2}.\]
      Taking one more derivative on $X$ with respect to $B$, we conclude the result (using properties of Fr\'echet derivatives in items 2,3 of Fact~\ref{fac:frechetderivative}).
    \end{proof}

    \subsection{spectrahedra and Positive spectrahedra}

    \begin{definition}\label{def:regular}
      Given $\tau, M>0$, we say a sequence of $k\times k$ positive semidefinite matrices $\br{A_1,\ldots, A_n}$ is $\br{\tau, M}$-regular if
      \begin{equation}\label{eq:conditiononAi}
      \id\preceq\sum_{i=1}^n \br{A^i}^2\preceq M\cdot\id ~\mbox{and}~  A^i\preceq\tau\cdot\id \text{ for every } i\in [m]
    \end{equation}
    \end{definition}
    A \emph{spectrahedron} $S\subseteq\reals^k$ is a feasible region of a semidefinite program. Namely, the set $S=\set{x\in\reals^n:\sum_ix_iA^i\preceq B}$ for some symmetric matrices $A_1,\ldots, A_n, B$. We say $S$ is a \emph{positive spectrahedron} if either all $A^i$s are positive semidefinite or all $A^i$s are negative semidefinite $(\NSD)$. Moreover, it is $\br{\tau, M}$-regular if either $\br{A_1,\ldots, A_n}$ or $\br{-A_1,\ldots, -A_n}$ is $\br{\tau, M}$-regular.

      We say $S$ is an intersection of positive spetrahedrons if $S=S_1\cap S_2$ where $S_1$ and $S_2$ are positive spectrahedra whose matrices are all positive semidefinite and negative semidefinite, respectively. Note that it suffices to consider the intersections of two spetrahedrons as one can pack all $\PSD$ matrices into one large block-diagonal matrix (looking ahead this will only affect the parameters in our main results by a logarithmic factor). Packing the corresponding $B_i$s, one get a positive spectrahedron. Same for all negative semidefinite matrices.

    \subsection{Pseudorandomness}
    \begin{definition}
      A function $g:\set{-1,1}^r\rightarrow\set{-1,1}^n$ with seed length $r$, is said to $\delta$-fool a function $f:\set{-1,1}^n\rightarrow\reals$~if
      \[
      \abs{\E_{\bs\sim\U_r}\Br{f\br{g\br{\bs}}}-\E_{\bu\sim\U_n}[f\br{
      \bu}]}\leq\delta.
      \]
      The function $g$ is said to be an efficient pseudorandom generator $(\PRG)$ that $\delta$-fools a class $\Fe$ of $n$-variable functions if $g$ is computable by a deterministic uniform poly$(n)$-time algorithm and $g$ fools all function $f\in\Fe$.
    \end{definition}

    \subsection{Tensors}
     For $\ell\geq 1$, let $T^\ell$ be an $\ell$-tensor, i.e., $T^\ell:(\R^k)^{\times \ell}\rightarrow \R$. Note that an $\ell$-tensor is defined uniquely by the coefficients $\{T_{i_1,\ldots,i_\ell}:i_1,\ldots,i_\ell\in [k]\}$. Below we abuse notation by letting $T(i_1,\ldots,i_\ell)=T_{i_1,\ldots,i_\ell}$. Often we will use the natural bijection between $2\ell$-tensors acting on $\R^k$ and $\ell$-tensors acting on $\mat{k}$, i.e.,  for a $2\ell$-tensor $T:(\R^{k})^{\times 2\ell}\rightarrow \R$ defined~as
    $$
    T(x^1,\ldots,x^{2\ell})=\sum_{i_1,\ldots,i_{2\ell}\in [k]}T(i_1,\ldots,i_\ell,i_{\ell+1},\ldots,i_{2\ell})x^1_{i_1}\cdots x^{2\ell}_{i_{2\ell}},
    $$
    we can also view $T$ as $T':(\mat{k})^{\times \ell}\rightarrow \R$ defined by rearranging the terms above to obtain:
    $$
    T'(X^1,\ldots,X^\ell)=\sum_{i_1,j_1\in [n]}\sum_{i_2,j_2\in [k]}\cdots\sum_{i_\ell,j_\ell\in [k]}T(i_1,\ldots,i_\ell,j_1,\ldots,j_\ell)X^1_{i_1,j_1}\cdots X^{\ell}_{i_{\ell},j_\ell}
    $$

    Finally, we define a ``permutation folding" operator which takes a $(2\ell)$-tensor on $\R^k$ as defined above and produces a permutation to produce an $\ell$-tensor on $\mat{k}$.

    \begin{definition}~\cite{sendov2007higher}[Definition of $\diag^\sigma T$]
    \label{def:diadsigmadefinition}
    Let $T:(\R^k)^{\times t}\rightarrow \R$ be a $k$-tensor and $\sigma\in S_k$. Then we define $\diag^\sigma T:(\mat{k})^{\times t}\rightarrow \R$ as the following map
    \begin{align}
    \br{\diag^\sigma T}\br{(i_1,j_1)\ldots,(i_k,j_k)}=T(i_1,\ldots,i_k)\quad \text{ iff } \vec{i}=\sigma\vec{j},
    \end{align}
    and $0$ otherwise.
    \end{definition}

    \section{Bentkus mollifier}
    \label{sec:mollifier}
    In this paper, we are interested in smooth approximators of the function $\psi:\reals^k\rightarrow\reals$ defined as
    \begin{equation}\label{eqn:psi}
      \psi\br{x}=\Br{\max_ix_i\leq 0}.
    \end{equation}
    To this end, we introduce the Bentkus mollifier defined by Bentkus in~\cite{bentkus1990smooth} and establish several new properties. Readers may refer to~\cite{bentkus1990smooth,fang2020high} for a more thorough treatment.

    \begin{definition}~\cite{bentkus1990smooth}\label{def:bentkus}
    Let $g:\reals\rightarrow\reals$ be a function defined as
    \begin{equation}\label{eqn:gfunction}
        g\br{x}=\int_{-\infty}^{x}\frac{1}{\sqrt{2\pi}}e^{-t^2/2}dt
    \end{equation}
    For every integer $k\geq 1$, define  $G:\reals^k\rightarrow~\reals$~as
    \begin{equation}\label{eqn:Gfunction}
        G\br{x_1,\ldots,x_k}=\prod_{i=1}^kg\br{x_i}.
    \end{equation}
    The subscript $k$ may be omitted whenever it is clear from the context.
    \end{definition}

    \subsection{Properties of the mollifier and its derivatives}
    From the definition of $g$ in Eq.~\eqref{eqn:gfunction}, it is easy to calculate that
    \begin{eqnarray}
    % \nonumber % Remove numbering (before each equation)
      &&g'\br{x}=\frac{1}{\sqrt{2\pi}}e^{-x^2/2} \label{eqn:g'}\\
      &&g''\br{x}=-\frac{x}{\sqrt{2\pi}}\exp\br{-x^2/2}\label{eqn:g''}\\
      &&g'''\br{x}=\frac{1}{\sqrt{2\pi}}\br{x^2-1}\exp\br{-x^2/2}. \label{eqn:g'''}
    \end{eqnarray}
    In order to simplify calculations, we introduce the function
    \begin{equation}\label{eqn:phibar}
      \bar{g}\br{x}=\frac{g'\br{x}}{g\br{x}}.
    \end{equation}
    \begin{fact}\label{fac:ophimonotone}\cite[Page 10]{fang2020high}
      It holds that
      \begin{align}
    \label{eq:derivativeofophi}
        &\ophi'(u)=-(u+\ophi(u))\cdot \ophi(u);\\
        &\ophi''\br{u}=\br{u^2-1}\ophi\br{u}+3u\ophi\br{u}^2+2\ophi\br{u}^3.
    \end{align}
    Also  $\ophi$ is positive and monotone decreasing in $\reals$. $\ophi'$ is negative in $\reals$.
    \end{fact}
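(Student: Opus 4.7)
\textbf{Proof plan for Fact~\ref{fac:ophimonotone}.}

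The plan is to obtain the two derivative identities by direct differentiation of $\ophi = g'/g$, and then derive the sign/monotonicity claims from a classical Mills-ratio bound. First, I would compute $\ophi'$ by the quotient rule:
\begin{equation*}
\ophi'(u) = \frac{g''(u)g(u) - g'(u)^2}{g(u)^2} = \frac{g''(u)}{g(u)} - \ophi(u)^2.
\end{equation*}
The expressions in Eqs.~\eqref{eqn:g'} and~\eqref{eqn:g''} give $g''(u) = -u\, g'(u)$, hence $g''(u)/g(u) = -u\,\ophi(u)$. Substituting yields $\ophi'(u) = -u\,\ophi(u) - \ophi(u)^2 = -(u+\ophi(u))\ophi(u)$, which is the first identity in the fact.

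Next, I would differentiate this identity once more. Writing $\ophi'(u) = -u\,\ophi(u) - \ophi(u)^2$, the product rule yields
\begin{equation*}
\ophi''(u) = -\ophi(u) - u\,\ophi'(u) - 2\ophi(u)\ophi'(u) = -\ophi(u) - (u+2\ophi(u))\,\ophi'(u).
\end{equation*}
Substituting the expression for $\ophi'(u)$ just derived gives $\ophi''(u) = -\ophi(u) + (u+2\ophi(u))(u+\ophi(u))\ophi(u)$, and expanding $(u+2\ophi(u))(u+\ophi(u)) = u^2 + 3u\ophi(u) + 2\ophi(u)^2$ produces the claimed second identity.

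For positivity of $\ophi$, note that $g$ is the standard Gaussian CDF, hence $g(u) > 0$ for all $u \in \reals$ (it is an integral of a strictly positive density), and likewise $g'(u) = \frac{1}{\sqrt{2\pi}} e^{-u^2/2} > 0$. Therefore $\ophi(u) = g'(u)/g(u) > 0$ throughout $\reals$. For the monotonicity claims, from the first identity it suffices to show $u + \ophi(u) > 0$ for every $u \in \reals$, because then $\ophi'(u) = -(u+\ophi(u))\,\ophi(u)$ is a product of two strictly positive quantities with a minus sign in front. When $u \geq 0$ the inequality $u+\ophi(u) > 0$ is immediate from positivity of $\ophi$.

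The main step of the argument is the case $u < 0$, which is where the standard Mills-ratio estimate enters. Here I would use the trick of comparing $1$ to $-t/|u|$ on the range of integration: for every $t \leq u < 0$, one has $-t \geq |u|$, so
\begin{equation*}
\int_{-\infty}^{u} e^{-t^2/2}\, dt \;\leq\; \int_{-\infty}^{u} \frac{-t}{|u|}\, e^{-t^2/2}\, dt \;=\; \frac{1}{|u|}\Bigl[e^{-t^2/2}\Bigr]_{-\infty}^{u} \;=\; \frac{e^{-u^2/2}}{|u|},
\end{equation*}
with strict inequality because the integrand ratio is $>1$ on a set of positive measure. Dividing through by $\sqrt{2\pi}$ gives $g(u) < g'(u)/|u|$, i.e., $\ophi(u) > |u| = -u$, so that $u+\ophi(u) > 0$. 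Combining this with the $u \geq 0$ case shows $\ophi'(u) < 0$ on all of $\reals$, yielding both that $\ophi$ is strictly monotone decreasing and that $\ophi'$ is strictly negative. The only nontrivial ingredient in the whole argument is the Mills-ratio estimate for $u < 0$; everything else is direct calculation from the Gaussian identities $g''(u) = -u g'(u)$ and the definition of $\ophi$.
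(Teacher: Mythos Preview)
Your proof is correct. The paper itself does not prove this fact but merely cites it from \cite{fang2020high}, so there is no in-paper argument to compare against; your direct computation of $\ophi'$ and $\ophi''$ from the quotient rule together with the Mills-ratio estimate for $u<0$ is exactly the standard route and fully justifies the statement.
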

    \begin{fact}~\cite[Section 7.1]{feller2008introduction}\label{fac:tailboundGaussian}
      For any $x\geq 0$, it holds that
      \[\frac{e^{-x^2/2}}{\sqrt{2\pi}}\br{\frac{1}{x}-\frac{1}{x^3}}\leq1-g\br{x}\leq\frac{e^{-x^2/2}}{x\sqrt{2\pi}}.\]
    \end{fact}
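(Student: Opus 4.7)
The plan is to start from the explicit formula $1-g(x)=\frac{1}{\sqrt{2\pi}}\int_x^\infty e^{-t^2/2}dt$ given by Eq.~\eqref{eqn:gfunction} and establish the two bounds by elementary calculus (this is the standard proof of Mills' ratio bounds attributed to Feller). Throughout I work with the bare integral $I(x):=\int_x^\infty e^{-t^2/2}dt$ and divide by $\sqrt{2\pi}$ at the very end.

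For the upper bound, I would use the fact that on the integration range $t\geq x\geq 0$, so $t/x\geq 1$. Hence
\[
I(x)=\int_x^\infty e^{-t^2/2}dt \leq \int_x^\infty \frac{t}{x}\,e^{-t^2/2}dt=\frac{1}{x}\Bigl[-e^{-t^2/2}\Bigr]_x^\infty=\frac{e^{-x^2/2}}{x},
\]
which after dividing by $\sqrt{2\pi}$ gives the right-hand inequality.

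For the lower bound, the key trick is integration by parts using the ``missing factor of $t$'' in the integrand. Writing $e^{-t^2/2}=\frac{1}{t}\cdot(te^{-t^2/2})$ and integrating by parts with $u=1/t$, $dv=te^{-t^2/2}dt$ (so $du=-dt/t^2$, $v=-e^{-t^2/2}$) yields
\[
I(x)=\frac{e^{-x^2/2}}{x}-\int_x^\infty \frac{e^{-t^2/2}}{t^2}dt.
\]
Applying the same idea once more to the remainder integral (with $u=1/t^3$, $dv=te^{-t^2/2}dt$) gives
\[
\int_x^\infty \frac{e^{-t^2/2}}{t^2}dt=\frac{e^{-x^2/2}}{x^3}-3\int_x^\infty \frac{e^{-t^2/2}}{t^4}dt.
\]
Substituting back and noting that the leftover integral $\int_x^\infty e^{-t^2/2}/t^4\,dt$ is non-negative for $x\geq 0$, I obtain $I(x)\geq e^{-x^2/2}\bigl(1/x-1/x^3\bigr)$, and dividing by $\sqrt{2\pi}$ yields the left-hand inequality.

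There is no genuine obstacle here — the entire argument is two rounds of integration by parts plus a single monotonicity estimate. The only mildly delicate point is noting that the statement is vacuous (or requires interpretation) when $x$ is near $0$, since the lower bound becomes negative; this is harmless because $1-g(x)\geq 0$ always, so the inequality still holds trivially in that regime and only becomes informative for $x\geq 1$.
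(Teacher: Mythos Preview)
Your proof is correct and is precisely the classical argument from Feller that the paper cites; the paper itself does not supply an independent proof but simply invokes \cite[Section 7.1]{feller2008introduction}. Your two-step integration by parts for the lower bound and the $t/x\ge 1$ majorization for the upper bound are exactly Feller's approach, so nothing more needs to be said.
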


    The following lemma immediately follows from Fact~\ref{fac:ophimonotone} and Fact~\ref{fac:tailboundGaussian}.

    \begin{lemma}\label{lem:ophiupperbound}
    For any $\Delta\geq 1$ and $x\in\reals$ with $\abs{x}\leq\Delta$, it holds that
    $$\abs{\ophi\br{x}}\leq 2\Delta,\abs{\ophi'\br{x}}\leq 3\Delta\abs{\ophi\br{x}}, \abs{\ophi''\br{x}}\leq 15\Delta^2\abs{\ophi\br{x}}.$$
    \end{lemma}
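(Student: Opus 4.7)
The plan is to establish the three inequalities in order, with the first one being the workhorse that feeds into the other two.

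For the first bound $|\ophi(x)|\leq 2\Delta$, I would exploit positivity and monotonicity of $\ophi$ (stated in Fact~\ref{fac:ophimonotone}) to reduce the problem to the single point $u=-\Delta$: for any $x\in[-\Delta,\Delta]$, $\ophi(x)\leq \ophi(-\Delta)$, and for $x\geq 0$, $\ophi(x)\leq \ophi(0)=\sqrt{2/\pi}\leq 1\leq 2\Delta$. Thus it suffices to show $\ophi(-\Delta)\leq 2\Delta$ when $\Delta\geq 1$. By the symmetry $g'(-\Delta)=g'(\Delta)$ and $g(-\Delta)=1-g(\Delta)$, I have $\ophi(-\Delta)=g'(\Delta)/(1-g(\Delta))$, so I must lower bound $1-g(\Delta)$ by $g'(\Delta)/(2\Delta)$. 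This is precisely what Fact~\ref{fac:tailboundGaussian} gives for $\Delta$ not too close to $1$: its lower bound yields $1-g(\Delta)\geq \frac{e^{-\Delta^2/2}}{\sqrt{2\pi}}\cdot\frac{\Delta^2-1}{\Delta^3}$, and for $\Delta\geq\sqrt{2}$ we have $\frac{\Delta^2-1}{\Delta^3}\geq \frac{1}{2\Delta}$ as needed.

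The one technical wrinkle is the range $1\leq \Delta<\sqrt{2}$, where the $\frac{1}{x}-\frac{1}{x^3}$ factor in Fact~\ref{fac:tailboundGaussian} degenerates. Here I would argue by monotonicity in a different way: since $\ophi(-\Delta)$ is increasing in $\Delta$ (as $\ophi$ is decreasing), it suffices to verify the bound at one concrete point by direct computation, say $\ophi(-\sqrt{2})<2<2\Delta$, and combine with the case $\Delta\geq\sqrt{2}$; this small-$\Delta$ case is the main obstacle, but it is handled by a one-line numerical check using the explicit definition of $g$. Throughout, $\ophi$ is positive so $|\ophi(x)|=\ophi(x)$.

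For the second bound I would simply invoke the identity $\ophi'(u)=-(u+\ophi(u))\ophi(u)$ from Fact~\ref{fac:ophimonotone}. The triangle inequality gives
\[
|\ophi'(u)|=|u+\ophi(u)|\cdot\ophi(u)\leq(|u|+|\ophi(u)|)\cdot\ophi(u)\leq(\Delta+2\Delta)\ophi(u)=3\Delta|\ophi(u)|,
\]
using $|u|\leq\Delta$ and the first bound just proved.

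For the third bound, I would use the explicit formula $\ophi''(u)=(u^2-1)\ophi(u)+3u\ophi(u)^2+2\ophi(u)^3$ from Fact~\ref{fac:ophimonotone} and estimate term by term. Since $\Delta\geq 1$, we have $|u^2-1|\leq\max(\Delta^2-1,1)\leq \Delta^2$, contributing $\Delta^2|\ophi(u)|$. The middle term is bounded by $3\Delta\cdot\ophi(u)\cdot 2\Delta=6\Delta^2|\ophi(u)|$ using $|\ophi(u)|\leq 2\Delta$. The last term is bounded by $2\cdot(2\Delta)^2\cdot\ophi(u)=8\Delta^2|\ophi(u)|$. Summing the three contributions gives $(1+6+8)\Delta^2|\ophi(u)|=15\Delta^2|\ophi(u)|$, as desired. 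The essentially only nontrivial step is thus the Mill's-ratio estimate for the first bound; the other two bounds are immediate algebraic consequences once $|\ophi|\leq 2\Delta$ is in hand.
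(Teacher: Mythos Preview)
Your proof is correct and takes essentially the same approach as the paper, which simply states that the lemma ``immediately follows from Fact~\ref{fac:ophimonotone} and Fact~\ref{fac:tailboundGaussian}'' without details; you have supplied precisely those details, combining the monotonicity of $\ophi$ with the Mills-ratio tail bound for the first inequality and then cascading via the explicit formulas for $\ophi'$ and $\ophi''$. One cosmetic point: in the small-$\Delta$ case you write ``$\ophi(-\sqrt{2})<2<2\Delta$'', but the second inequality should be $2\leq 2\Delta$ (equality when $\Delta=1$); this does not affect the argument since $\ophi(-\Delta)\leq\ophi(-\sqrt{2})<2\leq 2\Delta$ still holds.
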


    \subsection{Properties of the spectral norm of the mollifier}
In this section, we establish several properties of Bentkus mollifier, which hasn't been studied to the best of our knowledge. We first state a crucial fact that Bentkus proved about the derivatives of the mollifier, which is the only fact needed and used by prior works~\cite{harsha2013invariance,servedio2017fooling,chattopadhyay2019simple,o2019fooling}.
    \begin{fact}\cite{bentkus1990smooth}\label{fac:benktus2}
      It holds that for any integer $t,k\geq 1$
      \begin{equation}\label{eqn:supremumphi}
        \sup_{x\in\reals^k}\onenorm{G^{(t)}\br{x}}\leq C_t\log^{t/2} (k+1)
      \end{equation}
      for some constant $C_t$ only depending on $t$.
    \end{fact}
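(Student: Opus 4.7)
The plan is to exploit the separable structure $G(x)=\prod_{i=1}^k g(x_i)$ to reduce the claim to a scalar extremal problem. Any ordered partial derivative $\partial_{p_1}\cdots\partial_{p_t}G(x)$ depends only on the multiplicity multi-index $\alpha\in\mathbb{N}^k$ induced by $(p_1,\ldots,p_t)$, and grouping by $\alpha$ gives
\[
\sum_{p_1,\ldots,p_t\in[k]}\bigl|\partial_{p_1}\cdots\partial_{p_t}G(x)\bigr|
= G(x)\sum_{\substack{\alpha\in\mathbb{N}^k\\|\alpha|=t}}\binom{t}{\alpha_1,\ldots,\alpha_k}\prod_{i:\alpha_i\geq 1}\frac{|g^{(\alpha_i)}(x_i)|}{g(x_i)}.
\]
Writing $g=e^{L}$ with $L=\log g$, the Faà di Bruno formula identifies $|g^{(m)}(u)|/g(u)$ as the complete Bell polynomial in $(\bar{g}(u),\bar{g}'(u),\ldots,\bar{g}^{(m-1)}(u))$; combined with Lemma~\ref{lem:ophiupperbound} this yields the uniform bound $|g^{(m)}(u)|/g(u)\leq C_m(1+|u|)^{m-1}\bar{g}(u)$.

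The leading contribution comes from multi-indices $\alpha$ whose support has exactly $t$ elements: every Bell-polynomial factor degenerates to $\bar{g}(x_i)$ and the corresponding piece is at most $G(x)\bigl(\sum_i\bar{g}(x_i)\bigr)^t$. The core step the plan rests on is the extremal estimate
\[
\sup_{x\in\mathbb{R}^k}G(x)\Bigl(\sum_{i=1}^k\bar{g}(x_i)\Bigr)^t\leq C_t\,(\log(k+1))^{t/2}.
\]
The objective is symmetric in the $x_i$ and its stationarity condition, combined with $\bar{g}'(u)=-(u+\bar{g}(u))\bar{g}(u)$ from Fact~\ref{fac:ophimonotone}, forces $x_j+\bar{g}(x_j)$ to be constant in $j$; since $u\mapsto u+\bar{g}(u)$ is strictly increasing on $\mathbb{R}$, the supremum is attained on the diagonal $x_1=\cdots=x_k=u^\ast$. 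The one-dimensional problem $\sup_u k^t\bar{g}(u)^t g(u)^k$ is then solved via the Mills-ratio asymptotics in Fact~\ref{fac:tailboundGaussian}: stationarity gives $\bar{g}(u^\ast)\approx tu^\ast/(k-t)$, which pins down $u^\ast\approx\sqrt{2\log k}$, $g(u^\ast)^k=O_t(1)$, and $k\bar{g}(u^\ast)=\Theta(\sqrt{t\log k})$, yielding the claimed order.

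It remains to absorb the multi-indices with some $\alpha_i\geq 2$ (support size strictly less than $t$) into the constant $C_t$. For each such $\alpha$ with support size $s<t$, the corresponding block equals $\binom{t}{\alpha}G(x)\sum_{|S|=s}\prod_{j=1}^s r_{\alpha_{i_j}}(x_{i_j})$, and the same Mills-ratio extremal analysis applies: at its own diagonal extremizer each factor $r_m(u^\ast)$ contributes $\Theta((\log k)^{m/2}/k)$, the $\binom{k}{s}$ choices of support exactly compensate the $1/k^s$ decay, and the multinomial coefficient is $O_t(1)$, leaving $O_t((\log k)^{t/2})$ per block. Summing over the $O_t(1)$ distinct multiset patterns of $\alpha$ closes the proof. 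The main obstacle I anticipate is precisely this bookkeeping: different $\alpha$ have different extremizers, so one must either apply the crude bound $\sup_x\sum\leq\sum\sup_x$ and control each term separately, or more elegantly bound the generating series $\sum_{m\geq 0}r_m(x)s^m/m!$ uniformly to collapse the multinomial sum in one blow; I expect the second route to be cleaner because it reduces to analyzing the single Hermite-polynomial generating function $e^{xs-s^2/2}$.
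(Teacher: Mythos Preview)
The paper does not prove this statement; it is quoted as a result of Bentkus~\cite{bentkus1990smooth} and used throughout as a black box. Your proposal is therefore not competing with any argument in the paper but is an independent derivation of a cited fact.

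As a sketch your plan is sound and hits the right ideas. The multi-index decomposition is correct, and the bound $|g^{(m)}(u)|/g(u)\le C_m(1+|u|)^{m-1}\ophi(u)$ follows directly from the Hermite identity $g^{(m)}(u)=(-1)^{m-1}He_{m-1}(u)\,g'(u)$, so Fa\`a di Bruno is more machinery than you need here. The diagonal reduction via the strict monotonicity of $u\mapsto u+\ophi(u)$ and the Mills-ratio calibration giving $u^\ast\approx\sqrt{2\log k}$ are a natural way to extract the leading $(\log k)^{t/2}$ order. Two points deserve more care before the argument is complete. First, stationarity only locates interior critical points; you should add that as any $x_j\to+\infty$ the objective degenerates to the $(k{-}1)$-variable problem and as $x_j\to-\infty$ it vanishes, so induction on $k$ (or a direct compactness argument) closes the boundary case. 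Second, for the smaller-support multi-indices your generating-function route via $e^{us-s^2/2}$ controls $\sum_m He_m(u)s^m/m!$ but not $\sum_m |He_m(u)|\,s^m/m!$, and it is the latter that $\|G^{(t)}\|_1$ requires; your first route (the per-pattern supremum bound) is safer here and involves only $O_t(1)$ patterns, so the loss is harmless.
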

    
    \begin{lemma}\label{lem:normlowvalue}
      For any $x\in \reals^k$, if there exist more than $3\log k$ indices satisfying $x_i\leq 0$, then $\onenorm{G^{(1)}\br{x}}\leq O\br{\frac{1}{k^2}}$.
    \end{lemma}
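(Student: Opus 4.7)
The only fact I need to exploit is that $G(x)=\prod_{i=1}^{k}g(x_i)$ factors as a product and that $g(x_i)\leq g(0)=\tfrac12$ whenever $x_i\leq 0$. Since every factor $g(x_i)$ lies in $(0,1)$ and the product rule gives
\[
\partial_j G(x)\;=\;g'(x_j)\prod_{i\neq j}g(x_i)\;\geq\;0,
\]
we have $\bigl\|G^{(1)}(x)\bigr\|_1=\sum_{j=1}^{k}g'(x_j)\prod_{i\neq j}g(x_i)$, so there are no sign cancellations and I only need to upper bound each summand.

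Let $S=\{i\in[k]:x_i\leq 0\}$ and assume $|S|>3\log k$ (where, as is standard in this paper, $\log=\log_2$). First, from Eq.~\eqref{eqn:g'} the bound $g'(x_j)\leq \tfrac{1}{\sqrt{2\pi}}$ holds for \emph{every} $x_j\in\reals$. Second, since $g$ is monotone increasing and $g(0)=\tfrac12$, every factor with index in $S$ is at most $\tfrac12$; all remaining factors are at most $1$. Hence, whether $j\in S$ or $j\notin S$,
\[
\prod_{i\neq j}g(x_i)\;\leq\;\Bigl(\tfrac12\Bigr)^{|S\setminus\{j\}|}\;\leq\;\Bigl(\tfrac12\Bigr)^{|S|-1}\;\leq\;2\cdot k^{-3},
\]
using $|S|\geq 3\log k$ in the last step.

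Combining the two estimates,
\[
\bigl\|G^{(1)}(x)\bigr\|_1\;=\;\sum_{j=1}^{k}g'(x_j)\prod_{i\neq j}g(x_i)\;\leq\;k\cdot\frac{1}{\sqrt{2\pi}}\cdot 2k^{-3}\;=\;O\bigl(k^{-2}\bigr),
\]
which is the desired bound. There is no real obstacle here beyond being careful that, when $j$ itself lies in $S$, one loses exactly one of the ``small'' factors from the product; the slack of $|S|>3\log k$ (rather than $\geq 3\log k$) absorbs that single lost factor. The proof uses no property of the Bentkus mollifier beyond the explicit product form of $G$ and the monotonicity of the scalar Gaussian CDF $g$, so it goes through verbatim with $\log_2$ as the logarithm base; if one preferred natural logarithms the constant $3$ would simply have to be replaced by $3/\ln 2$.
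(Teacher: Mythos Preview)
Your proof is correct and in fact slightly more direct than the paper's. Both arguments start from the identity $\onenorm{G^{(1)}(x)}=\sum_{j}g'(x_j)\prod_{i\neq j}g(x_i)$ and use that each factor indexed by $S=\{i:x_i\le 0\}$ contributes at most $\tfrac12$. The difference is in how the $g'(x_j)$ factor is handled. You simply bound $g'(x_j)\le 1/\sqrt{2\pi}$ uniformly and then sum $k$ identical bounds, giving $k\cdot O(k^{-3})=O(k^{-2})$ in one line. The paper instead splits the sum according to whether $j\in T$ or $j\notin T$: for $j\in T$ it bounds as you do, obtaining $|T|/2^{|T|-1}$; for $j\notin T$ it factors out $\prod_{i\in T}g(x_i)\le 2^{-|T|}$ and recognizes the remaining sum as $\onenorm{G^{(1)}(x_{T^c})}$, which it then controls via Fact~\ref{fac:benktus2} by $O(\sqrt{\log k})$. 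Both routes land at $O(k^{-2})$, but yours avoids invoking Bentkus's derivative bound altogether and is the cleaner argument here.
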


    \begin{proof}
    Note that $g\br{z}\leq\frac{1}{2}$ if $z\leq 0$. Let $T=\set{i:x_i\leq 0}$. Then
      \begin{eqnarray*}
      % \nonumber % Remove numbering (before each equation)
        \onenorm{G^{(1)}\br{x}}&=&\sum_{i=1}^k\abs{g'\br{x_i}\prod_{j\neq i}g\br{x_j}}\\
        &=&\sum_{i\in T}\abs{g'\br{x_i}\prod_{j\neq i}g\br{x_j}}+\abs{\sum_{i\notin T}g'\br{x_i}\prod_{j\neq i}g\br{x_j}}\\
        &\leq&\frac{\abs{T}}{2^{\abs{T}-1}}+\frac{1}{2^{\abs{T}}}\abs{\sum_{i\notin T}g'\br{x_i}\prod_{\substack{j\neq i:\\j\notin T}}g\br{x_j}}\leq\frac{\abs{T}}{2^{\abs{T}-1}}+\frac{2\sqrt{2\log k}}{2^{\abs{T}}},
      \end{eqnarray*}
     where the equality used that the terms are all positive and the second inequality is from Fact~\ref{fac:benktus2} and that 
     \[\abs{\sum_{i\notin T}g'\br{x_i}\prod_{\substack{j\neq i:\\j\notin T}}g\br{x_j}}=\onenorm{G^{(1)}\br{x_{T^c}}}.\]
      The upper bound is $O\br{\frac{1}{k^2}}$ if $\abs{T}\geq 3\log k$.
    \end{proof}

    \begin{claim}\label{claim:gg'}
      For any $x>y$, it holds that
         \begin{equation}\label{eqn:gg'}
        \abs{\frac{g\br{x}g'\br{y}-g'\br{x}g\br{y}}{x-y}}\leq\br{1+\abs{x}}\exp\br{-\frac{y^2}{2}}=\br{1+\abs{x}}g'(y)\cdot \sqrt{2\pi}.
      \end{equation}
    \end{claim}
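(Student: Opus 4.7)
The plan is to set $h(t) := g(t)g'(y) - g'(t)g(y)$, so that $h(y)=0$ and the left-hand side of~\eqref{eqn:gg'} equals $|(h(x)-h(y))/(x-y)| = |h'(\xi)|$ for some $\xi\in(y,x)$ by the mean value theorem. Using the identity $g''(u)=-u\,g'(u)$ (which is immediate from~\eqref{eqn:g'} and~\eqref{eqn:g''}), one computes
\[
h'(t) \;=\; g'(t)\,g'(y) - g''(t)\,g(y) \;=\; g'(t)\bigl(g'(y) + t\,g(y)\bigr),
\]
so the task reduces to showing $g'(\xi)\bigl(g'(y) + |\xi|\,g(y)\bigr)\le(1+|x|)\,e^{-y^2/2}$.

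The case $y\ge 0$ is routine: since $\xi\in(y,x)$ we have $\xi\ge 0$, so $g'(\xi)\le g'(y)$ (as $g'$ is decreasing on $[0,\infty)$), and $|\xi|\le x = |x|$. Combined with $g(y)\le 1$ and $g'(y)\le 1/\sqrt{2\pi}$, this yields
\[
|h'(\xi)|\;\le\; g'(y)\bigl(\tfrac{1}{\sqrt{2\pi}}+|x|\bigr)\;\le\;(1+|x|)\,g'(y)\;\le\;(1+|x|)\,e^{-y^2/2},
\]
as required.

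The main obstacle is the case $y<0$: $\xi$ may lie much closer to $0$ than $y$, so $g'(\xi)$ can greatly exceed $g'(y)$ and the previous comparison breaks. To compensate, I would invoke the Mills-ratio tail bound of Fact~\ref{fac:tailboundGaussian}: for $|y|\ge 1$, $g(y) = 1 - g(|y|) \le e^{-y^2/2}/(|y|\sqrt{2\pi}) = g'(y)/|y|$. Together with the universal estimates $g'(\xi)\le 1/\sqrt{2\pi}$ and $|\xi|\,g'(\xi)\le g'(1)=e^{-1/2}/\sqrt{2\pi}$ (the latter being the maximum of $|t|g'(t)$ on $\R$, attained at $|t|=1$), this gives
\[
|h'(\xi)|\;\le\;\frac{g'(y)}{\sqrt{2\pi}}+\frac{e^{-1/2}}{\sqrt{2\pi}}\cdot\frac{g'(y)}{|y|}\;\le\;\frac{2\,g'(y)}{\sqrt{2\pi}}\;=\;\frac{e^{-y^2/2}}{\pi}\;\le\;(1+|x|)\,e^{-y^2/2}
\]
for $|y|\ge 1$. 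The residual regime $|y|<1$ is handled by noting that $(1+|x|)e^{-y^2/2}\ge e^{-1/2}\approx 0.607$, while a direct constant bound using $g(y)\le\tfrac12$, $g'(\xi)\le 1/\sqrt{2\pi}$ and $|\xi|g'(\xi)\le e^{-1/2}/\sqrt{2\pi}$ shows $|h'(\xi)|\le \tfrac{1}{2\pi}+\tfrac{e^{-1/2}}{2\sqrt{2\pi}}< e^{-1/2}$. The only non-routine step is this Mills-ratio trade-off, which absorbs the potential blow-up of $g'(\xi)$ against the exponential smallness of $g(y)$ when $y\ll 0$.
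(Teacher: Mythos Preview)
Your proof is correct and takes a genuinely different route from the paper's. The paper expresses both $g(x)g'(y)$ and $g'(x)g(y)$ as integrals over $(-\infty,0]$ via the substitution $s\mapsto t+x$ (respectively $s\mapsto t+y$), combines them into a single integral, factors out $e^{-(x^2+y^2)/2}$, and then applies the elementary inequality $|1-e^{-z}|\le |z|$ to the difference of exponentials; this yields the bound uniformly in one sweep, with no case distinction on the sign of $y$. Your approach instead differentiates in the first variable, invokes the mean value theorem, and then exploits the identity $g''=-t g'$ to reduce to bounding $g'(\xi)(g'(y)+\xi g(y))$. The price is the case split: when $y\ge 0$ monotonicity of $g'$ on $[0,\infty)$ does all the work, but for $y<0$ you need the Mills-ratio tail estimate (Fact~\ref{fac:tailboundGaussian}) to trade the potential blow-up of $g'(\xi)$ against the smallness of $g(y)$, plus a separate numerical check for $|y|<1$. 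Both arguments are short; yours is perhaps more transparent about \emph{why} the bound holds (it isolates exactly the competing factors), while the paper's integral manipulation is slicker and avoids casework entirely.
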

    \begin{proof}
      \begin{eqnarray*}
    % \nonumber % Remove numbering (before each equation)
      &&\abs{\frac{g\br{x}g'\br{y}-g'\br{x}g\br{y}}{x-y}} \\
      &=&\frac{1}{2\pi}\abs{\int_{-\infty}^{0}\frac{\exp\br{-\frac{1}{2}\br{y^2+\br{t+x}^2}}-\exp\br{-\frac{1}{2}\br{x^2+\br{t+y}^2}}}{x-y}dt} \\
      &\leq&\frac{1}{2\pi}\exp\br{-\frac{x^2+y^2}{2}}\int_{-\infty}^{0}\abs{\exp\br{-\frac{t^2}{2}}\frac{\exp\br{-ty}-\exp\br{-tx}}{x-y}}dt\\
      &=&\frac{1}{2\pi}\exp\br{-\frac{x^2+y^2}{2}}\int_{-\infty}^{0}\abs{\exp\br{-\frac{t^2}{2}-tx}\frac{1-\exp\br{-t(y-x)}}{y-x}}dt\\
      &\leq&\frac{1}{2\pi}\exp\br{-\frac{x^2+y^2}{2}}\int_{-\infty}^{0}\abs{\exp\br{-\frac{t^2}{2}-tx}t}dt\\
      &=&\frac{1}{2\pi}\exp\br{-\frac{y^2}{2}}\int_{-\infty}^{0}\abs{\exp\br{-\frac{1}{2}\br{t+x}^2}t}dt\\
      &=&\frac{1}{2\pi}\exp\br{-\frac{y^2}{2}}\br{\exp\br{-\frac{x^2}{2}}+\sqrt{2\pi}x-x\int_{x}^{\infty}e^{-t^2/2}dt}\\
      &\leq&\br{1+\abs{x}}\exp\br{-\frac{y^2}{2}},
      \end{eqnarray*}
      where the second inequality used $|1-e^{-z}|\leq |z|$ for $z\geq 0$.
    \end{proof}
    For every $\theta>0$, we define the Bentkus mollifier as follows.
    \begin{equation}\label{eqn:flambda}
      G_{\theta}\br{x}=\Pr_{\bg\sim\G^k}\big[x+\theta\bg\leq 0\big]
    \end{equation}
    It is not hard to verify that
    \begin{equation}\label{eqn:gtheta}
      G_{\theta}\br{x}=\prod_{i=1}^{n}\int_{-\infty}^{-\frac{x_i}{\theta}}\frac{1}{\sqrt{2\pi}}e^{-x_i^2/2}=G\br{-\frac{x_1}{\theta},\cdots,-\frac{x_k}{\theta}}.
    \end{equation}

     The following fact states that $G_{\theta}\br{\cdot+\alpha}/G_{\theta}\br{\cdot-\alpha}$ is a good approximator of $\psi$ defined in Eq.~\eqref{eqn:psi} except a small inner/outer region near the ``boundary" which is made precise below.
    \begin{fact}[Lemma 6.7 and Fact 6.8 in \cite{o2019fooling}]\label{fac:ost}
      For any $\delta,\theta\in(0,1)$, $x\in\reals^k$ there exists $\Lambda=\Theta\br{\theta\cdot \sqrt{\log (k/\delta)}}$ and $\alpha=\Theta\br{\theta\cdot \sqrt{\log (k/\delta)}}$ such that the following holds.

      \begin{enumerate}
        \item $\abs{G_{\theta}\br{x+\alpha}-\psi\br{x}}\leq\delta$ if $\max_ix_i\leq-\Lambda$.
        \item $\abs{G_{\theta}\br{x-\alpha}-\psi\br{x}}\leq\delta$ if $\max_ix_i\geq\Lambda$.
        \item $G_{\theta}\br{x+\alpha}-\delta\leq\psi\br{x}\leq G_{\theta}\br{x-\alpha}+\delta$ for all $x\in\reals^k$.
      \end{enumerate}
      where $x+\alpha=\br{x_1+\alpha,\ldots,x_k+\alpha}$
    \end{fact}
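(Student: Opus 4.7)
The plan is to exploit the product structure $G_\theta(x) = \prod_{i=1}^{k} g(-x_i/\theta)$ from Eq.~\eqref{eqn:gtheta}, combined with the Gaussian tail estimate in Fact~\ref{fac:tailboundGaussian}, and to choose $\Lambda, \alpha = \Theta(\theta \sqrt{\log(k/\delta)})$ so that each factor in the product is individually either very close to $1$ or very close to $0$. Concretely, I would set $\alpha = c_1 \theta \sqrt{\log(k/\delta)}$ and $\Lambda = c_2 \theta \sqrt{\log(k/\delta)}$ with $c_2 > c_1$ so that both $\alpha/\theta$ and $(\Lambda-\alpha)/\theta$ are at least $c \sqrt{\log(k/\delta)}$ for some constant $c$ that makes the Gaussian tail smaller than $\delta/k$.

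For part (1), assuming $\max_i x_i \leq -\Lambda$, every coordinate satisfies $-(x_i+\alpha)/\theta \geq (\Lambda-\alpha)/\theta \gtrsim \sqrt{\log(k/\delta)}$, so Fact~\ref{fac:tailboundGaussian} gives $g(-(x_i+\alpha)/\theta) \geq 1 - \delta/(2k)$. Multiplying the $k$ factors and using $(1-\delta/(2k))^k \geq 1 - \delta/2$ yields $|G_\theta(x+\alpha) - 1| \leq \delta$, matching $\psi(x)=1$. For part (2), assuming $\max_i x_i \geq \Lambda$, pick the witnessing coordinate $i^*$; then $-(x_{i^*} - \alpha)/\theta \leq -(\Lambda-\alpha)/\theta$, and Fact~\ref{fac:tailboundGaussian} gives $g(-(x_{i^*}-\alpha)/\theta) \leq \delta$. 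Since the remaining factors are at most $1$, the whole product is at most $\delta$, matching $\psi(x)=0$.

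For part (3), I would split into the two cases according to $\psi(x)$. If $\max_i x_i \leq 0$, then for the lower bound $G_\theta(x+\alpha) - \delta \leq 1 = \psi(x)$ is trivial since $G_\theta \leq 1$; for the upper bound, each coordinate of the shifted argument satisfies $-(x_i - \alpha)/\theta \geq \alpha/\theta \gtrsim \sqrt{\log(k/\delta)}$, so by the same per-coordinate computation as above, $G_\theta(x-\alpha) \geq 1 - \delta$, giving $\psi(x) = 1 \leq G_\theta(x-\alpha) + \delta$. If instead some $x_{i^*} > 0$, then the upper bound $\psi(x) = 0 \leq G_\theta(x-\alpha) + \delta$ is trivial; for the lower bound, $-(x_{i^*}+\alpha)/\theta \leq -\alpha/\theta$, and the single factor $g(-\alpha/\theta) \leq \delta$ drives $G_\theta(x+\alpha) \leq \delta$ regardless of the other coordinates.

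There is no deep obstacle here; the argument is essentially a careful calibration of two constants against the Gaussian tail bound. The only delicate point is making sure a single choice of $\alpha$ and $\Lambda$ (both of the asserted order $\theta \sqrt{\log(k/\delta)}$) simultaneously supports all three items, which forces $\Lambda > \alpha$ by a constant factor and requires the tail bound to be applied once at the threshold $\alpha/\theta$ (for the sandwich statement and the trivial halves of (3)) and once at $(\Lambda-\alpha)/\theta$ (for (1) and (2)). Picking $c_1, c_2$ large enough so that both tail probabilities are at most $\delta/k$ and then invoking the Bernoulli-type bound $(1-\delta/k)^k \geq 1-\delta$ on products closes every case.
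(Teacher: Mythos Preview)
Your proposal is correct. The paper does not supply its own proof of this statement; it is recorded as a \emph{Fact} and attributed directly to Lemma~6.7 and Fact~6.8 of \cite{o2019fooling}, so there is no in-paper argument to compare against. Your approach---calibrating $\alpha$ and $\Lambda$ to be constant multiples of $\theta\sqrt{\log(k/\delta)}$, applying the single-coordinate Gaussian tail bound (Fact~\ref{fac:tailboundGaussian}) at thresholds $\alpha/\theta$ and $(\Lambda-\alpha)/\theta$, and combining via the product structure \eqref{eqn:gtheta} with $(1-\delta/k)^k \geq 1-\delta$---is exactly the standard argument and is how the cited result in \cite{o2019fooling} is proved.
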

    Let $A^i=\diag\br{A^i_1,A^i_2}$ and $D=\diag\br{D_1,D_2}$ be  block diagonal matrices. To keep the notations succinct, we set $A\br{x}= \sum_i x_iA^i-D$.

    \begin{fact}\cite[Lemma 6.9]{o2019fooling}\label{fac:ostanti}
      Let $k, \delta,\theta,\Lambda,\alpha$ be the parameters satisfying Fact~\ref{fac:ost}. Let $\Psi,\Psi_{\theta}:\sym{k}\rightarrow\reals$ be the functions defined as $\Psi\br{M}=\psi\br{\lambda\br{M}}$, $\Psi_{\theta}\br{M}=G_{\theta}\br{\lambda\br{M}}$, where $\psi$ is defined in Eq.~\eqref{eqn:psi} and $G_{\theta}$ is defined in Eq.~\eqref{eqn:flambda}, $\bx$ and $\bx'$ be two random variables in $\reals^k$ satisfying~that
       \[
       \abs{\E\Br{\Psi_{\theta}\br{A\br{\bx}+\beta\id}}-\E\Br{\Psi_{\theta}\br{A\br{\bx'}+\beta\id}}}\leq\eta,
       \]
       for both $\beta=\alpha$ and $\beta=-\alpha$. Then, it holds that
      \[
      \abs{\E\Br{\Psi\br{A\br{\bx}}}-\E\Br{\Psi\br{A\br{\bx'}}}}\leq\eta+3\delta+\Pr\Br{\lambda_{\max}\br{A\br{\bx}}\in(-\Lambda,\Lambda]}.
      \]
    \end{fact}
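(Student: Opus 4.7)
The plan is to lift the pointwise sandwich $G_\theta\br{x+\alpha}-\delta\leq\psi\br{x}\leq G_\theta\br{x-\alpha}+\delta$ of Fact~\ref{fac:ost}(3) to the spectral functions $\Psi,\Psi_\theta$ and then combine it with the hypothesis through a triangle inequality whose intermediate term is chosen so that the residual becomes an anti-concentration probability. Since $\Psi\br{M}=\psi\br{\lambda\br{M}}$ and $\Psi_\theta\br{M\pm\alpha\id}=G_\theta\br{\lambda\br{M}\pm\alpha}$, the scalar sandwich lifts verbatim to
$$\Psi_\theta\br{M+\alpha\id}-\delta\;\leq\;\Psi\br{M}\;\leq\;\Psi_\theta\br{M-\alpha\id}+\delta\qquad\text{for every }M\in\sym{k}.$$
Taking the upper sandwich at $M=A\br{\bx}$ and the lower one at $M=A\br{\bx'}$ gives
$$\E\Br{\Psi\br{A\br{\bx}}}-\E\Br{\Psi\br{A\br{\bx'}}}\;\leq\;\E\Br{\Psi_\theta\br{A\br{\bx}-\alpha\id}}-\E\Br{\Psi_\theta\br{A\br{\bx'}+\alpha\id}}+2\delta.$$

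The key trick is to route the triangle inequality through $\E\Br{\Psi_\theta\br{A\br{\bx}+\alpha\id}}$, splitting the right-hand side as $(I)+(II)+2\delta$ with
$$(II)=\E\Br{\Psi_\theta\br{A\br{\bx}+\alpha\id}}-\E\Br{\Psi_\theta\br{A\br{\bx'}+\alpha\id}},$$
$$(I)=\E\Br{\Psi_\theta\br{A\br{\bx}-\alpha\id}}-\E\Br{\Psi_\theta\br{A\br{\bx}+\alpha\id}}.$$
The magnitude of $(II)$ is at most $\eta$ by the hypothesis with $\beta=\alpha$. To bound $(I)$ I would partition the probability space according to the value of $\lambda_{\max}\br{A\br{\bx}}$: on the event $\lambda_{\max}\leq -\Lambda$, Fact~\ref{fac:ost}(1) forces $\Psi_\theta\br{A\br{\bx}+\alpha\id}\geq 1-\delta$ while $\Psi_\theta\br{A\br{\bx}-\alpha\id}\leq 1$, so the integrand is at most $\delta$; on $\lambda_{\max}>\Lambda$, Fact~\ref{fac:ost}(2) gives $\Psi_\theta\br{A\br{\bx}-\alpha\id}\leq\delta$ while $\Psi_\theta\br{A\br{\bx}+\alpha\id}\geq 0$, again at most $\delta$; on the remaining window $\lambda_{\max}\in(-\Lambda,\Lambda]$ the integrand is controlled by the trivial bound $\Psi_\theta\leq 1$ and contributes $\Pr\Br{\lambda_{\max}\br{A\br{\bx}}\in(-\Lambda,\Lambda]}$. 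Summing, $(I)\leq\delta+\Pr\Br{\lambda_{\max}\br{A\br{\bx}}\in(-\Lambda,\Lambda]}$, which together with $(II)$ yields the upper direction of the claimed bound.

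For the reverse inequality I would perform the mirror split: start from $\E[\Psi_\theta(A(\bx')-\alpha\id)]-\E[\Psi_\theta(A(\bx)+\alpha\id)]+2\delta$, add and subtract $\E\Br{\Psi_\theta\br{A\br{\bx}-\alpha\id}}$, and apply the hypothesis with $\beta=-\alpha$. The residual is again $\E\Br{\Psi_\theta\br{A\br{\bx}-\alpha\id}-\Psi_\theta\br{A\br{\bx}+\alpha\id}}$, controlled by the same anti-concentration probability over $\bx$ (not $\bx'$), so both directions share the RHS $\eta+3\delta+\Pr\Br{\lambda_{\max}\br{A\br{\bx}}\in(-\Lambda,\Lambda]}$. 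There is no serious technical obstacle: the analytic content is entirely hidden in Fact~\ref{fac:ost}, whose matrix-level version is automatic because $G_\theta$ is a symmetric function of the eigenvalues, and the only genuine choice is the routing of the intermediate expectation through $\bx$ so that the anti-concentration window lands on the distribution for which it will later be proven (via the Boolean Littlewood-Offord bound of the paper).
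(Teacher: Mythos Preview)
The paper does not supply its own proof of this statement: it is recorded as a Fact with a citation to \cite[Lemma~6.9]{o2019fooling} and used as a black box. Your argument is correct and is precisely the standard one behind that lemma: lift the pointwise sandwich of Fact~\ref{fac:ost}(3) to the spectral setting via $\lambda(M\pm\alpha\id)=\lambda(M)\pm\alpha$, route the triangle inequality through $\E\Br{\Psi_\theta\br{A(\bx)\pm\alpha\id}}$ so that one piece is controlled by the hypothesis and the residual $\E\Br{\Psi_\theta\br{A(\bx)-\alpha\id}-\Psi_\theta\br{A(\bx)+\alpha\id}}$ is bounded by $\delta$ off the window $(-\Lambda,\Lambda]$ using items~(1) and~(2) of Fact~\ref{fac:ost}, and by~$1$ on the window. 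Your observation that both directions can be arranged so the anti-concentration term lands on $\bx$ (not $\bx'$) is exactly the point, and matches how the paper later invokes this fact together with Corollary~\ref{cor:gauanticonc1spec}.
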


    \section{Computing spectral derivatives}

    \label{sec:spectralderivatives}

In this section use the result by Sendov~\cite{sendov2007higher}  to bound the spectral derivatives of~functions.

    \subsection{Formulas for spectral derivatives}

    Before we describe the main theorem of this section, we need the following notation introduced by Sendov in~\cite{sendov2007higher} to calculate the high-order Fr\'echet derivatives of spectral functions.

    \begin{definition}\cite{sendov2007higher}
    \label{def:definingtensorssendov}
    Let $t\geq 1$ and $x\in \R^t$. Let $T:(\R^k)^{\times t}\rightarrow \R$ be a $t$-tensor. For every, $\ell\in [t]$, define  a $(t+1)$-tensor $T^{\ell}_{\out}:(\R^k)^{\times (t+1)}\rightarrow \R$ as~follows
     \[
     (T^{\ell}_{\out})(i_1,\ldots,i_{t+1})= \begin{cases}
              0 & i_\ell=i_{t+1} \\
              \frac{T(i_1,\ldots,i_{\ell-1},i_{t+1},i_{\ell+1},\ldots,i_t)-T(i_1,\ldots,i_{\ell-1},i_{\ell},i_{\ell+1},\ldots,i_t)}{x_{i_{t+1}}-x_{i_{\ell}}} & i_\ell\neq i_{t+1}.
           \end{cases}
        \]
    Finally, for every $\ell\in [t]$, define
     \[
     T_{\sigma}(x)= \begin{cases}
              \nabla f(x) & \ell=1, \sigma=(1) \\
              \br{ T(x)}^{\ell}_{\out} &\ell\leq t-1\\
              \nabla T_{\sigma} (x) &\ell= t,
           \end{cases}
        \]
        where  $\sigma(\ell)$ is defined as follows: let $\sigma$ be a permutation of $[k]$ given in the cycle decomposition, then $\sigma(\ell)$ is a permutation of $[k+1]$ elements whose cycle representation is the same as $\sigma$ except that the element $k+1$ is inserted after the $\ell$th element and before the $(\ell+1)$th element in the cycle representation of $\sigma$.\footnote{For better intuition, consider a simple example: let $\sigma=(12)(3)$ be a permutation on $[3]$, then $\sigma(\cdot)$ is a permutation on $[4]$ defined as follows: $\sigma(1)$ is $(142)(3)$, similarly $\sigma(2)=(124)(3)$, $\sigma(3)=(12)(34)$, $\sigma(4)=(12)(3)(4)$.}
    \end{definition}

    We are now ready to state the Sendov's formula for high-order Fr\'echet derivatives of spectral functions.
    \begin{theorem}\cite{sendov2007higher}
    \label{thm:sendovmain}
     Let $F:\sym{k}\rightarrow \R$ be a spectral function (i.e., $F=f\circ \lambda$ for $f:\R^k\rightarrow \R$). Then for any $X\in\sym{k}$ satisfying that all the eigenvalues are distinct, $F$ is $t$-times differentiable at $X$ \emph{if and only if} $f$ is $t$-times differentiable at $\lambda(X)$. If $f(x_1,\ldots,x_n)=\sum_{i=1}^n g\br{x_i}$ for $g:\reals\rightarrow\reals$, then for any $X\in\sym{k}$, $F$ is $t$-times Fr\'echet differentiable at $X$ \emph{if and only if} $f$ is $t$-times differentiable at $\lambda(X)$, i.e., the distinctness of the eigenvalues is not necessary anymore.

    Moreover, for every $\sigma \in S_t,x\in \R^k$, let $T_{\sigma}(x): (\R^k)^{\times t}\rightarrow \R$ be a $t$-tensor as defined in Definition~\ref{def:definingtensorssendov} (which depends on the function $f$).\footnote{Think of $x\in \R^k$ as the eigenvalues of $X\in\sym{k}$, i.e., $x=\lambda(X)$.} Then,
    for every $U_1,\ldots,U_t\in \sym{k}$, we have
    $$
    D^t F\br{X}\Br{U_1,\ldots,U_t}=\br{\sum_{\sigma\in S_t}\diag^\sigma T_{\sigma}(\lambda(X)) }(V^TU_1V,\ldots,V^TU_tV),
    $$
    where $V$ satisfies $X=V\br{\diag(\lambda(X)}V^T$ and $\diag^\sigma T:(\mat{k})^t\rightarrow \R$ is a $t$-tensor on the set $\sym{k}$ (as defined in Definition~\ref{def:diadsigmadefinition}).
    \end{theorem}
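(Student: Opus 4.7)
The plan is to prove the two parts separately: (i) the differentiability equivalence, and (ii) the explicit tensor formula. The main tool throughout is orthogonal invariance together with the divided-difference representation of Fr\'echet derivatives (Fact~\ref{fac:fredivided}). First, I exploit the orthogonal invariance $F(OXO^T)=F(X)$: fixing the spectral decomposition $X=V\diag(\lambda(X))V^T$, for any symmetric perturbations we have
\[
F(X+t_1U_1+\cdots+t_tU_t)=F\bigl(\diag(\lambda(X))+t_1V^TU_1V+\cdots+t_tV^TU_tV\bigr),
\]
so $D^tF(X)[U_1,\ldots,U_t]=D^tF(\diag(\lambda(X)))[V^TU_1V,\ldots,V^TU_tV]$. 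This reduces everything to computing $t$-th Fr\'echet derivatives at a diagonal matrix, which explains the $V^T(\cdot)V$ conjugation on the right-hand side.

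For the differentiability equivalence in the distinct-eigenvalue case, I appeal to analytic perturbation theory: when $X$ has simple spectrum, each eigenvalue map $\lambda_i(\cdot)$ is $C^\infty$ in a neighborhood of $X$ (by the implicit function theorem applied to the characteristic polynomial), so $F=f\circ\lambda$ inherits $t$-times differentiability from $f$. The converse follows by restricting to diagonal perturbations of $\diag(\lambda(X))$, since then $\lambda$ acts locally as the identity and $f$ must itself be $t$-times differentiable at $\lambda(X)$. For the separable case $f(x)=\sum_i g(x_i)$, I instead use $F(X)=\Tr g(X)$ and the Dalecki\v{\i}--Krein type formulas (Fact~\ref{fac:fredivided} and their higher-order analogues), which express $D^t g(X)[U_1,\ldots,U_t]$ via divided differences $g^{[t]}$. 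These divided differences extend continuously to coinciding arguments (Definition~\ref{def:dividedifference}), so differentiability of $F$ is equivalent to differentiability of $g$, with no distinctness assumption needed.

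For the tensor formula, I proceed by induction on $t$. The base case $t=1$ is exactly Eq.~\eqref{eqn:1st} of Fact~\ref{fac:fredivided}: in the eigenbasis of $X$, the diagonal entries $(V^THV)_{ii}$ carry weight $g'(\lambda_i)=T_{(1)}(\lambda)_i$, while the off-diagonal entries carry weight $f^{[1]}(\lambda_i,\lambda_j)=(T_{(1)})^1_{\out}(\lambda)_{i,j}$; this exactly matches the single-permutation $\diag^\sigma$ structure for $S_1=\{(1)\}$. For the inductive step, I differentiate the expression for $D^{t-1}F(X)[U_1,\ldots,U_{t-1}]$ once more in direction $U_t$. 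Two kinds of new terms appear: those where $U_t$ replaces an entry in one of the existing tensor slots (producing the $\nabla$ contribution corresponding to inserting $t$ at the end of a cycle), and those where the differentiation acts on a coefficient that is itself a divided difference, producing a new divided difference of one higher order --- this is precisely the operation $T\mapsto T^\ell_{\out}$ in Definition~\ref{def:definingtensorssendov}, and it corresponds to inserting the index $t$ between positions $\ell$ and $\ell+1$ of a cycle of $\sigma$. The $t$ possible insertion positions yield exactly all permutations in $S_t$ that restrict to $\sigma\in S_{t-1}$, completing the induction.

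The main obstacle will be step 4, the combinatorial bookkeeping linking Fr\'echet differentiation of divided-difference expressions to the cycle-insertion operation on permutations. Concretely, one has to verify that when differentiating a term of the form $\diag^\sigma T_\sigma(\lambda(X))$ evaluated on matrices in the eigenbasis, every cross term (including those coming from perturbation of the eigenbasis $V$ itself, which contributes via $\dot V = V\cdot\mathrm{(off\text{-}diagonal)}$) can be reorganized into a $\diag^{\sigma'}T_{\sigma'}$ term for some uniquely determined $\sigma'\in S_t$. This requires carefully distinguishing contributions from differentiating the eigenvalues (which gives $\nabla T_\sigma$) versus differentiating the eigenvectors (which, via commutators, generates the $(T_\sigma)^\ell_{\out}$ pieces through telescoping divided-difference identities). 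Once this bijection between differentiation patterns and cycle insertions is pinned down, the symmetry of $D^tF$ in its arguments (Fact~\ref{fac:frechetderivative}) provides an internal consistency check that all permutations of $S_t$ appear with the correct multiplicity.
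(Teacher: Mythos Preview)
The paper does not prove this theorem; it is quoted from Sendov~\cite{sendov2007higher} and used as a black box. So there is no proof in the paper to compare your proposal against.

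That said, your sketch follows the same architecture as Sendov's original argument: reduce to a diagonal $X$ via orthogonal invariance (your first displayed identity is exactly right, since $V$ is the \emph{fixed} eigenbasis of $X$), then compute derivatives at a diagonal point by induction on $t$, identifying the two sources of new terms (differentiating the tensor coefficients versus differentiating the divided-difference structure) with the two ways of inserting $t$ into a cycle of $\sigma\in S_{t-1}$ (as a new fixed point, giving $\nabla T_\sigma$, or inside an existing cycle at position $\ell$, giving $(T_\sigma)^\ell_{\out}$). This is the right plan.

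One inaccuracy: your description of the base case $t=1$ is off. For $t=1$ there is a single permutation $\sigma=(1)$, and $\diag^{(1)}T_{(1)}$ picks up \emph{only} the diagonal entries of $V^TUV$; the off-diagonal divided differences you mention first appear at $t=2$ (they are $T_{(12)}$, not part of $t=1$). The formula you should be checking at $t=1$ is simply $DF(X)[U]=\sum_i \partial_i f(\lambda(X))\,(V^TUV)_{ii}$, which is the standard first-variation formula for spectral functions. Also note that Fact~\ref{fac:fredivided} concerns the matrix-valued map $X\mapsto f(X)$ for scalar $f$, not the scalar spectral function $F=f\circ\lambda$; you use it only indirectly, via $F(X)=\Tr g(X)$ in the separable case. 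With those corrections, and once the cycle-insertion bookkeeping in your step~4 is actually carried out (Sendov does this carefully; it is the substance of his paper), the argument goes through.
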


    \subsection{Third order Fr\'echet derivatives of smooth functions}

    In this section, we explicitly compute the third order Fr\'echet derivatives of spectral functions.
    \begin{theorem}
    \label{thm:spectralthm}
    Let $k,n\geq 1$. Let $f:\R^k\rightarrow \R$ be a $3$-times differentiable symmetric function and $\lambda:\sym{k}\rightarrow \R^k$ be the map $\lambda(M)=\br{\lambda_1(M),\ldots,\lambda_k(M)}$ for every $M\in \sym{k}$. Let $F:\sym{k}\rightarrow \R$ be defined as $F(M)=(f\circ \lambda)(M)$ for all $M\in \sym{k}$. Then, for every $P\in \sym{k}$ with \emph{distinct} eigenvalues and $H\in\sym{k}$, let $P=V\br{\diag\br{\lambda\br{P}}}V^T$ be a spectral decomposition of $P$ and $H=VQV^T$. Then $D^3F\br{P}\Br{Q,Q,Q}$ is the summation of the following terms.

    \begin{enumerate}
    	\item $\sum_{i_1}\nabla^3_{i_1,i_1,i_1}f\br{x}H_{i_1,i_1}^3$
    	\item $\sum_{i_1\neq i_2}\nabla^3_{i_1,i_2,i_1}f\br{x}H_{i_1,i_1}^2H_{i_2,i_2}$
        \item $ \sum_{i_1\neq i_2\neq i_3}(\nabla^3_{i_1,i_2,i_3} f\br{x})\cdot  H_{i_1,i_1}H_{i_2,i_2}H_{i_3,i_3}$
        \item $\sum_{i_1 \neq i_2} \br{\frac{\nabla^2_{i_2,i_2}-\nabla^2_{i_1,i_2}}{x_{i_2}-x_{i_1}}-\frac{\nabla_{i_2}-\nabla_{i_1}}{(x_{i_2}-x_{i_1})^2}}f\br{x}H_{i_2,i_2}H_{i_2,i_1}^2$
    \item $\sum_{i_1\neq i_2\neq i_3}\frac{\nabla^2_{i_2,i_3}-\nabla^2_{i_1,i_3}}{x_{i_2}-x_{i_1}}f\br{x}H_{i_1,i_2}^2H_{i_3,i_3}$
    \item $\sum_{i_1\neq i_2\neq i_3}\br{\frac{\nabla_{i_3}-\nabla_{i_1}}{(x_{i_3}-x_{i_2})(x_{i_3}-x_{i_1})}-      \frac{\nabla_{i_2}-\nabla_{i_1}}{(x_{i_3}-x_{i_2})(x_{i_2}-x_{i_1})}}f\br{x}H_{i_1,i_2}H_{i_2,i_3}H_{i_3,i_1}$
        \item $\sum_{i_1\neq i_2\neq i_3}\br{ \frac{\nabla_{i_2}-\nabla_{i_3}}{(x_{i_3}-x_{i_1})(x_{i_2}-x_{i_3})}-      \frac{\nabla_{i_2}-\nabla_{i_1}}{(x_{i_3}-x_{i_1})(x_{i_2}-x_{i_1})}}f\br{x}H_{i_1,i_3}H_{i_2,i_1}H_{i_3,i_2},$
    \end{enumerate}
    where $x=\br{\lambda_1\br{P},\ldots,\lambda_k\br{P}}$.
    \end{theorem}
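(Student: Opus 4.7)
The plan is to apply Sendov's formula from Theorem~\ref{thm:sendovmain} at $t=3$, writing
\[
D^3 F(P)[H,H,H] \;=\; \sum_{\sigma \in S_3} \bigl(\diag^\sigma T_\sigma(\lambda(P))\bigr)(Q,Q,Q),
\]
with $Q = V^T H V$, and then computing each of the six summands explicitly. The proof reduces to two bookkeeping tasks: first, unrolling the recursive Definition~\ref{def:definingtensorssendov} to produce explicit formulas for the six tensors $T_\sigma$ indexed by $\sigma \in S_3 = \{e, (12), (13), (23), (123), (132)\}$; and second, using Definition~\ref{def:diadsigmadefinition} to determine which products of entries of $Q$ survive the folding $\diag^\sigma$ and in what index pattern.

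I would unroll the recursion according to the cycle type of $\sigma$. When $\sigma = e$ is the identity, every step of the recursion applies a gradient, so $T_e(x) = \nabla^3 f(x)$; the folding operator $\diag^e$ forces every pair $(i_\ell, j_\ell)$ to be diagonal, giving the $Q$-monomial $Q_{i_1,i_1}Q_{i_2,i_2}Q_{i_3,i_3}$, and splitting the resulting triple sum by the equality pattern of $(i_1,i_2,i_3)$ yields items~1, 2, 3. When $\sigma$ is a transposition, say $(12)(3)$, the recursion applies one outer-difference step $T^\ell_{\out}$ (at the step that creates the length-two cycle) and two gradient steps, yielding a divided difference of a second-derivative tensor; the folding forces the off-diagonal pattern $Q_{i_1,i_2}Q_{i_2,i_1}Q_{i_3,i_3}$, which by symmetry of $Q$ collapses to $Q_{i_1,i_2}^2\,Q_{i_3,i_3}$, and the sub-cases ``$i_3$ coincides with $i_1$ or $i_2$'' versus ``$i_3$ distinct from both'' generate items~4 and~5. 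When $\sigma$ is one of the two $3$-cycles $(123),(132)$, the recursion applies two outer-difference steps and one gradient, producing an iterated divided difference of $\nabla f$; the folding forces the cyclic pattern $Q_{i_1,i_2}Q_{i_2,i_3}Q_{i_3,i_1}$ (for one cycle) or the anticyclic pattern $Q_{i_1,i_3}Q_{i_2,i_1}Q_{i_3,i_2}$ (for the other), producing items~6 and~7.

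The main obstacle is managing the algebraic form of the divided differences produced by the outer operations once $\diag^\sigma$ imposes additional index coincidences or distinctness constraints. For a transposition, the single outer difference $(\nabla^2_{a,c} - \nabla^2_{b,c})/(x_a - x_b)$ must be re-simplified when the remaining slot's index is allowed to coincide with $a$ or $b$: taking a second difference quotient is precisely what produces the extra term $(\nabla_{i_2} - \nabla_{i_1})/(x_{i_2} - x_{i_1})^2$ visible inside item~4, while the ``all distinct'' sub-case gives the simpler quotient in item~5. For the $3$-cycles, the two nested outer operations together with the pairwise-distinctness of $i_1,i_2,i_3$ forced by $\diag^\sigma$ must be simplified into the precise paired-quotient forms in items~6 and~7; the assignment of which index plays the role of ``outer'' at each of the two stages is exactly what distinguishes $(123)$ from $(132)$, and keeping track of this order is the delicate part. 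Once the algebra is executed carefully and the contributions from all six permutations are collected, the seven-term expression in the statement falls out.
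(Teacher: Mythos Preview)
Your approach is the same as the paper's: apply Sendov's formula (Theorem~\ref{thm:sendovmain}) at $t=3$, compute each $T_\sigma$ explicitly for the six permutations, apply the $\diag^\sigma$ folding, and then collect terms according to the $H$-monomial pattern and the equality pattern of $(i_1,i_2,i_3)$. The paper carries out exactly this bookkeeping, first listing $T_\sigma$ for each $\sigma\in S_3$ and then splitting the resulting sum into the five index-coincidence cases.

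There is one misconception in your plan that, if followed literally, would give a wrong coefficient in item~4. You write that for the $3$-cycles ``the pairwise-distinctness of $i_1,i_2,i_3$ [is] forced by $\diag^\sigma$.'' This is false: $\diag^\sigma$ only imposes $\vec{i}=\sigma\vec{j}$ on the matrix-entry indices, not distinctness of $i_1,i_2,i_3$ among themselves. For $\sigma=(123)$ the $H$-monomial is $H_{i_1,i_2}H_{i_2,i_3}H_{i_3,i_1}$, and when, say, $i_1=i_2\neq i_3$ this collapses to $H_{i_1,i_1}H_{i_1,i_3}^2$, which is the item~4 pattern; moreover $T_{(123)}$ and $T_{(132)}$ are \emph{nonzero} at such coincident indices (they equal $\pm(\nabla_{i_3}-\nabla_{i_1})/(x_{i_3}-x_{i_1})^2$, as the paper's Cases~5 and~6 record). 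In the paper's accounting the coefficient of $H_{i_2,i_2}H_{i_2,i_1}^2$ is $T_2+T_3+T_5+T_6$ evaluated at the coincidence, and the $-(\nabla_{i_2}-\nabla_{i_1})/(x_{i_2}-x_{i_1})^2$ piece of item~4 receives a contribution from the $3$-cycle $T_6$ in addition to the one from the transposition $T_2$ that you identified. If you only credit the transpositions there, item~4 will be off by a constant factor. Once you drop the spurious distinctness assumption and include the coincident-index values of the $3$-cycle tensors, your plan goes through exactly as outlined.
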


    \begin{proof}
    To prove this theorem, we first apply Theorem~\ref{thm:sendovmain} for $t=3$ to obtain
    \begin{align}
    \label{eq:usingsendovinmainthm}
    D^3 F\br{P}\Br{Q,Q,Q}=\br{\sum_{\sigma\in S_3}\diag^\sigma T_{\sigma}(\lambda(P)) }(H,H,H).
    \end{align}
    We next carefully express each quantity in the summation using the definition of these tensors and upper bound each term. To this end, we break down all the six elements of $S_3$ and analyze them separately as follows.

    \textbf{Case 1: $\sigma=(1)(2)(3)$.} Then $T_\sigma(x)=\nabla^3f(x)$.

    \textbf{Case 2: $\sigma=(12)(3)$.} First, observe that considering $\sigma=(12)$ we get

    \[ \br{T_{(12)}(x)}_{i_1,i_2}=\begin{cases}
          0 & i_1=i_2 \\
          \frac{1}{x_{i_2}-x_{i_1}}\cdot \br{\nabla_{i_2}-\nabla_{i_1}}f\br{x} & i_1\neq i_2 \\
       \end{cases}
    \]
    Now, in order to compute $T_{(12)(3)}$, we need to compute $\nabla T_{(12)}(x)$ which can be written as follows
    \begin{align*} &\br{T_{(12)(3)}(x)}_{i_1,i_2,i_3}\\
    &=\begin{cases}
          0 & i_1=i_2 \\
          \frac{1}{x_{i_3}-x_{i_1}}\cdot \br{\nabla^2_{i_3,i_3}-\nabla^2_{i_1,i_3}}f\br{x}-      \frac{1}{(x_{i_3}-x_{i_1})^2}\cdot \br{\nabla_{i_3}-\nabla_{i_1}}f\br{x} & i_2=i_3\neq i_1 \\
          \frac{1}{x_{i_2}-x_{i_3}}\cdot \br{\nabla^2_{i_2,i_3}-\nabla^2_{i_3,i_3}}f\br{x}+      \frac{1}{(x_{i_2}-x_{i_3})^2}\cdot \br{\nabla_{i_2}-\nabla_{i_3}}f\br{x} & i_1=i_3\neq i_2 \\
          \frac{1}{x_{i_2}-x_{i_1}}\cdot \br{\nabla^2_{i_2,i_3}-\nabla^2_{i_1,i_3}}f\br{x} & i_1\neq i_2\neq i_3 \\
       \end{cases}
    \end{align*}
    \textbf{Case 3: $\sigma=(13)(2)$.} First note that for $\sigma=(1)(2)$, we have $T_{(1)(2)}=\nabla^2f$ and $\sigma(1)=(13)(2)$. So, we need to compute $\br{\nabla^2 f}f\br{x}^1_{\out}$ and we get

    \[ \br{T_{(13)(2)}(x)}_{i_1,i_2,i_3}=\begin{cases}
          0 & i_1= i_3 \\
          \frac{1}{x_{i_3}-x_{i_1}}\cdot \br{\nabla^2_{i_3,i_2}-\nabla^2_{i_1,i_2}}f\br{x} & i_1\neq i_3
       \end{cases}
    \]
    \textbf{Case 4: $\sigma=(1)(23)$.} First note that for $\sigma=(1)(2)$, we have $T_{(1)(2)}=\nabla^2f$ and $\sigma(2)=(1)(23)$. So, we need to compute $\br{\nabla^2 f}f\br{x}^2_{\out}$ and we get

    \[ \br{T_{(1)(23)}(x)}_{i_1,i_2,i_3}=\begin{cases}
          0 & i_2= i_3 \\
          \frac{1}{x_{i_3}-x_{i_2}}\cdot \br{\nabla^2_{i_3,i_1}-\nabla^2_{i_2,i_1}}f\br{x} & i_2\neq i_3
       \end{cases}
    \]
    \textbf{Case 5: $\sigma=(123)$.} Let $\sigma=(12)$, then $\sigma(2)=(123)$. So we need to compute $\br{T_{(12)}}f\br{x}^2_{\out}$  and we obtain
    \begin{align*} &\br{T_{(123)}(x)}_{i_1,i_2,i_3}\\
    &=    \begin{cases}
          \frac{1}{(x_{i_2}-x_{i_1})^2}\cdot \br{\nabla_{i_2}-\nabla_{i_1}}f\br{x} & i_2\neq i_3= i_1 \\
          \frac{1}{(x_{i_3}-x_{i_1})^2}\cdot \br{\nabla_{i_3}-\nabla_{i_1}}f\br{x} & i_1=i_2\neq i_3 \\
          \frac{1}{(x_{i_3}-x_{i_2})(x_{i_3}-x_{i_1})}\cdot \br{\nabla_{i_3}-\nabla_{i_1}}f\br{x}-      \frac{1}{(x_{i_3}-x_{i_2})(x_{i_2}-x_{i_1})}\cdot \br{\nabla_{i_2}-\nabla_{i_1}}f\br{x} & i_1\neq i_3\neq i_2 \\
          0 & \text{ otherwise}
       \end{cases}
    \end{align*}
    \textbf{Case 6: $\sigma=(132)$.} Let $\sigma=(12)$, then $\sigma\tau(1)=(132)$. So we need to compute $\br{T_{(12)}}f\br{x}^1_{\out}$  and we obtain.
    \begin{align*} &\br{T_{(132)}(x)}_{i_1,i_2,i_3}\\
    &=    \begin{cases}
          -\frac{1}{(x_{i_2}-x_{i_1})^2}\cdot \br{\nabla_{i_2}-\nabla_{i_1}}f\br{x} & i_1\neq i_3= i_2 \\
          \frac{1}{(x_{i_3}-x_{i_2})^2}\cdot \br{\nabla_{i_3}-\nabla_{i_2}}f\br{x} & i_2=i_1\neq i_3 \\
          \frac{1}{(x_{i_3}-x_{i_1})(x_{i_2}-x_{i_3})}\cdot \br{\nabla_{i_2}-\nabla_{i_3}}f\br{x}-      \frac{1}{(x_{i_3}-x_{i_1})(x_{i_2}-x_{i_1})}\cdot \br{\nabla_{i_2}-\nabla_{i_1}}f\br{x} & i_1\neq i_3\neq i_2 \\
          0 & \text{ otherwise}
       \end{cases}
    \end{align*}
    Using the above cases we can now rewrite Eq.~\eqref{eq:usingsendovinmainthm} as
    $$
        \sum_{\sigma \in S_3}T_\sigma(x)(H,H,H)=\sum_\sigma\sum_{\substack{i_1,i_2,i_3}}      \br{T_{\sigma}(x)}_{i_1,i_2,i_3} H_{i_1,i_{\sigma(1)}}H_{i_2,i_{\sigma(2)}}H_{i_3,i_{\sigma(3)}}
    $$

    Let's write this out as follows: by $T_i$, we mean $T_{case (i)}$ above
    \begin{align*}
         \sum_{i_1,i_2,i_3}&(T_{1})_{i_1,i_2,i_3}H_{i_1,i_1}H_{i_2,i_2}H_{i_3,i_3}+(T_{2})_{i_1,i_2,i_3}H_{i_1,i_2}H_{i_2,i_1}H_{i_3,i_3}+(T_{3})_{i_1,i_2,i_3}H_{i_1,i_3}H_{i_2,i_2}H_{i_3,i_1}\\
        &+(T_{4})_{i_1,i_2,i_3}H_{i_1,i_1}H_{i_2,i_3}H_{i_3,i_2}+(T_{5})_{i_1,i_2,i_3}H_{i_1,i_2}H_{i_2,i_3}H_{i_3,i_1}+(T_{6})_{i_1,i_2,i_3}H_{i_1,i_3}H_{i_2,i_1}H_{i_3,i_2}
    \end{align*}
     and in particular, since $H$ is symmetric the above simplifies to
    \begin{align}
    \label{eq:mainsum}
    \begin{aligned}
         \sum_{i_1,i_2,i_3}&(T_{1})_{i_1,i_2,i_3}H_{i_1,i_1}H_{i_2,i_2}H_{i_3,i_3}+(T_{2})_{i_1,i_2,i_3}H_{i_1,i_2}^2H_{i_3,i_3}+(T_{3})_{i_1,i_2,i_3}H_{i_1,i_3}^2H_{i_2,i_2}\\
        &+(T_{4})_{i_1,i_2,i_3}H_{i_1,i_1}H_{i_2,i_3}^2+(T_{5})_{i_1,i_2,i_3}H_{i_1,i_2}H_{i_2,i_3}H_{i_3,i_1}+(T_{6})_{i_1,i_2,i_3}H_{i_1,i_3}H_{i_2,i_1}H_{i_3,i_2}
    \end{aligned}
    \end{align}

    Now, we will break up this sum into $5$ cases as follows which will give us our theorem statement.

    \paragraph{Case (i): $i_1=i_3\neq i_2$.} Then Eq.~\eqref{eq:mainsum} reduces to the following
    \begin{align}
       \sum_{i_1,i_2} H_{i_1,i_1}^2H_{i_2,i_2}\br{T_1+T_3}+    H_{i_1,i_1}H_{i_2,i_1}^2\br{T_2+T_4+T_5+T_6}
    \end{align}
    Note that when we say $T_q$ above, we mean $(T_q)_{i_1,i_2,i_3}=(T_q)_{i_1,i_2,i_1}$ (since $i_3=i_1$). Let us now plug in the values of the corresponding $T_q$s into the formula and rewrite the above as follows
    \begin{align}
    \begin{aligned}
       &\sum_{i_1 \neq  i_2} H_{i_1,i_1}^2H_{i_2,i_2}\br{\nabla^3_{i_1,i_2,i_1}f\br{x}+0}+\\
       & \quad +H_{i_1,i_1}H_{i_2,i_1}^2\br{\frac{\nabla^2_{i_2,i_1}-\nabla^2_{i_1,i_1}}{x_{i_2}-x_{i_1}}+\frac{\nabla_{i_2}-\nabla_{i_1}}{(x_{i_2}-x_{i_1})^2}+\frac{\nabla^2_{i_1,i_1}-\nabla^2_{i_2,i_1}}{x_{i_1}-x_{i_2}}+\frac{\nabla_{i_2}-\nabla_{i_1}}{(x_{i_2}-x_{i_1})^2}}f\br{x}\\
        &=\sum_{i_1\neq i_2}H_{i_1,i_1}^2H_{i_2,i_2}\br{\nabla^3_{i_1,i_2,i_1}f\br{x}}+2H_{i_1,i_1}H_{i_2,i_1}^2\br{\frac{\nabla^2_{i_2,i_1}-\nabla^2_{i_1,i_1}}{x_{i_2}-x_{i_1}}+\frac{\nabla_{i_2}-\nabla_{i_1}}{(x_{i_2}-x_{i_1})^2}}f\br{x}
       \end{aligned}
    \end{align}
    \textbf{Case (ii): $i_1=i_2\neq i_3$.} Then Eq.~\eqref{eq:mainsum} reduces to
    \begin{align}
       \sum_{i_1,i_3} H_{i_1,i_1}^2H_{i_3,i_3}\br{T_1+T_2}+    H_{i_1,i_1}H_{i_3,i_1}^2\br{T_3+T_4+T_5+T_6}
    \end{align}
    The above simplies to the following
    \begin{align}
    \begin{aligned}
       &\sum_{i_1\neq i_3} H_{i_1,i_1}^2H_{i_3,i_3}\br{\nabla^3_{i_1,i_1,i_3}f\br{x}+0}+\\
       & \quad +H_{i_1,i_1}H_{i_3,i_1}^2\br{\frac{\nabla^2_{i_3,i_1}-\nabla^2_{i_1,i_1}}{x_{i_3}-x_{i_1}}+\frac{\nabla^2_{i_3,i_1}-\nabla^2_{i_1,i_1}}{x_{i_3}-x_{i_1}}+\frac{\nabla_{i_3}-\nabla_{i_1}}{(x_{i_3}-x_{i_1})^2}+\frac{\nabla_{i_3}-\nabla^2_{i_1}}{(x_{i_3}-x_{i_1})^2}}f\br{x}\\
       &=\sum_{i_1 \neq i_3} H_{i_1,i_1}^2H_{i_3,i_3}\br{\nabla^3_{i_1,i_1,i_3}f\br{x}}+
       2H_{i_1,i_1}H_{i_3,i_1}^2\br{\frac{\nabla^2_{i_3,i_1}-\nabla^2_{i_1,i_1}}{x_{i_3}-x_{i_1}}+\frac{\nabla_{i_3}-\nabla_{i_1}}{(x_{i_3}-x_{i_1})^2}}f\br{x}
       \end{aligned}
    \end{align}
    \textbf{Case (iii): $i_2=i_3\neq i_1$.} Then Eq.~\eqref{eq:mainsum} reduces to
    \begin{align}
       \sum_{i_1,i_2} H_{i_2,i_2}^2H_{i_1,i_1}\br{T_1+T_4}+    H_{i_2,i_2}H_{i_2,i_1}^2\br{T_2+T_3+T_5+T_6}
    \end{align}
    The above simplifies to the following
    \begin{align}
    \begin{aligned}
       &\sum_{i_1\neq i_2} H_{i_2,i_2}^2H_{i_1,i_1}\br{\nabla^3_{i_1,i_2,i_2}f\br{x}+0}+\\
       & \quad +H_{i_2,i_2}H_{i_2,i_1}^2\br{\frac{\nabla^2_{i_2,i_2}-\nabla^2_{i_1,i_2}}{x_{i_2}-x_{i_1}}-\frac{\nabla_{i_2}-\nabla_{i_1}}{(x_{i_2}-x_{i_1})^2}+\frac{\nabla^2_{i_2,i_2}-\nabla^2_{i_1,i_2}}{x_{i_2}-x_{i_1}}-\frac{\nabla_{i_2}-\nabla_{i_1}}{(x_{i_2}-x_{i_1})^2}}f\br{x}\\
        &=\sum_{i_1 \neq i_2} H_{i_2,i_2}^2H_{i_1,i_1}\br{\nabla^3_{i_1,i_2,i_2}f\br{x}}+ 2H_{i_2,i_2}H_{i_2,i_1}^2\br{\frac{\nabla^2_{i_2,i_2}-\nabla^2_{i_1,i_2}}{x_{i_2}-x_{i_1}}-\frac{\nabla_{i_2}-\nabla_{i_1}}{(x_{i_2}-x_{i_1})^2}}f\br{x}
       \end{aligned}
    \end{align}
    \textbf{Case (i)+ Case (ii)+ Case (iii).} We first upper bound these three cases to get the desired upper bound in the theorem statement. First summing the three cases, we have
    \begin{align}
    \label{eq:sumcase123}
    \begin{aligned}
    \sum_{i_1\neq i_2}&H_{i_1,i_1}^2H_{i_2,i_2}\br{\nabla^3_{i_1,i_2,i_1}+\nabla^3_{i_1,i_2,i_2}+\nabla^3_{i_2,i_1,i_1}}f\br{x}\\
    &+6\sum_{i_1 \neq i_2} H_{i_2,i_2}H_{i_2,i_1}^2\underbrace{\br{\frac{\nabla^2_{i_2,i_2}-\nabla^2_{i_1,i_2}}{x_{i_2}-x_{i_1}}-\frac{\nabla_{i_2}-\nabla_{i_1}}{(x_{i_2}-x_{i_1})^2}}f\br{x}}_{(\star)}
    \end{aligned}
    \end{align}
%    We now bound $(\star)$ using  the following claim.

    \paragraph{Case (iv): $i_2=i_3= i_1$.} Then Eq.~\eqref{eq:mainsum} reduces to
    \begin{align}
       \sum_{i_1} H_{i_1,i_1}^3\br{T_1+T_2+T_3+T_4+T_5+T_6}=\sum_{i_1}H_{i_1,i_1}^3\nabla^3_{i_1,i_1,i_1}f
    \end{align}

    \paragraph{Case (v): $i_2\neq i_3\neq  i_1$.} Then Eq.~\eqref{eq:mainsum} stays the same and we get
    \begin{align}
    \label{eq:mainsumreduction}
    \begin{aligned}
         \sum_{i_1,i_2,i_3}&(\nabla^3_{i_1,i_2,i_3} f)\cdot  H_{i_1,i_1}H_{i_2,i_2}H_{i_3,i_3}\\
         &+\frac{\nabla^2_{i_2,i_3}-\nabla^2_{i_1,i_3}}{x_{i_2}-x_{i_1}}f\br{x}H_{i_1,i_2}^2H_{i_3,i_3}
         + \frac{\nabla^2_{i_3,i_2}-\nabla^2_{i_1,i_2}}{x_{i_3}-x_{i_1}}f\br{x} H_{i_1,i_3}^2H_{i_2,i_2}
        +\frac{\nabla^2_{i_3,i_1}-\nabla^2_{i_2,i_1}}{x_{i_3}-x_{i_2}}f\br{x}H_{i_1,i_1}H_{i_2,i_3}^2\\
       & +\br{\frac{\nabla_{i_3}-\nabla_{i_1}}{(x_{i_3}-x_{i_2})(x_{i_3}-x_{i_1})}-      \frac{\nabla_{i_2}-\nabla_{i_1}}{(x_{i_3}-x_{i_2})(x_{i_2}-x_{i_1})}}f\br{x}H_{i_1,i_2}H_{i_2,i_3}H_{i_3,i_1}\\
        &+\br{ \frac{\nabla_{i_2}-\nabla_{i_3}}{(x_{i_3}-x_{i_1})(x_{i_2}-x_{i_3})}-      \frac{\nabla_{i_2}-\nabla_{i_1}}{(x_{i_3}-x_{i_1})(x_{i_2}-x_{i_1})}}f\br{x}H_{i_1,i_3}H_{i_2,i_1}H_{i_3,i_2}
    \end{aligned}
    \end{align}
    This concludes the proof of the theorem statement.
    \end{proof}
    \subsection{Main theorem: Fr\'echet derivatives of Bentkus function}
    We now state the main theorem which bounds all the terms that appear in the theorem in the previous section. Let $G:\reals^k\rightarrow\reals$ be the Bentkus function given in Definition~\ref{def:bentkus}.
    \begin{theorem}\label{thm:derivative}
      Let $k\geq 1$ be an integer and $\Psi:\sym{k}\rightarrow\reals$ be a function defined as $\psi\br{M}=\br{G\circ\lambda}\br{M}$ where $G$ is given in Definition~\ref{def:bentkus}. Given $\Delta\geq 1$ $X\in\sym{k}$ with eigenvalues $\lambda\br{X}=\br{x_1,\ldots, x_k}$ satisfying that $\norm{X}\leq\Delta$, it holds that
      \[\abs{D^3\Psi\br{X}\Br{H,H,H}}\leq O\br{\Delta^2\cdot \log^3 k\cdot \|H\|^3}.\]
    \end{theorem}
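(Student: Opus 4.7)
The plan is to apply Sendov's formula (Theorem~\ref{thm:sendovmain}), made explicit by Theorem~\ref{thm:spectralthm}, to decompose $D^3\Psi(X)[H,H,H]$ as a sum of seven tensor contractions. After conjugating by the eigenvectors of $X$ (which preserves $\|H\|$ and the bound $\|X\|\le\Delta$), the partial derivatives and divided differences that appear refer to the coordinates of $\lambda(X)=(x_1,\ldots,x_k)$ and the entries of the conjugated $H$.

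I would handle the seven terms in two groups. The first three terms of Theorem~\ref{thm:spectralthm} involve only diagonal entries of $H$ multiplied by third partial derivatives of $G$. These are controlled immediately by Bentkus's derivative estimate (Fact~\ref{fac:benktus2}), which gives $\|G^{(3)}\|_1\le O(\log^{3/2}(k))$; together with $|H_{ii}|\le \|H\|$ this contributes at most $O(\log^{3/2}(k)\,\|H\|^3)$. These are exactly the terms that appear in the polytope setting of~\cite{harsha2013invariance}, which is consistent with the intuition that they correspond to the diagonal special case.

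The four remaining terms involve divided differences of first or second partial derivatives of $G$ at distinct eigenvalues $x_i$. Since $G(x)=\prod_j g(x_j)$, every such coefficient factors as $\prod_{j\notin S}g(x_j)$ times a rational expression in $g,g',g'',g'''$ at the two or three indices in $S$. For the terms with structure $H_{i_1,i_1}H_{i_2,i_1}^2$ and $H_{i_1,i_2}^2 H_{i_3,i_3}$, I would combine the mean-value theorem for divided differences (Fact~\ref{fac:mvtdd}) and Claim~\ref{claim:gg'} with the pointwise bounds $|\overline{g}'|\le 3\Delta\,\overline{g}$ and $|\overline{g}''|\le 15\Delta^2\,\overline{g}$ from Lemma~\ref{lem:ophiupperbound} to dominate the rational factor by $O(\Delta^2)$ times a product of $g,g'$ values; summing then reduces to a Bentkus-type sum controlled by Fact~\ref{fac:benktus2}.

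The main obstacle is the two terms containing the genuine triple off-diagonal products $H_{i_1,i_2}H_{i_2,i_3}H_{i_3,i_1}$ and $H_{i_1,i_3}H_{i_2,i_1}H_{i_3,i_2}$. A straightforward Cauchy--Schwarz bound on such triple sums only yields $\|H\|_F^3$, which is much too weak. To handle them I would re-express the relevant second-order divided differences through Dyson's integral representation (Fact~\ref{fac:ex2nd}, Lemma~\ref{lem:ex2derivative}) applied to the smooth factors of $G$, so that each triple product becomes a trace of $H$ against three matrix-valued factors built from $X$. Bhatia's upper-triangular estimate (Fact~\ref{fac:bhatia}) would then contribute an extra $\log(k)$ factor while bounding the off-diagonal piece of each factor. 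Combining the $\Delta^2$ factor from Lemma~\ref{lem:ophiupperbound}, the Bentkus $O(\log^{3/2}(k))$ bound from Fact~\ref{fac:benktus2}, and the additional $\log(k)$ from Fact~\ref{fac:bhatia} produces the claimed $O(\Delta^2\log^{3}(k)\,\|H\|^3)$ estimate. I expect the most delicate bookkeeping for these triple-off-diagonal terms to be isolated as the technical lemma deferred to Appendix~\ref{app:claim5proof}.
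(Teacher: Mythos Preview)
Your proposal is correct and mirrors the paper's approach almost exactly: apply Theorem~\ref{thm:spectralthm} to get the seven terms, bound terms~(1)--(3) directly via Fact~\ref{fac:benktus2} (Lemma~\ref{lem:123}), bound terms~(4)--(5) by rewriting in terms of $\ophi$ and using the mean-value theorem together with Lemma~\ref{lem:ophiupperbound} (Lemmas~\ref{lem:term4},~\ref{lem:bound5}), and handle the triple off-diagonal terms~(6)--(7) via the Dyson/Fr\'echet integral representation and Fact~\ref{fac:bhatia}, with the bulk of the work deferred to Appendix~\ref{app:claim5proof} (Lemma~\ref{lem:casev}). One technical device in the appendix that you did not anticipate is a preliminary case split on whether more than $3\log k$ of the eigenvalues $x_i$ are negative (Lemma~\ref{lem:negative} via Lemma~\ref{lem:normlowvalue}), which disposes of one regime cheaply and is what makes the ``few negative $x_i$'' assumption available throughout the hard case.
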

    The following corollary simply follows from the definition of $G_{\theta}$ in Eq.~\ref{eqn:flambda} and the chain rule of Fr\'echet derivatives in Fact~\ref{fac:frechetderivative}.

    \begin{corollary}\label{cor:derivaativebentkus}
      Let $k\geq 1$ be an integer and $\theta>0, \alpha\in\reals$ and $\Psi_{\theta}:\sym{k}\rightarrow\reals$ be a function defined as $\Psi_{\theta}\br{M}=\br{G_{\theta}\circ\lambda}\br{M+\alpha\id}$, where $G_{\theta}$ is given in Eq.~\eqref{eqn:flambda}. Given $\Delta\geq 1$, $X\in\sym{k}$ with eigenvalues $\lambda\br{X}=\br{x_1,\ldots, x_k}$ satisfying that $\norm{X}\leq\Delta$, it holds that
      \[\abs{D^3\Psi_{\theta}\br{X+\alpha\id}\Br{H,H,H}}\leq O\br{\frac{\Delta^2+\alpha^2}{\theta^3}\cdot\log^3 k\cdot\|H\|^3}.\]
    \end{corollary}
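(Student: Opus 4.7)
The plan is to reduce this corollary directly to Theorem~\ref{thm:derivative} by exploiting the structural identity between $G_\theta$ and $G$ together with the chain rule for Fr\'echet derivatives. The key algebraic observation, which comes directly from Eq.~\eqref{eqn:gtheta}, is that $G_\theta\br{y_1,\ldots,y_k}=G\br{-y_1/\theta,\ldots,-y_k/\theta}$; combined with the symmetry of $G$ in its arguments and the identity $\lambda\br{-Y/\theta}=-\lambda\br{Y}/\theta$ (as multisets of eigenvalues), this rewrites
\[\Psi_\theta\br{M}=\br{G\circ\lambda}\br{\phi\br{M}},\qquad\text{where}\qquad\phi\br{M}\defeq-\br{M+\alpha\id}/\theta\]
is an affine map from $\sym{k}$ to itself. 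In this form the corollary is a straightforward chain-rule composition of the bound already supplied by Theorem~\ref{thm:derivative}.

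The first step is to invoke the chain rule for Fr\'echet derivatives (Fact~\ref{fac:frechetderivative}). Since $\phi$ is affine, its Fr\'echet derivative is the constant linear map $D\phi\br{M}\Br{H}=-H/\theta$ and all higher-order Fr\'echet derivatives of $\phi$ vanish. Combined with the multilinearity of $D^3$ in its direction arguments, this yields, for every $Y\in\sym{k}$,
\[D^3\Psi_\theta\br{Y}\Br{H,H,H}=-\frac{1}{\theta^3}\cdot D^3\br{G\circ\lambda}\br{\phi\br{Y}}\Br{H,H,H}.\]
Setting $Y=X+\alpha\id$ we obtain $\phi\br{Y}=-\br{X+2\alpha\id}/\theta$, whose spectral norm is bounded by $\br{\Delta+2\abs{\alpha}}/\theta$.

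The second step is to bound the right-hand side using Theorem~\ref{thm:derivative} applied to $G\circ\lambda$ at the point $\phi\br{Y}$, taking the theorem's parameter to be $\Delta'\defeq\max\set{1,\br{\Delta+2\abs{\alpha}}/\theta}$. This gives $\abs{D^3\br{G\circ\lambda}\br{\phi\br{Y}}\Br{H,H,H}}\leq O\br{\Delta'^2\cdot\log^3 k\cdot\|H\|^3}$. Multiplying by $1/\theta^3$ and using the elementary inequality $\br{\Delta+2\abs{\alpha}}^2\leq O\br{\Delta^2+\alpha^2}$ produces the bound claimed in the corollary.

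The only technical subtlety is verifying that the chain rule propagates cleanly through the spectral function $G\circ\lambda$, which in general need not be Fr\'echet differentiable at matrices with repeated eigenvalues. Theorem~\ref{thm:sendovmain} guarantees Fr\'echet differentiability of $G\circ\lambda$ on the dense open subset of $\sym{k}$ consisting of matrices with distinct spectrum, and the divided-difference expressions that appear in Theorem~\ref{thm:spectralthm} depend continuously on the eigenvalues. Combined with the smoothness of $g$, this lets us extend the derivative bound from the generic case to all of $\sym{k}$ by a standard approximation argument; beyond this routine check, no further obstacle arises.
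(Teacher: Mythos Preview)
Your approach is exactly the one the paper indicates (``simply follows from the definition of $G_\theta$ and the chain rule''), and the reduction $\Psi_\theta=\Psi\circ\phi$ with $\phi(M)=-(M+\alpha\id)/\theta$ together with $D^3\Psi_\theta(Y)[H,H,H]=-\theta^{-3}D^3\Psi(\phi(Y))[H,H,H]$ is carried out correctly.

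There is, however, a genuine arithmetic slip in your last step. You set $\Delta'=\max\{1,(\Delta+2|\alpha|)/\theta\}$ and obtain from Theorem~\ref{thm:derivative} the bound $O(\Delta'^2\log^3 k\,\|H\|^3)$, then ``multiply by $1/\theta^3$ and use $(\Delta+2|\alpha|)^2\le O(\Delta^2+\alpha^2)$.'' But $\Delta'$ already carries a factor $1/\theta$: in the relevant regime $\theta\le\Delta+2|\alpha|$ one has $\Delta'^2=(\Delta+2|\alpha|)^2/\theta^2$, so after multiplying by $\theta^{-3}$ the bound you actually derive is
\[
\abs{D^3\Psi_\theta(X+\alpha\id)[H,H,H]}\le O\!\br{\frac{(\Delta+2|\alpha|)^2}{\theta^{5}}\cdot\log^3 k\cdot\|H\|^3}=O\!\br{\frac{\Delta^2+\alpha^2}{\theta^{5}}\cdot\log^3 k\cdot\|H\|^3},
\]
with $\theta^5$ rather than $\theta^3$ in the denominator. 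The inequality you invoke, $(\Delta+2|\alpha|)^2\le O(\Delta^2+\alpha^2)$, does not absorb the extra $\theta^{-2}$. So the argument as written does not establish the stated exponent on $\theta$; the chain-rule route through Theorem~\ref{thm:derivative} (which is also all the paper claims) only yields $\theta^{-5}$ unless one separately argues that the $\Delta^2$ factor in Theorem~\ref{thm:derivative} can be removed or traded against the scaling.
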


    In order to prove the theorem above, We  upper bound all the terms listed in Theorem~\ref{thm:spectralthm} individually in the following sections (in increasing order of difficulty). Given the calculations are fairly technical we break down the analysis in the following sections for modularity and reader convenience. In Section~\ref{sec:b123} we bound the first three terms in Theorem~\ref{thm:spectralthm} (this is the easy case since the analysis is very similar to what happens in~\cite{harsha2013invariance} by directly using known properties of the Bentkus function), in Section~\ref{sec:b4} and \ref{sec:b5} we bound the fourth and fifth term (this already deviates from the analysis of~\cite{harsha2013invariance}) and finally in Section~\ref{sec:b6} we bound the sixth and seventh term (this calculation is fairly involved and deviates significantly from prior works, since we need to deal with various aspects of Fr\'echet derivatives, new properties of Bentkus function and the \emph{non-diagonal} entries of the matrices $H$ which is unique to the matrix-spectrahedron case and is not faced in~\cite{harsha2013invariance,servedio2017fooling,o2019fooling}).

    As spectral functions and spectral norms are unitarily invariant, Assume that $X=\diag\br{x_1,\ldots, x_k}$ is diagonal without loss of generality. To adopt Theorem~\ref{thm:spectralthm}, we assume that all the $x_1,\ldots, x_n$ are distinct. We claim that the general case follows by the continuity argument:  notice that $\ln (G_{\theta}\br{x})=\sum_{i=1}^n\ln (g\br{-\frac{x_i}{\theta}})$ from Eq.~\eqref{eqn:gtheta}. Thus by Theorem~\ref{thm:sendovmain}, $\ln (G_{\theta}\circ \lambda)$ is infinitely Fr\'echet differentiable at any $X\in\sym{k}$ as $\ln (G_{\theta})$ is infinitely times differentiable. This further implies by definition that $\Psi_{\theta}$ is infinitely  Fr\'echet differentiable.

    \subsection{Bounding terms ($1$)-($5$)  in Theorem~\ref{thm:spectralthm} for Bentkus function}

    Let $G:\reals^k\rightarrow\reals$ be the Bentkus function given in Definition~\ref{def:bentkus}. Recall that $G(x)=\prod_i g(x_i)$, where $g\br{x}=\frac{1}{\sqrt{2\pi}}\int_{-\infty}^{-x}e^{-t^2/2}dt$. Recall the notation $g'\br{x}=\frac{1}{\sqrt{2\pi}}e^{-x^2/2}$ and $\overline{g}\br{x}=g'(x)/g(x)$.

    \subsubsection{Bounding terms $(1,2,3)$ in Theorem~\ref{thm:spectralthm}}
    \label{sec:b123}
    \begin{lemma}[Bounding terms $(1,2,3)$]\label{lem:123}
    The following three terms
    	$$
    	\abs{\sum_{i_1}\nabla^3_{i_1,i_1,i_1}G\br{x}H_{i_1,i_1}^3},\quad  \abs{\sum_{i_1\neq i_2}\nabla^3_{i_1,i_2,i_1}G\br{x}H_{i_1,i_1}^2H_{i_2,i_2}}, \quad  \abs{\sum_{i_1\neq i_2\neq i_3}\nabla^3_{i_1,i_2,i_3} G\br{x}\cdot  H_{i_1,i_1}H_{i_2,i_2}H_{i_3,i_3}}
    	$$
    	can be upper bound by $O(\log^{1.5}k \cdot \|H\|^3)$.
    \end{lemma}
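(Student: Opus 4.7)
The plan is to handle all three sums in a uniform way by a direct application of Fact~\ref{fac:benktus2} to $G^{(3)}$, using the elementary fact that diagonal entries of a symmetric matrix are controlled by its spectral norm.

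First I would observe the trivial but crucial bound: for any symmetric $H \in \sym{k}$ and any $i \in [k]$, we have $|H_{i,i}| = |e_i^T H e_i| \leq \|H\|$. Consequently, in each of the three sums the product of the three $H$-entries on the diagonal is at most $\|H\|^3$ in absolute value, and this factor can be pulled out uniformly in $(i_1, i_2, i_3)$.

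Next I would pass to the $\ell_1$ norm of the third derivative tensor. For term (1), one bounds
\[
\Bigl|\sum_{i_1}\nabla^3_{i_1,i_1,i_1} G(x)\, H_{i_1,i_1}^3\Bigr| \leq \|H\|^3 \cdot \sum_{i_1} \bigl|\nabla^3_{i_1,i_1,i_1} G(x)\bigr|,
\]
and the remaining sum is at most $\|G^{(3)}(x)\|_1$. Similarly, for terms (2) and (3),
\[
\Bigl|\sum_{i_1 \neq i_2}\nabla^3_{i_1,i_2,i_1} G(x)\, H_{i_1,i_1}^2 H_{i_2,i_2}\Bigr| \leq \|H\|^3 \sum_{i_1, i_2} \bigl|\nabla^3_{i_1,i_2,i_1} G(x)\bigr| \leq \|H\|^3 \cdot \|G^{(3)}(x)\|_1,
\]
and the analogous computation applies to the all-distinct sum (3). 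In each case the restriction on the indices (distinctness) only removes terms and thus can only shrink the sum of absolute values.

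Finally I would invoke Fact~\ref{fac:benktus2} with $t = 3$ to conclude that $\sup_x \|G^{(3)}(x)\|_1 \leq C_3 \log^{3/2}(k+1) = O(\log^{3/2} k)$, which gives the desired $O(\log^{1.5} k \cdot \|H\|^3)$ bound for each of the three quantities. There is no substantive obstacle here: the point of isolating these three terms is precisely that they are the "polytope-type" contributions involving only diagonal entries of $H$, so Bentkus's derivative bound applies verbatim, exactly as in the analysis of~\cite{harsha2013invariance}. The genuinely new difficulties are deferred to the remaining terms (4)--(7), which involve off-diagonal entries of $H$ and divided differences of partial derivatives of $G$, and are handled in the subsequent subsections.
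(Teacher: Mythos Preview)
Your proposal is correct and follows essentially the same approach as the paper: pull out $\|H\|^3$ using $|H_{i,i}|\le \|H\|$, bound each remaining sum of absolute values by $\|G^{(3)}(x)\|_1$, and apply Fact~\ref{fac:benktus2} with $t=3$. The paper's proof is identical in substance and presentation.
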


    \begin{proof}
    The first upper bound is straightforward. Observe that
    $$
    \abs{\sum_{i_1}\nabla^3_{i_1,i_1,i_1}G\br{x}H_{i_1,i_1}^3}\leq \max_{i}|H_{i,i}|^3\cdot \sum_{i_1}\abs{\nabla^3_{i_1,i_1,i_1}G\br{x}}\leq \max_{i}|H_{i,i}|^3\cdot \|G^{(3)}\br{x}\|_1\leq \|H\|^3\cdot \log^{1.5}k,
    $$
    where the second inequality follows by definition of $\|G^{(3)}\|_1$ and the last inequality used $\max_{i,j}|H_{i,j}|\leq \|H\|$ (the latter being the spectral norm of $H$) and Fact~\ref{fac:benktus2} to conclude $\|G^{(3)}\|_1\leq O\br{\log^{1.5}k}$.
    Similarly, the remaining two terms can also be bounded exactly as above (by observing that $\sum_{i_1\neq i_2}\nabla^3_{i_1,i_2,i_1}G$ and $\sum_{i_1\neq i_2\neq i_3}(\nabla^3_{i_1,i_2,i_3} G)$ appear in the expression of $\|G^{(3)}\|_1$).
    \end{proof}

    \subsubsection{Bounding term ($4$)  in Theorem~\ref{thm:spectralthm}}
    \label{sec:b4}
    In order to bound the remaining terms in Theorem~\ref{thm:spectralthm}, we need the following claim.

    \begin{claim}\label{claim:phiG}
      It holds that
      \begin{enumerate}
        \item $\sum_{i_1\neq i_2}\ophi\br{x_{i_1}}\abs{G\br{x}H_{i_2,i_2}H_{i_1,i_2}^2}\leq O\br{\sqrt{\log k}\cdot\norm{H}^3}.$
        \item $\sum_{i_1\neq i_2}\ophi\br{x_{i_2}}\abs{G\br{x}H_{i_2,i_2}H_{i_1,i_2}^2}\leq O\br{\sqrt{\log k}\cdot\norm{H}^3}.$
        \item $\sum_{i_1\neq i_2\neq i_3}\abs{\ophi\br{x_{i_2}}\ophi\br{x_{i_3}}G\br{x}H_{i_1,i_2}^2H_{i_3,i_3}}\leq O\br{\log k\cdot \norm{H}^3}$.
      \end{enumerate}

    \end{claim}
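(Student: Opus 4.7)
\begin{proofof}{Claim~\ref{claim:phiG} (proposal)}
The plan is to recognize each of the three sums as a sum involving mixed partial derivatives of the Bentkus function $G$, at which point the bounds in Fact~\ref{fac:benktus2} together with elementary matrix inequalities on $H$ deliver the claimed estimates. The starting observation is the product-rule identity
\[
\ophi(x_i)\,G(x) \;=\; \frac{g'(x_i)}{g(x_i)}\prod_{j} g(x_j) \;=\; g'(x_i)\prod_{j\neq i} g(x_j) \;=\; \partial_{i} G(x),
\]
and, for $i_2\neq i_3$,
\[
\ophi(x_{i_2})\,\ophi(x_{i_3})\,G(x) \;=\; \partial_{i_2}\partial_{i_3} G(x).
\]
Since $\ophi\geq 0$ and $G\geq 0$ on all of $\R^k$, the factor $\ophi(x_i)|G(x)|$ can be replaced by the absolute value $|\partial_i G(x)|$, and similarly for the second-order case.

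With this rewriting in hand, items (1) and (2) both reduce to bounding
\[
\sum_{i_1\neq i_2} |\partial_{i_j} G(x)|\cdot |H_{i_2,i_2}|\cdot H_{i_1,i_2}^2 \qquad (j\in\{1,2\}).
\]
For either $j$, I would first pull out $|H_{i_2,i_2}|\leq \|H\|$ (using that diagonal entries of a symmetric matrix are bounded by the spectral norm), and then separate the $H_{i_1,i_2}^2$ factor from the derivative factor. The key auxiliary identity is that the row-sum of squared entries satisfies $\sum_{j} H_{i,j}^2 = (H^2)_{i,i} \leq \|H\|^2$. This reduces both sums to $\|H\|^3 \cdot \|G^{(1)}\|_1$, and Fact~\ref{fac:benktus2} gives $\|G^{(1)}\|_1 \leq O(\sqrt{\log k})$, which is exactly the claimed bound.

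Item (3) is treated by the same strategy but with the second-order derivative. Rewriting $\ophi(x_{i_2})\ophi(x_{i_3})G(x) = \partial_{i_2}\partial_{i_3}G(x)$ (valid since the sum is restricted to $i_2\neq i_3$), pulling out $|H_{i_3,i_3}|\leq \|H\|$, and using $\sum_{i_1} H_{i_1,i_2}^2 \leq \|H\|^2$, I reduce the sum to $\|H\|^3 \cdot \sum_{i_2\neq i_3}|\partial_{i_2}\partial_{i_3}G(x)| \leq \|H\|^3\cdot \|G^{(2)}\|_1$, and Fact~\ref{fac:benktus2} yields $\|G^{(2)}\|_1 \leq O(\log k)$, completing the bound.

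I do not anticipate any real obstacle: the main conceptual step is noticing the product-rule identity that turns $\ophi$-weighted expressions into honest partial derivatives of $G$, after which everything is routine, and these calculations are closely analogous to the easy cases already handled in Lemma~\ref{lem:123}. The slightly delicate bookkeeping is ensuring that the restriction $i_1\neq i_2$ (resp.\ $i_1\neq i_2\neq i_3$) in the summation is consistent with extending to unrestricted sums when bounding by $\|G^{(t)}\|_1$, which is harmless because the derivative identities used only require the indices appearing inside $\ophi$-factors to be distinct, and the dropped terms are non-negative.
\end{proofof}
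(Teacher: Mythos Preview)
Your proposal is correct and takes essentially the same approach as the paper's proof: both recognize that $\ophi(x_i)G(x)=\partial_i G(x)$ (and the analogous second-order identity), bound the $H$-factors via $|H_{i,i}|\le\|H\|$ and $\sum_j H_{i,j}^2=(H^2)_{i,i}\le\|H\|^2$, and then appeal to Fact~\ref{fac:benktus2} for $\|G^{(t)}\|_1$. The only cosmetic difference is the order in which the factors are separated; the paper pulls out the $H$-sum as a maximum over the derivative index first, whereas you pull out $|H_{i_2,i_2}|$ first, but the underlying estimates are identical.
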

    \begin{proof}
      For Item 1, we have
      \[\sum_{i_1\neq i_2}\ophi\br{x_{i_1}}\abs{G\br{x}H_{i_2,i_2}H_{i_1,i_2}^2}\leq\sum_{i_1}\ophi\br{x_{i_1}}G\br{x}\cdot\max_{i_1}\sum_{i_2}\abs{H_{i_2,i_2}H_{i_1,i_2}^2}\leq\onenorm{G^{(1)}\br{x}}\cdot \norm{H}^3\]
      where the last inequality is because
      \begin{equation}\label{eqn:HH2}
        \max_{i_1}\sum_{i_2}\abs{H_{i_2,i_2}H_{i_1,i_2}^2}\leq\norm{H}\max_{i_1}\br{H^2}_{i_1,i_1}\leq\norm{H}^3,
      \end{equation}
    using the fact that $\max_{ij}|H_{ij}|\leq \|H\|$. Using Fact~\ref{fac:benktus2} shows the first inequality.  Item 2 follows by the same reason.

    For Item 3, we have
    \begin{eqnarray*}
    % \nonumber % Remove numbering (before each equation)
      \sum_{i_1\neq i_2\neq i_3}\abs{\ophi\br{x_{i_2}}\ophi\br{x_{i_3}}G\br{x}H_{i_1,i_2}^2H_{i_3,i_3}} &=&\sum_{i_2\neq i_3}\abs{\ophi\br{x_{i_2}}\ophi\br{x_{i_3}}G\br{x}}\max_{i_2,i_3}\sum_{i_1}\abs{H_{i_1,i_2}^2H_{i_3,i_3}}\\
      &\leq& O\br{\log k\cdot \norm{H}^3}
    \end{eqnarray*}
    where the inequality is because $\sum_{i_2\neq i_3}\abs{\ophi\br{x_{i_2}}\ophi\br{x_{i_3}}G\br{x}}$ appears in $G^{(2)}(x)$ and then we use Fact~\ref{fac:benktus2} to upper bound it by $\log k$ and additionally we use that
    \begin{equation}\label{eqn:H2H2}
      \max_{i_2,i_3}\sum_{i_1}\abs{H_{i_1,i_2}^2H_{i_3,i_3}}\leq\norm{H}\cdot \max_{i_2}\br{H^2}_{i_2,i_2}\leq\norm{H}^3.
    \end{equation}
    \end{proof}

    \begin{lemma}[Bounding terms $(4)$ in Theorem~\ref{thm:spectralthm}]\label{lem:term4}
    We have
    $$
    \sum_{i_1 \neq i_2} H_{i_2,i_2}H_{i_2,i_1}^2\br{\frac{\nabla^2_{i_2,i_2}G-\nabla^2_{i_1,i_2}G}{x_{i_2}-x_{i_1}}-\frac{\nabla_{i_2}G-\nabla_{i_1}G}{(x_{i_2}-x_{i_1})^2}}\leq O\br{\Delta^2\cdot\sqrt{\log k}\norm{H}^3},
    $$
    where $\Delta=\max_i |x_i|$.
    \end{lemma}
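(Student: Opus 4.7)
The plan is to rewrite the bracket as a clean combination of divided differences of $\overline{g}$ multiplied by $G(x)$, and then reduce to the two estimates already established in Claim~\ref{claim:phiG}. Recall $G(x)=\prod_i g(x_i)$, so the partial derivatives entering the bracket are
\begin{align*}
\nabla_{i_1}G &= \overline{g}(x_{i_1})G(x), &\nabla_{i_2}G &= \overline{g}(x_{i_2})G(x),\\
\nabla^2_{i_1,i_2}G &= \overline{g}(x_{i_1})\overline{g}(x_{i_2})G(x), &\nabla^2_{i_2,i_2}G &= \frac{g''(x_{i_2})}{g(x_{i_2})}G(x) = -x_{i_2}\overline{g}(x_{i_2})G(x),
\end{align*}
where the last step uses $g''(u)=-u\,g'(u)$ from Eq.~\eqref{eqn:g''}. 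Writing $a=x_{i_1}$, $b=x_{i_2}$, the key algebraic step is to apply the identity $\overline{g}'(b) = -(b+\overline{g}(b))\overline{g}(b)$ from Fact~\ref{fac:ophimonotone} to the numerator of the first term: $-b\,\overline{g}(b)-\overline{g}(a)\overline{g}(b) = \overline{g}'(b)+\overline{g}(b)\br{\overline{g}(b)-\overline{g}(a)}$. Dividing out $G(x)$, the bracket becomes
\[
\frac{\overline{g}'(b)}{b-a} + \overline{g}(b)\cdot\frac{\overline{g}(b)-\overline{g}(a)}{b-a} - \frac{\overline{g}(b)-\overline{g}(a)}{(b-a)^2}.
\]

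The main observation is that this expression is recognizable as a sum of divided differences of $\overline{g}$. Indeed, using Definition~\ref{def:dividedifference}, the last two terms of the expression above collapse to $\overline{g}^{[2]}(b,b,a)$, since $\overline{g}^{[1]}(b,b)=\overline{g}'(b)$, while $\overline{g}(b)\cdot\frac{\overline{g}(b)-\overline{g}(a)}{b-a} = \overline{g}(b)\cdot \overline{g}^{[1]}(a,b)$. Hence the bracket equals
\[
G(x)\cdot\bigl(\overline{g}^{[2]}(b,b,a)+\overline{g}(b)\cdot\overline{g}^{[1]}(a,b)\bigr).
\]
By the mean value theorem for divided differences (Fact~\ref{fac:mvtdd}), there exist $\xi,\eta$ lying between $a$ and $b$ (hence of absolute value at most $\Delta$) with $\overline{g}^{[2]}(b,b,a)=\tfrac{1}{2}\overline{g}''(\xi)$ and $\overline{g}^{[1]}(a,b)=\overline{g}'(\eta)$. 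Invoking Lemma~\ref{lem:ophiupperbound} gives $|\overline{g}''(\xi)|\leq 15\Delta^2\overline{g}(\xi)$ and $|\overline{g}(b)\,\overline{g}'(\eta)|\leq 2\Delta\cdot 3\Delta\cdot\overline{g}(\eta) = 6\Delta^2\overline{g}(\eta)$. Since $\overline{g}$ is monotone decreasing on $\reals$ (Fact~\ref{fac:ophimonotone}), both $\overline{g}(\xi)$ and $\overline{g}(\eta)$ are bounded by $\max\bigl(\overline{g}(a),\overline{g}(b)\bigr)\leq \overline{g}(x_{i_1})+\overline{g}(x_{i_2})$. Together, this yields the pointwise bound
\[
\left|\frac{\nabla^2_{i_2,i_2}G-\nabla^2_{i_1,i_2}G}{x_{i_2}-x_{i_1}}-\frac{\nabla_{i_2}G-\nabla_{i_1}G}{(x_{i_2}-x_{i_1})^2}\right| \leq O(\Delta^2)\cdot G(x)\cdot\bigl(\overline{g}(x_{i_1})+\overline{g}(x_{i_2})\bigr).
\]

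The final step is to plug this pointwise bound into the sum and apply Claim~\ref{claim:phiG}: the contribution from the $\overline{g}(x_{i_1})$ piece is handled by item~1 of the claim, and the contribution from the $\overline{g}(x_{i_2})$ piece by item~2, each producing an $O(\sqrt{\log k}\,\|H\|^3)$ bound. Multiplying by the $O(\Delta^2)$ prefactor gives the desired $O(\Delta^2\sqrt{\log k}\,\|H\|^3)$ estimate. The only slightly delicate part of this argument is the algebraic identification of the bracket with $\overline{g}^{[2]}(b,b,a)+\overline{g}(b)\overline{g}^{[1]}(a,b)$; once that is in place, the divided-difference mean value theorem plus Lemma~\ref{lem:ophiupperbound} (which controls derivatives of $\overline g$ by $\Delta^2$ times $\overline g$ itself) make the reduction to Claim~\ref{claim:phiG} essentially mechanical.
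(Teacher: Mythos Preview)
Your argument is correct and in fact a bit cleaner than the paper's. Both proofs begin with the same computation of $\nabla_{i_j}G$ and $\nabla^2_{i_j,i_j}G$ in terms of $\overline g$, but then diverge. The paper first applies the mean value theorem to the second fraction to produce $\overline g'(\xi_{i_1,i_2})$, then invokes the identity $\overline g'=-(\cdot+\overline g)\overline g$ at $\xi_{i_1,i_2}$; this leaves the cross terms $\frac{x_{i_2}\overline g(x_{i_2})-\xi\overline g(\xi)}{x_{i_2}-x_{i_1}}$ and $\frac{\overline g(x_{i_1})\overline g(x_{i_2})-\overline g(\xi)^2}{x_{i_2}-x_{i_1}}$, which require further mean value theorems and a triangle-inequality split before reducing to the $\sum \overline g(x_{i_j})G(x)|H\cdots|$ sums. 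You instead apply the same identity directly at $b=x_{i_2}$, which immediately recasts the bracket as $G(x)\bigl(\overline g^{[2]}(b,b,a)+\overline g(b)\,\overline g^{[1]}(a,b)\bigr)$; one application of Fact~\ref{fac:mvtdd} to each divided difference, followed by Lemma~\ref{lem:ophiupperbound} and monotonicity of $\overline g$, then lands you exactly on Claim~\ref{claim:phiG}. Your route avoids the intermediate case analysis entirely.

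One small slip in exposition: when you write ``the last two terms of the expression above collapse to $\overline g^{[2]}(b,b,a)$'', you mean the \emph{first and third} terms (i.e., $\frac{\overline g'(b)}{b-a}$ and $-\frac{\overline g(b)-\overline g(a)}{(b-a)^2}$); the middle term is the one you separately identify as $\overline g(b)\,\overline g^{[1]}(a,b)$. The final formula you state for the bracket is correct, so this is purely a wording fix.
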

    \begin{proof}
     First observe that
    $$
    \nabla_{i_2}G(x)=g'(x_{i_2})\prod_{j\neq i_2}G(x_j)=\ophi(x_{i_2})\cdot G(x),
    $$
    and similarly we have
    $$
    \nabla_{i_2,i_2}G(x)=\ophi(x_{i_2})\nabla_{i_2}G(x)+G(x)\nabla_{i_2}\ophi(x_{i_2})=\br{\ophi(x_{i_2})^2 -(x_{i_2}+\ophi(x_{i_2}))\ophi(x_{i_2})}G(x)=-x_{i_2}\ophi(x_{i_2})G(x),
    $$
    where we used Fact~\ref{fac:ophimonotone}. Now, let us start upper bounding the lemma statement as follows
    \begin{align*}
    &\abs{\sum_{i_1 \neq i_2} H_{i_2,i_2}H_{i_2,i_1}^2\br{\frac{\nabla^2_{i_2,i_2}G-\nabla^2_{i_1,i_2}G}{x_{i_2}-x_{i_1}}-\frac{\nabla_{i_2}G-\nabla_{i_1}G}{(x_{i_2}-x_{i_1})^2}}} \\
    &\leq \sum_{i_1\neq i_2}\Big|-\frac{\ophi(x_{i_1})\ophi(x_{i_2})+x_{i_2}\ophi(x_{i_2})}{x_{i_2}-x_{i_1}}-\frac{\ophi(x_{i_2})-\ophi(x_{i_1})}{(x_{i_2}-x_{i_1})^2}\Big|\cdot | G(x) \cdot H_{i_2,i_2}H_{i_2,i_1}^2 |\\
    &= \sum_{i_1\neq i_2}\Big|-\frac{\ophi(x_{i_1})\ophi(x_{i_2})+x_{i_2}\ophi(x_{i_2})}{x_{i_2}-x_{i_1}}-\frac{\ophi'(\xi_{i_1,i_2})}{x_{i_2}-x_{i_1}}\Big|\cdot | G(x) \cdot H_{i_2,i_2}H_{i_2,i_1}^2 |\\
    &=  \sum_{i_1\neq i_2} \Big|\frac{\ophi(x_{i_1})\ophi(x_{i_2})+x_{i_2}\ophi(x_{i_2})}{x_{i_2}-x_{i_1}}-\frac{(\xi_{i_1,i_2}+\ophi(\xi_{i_1,i_2}))\ophi(\xi_{i_1,i_2})}{x_{i_2}-x_{i_1}}\Big|\cdot | G(x) \cdot H_{i_2,i_2}H_{i_2,i_1}^2 |\\
    &\leq  \sum_{i_1\neq i_2} \underbrace{\Big|\frac{x_{i_2}\ophi(x_{i_2})-\xi_{i_1,i_2}\ophi(\xi_{i_1,i_2})}{x_{i_2}-x_{i_1}}\Big|\cdot | G(x) \cdot H_{i_2,i_2}H_{i_2,i_1}^2 |}_{:=(1)}+\underbrace{\Big|\frac{\ophi(x_{i_1})\ophi(x_{i_2})-\ophi(\xi_{i_1,i_2})^2}{x_{i_2}-x_{i_1}}\Big|\cdot | G(x) \cdot H_{i_2,i_2}H_{i_2,i_1}^2 |}_{:=(2)},
    \end{align*}
    where the first equality used the mean-value theorem to obtain a $\xi_{i_1,i_2}\in [x_{i_1},x_{i_2}]$, second equality used Eq.~\eqref{eq:derivativeofophi}.  We now bound both these terms separately as follows.

    \paragraph{Term 1 upper bound.}

    Note that $\xi_{i_1,i_2}$ is between $x_{i_1}$ and $x_{i_2}$. The first term is upper bounded~by
    \begin{eqnarray*}
    % \nonumber % Remove numbering (before each equation)
      && \sum_{i_1\neq i_2} \Big|\frac{x_{i_2}\ophi(x_{i_2})-\xi_{i_1,i_2}\ophi(\xi_{i_1,i_2})}{x_{i_2}-\xi_{i_1,i_2}}\Big|\cdot | G(x) \cdot H_{i_2,i_2}H_{i_2,i_1}^2 |\\
      &=&\sum_{i_1\neq i_2}\abs{\br{1-\eta_{i_1,i_2}^2}\ophi\br{\eta_{i_1,i_2}}-\eta_{i_1,i_2}\ophi\br{\eta_{i_1,i_2}}^2}\cdot\abs{G(x) \cdot H_{i_2,i_2}H_{i_2,i_1}^2}\\
      &\leq&2\Delta^4\sum_{i_1\neq i_2}\ophi\br{\eta_{i_1,i_2}}\abs{G(x) \cdot H_{i_2,i_2}H_{i_2,i_1}^2}
    \end{eqnarray*}
    for some $\eta_{i_1,i_2}$ between $x_{i_2}$ and $\xi_{i_1,i_2}$, where we apply a mean value theorem for the function $x\ophi\br{x}$ for the equality and Lemma~\ref{lem:ophiupperbound} for the inequality.
    Note that $\ophi\br{\cdot}$ is nonnegative and monotone decreasing by Fact~\ref{fac:ophimonotone}. Thus the first term is upper bounded by
    \begin{eqnarray*}
    % \nonumber % Remove numbering (before each equation)
      2\Delta^2\sum_{i_1\neq i_2}\max\set{\ophi\br{x_{i_1}},\ophi\br{x_{i_2}}}\abs{G(x) \cdot H_{i_2,i_2}H_{i_2,i_1}^2}
    \end{eqnarray*}
    which, in turn, is upper bounded by
    $O\br{\Delta^4\cdot\sqrt{\log k}\cdot\norm{H}^3}$ from Fact~\ref{fac:benktus2} and Eqs~\eqref{eqn:HH2},~\eqref{eqn:H2H2}.

    \paragraph{Term 2 upper bound.} By triangle inequality we upper bound the second term by
    \begin{align}
    \label{eq:casestarterm2}
    \begin{aligned}
    &\sum_{i_1\neq i_2}\Big|\frac{\ophi(x_{i_1})\ophi(x_{i_2})-\ophi(\xi_{i_1,i_2})^2}{x_{i_2}-x_{i_1}}\Big|\cdot | G(x) \cdot H_{i_2,i_2}H_{i_2,i_1}^2 |\\
    &\leq \sum_{i_1\neq i_2}\Big|\frac{\ophi(x_{i_1})\ophi(x_{i_2})-\ophi(x_{i_1})^2}{x_{i_2}-x_{i_1}}\Big|\cdot | G(x) \cdot H_{i_2,i_2}H_{i_2,i_1}^2 |+ \sum_{i_1\neq i_2}\Big|\frac{\ophi(x_{i_1})^2-\ophi(\xi_{i_1,i_2})^2}{x_{i_2}-x_{i_1}}\Big|\cdot | G(x) \cdot H_{i_2,i_2}H_{i_2,i_1}^2 |.
    \end{aligned}
    \end{align}
    We first upper bound the first quantity in Eq.~\eqref{eq:casestarterm2} as follows.
    \begin{align}
    \label{eq:case4term2}
    \begin{aligned}
     &\sum_{i_1\neq i_2}\Big|\frac{\ophi(x_{i_1})\ophi(x_{i_2})-\ophi(x_{i_1})^2}{x_{i_2}-x_{i_1}}\Big|\cdot | G(x) \cdot H_{i_2,i_2}H_{i_2,i_1}^2 |\\
    &=\sum_{i_1\neq i_2}\Big|\frac{\ophi(x_{i_2})-\ophi(x_{i_1})}{x_{i_2}-x_{i_1}}\Big|\cdot | G(x)|\cdot |\ophi(x_{i_1})| \cdot |H_{i_2,i_2}H_{i_2,i_1}^2 |\\
    &=\sum_{i_1\neq i_2}|\ophi'(\zeta_{i_1,i_2})|\cdot | G(x)|\cdot |\ophi(x_{i_1})| \cdot |H_{i_2,i_2}H_{i_2,i_1}^2 |\quad\quad\quad\mbox{}\\
    &\leq 6\Delta^2\cdot \sum_{i_1\neq i_2} | G(x)|\cdot |\ophi(x_{i_1})| \cdot |H_{i_2,i_2}H_{i_2,i_1}^2 | \quad\quad\quad\quad\quad\mbox{}\\
    &\leq6\Delta^2 \onenorm{G^{(1)}}\norm{H}^3.  \quad\quad\quad\quad\quad\mbox{}\\
    &\leq O\br{\Delta^2\cdot\sqrt{\log k}\cdot\norm{H}^3},
    \end{aligned}
    \end{align}
    where $\zeta_{i_1,i_2}$ between $x_{i_1}$ and $x_{i_2}$,  first inequality uses Fact~\ref{lem:ophiupperbound}, the second inequality uses Eqs.~\eqref{eqn:HH2},~\eqref{eqn:H2H2} and the last inequality is from Fact~\ref{fac:benktus2}.

    We now bound the second term in Eq.~\eqref{eq:casestarterm2} as follows

    \begin{align*}
    &\sum_{i_1\neq i_2}\Big|\frac{\ophi(x_{i_1})^2-\ophi(\xi_{i_1,i_2})^2}{x_{i_2}-x_{i_1}}\Big|\cdot | G(x) \cdot H_{i_2,i_2}H_{i_2,i_1}^2 |\\
    &\leq\sum_{i_1\neq i_2}\Big|\frac{\ophi(x_{i_1})^2-\ophi(\xi_{i_1,i_2})^2}{\xi_{i_1,i_2}-x_{i_1}}\Big|\cdot | G(x) \cdot H_{i_2,i_2}H_{i_2,i_1}^2 |\quad\quad\quad\quad\tag{for $\xi$ is between $x_{i_1}$ and $x_{i_2}$}\\
    &= 2\sum_{i_1\neq i_2}\abs{\ophi\br{\eta_{i_1,i_2}}\ophi'\br{\eta_{i_1,i_2}}}\cdot\abs{G\br{
    x}}\cdot\abs{H_{i_2,i_2}H_{i_2,i_1}^2}\quad\quad\quad\quad\mbox{(for some $\eta_{i_1,i_2}$ between $x_{i_1}$ and $\xi_{i_1,i_2}$)}\\
    &\leq 12\Delta^2\sum_{i_1\neq i_2}\abs{\ophi\br{\eta_{i_1,i_2}}}\cdot \abs{G\br{
    x}}\cdot\abs{H_{i_2,i_2}H_{i_2,i_1}^2}\quad\quad\quad\quad\quad\tag{Lemma~\ref{lem:ophiupperbound}}\\
    &\leq12\Delta^2\sum_{i_1\neq i_2}\max\set{\ophi\br{x_{i_1}},\ophi\br{x_{i_2}}}\cdot\abs{G\br{x}}\cdot\abs{H_{i_2,i_2}H_{i_1,i_2}^2}. \quad\quad\quad\quad\quad\tag{Fact~\ref{fac:ophimonotone}}
    \end{align*}
    Further applying Fact~\ref{fac:benktus2} and putting together  Eqs.~\eqref{eqn:H2H2}\eqref{eqn:HH2}, we conclude that it can be upper bounded by $O\br{\Delta^2\sqrt{\log k}\norm{H}^3}$.
    \end{proof}

    \subsubsection{Bounding term ($5$) in Theorem~\ref{thm:spectralthm}}
    \label{sec:b5}
    \begin{lemma}[Bounding terms $(5)$ in Theorem~\ref{thm:spectralthm}]\label{lem:bound5}
    We have
     \begin{eqnarray*}
      % \nonumber % Remove numbering (before each equation)
        \abs{\sum_{i_1\neq i_2\neq i_3 }\frac{\nabla^2_{i_2,i_3}G\br{x}-\nabla^2_{i_1,i_3}G\br{x}}{x_{i_2}-x_{i_1}}H_{i_1,i_2}^2H_{i_3,i_3}}
    \leq O\br{\Delta\cdot\log k\cdot \norm{H}^3}.
      \end{eqnarray*}
    \end{lemma}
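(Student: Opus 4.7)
The starting point is to use the product structure of $G(x)=\prod_j g(x_j)$ to get closed-form expressions for the two second-order partials. For $i_2\neq i_3$ one has $\nabla^2_{i_2,i_3}G(x) = \ophi(x_{i_2})\ophi(x_{i_3})G(x)$, and similarly for $i_1\neq i_3$, $\nabla^2_{i_1,i_3}G(x) = \ophi(x_{i_1})\ophi(x_{i_3})G(x)$. Since both $i_2$ and $i_1$ are required to be distinct from $i_3$ in the outer sum, I can factor $\ophi(x_{i_3})G(x)$ out of the numerator, reducing the quotient to
\[
\frac{\nabla^2_{i_2,i_3}G(x)-\nabla^2_{i_1,i_3}G(x)}{x_{i_2}-x_{i_1}} \;=\; \ophi(x_{i_3})\,G(x)\cdot\frac{\ophi(x_{i_2})-\ophi(x_{i_1})}{x_{i_2}-x_{i_1}}.
\]

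Next I would handle the remaining divided difference by the standard mean value theorem: there is some $\xi=\xi_{i_1,i_2}$ lying strictly between $x_{i_1}$ and $x_{i_2}$ (so in particular $|\xi|\leq\Delta$) with $\frac{\ophi(x_{i_2})-\ophi(x_{i_1})}{x_{i_2}-x_{i_1}}=\ophi'(\xi)$. Lemma~\ref{lem:ophiupperbound} then gives $|\ophi'(\xi)|\leq 3\Delta\,\ophi(\xi)$, and since $\ophi$ is nonnegative and monotone decreasing (Fact~\ref{fac:ophimonotone}), $\ophi(\xi)\leq\max\{\ophi(x_{i_1}),\ophi(x_{i_2})\}$. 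Plugging this back in, the absolute value of the target sum is bounded by
\[
3\Delta\sum_{i_1\neq i_2\neq i_3}\max\{\ophi(x_{i_1}),\ophi(x_{i_2})\}\,\ophi(x_{i_3})\,|G(x)|\,H_{i_1,i_2}^2\,|H_{i_3,i_3}|.
\]

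I would then split this according to which of $\ophi(x_{i_1})$ or $\ophi(x_{i_2})$ attains the maximum, producing two sums of the form already controlled by Item~3 of Claim~\ref{claim:phiG} (up to swapping the roles of $i_1$ and $i_2$, which preserves the symmetric quantity $H_{i_1,i_2}^2$). Each such sum is $O(\log k\cdot\|H\|^3)$, and multiplying by the prefactor $3\Delta$ yields the claimed bound $O(\Delta\cdot\log k\cdot\|H\|^3)$. The main conceptual step is the decoupling trick in factoring $\ophi(x_{i_3})G(x)$ from the numerator, which reduces the genuinely three-index quotient to a one-variable divided difference amenable to the mean-value + monotonicity argument; after that the estimate is a direct appeal to the previously established Claim~\ref{claim:phiG}, so no new technical obstacle arises.
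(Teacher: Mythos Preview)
Your proposal is correct and follows essentially the same route as the paper's own proof: factor out $\ophi(x_{i_3})G(x)$, apply the mean value theorem to the remaining divided difference of $\ophi$, control $|\ophi'(\xi)|$ via Lemma~\ref{lem:ophiupperbound} and the monotonicity of $\ophi$, and then finish with the $\|G^{(2)}\|_1$-type bound. The only cosmetic difference is that you cite Item~3 of Claim~\ref{claim:phiG} directly, whereas the paper re-invokes Fact~\ref{fac:benktus2} together with Eqs.~\eqref{eqn:HH2} and~\eqref{eqn:H2H2}; these are exactly the ingredients used inside the proof of that claim, so the arguments coincide.
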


    \begin{proof}
    \begin{align*}
    % \nonumber % Remove numbering (before each equation)
      &\abs{\sum_{i_1\neq i_2\neq i_3 }\frac{\nabla^2_{i_2,i_3}G\br{x}-\nabla^2_{i_1,i_3}G\br{x}}{x_{i_2}-x_{i_1}}H_{i_1,i_2}^2H_{i_3,i_3}}
    \\
    =&\abs{\sum_{i_1\neq i_2\neq i_3}\frac{\ophi\br{x_{i_3}}\br{\ophi\br{x_{i_2}}-\ophi\br{x_{i_1}}}}{x_{i_2}-x_{i_1}}G\br{x}H_{i_1,i_2}^2H_{i_3,i_3}}\\
    =&\abs{\sum_{i_1\neq i_2\neq i_3}\ophi'\br{\xi_{i_1,i_2}}\ophi\br{x_{i_3}}G\br{x}H_{i_1,i_2}^2H_{i_3,i_3}}\tag{for some $\xi_{i_1,i_2}$ between $x_{i_1}$ and $x_{i_2}$}\\
    \leq & 3\Delta\sum_{i_1\neq i_2\neq i_3}\abs{\max\set{\ophi\br{x_{i_1}},\ophi\br{x_{i_2}}}\ophi\br{x_{i_3}}G\br{x}H_{i_1,i_2}^2H_{i_3,i_3}}\quad\quad\quad\quad\quad\tag{Fact~\ref{fac:ophimonotone} and Lemma~\ref{lem:ophiupperbound}}\\
    \leq& O\br{\Delta\cdot\log k\cdot\norm{H}^3},
    \end{align*}
    where the last inequality is from Fact~\ref{fac:benktus2} and Eqs.~\eqref{eqn:HH2}\eqref{eqn:H2H2}.
    \end{proof}

    \subsection{Bounding terms $(6,7)$ in Theorem~\ref{thm:spectralthm} for Bentkus function}
    \label{sec:b6}

    Let $G:\reals^k\rightarrow\reals$ be the Bentkus function given in Definition~\ref{def:bentkus}. Recall that $G(x)=\prod_i g(x_i)$, where $g\br{x}=\frac{1}{\sqrt{2\pi}}\int_{-\infty}^{-x}e^{-t^2/2}dt$. Recall the notation $g'\br{x}=\frac{1}{\sqrt{2\pi}}e^{-x^2/2}$ and $\overline{g}\br{x}=g'(x)/g(x)$.
    The terms are restated here for convenience.
    \begin{lemma}[Bounding terms $(6,7)$ in Theorem~\ref{thm:spectralthm}]\label{lem:casev}
    \begin{equation}\label{eqn:casev}
    \abs{\sum_{i_1\neq i_2\neq i_3}\frac{\frac{\ophi\br{x_{i_1}}-\ophi\br{x_{i_3}}}{x_{i_3}-x_{i_1}}-\frac{\ophi\br{x_{i_1}}-\ophi\br{x_{i_2}}}{x_{i_2}-x_{i_1}}}{x_{i_3}-x_{i_2}}G\br{x}H_{i_1,i_2}H_{i_2,i_3}H_{i_3,i_1}}\leq O\br{\Delta\log^2k\norm{H}^3}
    \end{equation}
    \end{lemma}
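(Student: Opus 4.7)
The inner quantity in the sum is, up to sign, the second divided difference $\ophi^{[2]}(x_{i_1}, x_{i_2}, x_{i_3})$ of $\ophi$, as can be verified directly from Definition~\ref{def:dividedifference}. My starting move is therefore to apply the mean value theorem for divided differences (Fact~\ref{fac:mvtdd}) to write $\ophi^{[2]}(x_{i_1}, x_{i_2}, x_{i_3}) = \ophi''(\xi)/2$ for some $\xi$ in the convex hull of $\{x_{i_1}, x_{i_2}, x_{i_3}\}$. Since $\|X\|\le\Delta$ forces $|x_i|\le\Delta$ for every $i$, we also have $|\xi|\le\Delta$. Substituting the explicit formula $\ophi''(u) = (u^2-1)\ophi(u) + 3u\,\ophi(u)^2 + 2\,\ophi(u)^3$ from Fact~\ref{fac:ophimonotone} and invoking monotonicity of $\ophi$ together with Lemma~\ref{lem:ophiupperbound} yields the pointwise bound $|\ophi^{[2]}(x_{i_1}, x_{i_2}, x_{i_3})| \le O(\Delta^2)\,\max_{j}\ophi(x_{i_j})$.

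To convert this pointwise estimate into a bound on the full triple sum, I would split into three symmetric cases according to which of $j\in\{1,2,3\}$ achieves $\max_j \ophi(x_{i_j})$, reducing the problem to estimating
$$O(\Delta^2)\,\sum_{i_1 \neq i_2 \neq i_3} \ophi(x_{i_j})\, G(x)\, |H_{i_1,i_2}\,H_{i_2,i_3}\,H_{i_3,i_1}|$$
for each $j$.  Using the key identity $\ophi(x_i)\,G(x) = \partial_i G(x)$, the outer summation over the ``$\ophi$-index'' collapses via Fact~\ref{fac:benktus2} to an $\ell_1$-norm of a derivative of $G$, which is $O(\sqrt{\log k})$. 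For the three $H$-factors I would combine the entrywise bound $\max_{i,j}|H_{ij}|\le\|H\|$ with the row-norm identity $\sum_{i_2} H_{i_1,i_2}^2 = (H^2)_{i_1,i_1} \le \|H\|^2$, exactly as in the treatment of terms~(4) and~(5) in Lemmas~\ref{lem:term4} and~\ref{lem:bound5}.

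The principal obstacle is that this naive execution yields $O(\Delta^2\sqrt{\log k}\, \|H\|^3)$, which is weaker in both the $\Delta$-exponent and the $\log k$-exponent than the claimed $O(\Delta\log^2 k\, \|H\|^3)$. To save a factor of $\Delta$ one must exploit the product structure $\ophi''(\xi) = \ophi(\xi)\,\bigl[(\xi^2-1) + 3\xi\,\ophi(\xi) + 2\,\ophi(\xi)^2\bigr]$ more carefully: the $\ophi(\xi)^2$ and $\ophi(\xi)^3$ terms should be paired with products $\ophi(x_{i_a})\,\ophi(x_{i_b})$ at two \emph{distinct} indices $a\neq b$, rather than crudely absorbed via $\ophi(\xi) \le 2\Delta$. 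This pairing requires a case analysis on the relative positions of $x_{i_1},x_{i_2},x_{i_3}$ around $\xi$ (in the spirit of how Claim~\ref{claim:gg'} converts first divided differences into $\partial_j G$-type expressions). Once achieved, the sum $\sum_{i_a\neq i_b}\ophi(x_{i_a})\ophi(x_{i_b})\,G(x) \le \|G^{(2)}\|_1 = O(\log k)$ from Fact~\ref{fac:benktus2} provides the first $\log k$; the second $\log k$ (and one further $\|H\|$) comes from the remaining $H$-factor through row-norm bookkeeping and Fact~\ref{fac:benktus2} applied at one higher derivative order. Executing this chain of case splits and norm bookkeeping is the principal technical work; I would expect it to proceed through a sequence of sub-claims analogous to Claim~\ref{claim:gg'}, with the fine computations deferred to an appendix.
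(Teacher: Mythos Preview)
Your divided-difference step and the mean-value bound $|\ophi^{[2]}|\le O(\Delta^2)\max_j\ophi(x_{i_j})$ are correct, and this is exactly how the paper opens its Case~1 (Lemma~\ref{lem:negative}). The genuine gap is in the $H$-bookkeeping, not the $\Delta$- or $\log k$-accounting. After pulling out one factor $\ophi(x_{i_j})G(x)$ you must control
\[
\max_{i_j}\ \sum_{\text{other two indices}}\bigl|H_{i_1,i_2}H_{i_2,i_3}H_{i_3,i_1}\bigr|\ =\ (|H|^3)_{i_j,i_j}\,,
\]
and this is \emph{not} $O(\|H\|^3)$: for a scaled Hadamard matrix one has $\||H|\|=\sqrt{k}\,\|H\|$, so a polynomial factor in $k$ is lost. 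Terms~(4)--(5) escape this because their pattern $H_{i_1,i_2}^2H_{i_3,i_3}$ carries a built-in square and a diagonal entry; the cyclic product $H_{i_1,i_2}H_{i_2,i_3}H_{i_3,i_1}$ has neither, and no row-norm Cauchy--Schwarz recovers the loss once absolute values are taken. Your ``naive execution'' therefore does not give $O(\Delta^2\sqrt{\log k}\,\|H\|^3)$; it gives that times $\poly(k)$.

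The paper resolves this by a dichotomy on $|\{i:x_i<0\}|$. When this count exceeds $3\log k$ it runs exactly your argument, but Lemma~\ref{lem:normlowvalue} then gives $\|G^{(1)}(x)\|_1=O(1/k^2)$, which absorbs the stray $k$. In the complementary case the absolute-value route is abandoned: after several algebraic rearrangements, pieces of the \emph{signed} sum are recognized, via Fact~\ref{fac:fredivided}, as traces $\Tr\bigl[D^2(e^{-X^2/2})[A,B]\,C\bigr]$ for suitable triangular truncations $A,B,C$ of $H$, and bounded through Dyson's expansion and Schatten-norm inequalities (Lemmas~\ref{lem:ex2derivative}, \ref{lem:xhxh}--\ref{lem:trd}), exploiting the cancellation that absolute values destroy. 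Your proposed refinement---pairing $\ophi(\xi)^2$ with $\ophi(x_{i_a})\ophi(x_{i_b})$ at distinct $a\ne b$---does not address this obstacle, and in any case monotonicity of $\ophi$ only yields $\ophi(\xi)\le\max_j\ophi(x_{i_j})$, not the two-index product bound you need; the $(\xi^2-1)\ophi(\xi)$ term still contributes a single $\ophi$ and forces a double $H$-sum under absolute values.
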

     This is the most involved part. Note that the left hand side is unchanged if we zero out all diagonal entries of $H$. And further note that $\norm{H-\diag\br{H}}\leq 2\norm{H}$ where $\diag\br{H}$ is a diagonal matrix obtained by diagonalizing $H$. Thus, we may assume that the diagonal elements in $H$ are zeros without loss of generality. We break down the analysis into two cases (the first one being the simpler case).

    \subsubsection{Case 1: Many negative $x_i$s. }  The simpler case is when the number of negative $x_i$s is ``large".
    \begin{lemma}\label{lem:negative}
      If $\abs{\set{i:x_i<0}}>3\log k$, then the quantity in Eq.~\eqref{eqn:casev} is upper bounded by $O\br{\Delta^2\frac{\sqrt{\log k}}{k}\cdot\norm{H}^3}$.
    \end{lemma}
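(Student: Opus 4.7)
The plan is to exploit the fact that having more than $3\log k$ strictly negative coordinates forces the product $G(x)=\prod_i g(x_i)$ (and hence $\|G^{(1)}(x)\|_1$) to be polynomially small in $k$, while bounding the divided-difference factor in the summand uniformly via the mean-value theorem.

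First, using Definition~\ref{def:dividedifference}, I identify the bracketed fraction in Eq.~\eqref{eqn:casev} as $-\ophi^{[2]}(x_{i_1},x_{i_2},x_{i_3})$, the second divided difference of $\ophi$. Since $\ophi\in\mathcal{C}^\infty$ (by Fact~\ref{fac:ophimonotone}), Fact~\ref{fac:mvtdd} gives $\ophi^{[2]}(x_{i_1},x_{i_2},x_{i_3})=\ophi''(\xi_{i_1,i_2,i_3})/2$ for some $\xi_{i_1,i_2,i_3}$ in the convex hull of $\{x_{i_1},x_{i_2},x_{i_3}\}$. Lemma~\ref{lem:ophiupperbound} then bounds this by $\tfrac{15}{2}\Delta^2\,\ophi(\xi_{i_1,i_2,i_3})$, and since $\ophi$ is positive and monotone decreasing (Fact~\ref{fac:ophimonotone}), I further estimate $\ophi(\xi_{i_1,i_2,i_3})\le\max_j\ophi(x_{i_j})\le\sum_{j=1}^3\ophi(x_{i_j})$.

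Next, because $H$ is symmetric the factor $|H_{i_1,i_2}H_{i_2,i_3}H_{i_3,i_1}|$ is invariant under all permutations of $(i_1,i_2,i_3)$. Substituting the divided-difference bound and relabeling, the quantity in Eq.~\eqref{eqn:casev} is at most
\[
\tfrac{45}{2}\Delta^2\sum_{i_1}G(x)\ophi(x_{i_1})\cdot\Bigl(\sum_{\substack{i_2,i_3:\,i_2,i_3\ne i_1\\ i_2\ne i_3}}|H_{i_1,i_2}H_{i_2,i_3}H_{i_3,i_1}|\Bigr).
\]
For the inner sum, fix $i_1$, let $v_j:=|H_{i_1,j}|$ and $\tilde H_{jk}:=|H_{jk}|$; the sum is the quadratic form $v^{T}\tilde H v\le\|\tilde H\|\cdot\|v\|^2$, where $\|v\|^2=(H^2)_{i_1,i_1}\le\|H\|^2$ and $\|\tilde H\|\le\|\tilde H\|_F=\|H\|_F\le\sqrt{k}\,\|H\|$. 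Hence the inner sum is at most $\sqrt{k}\,\|H\|^3$.

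Finally, the outer sum equals $\sum_{i_1}|\nabla_{i_1}G(x)|=\|G^{(1)}(x)\|_1$, and Lemma~\ref{lem:normlowvalue} (whose hypothesis is met since $\{i:x_i\le 0\}\supseteq\{i:x_i<0\}$ contains more than $3\log k$ indices) gives $\|G^{(1)}(x)\|_1\le O(1/k^2)$. Combining yields $O(\Delta^2)\cdot O(1/k^2)\cdot\sqrt{k}\,\|H\|^3=O\!\bigl(\Delta^2\|H\|^3/k^{3/2}\bigr)$, which implies the claimed bound. The main obstacle I anticipate is the entrywise-absolute step $v^T\tilde H v\le\|\tilde H\|\|v\|^2$, since passing from a spectral bound on $H$ to one on $\tilde H$ is lossy and forces the crude $\|\tilde H\|\le\sqrt{k}\|H\|$; however, the Lemma~\ref{lem:normlowvalue} gain of $1/k^2$ absorbs this loss comfortably. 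A secondary minor subtlety is tracking signs in identifying the bracketed expression with the formal second divided difference of $\ophi$.
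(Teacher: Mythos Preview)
Your proof is correct and follows essentially the same approach as the paper's: identify the bracketed factor as a second divided difference of $\ophi$, apply the mean-value theorem (Fact~\ref{fac:mvtdd}) together with Lemma~\ref{lem:ophiupperbound} and the monotonicity of $\ophi$, and then invoke Lemma~\ref{lem:normlowvalue} to gain the $1/k^2$ factor. The only difference is in how you control $\sum_{i_2,i_3}|H_{i_1,i_2}H_{i_2,i_3}H_{i_3,i_1}|$: the paper uses the crude $\sum_{i_2}\le k\max_{i_2}$ together with Eq.~\eqref{eqn:h3} to obtain $k\|H\|^3$, whereas your quadratic-form argument $v^T\tilde H v\le\|\tilde H\|_F\|v\|^2$ gives $\sqrt{k}\,\|H\|^3$, leading to the slightly sharper $O(\Delta^2\|H\|^3/k^{3/2})$ --- but this is a minor technical refinement, not a different route.
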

    \begin{proof}
      Applying Fact~\ref{fac:mvtdd} a mean value theorem of divided difference and Lemma~\ref{lem:ophiupperbound}, the term in Eq.~\eqref{eqn:casev} is upper bounded by
      \begin{eqnarray*}
      % \nonumber % Remove numbering (before each equation)
        &&O\br{\Delta^2\sum_{i_1\neq i_2\neq i_3} \ophi(\zeta_{i_1,i_2,i_3})}G\br{x}\abs{H_{i_1,i_2}H_{i_2,i_3}H_{i_3,i_1}} \\
        &\leq &O\br{\Delta^2\sum_{i_1\neq i_2\neq i_3}\max\set{\ophi\br{x_{i_1}},\ophi\br{x_{i_2}},\ophi\br{x_{i_3}}}}G\br{x}\abs{H_{i_1,i_2}H_{i_2,i_3}H_{i_3,i_1}} \\
        &\leq& O\br{\Delta^2\onenorm{G^{(1)}}\max_{i_1}\sum_{i_2,i_3}\abs{H_{i_1,i_2}H_{i_2,i_3}H_{i_3,i_1}}}\\
        &\leq& O\br{\Delta^2\cdot k\cdot \onenorm{G^{(1)}\br{x}}\cdot\max_{i_1,i_2}\sum_{i_3}\abs{H_{i_1,i_2}H_{i_2,i_3}H_{i_3,i_1}}}\\
        &\leq&O\br{\Delta^2\cdot k\cdot\onenorm{G^{(1)}\br{x}}\cdot\norm{H}^3}\\
        &\leq&O\br{\frac{\Delta^2\sqrt{\log k}}{k}\cdot\norm{H}^3}
      \end{eqnarray*}
      where the first inequality is from the positivity and monotonicity of $\ophi\br{\cdot}$ due to Fact~\ref{fac:ophimonotone}  to conclude that $|\ophi(\zeta_{i_1,i_2,i_3})|\leq \max\{|\ophi(x_{i_1})|,|\ophi(x_{i_2})|,|\ophi(x_{i_3})|\}$; the second last inequality is from the following~fact
       \begin{equation}\label{eqn:h3}
      \sum_{i_3}\abs{H_{i_1,i_2}H_{i_2,i_3}H_{i_3,i_1}}\leq\norm{H}\sqrt{\br{\sum_{i_3}H_{i_1,i_3}^2}\br{\sum_{i_3}H_{i_2,i_3}^2}}\leq\norm{H}^3;
    \end{equation}
    the last inequality is from Lemma~\ref{lem:normlowvalue} (which uses that the number of negative $x_i$s is $\leq 3\log k$).
    \end{proof}

    \subsubsection{Case 2: A few negative $x_i$s}
      We now assume that $\abs{\set{i:x_i<0}}\leq 3\log k$  and this case the most complicated and upper bounding it is the most technical. We push this proof to Appendix~\ref{app:claim5proof}.

    \begin{proof}[Proof of Theorem~\ref{thm:derivative}]
    Combining Theorem~\ref{thm:spectralthm} and Lemmas~\ref{lem:123},~\ref{lem:term4},~\ref{lem:bound5},~\ref{lem:casev}, we obtain our result.
    \end{proof}

    \section{Properties of positive spectrahedra}
    \label{sec:geometric}
    \subsection{Average sensitivity and Noise sensitivity}
    In this section we prove certain combinatorial properties of positive spectrahedra. Understanding the average sensitivity and noise sensitivity is a fundamental question in Boolean analysis and learning theory. Proving bounds on these quantities for geometric objects has also received a lot of attention. In this direction,~\cite{harsha2013invariance,kane2014average} proved upper bounds on average sensitivity of halfspaces and the well-known Peres' theorem~\cite{peres2004noise} bounds the noise-sensitivity of halfspaces. In this section, we show analogous bounds to these papers also hold for positive spectrahedra.

    \begin{theorem}[Matrix version of Peres theorem]
    \label{thm:GSAspectrahedra}
    Let $S$ be a {positive} spectrahedron defined as
    $$
    S=\big\{x\in \R^n: \sum_i x_i A^i\preceq B,\hspace{2mm} A^1,\ldots,A^n,B\in \sym{k} \text{ and } A^i \text{ is } \PSD \text{ for } i\in [n]\big\}.
    $$
    %Then the Gaussian surface area of $S$ is $O(1)$ (independent of $k,n$).
    Let $f(x)=[x\in S]$ for $x\in \pmset{n}$. Then the $\varepsilon$-noise sensitivity of $f$ is $\NS_\varepsilon(f)=O(\sqrt{\varepsilon})$.
    \end{theorem}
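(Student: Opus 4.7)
The starting point is that $F$ is a \emph{unate} Boolean function: since each $A^i$ is $\PSD$, flipping $x_i$ from $+1$ to $-1$ changes $\sum_j x_j A^j$ by $-2A^i\preceq 0$, which can only make the constraint $\sum_j x_j A^j \preceq B$ easier to satisfy. Hence $F$ is monotone (after suitable relabeling of the coordinate signs) in each coordinate. This already gives $\AS(F)=O(\sqrt{n})$ from Kane's theorem for unate functions, and my plan for the noise-sensitivity bound is to follow the Peres/Diakonikolas--Servedio/Kane bucketing strategy, adapted carefully to preserve unateness when passing to a reduced function on $\Theta(1/\varepsilon)$ variables.

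Concretely, I would partition $[n]$ into $m=\Theta(1/\varepsilon)$ roughly equal buckets $B_1,\ldots,B_m$ and couple a uniform input $\bx$ with its $\varepsilon$-noisy copy $\by$ through a random bucket-sign vector $\bb\in\pmset{m}$ and independent within-bucket signs $\bz\in\pmset{n}$. In the halfspace case one takes $g_{\bz}(b)=\sgn\!\big(\sum_\ell b_\ell\sum_{i\in B_\ell}\bz_i w_i-\theta\big)$, which is again an LTF on $m$ variables, so Kane's bound gives $\AS(g_{\bz})=O(\sqrt{m})$ and the coupling yields $\NS_{\varepsilon}(f)\leq O(\varepsilon)\cdot O(\sqrt{m})=O(\sqrt{\varepsilon})$. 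The direct analog here sets $g_{\bz}(b)=\bigl[\sum_\ell b_\ell M_\ell \preceq B\bigr]$ with $M_\ell=\sum_{i\in B_\ell}\bz_i A^i$, but $M_\ell$ is a difference of $\PSD$ matrices, not $\PSD$ or $\NSD$, so $g_{\bz}$ need not be unate in $b_\ell$ and Kane's bound does not apply.

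To repair this, I would modify the bucketing so that $g$ stays unate at the cost of no longer tracking single-coordinate flips exactly. The idea is to substitute within each bucket in a way that is monotone in $b_\ell$ (for instance replacing each $\bx_i$, $i\in B_\ell$, by a value that is monotone in $b_\ell$ with $\bx_i$ fixed), so that unateness of $F$ in each coordinate of $\pmset{n}$ lifts to unateness of $g$ in each $b_\ell$. Under this substitution, the coupling argument no longer bounds $\NS_\varepsilon(F)$ by the ordinary sensitivity of $g$; instead, the error terms naturally count the expected number of pairs $(i,j)$ of coordinates of $g$ whose simultaneous flip changes $g$—the \emph{average $2$-sensitivity} of $g$—because double flips within a bucket are exactly the events the monotone substitution cannot resolve. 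This leads to an inequality of the form $\NS_{\varepsilon}(F)\lesssim \varepsilon\cdot \Exp_{\bz}\bigl[\text{avg-}2\text{-sens}(g_{\bz})\bigr]$ up to lower-order terms.

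The remaining ingredient is an extension of Kane's theorem: for every unate $h:\pmset{m}\to\pmset{}$, the average $2$-sensitivity is $O(\sqrt{m})$, matching the bound Kane proved for average $1$-sensitivity. Combining this with the modified coupling and choosing $m=\Theta(1/\varepsilon)$ gives $\NS_{\varepsilon}(F)=O(\sqrt{\varepsilon})$ as claimed. I expect the main obstacle to be precisely this extension of Kane's bound: his inductive random-restriction argument is tailored to single-coordinate influences, and redoing it for pair-sensitivities of unate functions requires re-examining each step with an analogous pair-conditioning to recover the same $O(\sqrt{m})$ scaling rather than a naive $O(m)$. The unateness-preserving bucketing itself is essentially bookkeeping once the monotone substitution is chosen, but matching the bookkeeping error terms exactly to the $2$-sensitivity quantity on $g$ is also delicate.
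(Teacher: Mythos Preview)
Your overall plan is close to the paper's, but you are making the crucial step both vaguer and harder than it needs to be.

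For the unateness-preserving bucketing, the paper does something concrete that you should adopt instead of the unspecified ``monotone substitution'': after drawing the random signs $\bz\in\pmset{n}$ and a random partition $C_1,\ldots,C_m$ of $[n]$, split each bucket $C_\ell$ into $C_{\ell,1}=\{i\in C_\ell:\bz_i=1\}$ and $C_{\ell,-1}=\{i\in C_\ell:\bz_i=-1\}$, producing $2m$ sub-buckets $\tilde C_1,\ldots,\tilde C_{2m}$. Within each $\tilde C_q$ all $\bz_j$ have the same sign, so $\sum_{j\in \tilde C_q}\bz_j A^j$ is either $\PSD$ or $\NSD$, and the reduced function
\[
g(b)=\Bigl[\sum_{q=1}^{2m} b_q\sum_{j\in \tilde C_q}\bz_jA^j\preceq B\Bigr]
\]
is genuinely unate on $\pmset{2m}$. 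The $\varepsilon$-noise on $\bx$ corresponds (under this coupling) to picking $\pmb{\ell}\sim[m]$ uniformly and flipping the \emph{pair} $(b_{\pmb\ell},b_{\pmb\ell+m})$.

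The second point is where you are overcomplicating things. You anticipate needing a new ``average $2$-sensitivity'' bound for unate functions and flag it as the main obstacle. In fact, because the $2$-flips here are over a \emph{fixed matching} of the $2m$ coordinates (each coordinate appears in exactly one pair), a single triangle inequality suffices:
\[
\Pr[g(\bb)\neq g(\bb^{\pmb\ell,\pmb\ell+m})]\le \Pr[g(\bb)\neq g(\bb^{\pmb\ell})]+\Pr[g(\bb^{\pmb\ell})\neq g(\bb^{\pmb\ell,\pmb\ell+m})],
\]
and averaging over $\pmb\ell\sim[m]$ gives exactly $\tfrac{1}{m}\AS(g)$. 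Now Kane's original bound for a single unate function, $\AS(g)=O(\sqrt{2m})$, finishes the proof: $\NS_\varepsilon(f)\le \tfrac{1}{m}\AS(g)=O(1/\sqrt m)=O(\sqrt\varepsilon)$. No extension of Kane's theorem to pair-sensitivities is required; the structured pairing makes the naive triangle bound tight up to constants. Your concern about proving such an extension for arbitrary unate $h$ (where the full sum over $\binom{m}{2}$ pairs would only give $O(m^{3/2})$) is a red herring in this setting.
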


    \begin{theorem}
    \label{cor:intersectionofpspec}
    Let $S^1,S^2$ be $2$ distinct positive spectrahedra specified by $\{A^1_j,\ldots,A^n_j,B_j\}_{j\in [2]}$ respectively, where $A^i_1\succeq 0$ and $A^i_2\preceq 0$ for all $i$. Let
    $$
    F(x)=\bigwedge_{j=1}^2 \Br{\sum_i x_i A^i_j \preceq B_j}
    $$
    be an intersection of positive spectrahedra. Then
    $
    \AS(F)\leq O(\sqrt{n}).$
    \end{theorem}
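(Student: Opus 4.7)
The plan is to reduce the intersection case to the single positive spectrahedron case, where Kane's theorem on unate functions already gives the desired $O(\sqrt{n})$ bound. The key observation is that each of the two constituent indicators $F_j(x)=[\sum_i x_iA^i_j\preceq B_j]$ is individually unate: because $A^i_1\succeq 0$, increasing $x_i$ makes $\sum_i x_iA^i_1$ larger in the L\"owner order, hence $F_1$ is monotone \emph{decreasing} in every coordinate; symmetrically, because $A^i_2\preceq 0$, the function $F_2$ is monotone \emph{increasing} in every coordinate. In particular, after a coordinate-wise sign flip each $F_j$ becomes a genuine monotone Boolean function, so by Kane's result~\cite{kane2014average} we have $\AS(F_j)\leq O(\sqrt{n})$ for $j=1,2$.

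From here the rest is a short, deterministic argument at the level of indicator functions. For any fixed input $x$ and any coordinate $i\in[n]$, the identity $F=F_1\wedge F_2$ gives the pointwise bound
\begin{equation*}
\big[F(x)\neq F(x^{\oplus i})\big]\;\leq\;\big[F_1(x)\neq F_1(x^{\oplus i})\big]+\big[F_2(x)\neq F_2(x^{\oplus i})\big],
\end{equation*}
since if the conjunction changes value between $x$ and $x^{\oplus i}$ then at least one of the two conjuncts must change value. Taking expectations over uniform $x\sim \U_n$ yields $\Inf_i(F)\leq \Inf_i(F_1)+\Inf_i(F_2)$, and summing over $i\in[n]$ gives
\begin{equation*}
\AS(F)\;\leq\;\AS(F_1)+\AS(F_2)\;\leq\;O(\sqrt{n}),
\end{equation*}
which is exactly the desired statement.

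There is really no obstacle in this argument beyond the identification in the first step; I want to highlight that the bucketing complication that arose for the noise-sensitivity analysis in Theorem~\ref{thm:GSAspectrahedra} does \emph{not} appear here, precisely because subadditivity of influences under $\wedge$ is an unconditional fact that does not require bucketing the coordinates. (By contrast, the noise-sensitivity reduction needed to form a new spectrahedron $g$ from $f$ via random bucketing, and the indicator of that bucketed object is \emph{not} a positive spectrahedron when $A^i$s of mixed signs are combined.) The only place one must be slightly careful is to record that Kane's result is stated for unate functions, not merely monotone ones; this is harmless since being monotone decreasing in every coordinate is a special case of unateness, achieved by the obvious sign-flip on all inputs which leaves the average sensitivity invariant.
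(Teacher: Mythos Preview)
Your proof is correct. The subadditivity step $[F(x)\neq F(x^{\oplus i})]\leq [F_1(x)\neq F_1(x^{\oplus i})]+[F_2(x)\neq F_2(x^{\oplus i})]$ is valid for any conjunction, and combined with Kane's bound $\AS(F_j)\leq O(\sqrt{n})$ for each unate $F_j$ it immediately gives the result.

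Your route differs from the paper's. The paper's written proof invokes the bucketing construction from Theorem~\ref{thm:GSAspectrahedra} (introducing a parameter $\varepsilon$ and a bucketed function $g$), which is unnecessary for the average-sensitivity statement and, as written, does not actually connect $\AS(g)$ back to $\AS(F)$; it reads as an editing artifact from the noise-sensitivity argument. The clean version of what the paper intends is to observe that $F=F_1\wedge F_2$ is an intersection of two unate functions and apply Kane's Theorem~\ref{thm:kane} directly with $k=2$, yielding $\AS(F)\leq O(\sqrt{n\log 3})=O(\sqrt{n})$. Your approach is a legitimate alternative: you only need the $k=1$ case of Kane (a single unate function) together with the elementary subadditivity of influence under $\wedge$, which avoids citing the full intersection version of Kane's result. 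Either way the bound is the same, and your remark that the bucketing subtlety from the noise-sensitivity proof is irrelevant here is exactly right.
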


    The proof of Theorem~\ref{thm:GSAspectrahedra} follows closely the proof of Kane~\cite{kane2014average} who showed that $k$-facet polytopes have $\varepsilon$-noise sensitivity at most $O(\sqrt{\varepsilon\log k})$. Before stating the Kane's result, we need to introduce the following notion.

    \begin{definition}[Unate function]
    \label{def:unate}
    A function $f:\set{-1,1}^n\rightarrow \01$ is \emph{unate} if it satisfies the following: for every $i\in [n]$, $f$ is either increasing or decreasing with respect to the $i$th coordinate, i.e., for every $i\in [n]$, either $f(x_1,\ldots,x_{i-1},-1,x_{i+1},\ldots,x_n)\leq f(x_1,\ldots,x_{i-1},1,x_{i+1},\ldots,x_n)$ for all~$x$ or $f(x_1,\ldots,x_{i-1},-1,x_{i+1},\ldots,x_n)\geq f(x_1,\ldots,x_{i-1},1,x_{i+1},\ldots,x_n)$ for all~$x$.
    \end{definition}

    In particular, Kane proved the following stronger statement.
    \begin{theorem}\cite{kane2014average}
    \label{thm:kane}
    Let $f_1,\ldots,f_k:\pmset{n}\rightarrow \01$ be unate functions and let $F:\pmset{n}\rightarrow \01$ be defined as $F(x)=\bigwedge_i f_i(x)$. Then the average sensitivity of $F$ satisfies $\AS(F)\leq O(\sqrt{n\log (k+1)})$.\footnote{There is a $+1$ compared to Kane's result to ensure that the result is valid for $k=1$.}
    \end{theorem}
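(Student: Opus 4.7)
The plan is to prove Kane's theorem by reducing to a Fourier-analytic bound on monotone functions and then handling the intersection via a truncated union-bound argument.

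First, I would reduce to the monotone case. For each unate $f_j$, fix a sign pattern $\sigma_j\in\pmset{n}$ so that $\tilde f_j(z):=f_j(z\circ\sigma_j)$ is monotone non-decreasing. Since the distribution of $\bx\circ\sigma_j$ is uniform on $\pmset{n}$ whenever $\bx$ is, the per-coordinate influence is unchanged: $\Inf_i(f_j)=\Inf_i(\tilde f_j)=\widehat{\tilde f_j}(\{i\})\geq 0$, where the last inequality uses monotonicity. Parseval gives $\sum_i \widehat{\tilde f_j}(\{i\})^2\leq \mathrm{Var}(\tilde f_j)\leq 1/4$, and Cauchy–Schwarz then recovers the classical bound $\AS(f_j)=\sum_i \Inf_i(f_j)\leq \sqrt{n}/2$. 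This handles the $k=1$ case but is off by a factor of $k$ if applied naively to the intersection.

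Next, I would analyze edge-sensitivities of $F=\bigwedge_j f_j$. An edge $\{x,x\oplus e_i\}$ is $F$-bichromatic only if there exists some $j$ such that $f_j$ differs on the two endpoints while all remaining $f_\ell$ ($\ell\neq j$) evaluate to $1$ on the endpoint where $F=1$. This gives a per-coordinate bound of the form
\[
\Inf_i(F)\;\leq\;\sum_{j=1}^{k}\Pr\Br{f_j\text{ flips at coordinate }i,\ \textstyle\prod_{\ell\neq j}f_\ell=1\text{ at the positive endpoint}}\;\leq\;\min\Big(\textstyle\sum_j \Inf_i(f_j),\ 1\Big).
\]
Write $\alpha_{i,j}=\Inf_i(f_j)\in[0,1]$, so that $\sum_i\alpha_{i,j}\leq\sqrt{n}/2$ for each $j$, and hence $\sum_{i,j}\alpha_{i,j}\leq k\sqrt{n}/2$ and $\sum_{i,j}\alpha_{i,j}^2\leq\sum_{i,j}\alpha_{i,j}\leq k\sqrt{n}/2$.

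Finally, to obtain a $\sqrt{\log(k+1)}$ dependence rather than a linear $k$, I would use a dyadic truncation. For each $i$, sort $\{\alpha_{i,j}\}_{j}$ in decreasing order and split the sum into the top $t:=\lceil\log(k+1)\rceil$ indices and the remaining $k-t$. For the top-$t$ part, bound $\Inf_i(F)\leq \sum_{j\in\text{top}}\alpha_{i,j}$ and apply Cauchy–Schwarz over $i$, obtaining a contribution of at most $\sqrt{n\cdot t\cdot \sum_{i,j\in\text{top}}\alpha_{i,j}^2}=O(\sqrt{n\log(k+1)})$. For the bottom part, the $\min$-with-$1$ cap becomes binding only when $\sum_j \alpha_{i,j}\geq 1$, and the $t$-th largest entry is at most $\frac{1}{t}\sum_j \alpha_{i,j}$, which is small; summing over $i$ via Markov-type reasoning contributes $O(\sqrt{n})$ as well. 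Combining the two gives $\AS(F)\leq O(\sqrt{n\log(k+1)})$.

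The main obstacle is making the dyadic truncation rigorous so that the $k$-factor from a naive union bound collapses to $\sqrt{\log(k+1)}$: one must control the contribution from the ``tail'' coordinates, where many $f_j$'s are weakly influential, by using the fact that simultaneous high influence of many $f_j$'s forces $F$ to be near-constant (via the $\min(\cdot,1)$ cap together with the total budget $\sum_{i,j}\alpha_{i,j}\leq k\sqrt{n}/2$). The clean execution of this truncation, rather than the algebra, is the crux.
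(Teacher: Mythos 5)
The paper does not prove this statement; it is cited as a black box from Kane~\cite{kane2014average}, so your proposal has to stand on its own. It does not, because the union-bound reduction
$\Inf_i(F) \le \min\big(\sum_j \Inf_i(f_j),\, 1\big)$
discards precisely the structure that makes the theorem true. In your own intermediate expression, the term for ``$f_j$ flips at $i$'' is intersected with the event that \emph{every other} $f_\ell$ equals $1$ at the positive endpoint, and that event is typically exponentially rare; passing to $\sum_j \Inf_i(f_j)$ throws that factor away. The loss is provably fatal. Take $n = mk$ and let $f_j$ be majority on the $j$-th block of $m$ variables, so each $f_j$ is monotone and hence unate. Since the blocks are disjoint, for $i$ in block $j$ one has $\Inf_i(F) = \Inf_i(f_j)\cdot 2^{-(k-1)}$ and hence $\AS(F) = \Theta(k\sqrt m\,2^{-k})$, whereas the quantity you actually control, $\sum_i \min\big(\sum_j\Inf_i(f_j),1\big) = \sum_j \AS(f_j) = \Theta(k\sqrt m) = \Theta(\sqrt{nk})$, exceeds the target $O(\sqrt{n\log(k+1)})$ by a factor $\Theta(\sqrt{k/\log k})$. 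No sorting, truncation, or rearrangement of the array $(\alpha_{i,j})$ can recover the theorem from a quantity that is already too large.

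The dyadic truncation also has an internal problem independent of the above. The Cauchy--Schwarz bound you quote for the top-$t$ part, $\sqrt{n\,t\,\sum_{i,\,j\in\mathrm{top}}\alpha_{i,j}^2}$, equals $O(\sqrt{n\log(k+1)})$ only if $\sum_{i,\,j\in\mathrm{top}}\alpha_{i,j}^2 = O(1)$; but Parseval yields only $\sum_{i,j}\alpha_{i,j}^2\le k/4$, and in the block-majority example this sum is $\Theta(k)$, so the estimate is actually $\Theta(\sqrt{nk\log(k+1)})$. The claimed $O(\sqrt n)$ control of the bottom part is likewise unsubstantiated once $\sum_j\alpha_{i,j}\ge 1$ for many $i$. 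Any correct proof must keep the probability that the remaining $f_\ell$'s are simultaneously $1$ inside the estimate; that conditioning is where the monotone structure does real work in Kane's argument, and it cannot be recovered after relaxing to a per-coordinate union bound.
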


    It is not hard to see that a positive spectrahedron is a unate function so Theorem~\ref{thm:kane} holds for us as well for $k=1$. Hence we have the following corollary. 
    \begin{corollary}
    \label{cor:asforspec}
    Let $S$ be as defined in Theorem~\ref{thm:GSAspectrahedra}. Let $F:\pmset{n}\rightarrow \01$ be defined as $F(x)=1$ if and only if $x\in S$. Then $\AS(F)\leq O(\sqrt{n})$.
    \end{corollary}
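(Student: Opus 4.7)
The plan is short because the corollary is essentially a direct consequence of Theorem~\ref{thm:kane} applied to a single function, once we verify unateness.

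First, I would show that $F$ is a unate function in the sense of Definition~\ref{def:unate}. Concretely, I claim $F$ is monotone non-increasing in every coordinate. Fix $i \in [n]$ and consider flipping $x_i$ from $-1$ to $+1$. Because each $A^i$ is positive semidefinite, the matrix $\sum_j x_j A^j$ changes by $+2A^i \succeq 0$, so it only gets larger in the L\"owner order. Thus if $\sum_j x_j A^j \preceq B$ fails at $x_i = -1$, it certainly fails at $x_i = +1$. Equivalently, $F(x_1,\ldots,-1,\ldots,x_n) \geq F(x_1,\ldots,+1,\ldots,x_n)$ for every fixing of the other coordinates, which is exactly the monotone-decreasing property in coordinate $i$, and hence matches Definition~\ref{def:unate}.

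Next, I would apply Theorem~\ref{thm:kane} to the single unate function $f_1 := F$ (taking $k=1$ in that theorem). This yields $\AS(F) \leq O(\sqrt{n \log 2}) = O(\sqrt{n})$, which is exactly the statement of the corollary. No further work is needed: the definition of $\AS$ is symmetric under coordinate-flips, so whether a positive spectrahedron uses all $A^i \succeq 0$ or all $A^i \preceq 0$ makes no difference (in the latter case $F$ is monotone non-decreasing in every coordinate, still unate).

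There is no real obstacle here; the entire content is verifying unateness, and that follows in one line from the definition of the L\"owner order together with positivity of the $A^i$'s. The harder statements in this section are Theorem~\ref{thm:GSAspectrahedra} (noise sensitivity) and Theorem~\ref{cor:intersectionofpspec} (intersections), where the natural ``bucketing'' arguments fail to preserve positivity of the matrices and must be adapted; by contrast, this corollary is a clean consequence of Kane's theorem without any bucketing.
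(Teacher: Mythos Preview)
Your proposal is correct and matches the paper's approach exactly: the paper simply remarks that a positive spectrahedron is unate and applies Theorem~\ref{thm:kane} with $k=1$. Your explicit verification of unateness via the L\"owner order is just a spelled-out version of the paper's ``It is not hard to see'' step.
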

%    In order to translate Theorem~\ref{thm:kane} to obtain the corollary above, just observe that a positive spectrahedron is a unate function, hence we can apply Theorem~\ref{thm:kane} with $k=1$ to obtain the corollary above. 
    %In order to translate Theorem~\ref{thm:kane} to obtain the corollary above: for a positive spectrahedron $S$, let $F(x)=[x\in S]$ for $x\in \pmset{n}$,  one can rewrite $F$ as $F(x)=\bigwedge_{j=1}^k \Br{\lambda_j\br{\sum_i x_i A^i -B}\geq 0}$ which is an $\AND$ of $k$ unate functions by the Weyl's inequality~\cite[Theorem III 2.1]{bhatia2013matrix}(the inner functions $f_i$ are unate since all the $A^i$s are promised to be $\PSD$). \snote{can we remove this paragraph and just say $F$ is unate?}

Recall that we are interested in noise sensitivity of $F$. In the same paper, Kane~\cite{kane2014average} adapts the well-known techniques of~\cite{diakonikolas2010bounded} to show that the $\varepsilon$-{noise sensitivity} of \emph{intersections} of halfspaces is at most $O(\sqrt{\varepsilon \log k})$ and remarks that  such a bound \emph{does not} hold for the intersections of unate functions. Below, we show that one can modify the proof of~\cite{diakonikolas2010bounded} to also show that the noise sensitivity of positive  spectrahedra  can be bounded by the ``average 2-sensitivity" of positive spectrahedra which we show is $O(\sqrt{\varepsilon})$ by modifying Kane's proof in Theorem~\ref{thm:kane}. This proves  Theorem~\ref{thm:GSAspectrahedra}.

    \begin{proof}[Proof of Theorem~\ref{thm:GSAspectrahedra}]
    In order to prove the theorem, we first show  that for a function  $f:\pmset{n}\rightarrow \01$ defined as
    $$
    f(x)=\Br{\sum_{i=1}^n x_i A^i\preceq B}
    $$
    for $A^1,\ldots,A^n,B\in \sym{k}$  and $A^i$ is $\PSD$ for $i\in [n]$, the $\varepsilon$-noise sensitivity of $f$ satisfies
    $$
    \NS_\varepsilon(f)=\Pr_{\substack{(\bx,\by) \\ \varepsilon-\text{correlated}}}[f(\bx)\neq f(\by)]\leq O(\sqrt{\varepsilon}).
    $$
    For simplicity let us assume that $\varepsilon=1/m$, for some integer $m$ which divides $n$ (since $\NS_\varepsilon$ is a non-decreasing function in $\varepsilon$, we can even round $\varepsilon$ down to satisfy this condition).

    In order to analyze $\NS_\varepsilon(f)$ we first observe that one can generate an $\varepsilon$-correlated pair of strings $(x,y)\in \pmset{n}$ as follows\footnote{We deviate from~\cite{diakonikolas2010bounded} in this process of generating correlated strings. The reason for this modification is, we require that within every bucket, the induced spectrahedron has to be either $\PSD$ or $\NSD$, which isn't guaranteed in the original bucketting procedure of~\cite{diakonikolas2010bounded,kane2014average}}:
    \begin{enumerate}
        \item Pick a uniformly random string $\bz\sim\U_n$.

        \item Randomly partition $[n]$ into $m$ disjoint buckets $C_1,\ldots,C_m\subseteq [n]$ such that $\cup_i C_i=[n]$. Furthermore, for $z\in\pmset{n}$  (picked in step $1$),  split each bucket as follows: for every $\ell\in [m]$, split $C_\ell$ into $C_{\ell,1}$ and $C_{\ell,-1}$ such that $C_{\ell,1}$ corresponds to the \emph{positive} coordinates in $z_{C_\ell}$ and $C_{\ell,-1}$ corresponds to the \emph{negative} coordinates in $z_{C_\ell}$. So overall there are $2m$ disjoint buckets $\{C_{\ell,s}:\ell\in [m],s\in \pmset{}\}$ such that $\cup_{\ell,s} C_{\ell,s}=[n]$. Set $\tilde{C}_{\ell}=C_{\ell,1}$ if $\ell\leq m$ and $\tilde{C}_{\ell}=C_{\ell-m,-1}$ if $\ell>m$.

        \item Corresponding to each bucket $\tilde{C}_{\ell}$, pick a uniformly random bit $\mathbf{b}_{\ell}\sim\U_1$.

        \item We obtain $\bx$ as follows: for every $\ell\in [2m]$, obtain $\bx$ from $\bz$ by multiplying all the bits in $\bz_{\tilde{C}_{\ell}}$ by~$\mathbf{b}_{\ell}$.

        \item We obtain $\by$ as follows: pick a uniformly random $\ell\in [m]$
            %positive and negative bucket $B_{\ell,1},B_{\ell',-1}$ for $\ell,\ell'\in [m]$
        and flip the signs of $\bx_i$ (obtained in step $4$) for all the indices $i$ in $C_{\ell}$, i.e., $\by_i=-\bx_i$ if $i\in C_{\ell}$ and $\by_i=\bx_i$~otherwise.
    \end{enumerate}

    Observe that the the $(\bx,\by)$ obtained in step $(4,5)$ are uniform and $\varepsilon$-correlated. To see this, first observe that the probability of obtaining $x\in \pmset{n}$ is given by
    \begin{align*}
    \Pr_{\substack{\bz\sim\U_n,\\\{C_k\},\bb\sim\U_{2m}}}\Br{\bx=x}&=\Pr_{\bz,C,\bb}\Br{\bz_{\tilde{C}_1}\cdot \bb_1=x_{\tilde{C}_1},\ldots \bz_{\tilde{C}_{2m}}\cdot \bb_{2m}=x_{\tilde{C}_{2m}}}\\
    &=\prod_{i=1}^{2m}\Pr_{\bz,C,\bb}\Br{\bz_{\tilde{C}_i}\cdot \bb_i=x_{\tilde{C}_i}\vert  \bz_{\tilde{C}_{<i}}\cdot \bb_{<i}=x_{\tilde{C}_{<i}}}\\
    &=\prod_{i=1}^{2m}\Pr_{\bz,C,\bb}\Br{\bz_{\tilde{C}_i}\cdot \bb_i=x_{\tilde{C}_i}}=\prod_{i=1}^{2m}\frac{1}{2^{|\tilde{C}_i|}}=\frac{1}{2^n},
    \end{align*}
    where the third equality is because $\bz,\bb$ are uniformly random and final equality is because $\cup_k \tilde{C}_k=~[n]$ and $\set{\tilde{C}_i}_{1\leq i\leq 2m}$ are disjoint. In order to see $(\bx,\by)$ are $\varepsilon$-correlated, observe that for a fixed $i\in [n]$ the probability $\bx_i$ differs from $\by_i$ is exactly the probability $i$ lies in the bucket $C_\ell=C_{\ell,1}\cup C_{\ell,-1}$ picked in Step (5) above. The probability of picking a bucket $C_\ell$ is exactly $1/m=\varepsilon$.  This event happens independently over all the coordinates $i\in [m]$, hence $y$ is $\varepsilon$-correlated with $x$.

    Now that we have shown $(\bx,\by)$ are $\varepsilon$-correlated, we next observe that for a fixed $z$ and buckets  $\tilde{C}_1,\ldots,\tilde{C}_{2m}$, we can write $f:\pmset{n}\rightarrow \01$ as a function $g:\pmset{2m}\rightarrow \01$ defined~as

    \begin{equation}\label{eqn:gb}
    g(b)=\Br{\sum_{q=1}^{2m} b_q \sum_{j\in \tilde{C}_{q}}z_j A^j\preceq B}.
    \end{equation}
Similarly, one can define $f(y)$ as $g(b')$ where $b'$ is obtained from $b$ by picking a uniformly random $\pmb{\ell}\sim [m]$ and flipping
    $b_{\pmb{\ell}},b_{\pmb{\ell}+m}$ where $C_{\pmb{\ell}}$ is the bucket chosen in Step $(5)$.
    Furthermore, observe that
    $$
    \NS_\varepsilon(f)=\Pr_{\substack{(\bx,\by) \\ \varepsilon-\text{correlated}}}[f(\bx)\neq f(\by)]=\Pr_{\substack{\bb\sim\U_{2m},\\\pmb{\ell}\sim[m]  }}[g(\bb)\neq g(\bb^{\pmb{\ell},\pmb{\ell}+m})],
    $$
    where $\bb^{i,j}$ is obtained by flipping the $i,j$th coordinates in $\bb$ and $\pmb{\ell}$ are chosen uniformly random in~$[m]$. We can further upper bound the quantity above by
    \begin{align}
    \label{eq:upperboundsingnsbyas}
    \begin{aligned}
    \NS_\varepsilon(f)&=\Pr_{\substack{\bb\sim\U_{2m},\\\pmb{\ell}\sim[m]  }}[g(\bb)\neq g(\bb^{\pmb{\ell},\pmb{\ell}+m})]\\
    &\leq \Pr_{\substack{\bb\sim\U_{2m},\\\pmb{\ell}\sim[m] }}[g(\bb)\neq g(\bb^{\pmb{\ell}})]+\Pr_{\substack{\bb\sim\U_{2m},\\\pmb{\ell}\sim[m]  }}[g(\bb^{\pmb{\ell}})\neq g(\bb^{\pmb{\ell},\pmb{\ell}+m})]\\
    &=\Pr_{\substack{\bb\sim\U_{2m},\\\pmb{\ell}\sim[m]  }}[g(\bb)\neq g(\bb^{\pmb{\ell}})]+\Pr_{\substack{\bb\sim\U_{2m},\\\pmb{\ell}\sim[m]  }}[g(\bb)\neq g(\bb^{\pmb{\ell}+m})]=\frac{1}{m}\AS(g),
    \end{aligned}
    \end{align}
    where the second equality used the fact that $\bb,\pmb{\ell}$ are uniform over their respective domains and the last equality used the definition of $\AS(g)$ to obtain
    $$
    \AS(g)=\sum_{\ell=1}^{2m}\Pr[g(\bx)\neq g(\bx^\ell)]=\sum_{\ell=1 }^m\Pr[g(\bx)\neq g(\bx^\ell)]+\sum_{\ell=m+1}^{2m}\Pr[g(\bx)\neq g(\bx^\ell)].
    $$
     We now finally upper bound the average sensitivity of $g$. Observe that $\sum_{j\in \tilde{C}_q} \bz_j A^j$ is either $\PSD$ or $\NSD$ (since all the $\bz_j$ in the bucket $\tilde{C}_q$ have the same sign and $A^j$s are all $\PSD$ by definition). From Eq.~\eqref{eqn:gb}, observe that $g$ is a unate function. Hence, we have
    \begin{align}
        \label{eq:nsatmostlogk}
        \NS_\varepsilon(f)\leq \frac{1}{m}\AS(g)\leq O\br{\sqrt{\frac{1}{m}}}=O(\sqrt{\varepsilon}),
    \end{align}
    where the first inequality is by Eq.~\eqref{eq:upperboundsingnsbyas}, second inequality uses Theorem~\ref{thm:kane} and the last equality used the definition of $m=1/\varepsilon$. This concludes the proof of the theorem.
    % In order to conclude the proof, we use the following two well-known results. Using the standard bits-to-Gaussians tricks~\cite[Chapte 11]{o2014analysis}, observe that
    % \begin{align}
    % \label{eq:gnsequals}
    % \GNS_\varepsilon(f)\leq \NS_\varepsilon(f).
    % \end{align}
    % For an explicit proof of this statement, see~\cite[Proposition~9.2]{diakonikolas2010bounded}.\footnote{We note that~\cite[Proposition~9.2]{diakonikolas2010bounded} shows this statement with \emph{equality} asymptotically (i.e., when we taken $k$ Bernoulli's to approximate a Gaussian for $k\rightarrow \infty$) for $f$ being a degree-$d$ polynomial threshold function, and the same proof holds true when $f$ is an intersection of spectrahedra.} The second result we use is by Ball~\cite{balltalk}\footnote{We remark that one can also obtain this bound~\cite[Section~3]{kane2011Gaussian}.} who showed that: if a Boolean function $f$ is an indicator function of a convex set $S$, i.e., $f^{-1}(1)= S$ and if $S$ has a smooth boundary, then the Gaussian surface area of $S$ can be bounded as
    % \begin{align}
    % \label{eq:gsaatmostgns}
    % \GSA(S)\leq \lim_{\varepsilon\rightarrow 0} \frac{\GNS_\varepsilon(f)}{\sqrt{\varepsilon}}.
    % \end{align}
    % Putting together Eq.~\eqref{eq:gsaatmostgns} and Eq.~\eqref{eq:gnsequals}, we get
    % $$
    % \GSA(S)\leq  \lim_{\varepsilon\rightarrow 0} \frac{\NS_\varepsilon(f)}{\sqrt{\varepsilon}}\leq O(1),
    % $$
    % where the final inequality used the upper bound we derived earlier in Eq.~\eqref{eq:nsatmostlogk}. This concludes the proof of the theorem.
    \end{proof}

    We now prove Theorem~\ref{cor:intersectionofpspec} which bounds the average sensitivity of intersections of positive spectrahedra.
    \begin{proof}[Proof of Theorem~\ref{cor:intersectionofpspec}]
    The proof is very similar to the proof of the theorem above. Let $m=\lceil 1/\varepsilon\rceil$. We follow the same bucketing steps $(1)-(5)$ in Theorem~\ref{thm:GSAspectrahedra} to obtain a $g:\pmset{2m}\rightarrow \01$ given by
    $$
    g(b)=\Br{\sum_{q=1}^{2m} b_q \sum_{j\in C_{q}}z_1 A^j_1\preceq B_1}\cdot \Br{\sum_{q=1}^{2m} b_q \sum_{j\in C_{q}}z_1 A^j_2\preceq B_2}.
    $$
    Observe that $g$ is an intersection of positive spectrahedra and by definition each  positive spectrahedron is a unate function. So, by Theorem~\ref{thm:kane}, we have
    $$
    \AS(g)\leq \sqrt{1/\varepsilon}=O(\sqrt{m}).
    $$
    % Repeating the same steps after Eq.~\eqref{eq:nsatmostlogk}, we get that
    % $$
    % \GSA(S^1\cap S^2)\leq  \lim_{\varepsilon\rightarrow 0} \frac{\GNS_\varepsilon(F)}{\sqrt{\varepsilon}}\leq \lim_{\varepsilon\rightarrow 0} \frac{\NS_\varepsilon(F)}{\sqrt{\varepsilon}}\leq \lim_{\varepsilon\rightarrow 0} \sqrt{ \varepsilon\cdot \AS(g)}  \leq O(1).
    % $$
    This concludes the proof of the corollary.
    \end{proof}

    \subsection{Boolean Anti-concentration: Littlewood Offord for spectrahedra}
    We now prove the main lemma  which shows that the largest eigenvalues of positive spectrahedra cannot be very concentrated. In particular, we show that for a uniformly random $\bx\sim\U_n$, the probability that the random matrix $D=\sum_i \bx_i A^i-B$ has the largest eigenvalue in a small interval is fairly small. This anti-concentration statement will be crucial in our invariance principle proof when we move from the Bentkus mollifier to our CDF function. In the passing we remark that, prior to this work, we aren't even aware if the weaker \emph{Gaussian} analogue of this statement was known (in particular, the results of~\cite{harsha2013invariance,servedio2017fooling} only require Gaussian anti-concentration for which they use a result of Nazarov~\cite{nazarov2003maximal} as a black-box).

    In order to prove our main theorem (stated below), we follow the result of~\cite{o2019fooling,kane2014average} closely since they are able to handle intersections of unate functions which is the case for positive spectrahedra. However, there are two subtleties.

    \begin{enumerate}[($i$)]
        \item In~\cite{o2019fooling} they bucket the set of halfspaces (which form the polytope) and show that each bucket has significant weight. Crucially for them, they use the fact that intersections of halfspaces are still unate functions. But this is not the case for positive spectrahedra. For this, we need to modify the bucketing procedure (akin to what happens in the proof of Theorem~\ref{thm:GSAspectrahedra}) so that this bucketing of positive spectrahedra still results in a unate function.
        \item  In~\cite{o2019fooling} they prove an analogue of Lemma~\ref{lem:buckettingspectra} which shows that each bucket has ``significant weight". However our proof deviates significantly from the proof in~\cite{o2019fooling}. For them, proving the statement in the lemma (for diagonal matrices), follows directly from Paley-Zygmund inequality, but as far as we are aware, we do not have a matrix-version of this inequality. Due to this difficulty, we modify their proof and use the matrix Chernoff bound to prove the statement above.
    \end{enumerate}

    \begin{theorem}
    \label{thm:gauanticonc1spec}
    Let $k\geq 0$ be an integer and $\tau\leq\frac{1}{100\sqrt{\log k}}$. Let $\{B_1,B_2\}\subseteq \sym{k}$, $\{A^i_1\}_{i\in [n]}$ and $\{A^i_2\}_{i\in [n]}$ be sequences of $\PSD$ and $\NSD$ matrices, respectively. They satisfy that for all $i\in [n],j\in [2]$,
    $A^i_1\preceq \tau\cdot \id, A^i_2\succeq-\tau\id$ and $ \sum_{i=1}^n(A^i_j)^2\succeq\id$. Then   for every $\Lambda\geq 20\tau\log k$, we have
    $$
    \Pr_{\bx\sim\U_n}\Br{\exists j\in [2] \text{ s.t. } \lambda_{\max}\br{ \sum_i \bx_i A^i_j-B_j} \in (-\Lambda,\Lambda]}\leq O(\Lambda).
    $$
    \end{theorem}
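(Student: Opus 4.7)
The plan is to extend the bucketing-based Littlewood--Offord proof of~\cite{o2019fooling} for polytopes to positive spectrahedra, combined with the sign-split bucketing from the proof of Theorem~\ref{thm:GSAspectrahedra}. A union bound reduces the $\exists j\in\{1,2\}$ statement to a single fixed $j$; the two cases are symmetric (the $\NSD$ case follows by negating the $A^i_2$ and interchanging $\lambda_{\max}$ with $-\lambda_{\min}$), so I fix $j$ and abbreviate $A^i,B$ for $A^i_j,B_j$ with all $A^i\succeq 0$.

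The first substantive step is to rewrite $\bx$ through a coupling. Draw an auxiliary $\bz\sim\U_n$, randomly partition $[n]$ into $m=\lceil C/\Lambda\rceil$ buckets $C_1,\ldots,C_m$ for a small constant $C$, and split each $C_\ell$ into its positive- and negative-$\bz$-sign parts, producing $2m$ buckets $\tilde C_1,\ldots,\tilde C_{2m}$. Because every $\bz_i$ inside $\tilde C_q$ carries the same sign and each $A^i\succeq 0$, the matrix $M_q:=\sum_{i\in\tilde C_q}\bz_i A^i$ is either PSD or NSD. Reparametrizing $\bx_i=\bz_i\bb_q$ for $i\in\tilde C_q$ with i.i.d.\ uniform signs $\bb_q$, the event of interest becomes $\lambda_{\max}\bigl(\sum_q\bb_qM_q-B\bigr)\in(-\Lambda,\Lambda]$.

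The second step is to certify bucket significance via the matrix Chernoff bound (Fact~\ref{fac:matrixchernoff}) applied to the negatively correlated random matrices $\Br{i\in\tilde C_q}\cdot(A^i)^2$. Since $\E\bigl[\sum_{i\in\tilde C_q}(A^i)^2\bigr]=\tfrac{1}{2m}\sum_i(A^i)^2\succeq\tfrac{1}{2m}\id$ and every summand is $\preceq\tau^2\id$, matrix Chernoff plus a union bound over the $2m$ buckets yields, with failure probability $2mk\cdot e^{-\Omega(1/(m\tau^2))}=o(\Lambda)$, that $\sum_{i\in\tilde C_q}(A^i)^2\succeq (c/m)\,\id$ simultaneously for all $q$. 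This is where the assumptions $\tau\leq 1/(100\sqrt{\log k})$ and $\Lambda\geq 20\tau\log k$ enter, ensuring the exponent is large.

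The hardest step, and the one I expect to be the main obstacle, is converting bucket significance into an $O(\Lambda)$ anti-concentration bound for $\lambda_{\max}(\sum_q\bb_qM_q-B)$ over the random signs $\bb\in\pmset{2m}$. The function $\bb\mapsto\lambda_{\max}(\sum_q\bb_qM_q-B)$ is monotone in each coordinate (increasing if $M_q\succeq 0$, decreasing if $M_q\preceq 0$), so every level set is an intersection of positive spectrahedra in $\bb$. Following the coordinate-by-coordinate strategy of~\cite{o2019fooling}, fix $q$ and condition on $\bb_{-q}$: let $C=\sum_{q'\neq q}\bb_{q'}M_{q'}-B$ and $\alpha_\pm=\lambda_{\max}(C\pm M_q)$; plugging the top eigenvector $u$ of $C-M_q$ into $C+M_q$ gives $\alpha_+-\alpha_-\geq 2\,u^{\top}M_q u$, and using $u^\top A^i u\geq \|A^iu\|^2/\tau$ for PSD $A^i$ together with bucket significance yields $u^\top M_q u\geq u^\top\!\bigl(\sum_{i\in\tilde C_q}(A^i)^2\bigr)u/\tau\geq c/(m\tau)\gg\Lambda$. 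Hence, on every good conditioning the window contains at most one of $\alpha_\pm$, i.e.\ the event is determined by at most one value of $\bb_q$. The remaining combinatorial work is to aggregate this per-coordinate obstruction into a global $O(\Lambda)$ bound, which I would do by a Sperner/LYM-style antichain counting applied to the monotone level sets (as in the classical Erd\H{o}s--Littlewood--Offord argument) together with the matrix Chernoff control on $\|M_q\|$; combining the bucketing reduction, the bucket-significance event, the conditional anti-concentration, and the final union bound over $j\in\{1,2\}$ then completes the proof.
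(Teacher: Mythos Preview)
Your overall architecture (random bucketing, a matrix-Chernoff ``bucket significance'' event, then a monotone/antichain anti-concentration bound on the bucketed cube) is exactly the paper's, but your parameter choice and final aggregation do not deliver the stated $O(\Lambda)$ bound.

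First, the sign-split is unnecessary here and complicates the Chernoff step. For a fixed $j$, all $A^i$ are already PSD, so the plain bucket sums $\bar A^q=\sum_{i\in\pi^{-1}(q)}A^i$ are PSD without any sign splitting; this is what the paper does. Your split makes the conditional mean $\E[\sum_{i\in\tilde C_q}(A^i)^2\mid\bz]$ depend on which indices have $\bz_i=+1$, so the claimed lower bound $\succeq\frac{1}{2m}\id$ needs a separate concentration argument over $\bz$ before you can invoke Fact~\ref{fac:matrixchernoff} conditionally. The paper sidesteps this by applying matrix Chernoff directly to $\sum_{i\in\pi^{-1}(q)}A^i$ (not to the squares), using $A^i\succeq\frac{1}{\tau}(A^i)^2$ to get $\E[\bar A^q]\succeq\frac{1}{\tau m}\id$.

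Second, and more seriously, with $m=\lceil C/\Lambda\rceil$ your Sperner/LYM step can only yield $O(1/\sqrt m)=O(\sqrt{\Lambda})$, not $O(\Lambda)$. You correctly argue that flipping any good coordinate moves $\lambda_{\max}$ by $\gg 2\Lambda$, so the level set is an antichain in $\pmset{2m}$; but the largest antichain has density $\Theta(1/\sqrt m)$, and nothing in the ``matrix Chernoff control on $\|M_q\|$'' that you mention can beat this. (The paper's $\alpha$-semi-thin lemma from~\cite{o2019fooling}, Lemma~\ref{lem:semithinvol}, gives the same $O(1/\sqrt m)$ for a single narrow window.) The paper's fix is to decouple $m$ from $\Lambda$: it takes $m=\Theta\bigl(1/(\tau^2\log k)\bigr)$, proves the $O(1/\sqrt m)$ bound for the \emph{narrow} window of width $1/(2\tau m)$ (Theorem~\ref{thm:alphamanylittlewood}), and then tiles $(-\Lambda,\Lambda]$ by $O(\Lambda\tau m)$ such narrow windows, so the union bound gives $O(\Lambda\tau\sqrt m)=O(\Lambda/\sqrt{\log k})=O(\Lambda)$. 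Without this window-subdivision trick and the corresponding choice of $m$, your argument as written stops at $O(\sqrt{\Lambda})$, which does not prove the theorem.
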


    Using the standard bits-to-Gaussians trick~\cite[Chapter 11]{o2014analysis}, we have the following corollary.

    \begin{corollary}
    \label{cor:gauanticonc1spec}
    Let $k\geq 0$ be an integer and $\tau\leq\frac{1}{\log k}$. Let $\{B_1,B_2\}\subseteq \sym{k}$, $\{A^i_1\}_{i\in [n]}$ and $\{A^i_2\}_{i\in [n]}$ be sequences of $\PSD$ and $\NSD$ matrices, respectively. They satisfy that for all $i\in [n],j\in [2]$,
    $A^i_1\preceq \tau\cdot \id, A^i_2\succeq-\tau\id$ and $ \sum_i(A^i_j)^2\succeq\id$ . Then for every $\Lambda\geq 20\tau\log k$, we have
    $$
    \Pr_{\bg\sim\G^n}\Br{\exists j\in [2] \text{ s.t. } \lambda_{\max}\br{ \sum_i \bg_i A^i_j-B_j} \in (-\Lambda,\Lambda]}\leq O(\Lambda).
    $$
    \end{corollary}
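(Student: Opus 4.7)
The plan is to deduce the Gaussian anti-concentration bound from the Boolean one via the standard ``bits-to-Gaussians'' approximation~\cite[Chapter 11]{o2014analysis}. For a parameter $N$ (to be taken to infinity), let $\{\bx_i^{(t)}\}_{i\in[n], t\in[N]}$ be i.i.d.\ uniform on $\{-1,+1\}$, and define $\bg_i^{(N)} = \frac{1}{\sqrt{N}}\sum_{t=1}^{N} \bx_i^{(t)}$. By the multivariate central limit theorem, $\bg^{(N)} \Rightarrow \G^n$ as $N \to \infty$, so it suffices to establish a uniform-in-$N$ Boolean bound and then pass to the limit.

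To this end, set the rescaled matrices $\widetilde{A}_j^{(i,t)} = \frac{1}{\sqrt{N}} A^i_j$ for $j \in \{1,2\}$ and $(i,t) \in [n]\times[N]$. These inherit the PSD/NSD property of the $A^i_j$, satisfy the improved regularity bounds $\widetilde{A}_1^{(i,t)} \preceq (\tau/\sqrt{N})\id$ and $\widetilde{A}_2^{(i,t)} \succeq -(\tau/\sqrt{N})\id$, and
\[
\sum_{i,t} (\widetilde{A}_j^{(i,t)})^2 \;=\; \sum_{i=1}^{n} (A^i_j)^2 \;\succeq\; \id.
\]
For any fixed $\eps > 0$, once $N$ is large enough to ensure $\tau/\sqrt{N} \leq \frac{1}{100\sqrt{\log k}}$ and $20(\tau/\sqrt{N})\log k \leq \Lambda$, Theorem~\ref{thm:gauanticonc1spec} applied to these $nN$ Boolean variables with the widened threshold $\Lambda + \eps$ gives
\[
\Pr_{\bx \sim \U_{nN}}\!\Br{\exists j \in [2] : \lambda_{\max}\!\Big(\textstyle\sum_{i,t} \bx_i^{(t)} \widetilde{A}_j^{(i,t)} - B_j\Big) \in (-\Lambda-\eps,\,\Lambda+\eps]} \;\leq\; O(\Lambda + \eps),
\]
and by construction the inner matrix equals $\sum_i \bg_i^{(N)} A^i_j - B_j$.

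To pass to the Gaussian limit, consider the set
\[
U_\eps \;=\; \Big\{x \in \R^n : \exists\, j \in [2],\; \lambda_{\max}\!\Big(\textstyle\sum_i x_i A^i_j - B_j\Big) \in (-\Lambda - \eps,\, \Lambda + \eps)\Big\},
\]
which is open because $\lambda_{\max}$ is a continuous function of the matrix entries. The target event is contained in $U_\eps$, and the displayed Boolean event contains $\{\bg^{(N)} \in U_\eps\}$, so $\Pr[\bg^{(N)} \in U_\eps] \leq O(\Lambda + \eps)$ for all sufficiently large $N$. By the portmanteau theorem applied to the open set $U_\eps$, $\Pr_{\bg \sim \G^n}[\bg \in U_\eps] \leq \liminf_N \Pr[\bg^{(N)} \in U_\eps] \leq O(\Lambda + \eps)$. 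Letting $\eps \to 0$ yields the claimed $O(\Lambda)$ bound. The only mildly subtle point is the half-open boundary in the original interval $(-\Lambda, \Lambda]$; the $\eps$-slack handles it cleanly because portmanteau for open sets gives a one-sided inequality in exactly the direction we need (upper-bounding the Gaussian probability by the liminf of Boolean probabilities).
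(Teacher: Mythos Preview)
Your proof is correct and follows exactly the approach the paper indicates: the paper simply states that the corollary follows from Theorem~\ref{thm:gauanticonc1spec} via the standard bits-to-Gaussians trick~\cite[Chapter 11]{o2014analysis}, and you have carefully spelled out precisely this reduction (including the rescaling $\widetilde{A}_j^{(i,t)}=A^i_j/\sqrt{N}$ to recover the regularity hypotheses and the portmanteau/$\eps$-slack argument to handle the half-open interval under weak convergence). Nothing further is needed.
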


    In order to prove this theorem we will use the following two lemmas by~\cite{o2019fooling}. Before stating these lemmas, we introduce a few definitions from~\cite{o2019fooling} (adapted to our setting of positive spectrahedra). For the rest of the section, we let $F:\pmset{n}\rightarrow \01$ be the indicator of an intersection of positive spectrahedra, i.e., for every $j\in [2]$, let $F_j(x)=\Br{\sum_{i=1}^n x_i A_j^i\preceq B_j}$, where $\set{A^i_j}_{j\in\set{1,2}}$ are sequences of $\PSD$($\NSD$) matrices and
    \begin{align}
    \label{eq:intersectionofpsd}
        F(x)=\bigwedge_{j=1}^2 F_j(x)=\bigwedge_{j=1}^2 \Br{\sum_{i=1}^n x_i A_j^i\preceq B_j}.
    \end{align}

    \begin{enumerate}
        \item For a set $S\subseteq \pmset{n}$, let $\calE(S)$ be the fraction of $n\cdot 2^{n-1}$ edges which have one endpoint in $S$ and one endpoint in $S^c$ (i.e., complement of $S$).
        \item  We let $H_j\subseteq \pmset{n}$ be the \emph{indicator-set} for $F_j$, i.e., $x\in H_j$ if and only if $F_j(x)=1$. Additionally, suppose we have sets $\{\bar{H}_1,\bar{H}_2\}$ such that $H_j\subseteq \bar{H}_j$ such that $\bar{H}_j$ are also the indicator-sets of  unate functions. Let $\partial H_j=\bar{H}_j\backslash H_j$.
        \item For $\alpha\in [0,1]$, we say $\partial H_j$ is \emph{$\alpha$-semi thin} if for every $x\in H_j$, at least an $\alpha$-fraction of its hypercube-neighbours (i.e., set of $y\in \pmset{n}$ for which $d(x,y)=1$) are outside $\partial H_j$.
        \item We now define a few sets: let
        $$
        F=\bar{H}_1\cap\bar{H}_2, \qquad F^\circ={H}_1\cap {H}_2, \qquad \partial F=F\backslash F^\circ
        $$
    \end{enumerate}
    With this terminology, we have the following lemma that bounds the number of edges that cross~$F$.
    \begin{lemma}[{\cite[Theorem~7.18]{o2019fooling}}]
    \label{lem:semithinvol}
    For $j\in [2]$, let $H_j$ be as defined above. Suppose $H_j$ is $\alpha$-semi thin, then
    $$
    \vol(\partial F)\leq O\br{\frac{1}{\alpha\sqrt{n}}}
    $$
    \end{lemma}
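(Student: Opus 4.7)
The plan is to combine the $\alpha$-semi thin hypothesis with the unate structure of $\bar H_j, H_j$ via a Kruskal--Katona/LYM-style isoperimetric argument. The $1/\sqrt{n}$ factor will emerge from the fact that every Hamming layer of $\pmset{n}$ has relative volume at most $O(1/\sqrt{n})$, while the $1/\alpha$ factor will come from an edge-boundary lower bound supplied by semi-thinness.

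First I would reduce to bounding each $\vol(\partial H_j)$ separately. If $x \in \partial F = F \setminus F^\circ$, then $x \in \bar H_1 \cap \bar H_2$ but $x \notin H_1 \cap H_2$, which forces $x \in \bar H_j \setminus H_j = \partial H_j$ for some $j \in [2]$. Hence $\partial F \subseteq \partial H_1 \cup \partial H_2$, and it suffices to prove $\vol(\partial H_j) \leq O(1/(\alpha \sqrt n))$ for each $j$.

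Next I would translate the semi-thin condition into an edge count in the hypercube $\pmset{n}$. By hypothesis, each relevant point has at least $\alpha n$ hypercube-neighbors outside $\partial H_j$; after normalization, this gives $\calE(\partial H_j) \geq \Omega(\alpha \cdot \vol(\partial H_j))$. For a matching upper bound that exposes the $1/\sqrt n$ factor, I would exploit that (wlog, after sign-flipping coordinates) both $\bar H_j$ and $H_j$ are monotone up-sets, so every edge leaving $\partial H_j = \bar H_j \setminus H_j$ is direction-restricted: it is either an up-edge into $H_j$ or a down-edge into $\bar H_j^c$. An LYM/Kruskal--Katona analysis applied layer-by-layer caps the number of such boundary edges at $O(\binom{n}{\lfloor n/2 \rfloor}) = O(2^n/\sqrt n)$ per Hamming layer, and summing over the $n+1$ layers totals $O(\sqrt n \cdot 2^n)$. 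Combining the two bounds then yields $\alpha \cdot \vol(\partial H_j) \leq O(1/\sqrt n)$, as desired.

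The hard part will be making the direction-restricted edge-boundary upper bound fully rigorous --- this is the technical heart of O'Donnell--Servedio--Tan's Theorem~7.18. In the write-up the cleanest path is to set up the reduction $\partial F \subseteq \partial H_1 \cup \partial H_2$ and then invoke their theorem essentially as a black box; our positive-spectrahedron setting is compatible with their argument because the indicators of positive spectrahedra together with their natural mollified enlargements $\bar H_j$ are all unate functions on $\pmset{n}$.
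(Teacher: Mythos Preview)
The paper does not give its own proof of this lemma: it simply cites \cite[Theorem~7.18]{o2019fooling} as a black box and moves on. Your final recommendation---to invoke the O'Donnell--Servedio--Tan theorem directly---is therefore exactly what the paper does.

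A couple of remarks on your sketch. The reduction $\partial F \subseteq \partial H_1 \cup \partial H_2$ is correct but not needed: the cited OST theorem already treats the intersection $F = \bar H_1 \cap \bar H_2$ directly (and in fact handles intersections of $k$ unate sets, with a $\sqrt{\log k}$ overhead that is a constant here since $k=2$). Also, the isoperimetric input is less ``LYM/Kruskal--Katona per layer'' and more precisely Kane's bound $\AS(f)=O(\sqrt{n})$ for unate $f$ (Theorem~\ref{thm:kane} in this paper): the semi-thin hypothesis lower-bounds the number of edges leaving $\partial H_j$ by $\alpha n\cdot|\partial H_j|$, while those edges are all edge-boundary edges of the unate sets $H_j$ or $\bar H_j$, each of which contributes at most $O(\sqrt{n}\cdot 2^n)$ edges by Kane. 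Comparing yields $\vol(\partial H_j)\leq O(1/(\alpha\sqrt{n}))$. Your per-layer counting would eventually recover this, but only after essentially reproving Kane's theorem.
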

    Using this lemma, we get the following theorem (which is the analogue of~\cite[Theorem~7.19]{o2019fooling}).
    \begin{theorem}
    \label{thm:alphamanylittlewood}
    Let $\lambda>0, \alpha\in [0,1], \{B_1, B_2\} \subseteq \sym{k}$.  Let $\{A^i_j\}_{i\in [n],j\in [2]}\subseteq\sym{k}$ satisfy that $A^i_1\succeq 0, A^i_2\preceq 0$ for all $i\in[n]$. At least $\alpha$-fraction of $i\in[n]$ satisfy that $A^i_1\succeq\lambda \cdot \id$ and $A^i_2\preceq-\lambda \cdot \id$. Then, we have  
    $$
    \Pr_{\bx\sim\U_n}\Br{\exists j\in [2] \text{ s.t. } \lambda_{\max}\br{ \sum_i \bx_i A^i_j-B_j} \in (-2\lambda,0]}\leq O\br{\frac{1}{\alpha \sqrt{n}}}.
    $$
    \end{theorem}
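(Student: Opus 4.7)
My plan is to recast the bad event as the ``boundary region'' $\partial H_j$ of Section~7, verify that it is $\alpha$-semi thin, and then invoke Lemma~\ref{lem:semithinvol} (separately for each $j\in\{1,2\}$, combined with a union bound).

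For each $j\in\{1,2\}$ define the strict and relaxed feasibility sets
\[
  H_j \;=\; \{x\in\pmset{n}: \lambda_{\max}(\textstyle\sum_i x_iA^i_j-B_j)\le -2\lambda\},
  \qquad
  \bar H_j \;=\; \{x\in\pmset{n}: \lambda_{\max}(\textstyle\sum_i x_iA^i_j-B_j)\le 0\}.
\]
Both are indicator sets of positive spectrahedra (with right-hand sides $B_j-2\lambda\id$ and $B_j$), hence unate: for $j=1$ the map $x\mapsto \sum_i x_i A^i_1$ is monotone in the PSD order because $A^i_1\succeq 0$, and for $j=2$ it is monotone in the opposite direction because $A^i_2\preceq 0$. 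Moreover, by construction the difference $\partial H_j:=\bar H_j\setminus H_j$ equals $\{x:\lambda_{\max}(\sum_i x_iA^i_j-B_j)\in(-2\lambda,0]\}$, which is precisely the event indexed by $j$ in the theorem statement.

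The key step is to show that $\partial H_j$ is $\alpha$-semi thin. Let $S\subseteq[n]$ with $|S|\ge\alpha n$ be the set of indices for which $A^i_1\succeq\lambda\id$ and $A^i_2\preceq-\lambda\id$. Fix $x\in\partial H_j$ and any $i\in S$, and let $y=x^{(i)}$ be the neighbour obtained by flipping the $i$-th bit. Consider $j=1$: if $x_i=+1$ then
\[
  \textstyle\sum_k y_kA^k_1-B_1 \;=\; (\sum_k x_kA^k_1-B_1)-2A^i_1 \;\preceq\; (\sum_k x_kA^k_1-B_1)-2\lambda\id,
\]
so $\lambda_{\max}$ drops by at least $2\lambda$ and $y\in H_1$; if $x_i=-1$ the matrix increases by $2A^i_1\succeq 2\lambda\id$, so $\lambda_{\max}$ rises above $0$ and $y\notin\bar H_1$. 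Either way $y\notin\partial H_1$. The case $j=2$ is identical with $-A^i_2\succeq\lambda\id$. Hence every $x\in\partial H_j$ has at least $\alpha n$ neighbours outside $\partial H_j$, establishing $\alpha$-semi thinness.

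Finally, apply Lemma~\ref{lem:semithinvol} to the single pair $(H_j,\bar H_j)$ (formally, pair it with the trivial choice $H_{j'}=\bar H_{j'}=\pmset{n}$ for the other index $j'$, which makes $\partial H_{j'}=\emptyset$ vacuously $\alpha$-semi thin, and reduces $\partial F$ to $\partial H_j$). This yields $\vol(\partial H_j)\le O(1/(\alpha\sqrt n))$ for each $j\in\{1,2\}$, and a union bound over $j$ concludes the proof. The only real subtlety I anticipate is checking that the definition of ``$\alpha$-semi thin'' used in Lemma~\ref{lem:semithinvol} is exactly the flipping condition that I verify above (as the definition in the excerpt is phrased slightly differently than the usage); tracing this through the OST framework should be routine, and the boundary-flipping argument given here is the natural sufficient condition.
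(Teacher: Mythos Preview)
Your proposal is correct and follows essentially the same route as the paper: define $H_j,\bar H_j$ exactly as you do, identify $\partial H_j$ with the bad event, verify $\alpha$-semi thinness via the coordinate-flip argument, and invoke Lemma~\ref{lem:semithinvol}. The only cosmetic difference is that the paper applies Lemma~\ref{lem:semithinvol} once to the pair $(H_1,H_2)$ directly, so that $\partial F=F\setminus F^\circ$ already equals the ``$\exists j$'' event and no union bound is needed; your approach of treating each $j$ separately (pairing with the trivial $H_{j'}=\bar H_{j'}=\pmset{n}$) and union-bounding costs only an absolute constant. Your explicit verification of the flipping condition is in fact more detailed than the paper's, which simply asserts the implication.
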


    \begin{proof}
    Let $\{A^i_j\},\{B_j\}$ be as in the theorem statement. Let
    $$
    H_j=\Big\{x\in \pmset{n}: \lambda_{\max}\br{\sum_i x_i A^i_j - B_j}\leq -2\lambda\Big\},\quad \bar{H}_j=\Big\{x\in \pmset{n}:\lambda_{\max}\br{\sum_i x_i A^i_j-B_j} \leq 0\Big\}.
    $$
    Clearly we then have that
    $$
    \partial H_j=\set{x\in \pmset{n}:\lambda_{\max}\br{ \sum_i x_i A^i_j-B_j} \in (-2\lambda,0]}
    $$
    and
    $$
    \partial F= \set{x\in \pmset{n}:\exists j\in [2] \text{ s.t. } \lambda_{\max}\br{ \sum_i x_i A^i_j-B_j} \in (-2\lambda,0]}.
    $$
    Since we assumed that at least an $\alpha$-fraction of $i$s satisfied $A^i_1\succeq \lambda\cdot \id$ and $A^i_2\preceq-\lambda\cdot\id$, it follows that $H_j$ is $\alpha$-semi thin, hence we can apply Lemma~\ref{lem:semithinvol} to obtain the theorem statement.
    \end{proof}

    Using this theorem, we are now ready to prove our main technical lemma which says that we can always ``randomly bucket" our positive spectrahedron so that many of these buckets have ``pretty large" smallest eigenvalue.

    \begin{lemma}
    \label{lem:buckettingspectra}
     Let $\{A^i\}_{i\in [n]}\subseteq\sym{k}$ be a sequence of positive semidefinite matrices which is $\br{\tau, M}$-regular with $\tau\leq \frac{1}{100\sqrt{\log k}} $. Let $m\geq \frac{1}{10\tau^2\log k}$ and $\pi:[n]\rightarrow [m]$ be a random hash function that independently assigns each $i\in [n]$ to a uniformly random bucket  in $[m]$. For $c\in [m]$, let
    $$
    \sigma_c=\sum_{j\in \pi^{-1}(c)}A^j
    $$
    and we say the bucket $c\in C$ is \emph{good} if $\sigma_c\succeq\frac{1}{2\tau m}\cdot \id$. Then,
    $$
    \Pr\Br{ \text{at most }  3m/4 \text{ buckets } c\in [m]  \text{ are good }}\leq \exp\br{- m/4}.
    $$
    \end{lemma}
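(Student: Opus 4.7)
The strategy is a two-step concentration argument: first bound the probability that a single bucket fails to be good via the matrix Chernoff inequality (Fact~\ref{fac:matrixchernoff}), then aggregate across buckets by exploiting the negative association inherent in a random balls-in-bins hash.

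Fix $c\in[m]$ and define the independent random PSD matrices $X_i=\Br{\pi(i)=c}\cdot A^i$, so that $\sigma_c=\sum_i X_i$ and $0\preceq X_i\preceq\tau\cdot\id$ (using $A^i\preceq\tau\id$). To lower bound the mean, observe that $A^i\succeq 0$ together with $A^i\preceq\tau\id$ gives, by functional calculus on the eigenvalues, $(A^i)^2\preceq\tau\cdot A^i$, so the regularity hypothesis $\sum_i (A^i)^2\succeq\id$ upgrades to $\sum_i A^i\succeq (1/\tau)\id$, hence $\mu_{\min}:=\lambda_{\min}(\mathbb{E}[\sigma_c])\geq 1/(\tau m)$. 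Choosing $\delta$ so that $(1-\delta)\mu_{\min}=1/(2\tau m)$ (so $\delta\geq 1/2$) and applying Fact~\ref{fac:matrixchernoff} with $R=\tau$ gives
\[
\Pr\Br{\sigma_c\not\succeq\tfrac{1}{2\tau m}\cdot\id}\leq k\cdot(\sqrt{2/e})^{\mu_{\min}/\tau}\leq k\cdot(\sqrt{2/e})^{1/(\tau^2 m)}.
\]
Under $\tau\leq 1/(100\sqrt{\log k})$ and the stated regime for $m$, the exponent $1/(\tau^2 m)$ is at least a sufficiently large multiple of $\log k$ to absorb the dimensional prefactor $k$, so the single-bucket failure probability is at most a small absolute constant $p$ (say, $p\leq 1/32$).

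For the aggregation, set $Z_c:=\Br{\sigma_c\succeq (1/(2\tau m))\id}$, so $\mathbb{E}[Z_c]\geq 1-p$ and the number of good buckets is $\sum_c Z_c$. The indicator family $\set{\Br{\pi(i)=c}}_{i\in[n],\,c\in[m]}$ is jointly negatively associated: for each $i$, the vector $(\Br{\pi(i)=1},\ldots,\Br{\pi(i)=m})$ is one-hot, hence NA by the multinomial NA theorem, and different $i$'s are independent. Each $Z_c$ depends only on the disjoint subfamily $\set{\Br{\pi(i)=c}}_{i\in[n]}$ and is monotone non-decreasing in those variables, since adding an element to bucket $c$ only increases $\sigma_c$ in the L\"owner order (because $A^i\succeq 0$). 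By the Joag-Dev--Proschan closure property, $(Z_1,\ldots,Z_m)$ is NA, so the Chernoff--Hoeffding bound for sums of Bernoullis applies verbatim:
\[
\Pr\Br{\sum_c Z_c\leq 3m/4}\leq\exp\br{-m\cdot D(3/4\,\|\,1-p)}\leq\exp(-m/4),
\]
where $D$ is the binary KL divergence and the final step uses $D(3/4\,\|\,31/32)>1/4$ for $p\leq 1/32$.

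The main obstacle is the bookkeeping in the first step: the dimensional factor $k$ in the matrix Chernoff bound must be dominated by $(\sqrt{2/e})^{1/(\tau^2 m)}$, which forces $1/(\tau^2 m)=\Omega(\log k)$ with a sufficiently large absolute constant. This is precisely the purpose of the hypothesis $\tau\leq 1/(100\sqrt{\log k})$ together with the assumption on $m$, and careful tracking of constants is required so that the single-bucket failure probability lands below $1/32$; only then does the subsequent KL computation clear $1/4$ and produce the clean exponent $-m/4$. The negative-association aggregation is by contrast a routine application of well-known closure properties for random hash / balls-in-bins experiments.
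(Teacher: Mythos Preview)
Your proposal is correct and follows essentially the same route as the paper: a single-bucket matrix Chernoff bound (Fact~\ref{fac:matrixchernoff}) using $(A^i)^2\preceq\tau A^i$ to get $\mu_{\min}\geq 1/(\tau m)$, followed by aggregation via negative association of the balls-in-bins indicators and the closure of NA under monotone functions on disjoint blocks (the paper's Claim~\ref{claim:Xj}). Your write-up is in fact cleaner than the paper's, which has some index confusion in defining the good-bucket indicators; the only caveat is that both arguments need an \emph{upper} bound on $m$ (i.e., $m\lesssim 1/(\tau^2\log k)$) to make the exponent $1/(\tau^2 m)$ large enough to kill the factor $k$, so the hypothesis should really be read as $m$ of order $1/(\tau^2\log k)$ rather than $m\geq 1/(10\tau^2\log k)$.
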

    \begin{proof}
    Let $\bz_i\in \01$ be a random variable satisfying $\Pr[\bz_i=1]=1/m$. Let $Z_i=\bz_i\cdot A^i$, hence one can write $\sigma_c=\sum_i Z_i$. In particular, this implies
    $$
    \Exp\Br{\sigma_c}=\frac{1}{m} \sum_i A^i\succeq\frac{1}{\tau\cdot m}\sum_i \br{A^i}^2\succeq\frac{1}{\tau m},
    $$
    where we used $A^i \preceq \tau\cdot \id$.
    Applying Fact~\ref{fac:matrixchernoff} (for $\delta=1/2$, $\mu=1/\tau m$, $R=\tau$) we have
    \[
    \prob{\sum_i Z_i\succeq\frac{1}{2\tau m}\id}\geq 1-k\cdot\br{\frac{2}{e}}^{\frac{1}{2\tau^2 m}}\geq\frac{9}{10}
    \]
    For $j\in [n]$ and $c\in [m]$ define random variables
    $$
    Y_{c,j}=\begin{cases}1~\mbox{if $\pi(j)=c$}\\0~\mbox{otherwise},\end{cases} \quad \text{ and }\qquad X_j=\Br{\sum_{c=1}^m Y_{c,j}\sigma_c\succeq\frac{1}{2\tau m}\id}.
    $$
    Using the Claim~\ref{claim:Xj} below, $X_1,\ldots, X_n$ are negatively associated. Thus we may apply the Chernoff bound to $\sum_{i=1}^m X_i$ which has mean at least $3m/4$, which gives us the lemma statement.

    \begin{claim}\label{claim:Xj}
    The random variables $X_1,\ldots, X_n$ are  negatively associated.
    \end{claim}
    \begin{proof}
    From ~\cite[Page 35, Example 3.1]{dubhashi_panconesi_2009}, the set of random variables $\set{Y_{c,j}}_{1\leq c\leq m}$ are negatively  associated for $j\in [n]$. Note that $\set{Y_{1,j},\ldots, Y_{m,j}}_{j\in [n]}$ are $n$ independent families of random variables. By~\cite[Page 35]{dubhashi_panconesi_2009}, $\set{Y_{c,j}}_{c\in [m],j\in [n]}$ are negatively associated. Given $\sigma_1,\ldots, \sigma_m$, $\Br{\sum_{c=1}^m Y_{c,j}\sigma_c\succeq\frac{1}{2\tau m}\id }$ is a monotone non-decreasing function of $Y_{c,1},\ldots, Y_{c,n}$. Thus from~\cite[Page 35]{dubhashi_panconesi_2009}, $X_1,\ldots, X_m$ are negatively associated.
    \end{proof}
    The proof of this claim concludes the proof of the lemma.
    \end{proof}

    We are now ready to proof our main theorem.

    \begin{proof}[Proof of Theorem~\ref{thm:gauanticonc1spec}]
    For $j\in[2]$, let $f_j(x)=\sum_{i=1}^nx_iA^i_j$. Let $\pi:[n]\rightarrow [2m]$ be a random hash function that independently assigns each $i\in [n]$ to uniformly random bucket in $[2m]$. Let  $C_1,\ldots,C_{2m}\subseteq [n]$ be the buckets and $z\in \pmset{2m}$ be uniformly random. Consider the function $g_j:\pmset{2k}\rightarrow \sym{k}$ defined as
    $$
    g_j(z)=\sum_{q=1}^{2m}z_q \cdot \sum_{i\in C_q}A^i_j.
    $$
    For $q\in [2m]$, define $\bar{A}^q_j=\sum_{i\in C_q}A^i_j$, so $g_j(z)=\sum_q z_q \bar{A}^q_j$. Observe that distribution of $f_j$ and $g_j$ are the same, i.e., for every $D\in \sym{k}$ we have
    \begin{align}
    \label{eq:distributionafterbuckeetingsame}
    \Pr_{\bz\sim\U_{2m},\{C_i\}}[g_j(\bz)=D]=\Pr_{\bx\sim\U_n}[f_j(\bx)=D].
    \end{align}
    In order to see this we argue that the $n$-bit string $w\in \pmset{n}$ defined as $w_i=z_q$ iff $i\in C_q$, is uniformly random. To show this, we first prove the following: for $z\in \pmset{2m}$, let $S=\{q\in [2m]: z_q=1\}$ and $T=\cup_{q\in S}C_q$. Then, observe that  for every $T\subseteq [n]$, we have $\Pr_{\bz,\{C_q\}}[\textbf{T}=T]=2^{-n}$ (for every $i\in [n]$, the probability of $i\in C_q$ is $1/(2m)$ and the probability $C_q$  is included in $T$ is $1/2$ since $z_q$ is a uniformly random bit, hence for every $i\in [n]$, we have $\Pr_{\bz,\{C_q\}}[i\in T]=\sum_{i=1}^{2m}(1/2m)\cdot (1/2)=1/2$ and this is independent for every $i\in [n]$ by construction). It is now easy to see that $w$ is uniformly random because
    $$
    \Pr_{\bz,\{C_j\}}[W=w]=\sum_{T}\Pr[\textbf{T}=T]\cdot \Pr[W=w\vert \textbf{T}=T]=\frac{1}{2^n}\sum_{T}\Pr[W=w\vert \textbf{T}=T]=2^{-n},
    $$
    where the last equality used the fact that once we fix $T$, then all the bits of $w$ which are $1$ are fixed.

    For $m=\frac{1}{20\tau^2\log k}$, let $\pi:[n] \rightarrow [2m]$ be a random hash that buckets these $n$ variables (jointly for $j\in [2]$).     By Lemma~\ref{lem:buckettingspectra}, we argued that, with probability at least $1-e^{-m/2}$, at least $9m/5$ of the $2m$ buckets are good for $j=1$, i.e., a good bucket $q\in [2m]$ for $j=1$ satisfies $\sum_{i\in \pi^{-1}(q)}A^i_1\succeq \frac{1}{4\tau m}\cdot \id$. For the same reason, with probability at least $1-e^{-m/2}$, at least $9m/5$ of the $2m$ buckets are good for $j=2$, i.e., a good bucket $q\in [2m]$ for $j=2$ satisfies $\sum_{i\in \pi^{-1}(q)}A^i_2\preceq-\frac{1}{4\tau m}\cdot \id$. Applying a union bound, at least $8m/5$ of $2m$ buckets are good for every $j\in [2]$  with probability at least $1-2\cdot e^{-m/2}$.

     By the argument in the start of the proof, we know that after bucketing, we can convert each $f_j$ into a function $g_j:\pmset{2m}\rightarrow \sym{k}$ such that $f_j$ and $g_j$ have the same distribution. Now we can invoke Theorem~\ref{thm:alphamanylittlewood} as follows: we know that a $4/5$-fraction of $q\in[2m]$ satisfy $\bar{A}^q_1\succeq \frac{1}{4\tau m}\cdot \id$ and $\bar{A}^q_2\preceq-\frac{1}{4\tau m}\cdot \id$, so we have
    \begin{align*}
      \Pr_{\bz\sim\U_{m}}\Br{\exists j\in[2]~\text{s.t.}~\lambda_{\max}\br{\sum_{q=1}^m \bz_q \bar{A}^q_j-B_j}\in(-1/2\tau m,0]}\leq O\br{\sqrt{\frac{1}{m}}}+2e^{-m/2}.
    \end{align*}
    We now prove the main theorem statement. In order to do so, first observe that, we can partition the bound on the LHS into $\lceil 2\Lambda\tau m\rceil$ intervals as $\Lambda\geq1/2\tau m$ from our choice of parameters.\footnote{To be precise, for a vector $v\in \R^k$, observe that the event $\Br{\forall i\in [k]: v_i\leq b_i+\Lambda, \text{ and }  \exists j\in [k]: v_j\geq b_j-\Lambda}$ can be broken down into the intersections of $\Lambda/2\tau m$ events given by $\bigwedge_{\ell=0}^{2\Lambda\tau m-1}[\forall i\in [k]:  v_i\leq b_i+\Lambda-\ell/2\tau m, \text{ and } \exists j\in [\ell]:  v_j> b_j-\Lambda-(\ell+1)/2\tau m]$.} and by a union bound  we have

    \begin{eqnarray*}
      &&\Pr_{\bx\sim \U_{n}}\Br{\exists j\in [2] \text{ s.t. } \lambda_{\max}\br{ \sum_i \bx_i A^i_j-B_j} \in (-\Lambda,\Lambda]}\\
       &\leq& O\br{\Lambda\cdot\tau\cdot m\br{\sqrt{\frac{1}{m}}+\exp(-\Omega(m/2))}}\\
    \end{eqnarray*}
    From the choice of the parameters, the first term above dominates. And thus
    \[
    \Pr_{\bx\sim\U_{n}}\Br{\exists j\in [2] \text{ s.t. } \lambda_{\max}\br{ \sum_i \bx_i A^i_j-B_j} \in [-\Lambda,0]}\leq O\br{\Lambda}.
    \]
    Similarly one can also show when the LHS of the equation above is replaced with $(0,\Lambda]$. Hence we get our theorem statement.
    \end{proof}

    \section{Invariance principle for positive spectrahedra}
    \label{sec:invarianceprinciple}
    In this section, we establish our main invariance principle.
    \subsection{Invariance principle for the spectral Bentkus mollifier}

    We now prove our main lemma which is an invariance principle for the Bentkus mollifier. We remark that our analysis is the standard Lindeberg-style argument for proving invariance principles, but when applied to the spectral Bentkus mollifier.  We first write out the Fr\'echet series for the Bentkus mollifier, which we then upper bound using our main Theorem~\ref{thm:derivative}. In order to upper bound the error terms in the Fr\'echet series, we use the matrix Rosenthal inequality (in Fact~\ref{fac:ros}) to bound the moments of random matrices (we remark that this inequality will also be useful in our $\PRG$ construction). Superficially, our proof techniques resemble the previous invariance principle proofs used in~\cite{harsha2013invariance,servedio2017fooling,o2019fooling}, but the quantities we need to bound are very different from their~analysis since we are dealing with matrices.

    \begin{lemma}\label{lem:invphitheta}
    Let $k\geq 1,\theta,\tau\in(0,1)$ and $\Psi_{\theta}:\sym{k}\rightarrow\reals$ be defined as $\Psi_{\theta}\br{Q}=\br{G_{\theta}\circ\lambda}\br{Q}$ where $G_{\theta}$ is the Bentkus mollifier defined in Eq.~\eqref{eqn:flambda}. Let $S_1,S_2$ be $(\tau,M)$-regular positive spectrahedra specified by matrices $\{A^1_1,\ldots,A^n_1,B_1\}$ and  $\{A^1_2,\ldots,A^n_2,B_2\}$ respectively. Let $A^i=\diag\br{A^i_1,A^i_2}$ and $B=\diag\br{B_1, B_2}$ be block diagonal matrices. Then
    \begin{align*}
    % \nonumber % Remove numbering (before each equation)
      \abs{\E_{\bx\sim\U_n}\Br{\Psi_{\theta}\br{\sum_{i=1}^n\bx_iA^i-B}}-\E_{\bg\sim\G^{n}}\Br{\Psi_{\theta}\br{\sum_{i=1}^n\bg_iA^i-B}}}\leq O\br{\frac{\log^{7} k}{\theta^3}\cdot (M+\|B\|^2)\cdot (M\cdot \tau)^{1.5}}.
    \end{align*}
    This inequality still holds if $\bx$ is $\br{80\log k}$-wise uniform.
    \end{lemma}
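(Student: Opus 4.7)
The plan is to carry out a Lindeberg hybrid replacement along the $n$ input bits, using a third-order Fr\'echet--Taylor expansion of $\Psi_\theta$ in each matrix direction $A^i$ as the single-step estimate, and then controlling the resulting cubic remainder via Corollary~\ref{cor:derivaativebentkus} together with the matrix moment bounds of Facts~\ref{lem:normrandommatrixboolean} and~\ref{fac:ros}.

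Concretely, for $0\le i\le n$ define the hybrid matrix $N_i = \sum_{j\le i}\bg_j A^j + \sum_{j>i}\bx_j A^j - B$ and the independent-of-$(\bx_i,\bg_i)$ piece $M_i = N_{i-1}-\bx_iA^i = N_i-\bg_iA^i$. Telescoping gives
$$\E\Psi_\theta(N_0) - \E\Psi_\theta(N_n) \;=\; \sum_{i=1}^{n}\br{\E\Psi_\theta(M_i+\bx_iA^i) - \E\Psi_\theta(M_i+\bg_iA^i)}.$$
For each $i$, Fr\'echet--Taylor expand $t\mapsto \Psi_\theta(M_i+tA^i)$ to third order at $t=0$. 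Using the matched moments $\E\bx_i^q=\E\bg_i^q$ for $q\in\{0,1,2\}$ and the independence $(\bx_i,\bg_i)\perp M_i$, the zeroth, first and second-order terms cancel in expectation; the $i$-th telescope contribution is thus at most $\E[|R_i(\bx_i)|]+\E[|R_i(\bg_i)|]$ with $|R_i(t)|\le\tfrac{|t|^3}{6}\sup_{|s|\le|t|}|D^3\Psi_\theta(M_i+sA^i)[A^i,A^i,A^i]|$.

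Applying Corollary~\ref{cor:derivaativebentkus} (with $\alpha=0$, $H=A^i$, and $\Delta\ge\|M_i+sA^i\|\le\|M_i\|+\tau|s|$) gives the pointwise estimate $|D^3\Psi_\theta(M_i+sA^i)[A^i,A^i,A^i]|\le O(\theta^{-3}\log^3 k)\cdot(\|M_i\|^2+\tau^2|s|^2)\cdot\|A^i\|^3$; the Boolean side has $|s|\le 1$, and for the Gaussian side one truncates $|\bg_i|\le O(\sqrt{\log k})$ and absorbs the exponential tail using $|\Psi_\theta|\le 1$. Summing over $i$, taking expectations, and using Fact~\ref{lem:normrandommatrixboolean} to bound $\E\|M_i\|^2 \le O(M\log k)+\|B\|^2$ uniformly in $i$ reduces the task to bounding
$$O\br{\tfrac{\log^4 k}{\theta^3}}\cdot (M+\|B\|^2)\cdot \sum_{i=1}^{n}\|A^i\|^3 \;\le\; O\br{\tfrac{\log^{7} k}{\theta^3}}\cdot (M+\|B\|^2)\cdot (M\tau)^{1.5}.$$
This last reduction exploits the PSD operator inequality $(A^i)^3\preceq\tau\,(A^i)^2$ (valid since $0\preceq A^i\preceq\tau\id$) together with a Schatten-$\Theta(\log k)$ comparison applied to $\sum_i(A^i)^2\preceq M\id$, in the same spirit as Fact~\ref{fac:ros}: this is what converts the global operator-norm bound $\|\sum_i(A^i)^2\|\le M$ into a controlled sum of per-matrix operator norms at the cost of only polylogarithmic factors in $k$.

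For the $k$-wise-uniform extension, $\bx$ enters the argument only (i) through matching moments of order $\le 2$ in the Taylor cancellation ($2$-wise uniformity suffices), and (ii) inside Facts~\ref{lem:normrandommatrixboolean} and~\ref{fac:ros} at moment orders $m=O(1)$ and $4p=O(\log k)$ respectively; both facts are explicitly stated to hold under $2m\lceil\log k\rceil$- and $4p$-wise uniformity, so $80\log k$-wise uniformity is comfortably sufficient. The main technical obstacle lies in the last reduction above: a naive per-coordinate bound only gives $\sum_i\|A^i\|^3\le kM\tau$ (saturated by PSD-projector examples satisfying regularity with $n \asymp k/\tau^2$), so stripping off the linear $k$ factor to obtain a clean $(M\tau)^{1.5}\cdot\mathrm{polylog}(k)$ bound requires genuinely matrix-analytic tools that exploit the PSD regularity globally rather than coordinate-by-coordinate. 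This is precisely the step that departs substantively from the scalar $\|W^i\|_4\le\tau$ bookkeeping used in the polytope invariance principle of~\cite{harsha2013invariance}, and is where the proof has to be executed with care.
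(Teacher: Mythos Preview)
Your single-coordinate Lindeberg scheme has a genuine gap at exactly the step you flag as ``the main technical obstacle'': the inequality
\[
\sum_{i=1}^{n}\|A^i\|^3 \;\le\; (M\tau)^{1.5}\cdot \mathrm{polylog}(k)
\]
is simply false, and no amount of Schatten-norm trickery can repair it. Your own projector example ($A^i=\tau\,e_je_j^{\top}$, with $1/\tau^2$ copies per direction $j\in[k]$) satisfies $\sum_i(A^i)^2=\id$ and $A^i\preceq\tau\id$, yet $\sum_i\|A^i\|^3=k\tau$, which exceeds $(M\tau)^{1.5}=\tau^{1.5}$ by a factor of $k\tau^{-1/2}$. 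The PSD inequality $(A^i)^3\preceq\tau(A^i)^2$ only yields $\big\|\sum_i(A^i)^3\big\|\le M\tau$, a bound on the operator norm of the \emph{sum}; but the third-derivative remainder from Corollary~\ref{cor:derivaativebentkus} produces $\sum_i\|A^i\|^3$, i.e.\ the sum of the individual operator norms, and there is no passage from one to the other without an ambient-dimension factor.

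The paper avoids this obstruction by \emph{not} doing coordinate-by-coordinate replacement. Instead it hashes $[n]$ into $t=\lceil 1/\tau\rceil$ buckets and swaps an entire bucket at a time: the $\ell$-th hybrid step involves $Q=\sum_{i\in h^{-1}(\ell)}\bx_iA^i$, and the remainder is controlled by $\E\|Q\|^6$, the sixth moment of a \emph{random signed sum}, not by $\sum_i\|A^i\|^3$. The matrix Rosenthal inequality (Fact~\ref{fac:ros}) then gives $\E\|Q\|^6\le O(\log^6 k\cdot(M\tau)^3)$, because the variance term is $\|\tfrac{1}{t}\sum_i(A^i)^2\|\le M\tau$. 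This random aggregation inside each bucket is what converts the global bound $\|\sum_i(A^i)^2\|\le M$ into the desired $(M\tau)^{1.5}$ per block; it cannot be reproduced by any deterministic per-coordinate estimate. (The bucketing is also what makes the $80\log k$-wise uniformity suffice: Rosenthal is applied at Schatten exponent $8p=80\log k$.) To fix your argument you must move to block-wise replacement; the single-coordinate route is a dead end here.
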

    \begin{proof}

     Let $t=\lceil 1/\tau\rceil$. Let $\Hi=\{h:[n]\rightarrow [t]\}$ be a family of $(80\log k)$-wise uniform hashing functions, i.e., for every subset $I\subseteq [n]$ of size at most $80\log k$,  and $b\in [t]^I$, we~have
     $$
     \Pr_{\bh\in \Hi}\Br{\bh(i)=b_i}=\frac{1}{t^{|I|}},
     $$
     where the probability is taken over a uniformly random function $h\in \Hi$. Fix an $h\in \Hi$ (think of $h$ as a partition of $[n]$ into $t$ blocks $S_1,\ldots,S_t \subseteq [n]$, where $S_i=h^{-1}(i)$ for all $i\in [t]$). For $\bx\sim\U_n$ and $\by\sim\G^n$ let us divide $\bx,\by$ into blocks $\bx^1,\ldots,\bx^t$ and $\by^1,\ldots,\by^t$ according to $h$. It is not hard to see that $\bx^i\sim\U^{|h^{-1}(i)|}$ and $\by^i\sim\G^{|h^{-1}(i)|}$. We now upper bound the quantity
     \begin{align}
     \label{eq:invprinclhs}
     \abs{\mathop{\mathbb{E}}_{\bx\sim\U_{n}}\Br{\Psi_{\theta}\br{\sum_{i=1}^n \bx_i A^i-B}}-\mathop{\mathbb{E}}_{\by\in \G^{n}}\Br{\Psi_{\theta}\br{\sum_{i=1}^n \by_i A^i-B}}}
     \end{align}
     by the standard hybrid argument. Let $\{Z^0,\ldots,Z^t\}$ be a set of random variable on $n$ coordinates such that $Z^0$ is the uniform distribution on $\pmset{n}$ and $Z^t$ is uniform in $\G^n$. To this end, define $Z^\ell$ as follows: for $j\in [\ell]$, let $Z^\ell_{|h^{-1}(j)}=\by^j$ and for  $\ell<j\leq t$ let $Z^\ell_{|h^{-1}(j)}=\bx^j$. It is easy to see that $Z^0\sim\U_n$ and $Z^t\sim\G^n$. We now can upper bound Eq.~\eqref{eq:invprinclhs}~as
     \begin{align}
     \label{eq:upperboundinvprinclhs}
     \begin{aligned}
    &\abs{\mathop{\mathbb{E}}_{\bx\sim\U_{n}}\Br{\Psi_{\theta}\br{\sum_{i=1}^n \bx_i A^i-B}}-\mathop{\mathbb{E}}_{\by\sim \G^{n}}\Br{\Psi_{\theta}\br{\sum_{i=1}^n \by_i A^i-B}}}\\
    &=\abs{\sum_{\ell=1}^{t} \mathop{\mathbb{E}}_{\substack{\bx\sim\U_{n}\\\by\sim \G^n}}\Br{\Psi_{\theta}\br{\sum_{i=1}^n Z^\ell_i A^i-B}}-\mathop{\mathbb{E}}_{\substack{\bx\sim\U_{n}\\\by\sim \G^n}}\Br{\Psi_{\theta}\br{\sum_{i=1}^n Z^{\ell-1}_{i} A^i-B}}}\\
    &\leq \sum_{\ell=1}^{t} \abs{\mathop{\mathbb{E}}_{\substack{\bx\sim\U_{n}\\\by\sim \G^n}}\Br{\Psi_{\theta}\br{\sum_{i=1}^n Z^\ell_i A^i-B}}-\mathop{\mathbb{E}}_{\substack{\bx\sim\U_{n}\\\by\sim \G^n}}\Br{\Psi_{\theta}\br{\sum_{i=1}^n Z^{\ell-1}_{i} A^i-B}}}
    \end{aligned}
     \end{align}
     We now upper bound each of the $t$ quantities on the RHS of Eq.~\eqref{eq:upperboundinvprinclhs}. Fix $\ell\in [t]$ and let us assume for simplicity that $h^{-1}(\ell)=[m]$. By definition of $Z^\ell$ we observe that $Z^\ell_{j}=Z^{\ell+1}_{j}$ for all $j\in \{m+1,\ldots,n\}$ and in fact we have
     $$
     Z^\ell=(\bx_1,\ldots,\bx_m,Z_{m+1},\ldots,Z_n),\quad Z^{\ell+1}=(\by_1,\ldots,\by_m,Z_{m+1},\ldots,Z_n),
     $$
     where $\bx_i\sim\U_1$ and $y_i\in \G$ is uniform in their respective domains. Crucially note that $Z_{m+1},\ldots,Z_n$ is independent of the $\bx_i$s or $\by_i$s by definition of $Z^\ell,Z^{\ell+1}$. Rewriting the $\ell$-th term in Eq.~\eqref{eq:upperboundinvprinclhs}, we~get
     \begin{align}
     \label{eq:P1Q1P2Q2}
    \abs{\mathop{\mathbb{E}}_{\substack{\bx\sim\U_{n}\\\by\sim \G^n}}\Br{\Psi_{\theta}\br{\underbrace{\sum_{i=1}^m \bx_i A^i}_{Q}+\underbrace{\sum_{i=m+1}^n Z_i A^i -B }_{P}}}-\mathop{\mathbb{E}}_{\substack{\bx\sim\U_{n}\\\by\sim \G^n}}\Br{\Psi_{\theta}\br{\underbrace{\sum_{i=1}^m \by_i A^i}_{R}+ \underbrace{\sum_{i=m+1}^n Z_i A^i-B}_{P}}}}
     \end{align}
     Let us analyze both these quantities separately. We can first write the Fr\'echet series for both these expressions as
     \begin{align}
     \label{eq:frechetforfirst}
     \Psi_{\theta}(Q+P)=\Psi_{\theta}(P)+D\Psi_{\theta}\br{P}\Br{Q}+\frac{1}{2}D^2\Psi_{\theta}\br{P}\Br{Q,Q}+\frac{1}{6}D^3\Psi_{\theta}\br{P'}\Br{Q,Q,Q}
     \end{align}
     where $P'=P+\xi Q$ for some $\xi\in[0,1]$.\footnote{This follows directly from the mean value theorem for Fr\'echet derivatives~\cite{ambrosetti1995primer}.}
     \begin{align}
      \label{eq:frechetforsecond}
     \Psi_{\theta}(R+P)=\Psi_{\theta}(P)+D\Psi_{\theta}\br{P}\Br{R}+\frac{1}{2}D^2\Psi_{\theta}\br{P}\Br{R,R}+\frac{1}{6}D^3\Psi_{\theta}\br{P''}\Br{R,R,R},
     \end{align}
     where $P''=P+\xi' R$ for some $\xi\in[0,1]$.

     Now, observe that since the first moment and the second moment of $\bx$ match with the standard normal distributions. Thus we have that
     \begin{align}
     \label{eq:distributionsame}
     \begin{aligned}
     \E_{\substack{\bx\sim\U_{n}\\\by\sim \G^n}}\Br{D\Psi_{\theta}\br{P}\Br{R}}&= \E_{\substack{\bx\sim\U_{n}\\\by\sim \G^n}}\Br{D\Psi_{\theta}\br{P}\Br{Q}}\\
     \E_{\substack{\bx\sim\U_{n}\\\by\sim \G^n}}\Br{D^2\Psi_{\theta}\br{P}\Br{R,R}}&= \E_{\substack{\bx\sim\U_{n}\\\by\sim \G^n}}\Br{D^2\Psi_{\theta}\br{P}\Br{Q,Q}}.
     \end{aligned}
     \end{align}
     So by taking the difference of Eq.~\eqref{eq:frechetforsecond} and Eq.~\eqref{eq:frechetforfirst}, only the third order spectral derivatives remain to be bounded. For this, we now use the Corollary~\ref{cor:derivaativebentkus} and obtain
     \begin{align}
     \label{eq:usingsendovmainthm2}
     \big|D^3\Psi_{\theta}\br{P'}\Br{Q,Q,Q}\big|\leq O\br{\frac{\Delta_1^2}{\theta^3}\log^3 k\cdot\norm{Q}^3}\\
      \big|D^3\Psi_{\theta}\br{P''}\Br{R,R,R}\big|\leq O\br{\frac{\Delta_2^2}{\theta^3}\log^3 k\cdot\norm{R}^3}.
     \end{align}
    where $\Delta_1=\norm{P'}$ and $\Delta_2=\norm{P''}$.

    Thus, the absolute value of Eq.~\eqref{eq:P1Q1P2Q2} is upper bounded by
    \begin{equation}\label{eqn:invbound}
     \frac{\log^3 k}{\theta^3}\expec{\Delta_1^2\norm{Q}^3+\Delta_2^2\norm{R}^3}\leq \frac{\log^3 k}{\theta^3}\br{\expec{\norm{P'}^4}^{1/2}\expec{\norm{Q}^6}^{1/2}+\expec{\norm{P''}^4}^{1/2}\expec{\norm{R}^6}^{1/2}},
    \end{equation}
    where the inequality is by Cauchy-Schwarz inequality.

    Using \Cref{lem:normrandommatrixboolean} and the fact that $\sum_i(A^i)^2\preceq M\cdot \id$, we have
    \begin{align}
    \label{eq:normboundonP'''}
    \expec{\norm{P'}^4}\leq O\br{\log^2 k\cdot M^2+\norm{B}^4},\quad \expec{\norm{P''}^4}\leq O\br{\log^2 k\cdot M^2+\norm{B}^4}
    \end{align}
    We now upper bound the last term in Eq.~\eqref{eqn:invbound} using the following claim.
    \begin{claim}\label{claim:normQ}
      It holds that
          %   $\expec{\norm{Q}^6}\leq O\br{\br{\frac{M\log  k}{t}}^3+\log^3  k\cdot\tau^{2.7}}$, \quad   $\expec{\norm{R}^6}\leq O\br{\br{\frac{M\log  k}{t}}^3+\log^3  k\cdot\tau^{2.7}}$.
    $\expec{\norm{Q}^6}\leq O\br{\log^6 k \cdot \tau^3 \cdot M^3}$, \quad   $\expec{\norm{R}^6}\leq O\br{\log^6 k \cdot \tau^3 \cdot M^3}$.
    \end{claim}

Before proving this claim, observe that combining Claim~\ref{claim:normQ} with Eq.~\eqref{eq:normboundonP'''},~\eqref{eqn:invbound},  we can upper bound Eq.~\eqref{eqn:invbound} (and in turn Eq.~\eqref{eq:P1Q1P2Q2})~by
    \begin{align*}
        %&\frac{\log^3 k}{\theta^3}\br{\expec{\norm{P'}^4}^{1/2}+\expec{\norm{P''}^4}^{1/2}}\cdot\expec{\norm{Q}^6+\norm{R}^6}^{1/2}\\
        % &O\br{\frac{\log^3 k}{\theta^3}\cdot \br{M \log k + \|B\|^2}\cdot \br{\frac{\sqrt{M}\log^{1.5}k}{t^{1.5}}+\log^{1.5}k\cdot \tau^{1.5}}}\\
        % &\leq O\br{\frac{\log^{5.5} k}{\theta^3}\cdot (M+\|B\|^2)\cdot (\sqrt{M}/t^{1.5}+ \tau^{1.5})}
        O\br{\frac{\log^3 k}{\theta^3}\cdot \br{M \log k + \|B\|^2}\cdot \br{\log^3 k \cdot \tau^{1.5} \cdot M^{1.5}}}\leq O\br{\frac{\log^{7} k}{\theta^3}\cdot (M+\|B\|^2)\cdot (M\cdot \tau)^{1.5}}
    \end{align*}

     Putting together this inequality with Eq.~\eqref{eq:upperboundinvprinclhs}, we finally get
     \begin{align*}
    \Big|\mathop{\mathbb{E}}_{x\sim\U_{n}}\Br{\Psi_{\theta}\br{\sum_{i=1}^n x_i A^i}}-\mathop{\mathbb{E}}_{y\sim\G^{n}}\Br{\Psi_{\theta}\br{\sum_{i=1}^n y_i A^i}}\Big|
    %\leq  O\br{\frac{\log^7 k}{\theta^3}\br{\frac{M^2}{t^2}+t\cdot \tau^{2.7}}}.
    \leq O\br{\frac{\log^{7} k}{\theta^3}\cdot (M+\|B\|^2)\cdot (M\cdot \tau)^{1.5}},
     \end{align*}
concluding the theorem proof. We now prove the claim above.

    \begin{proof}[Proof of Claim~\ref{claim:normQ}]
    Note that $Q=\sum_{i=1}^n\bx_iA^i$, where $\br{\bx_1,\ldots, \bx_n}$ is i.i.d.~with $\prob{\bx_i= 1}=\prob{\bx_i= -1}=\frac{1}{2t}$ and $\prob{\bx_i=0}=1-1/t$. Then using Fact~\ref{fac:ros}, we have
      \begin{eqnarray*}
      % \nonumber % Remove numbering (before each equation)
        \expec{\norm{Q}_{8p}^{8p}}^{1/8p}     &\leq&\sqrt{8p-1}\Big\|\br{\frac{1}{t}\sum_i \br{A^i}^2}^{1/2}\Big\|_{8p}+\br{8p-1}\br{\frac{1}{t}\sum_i\norm{A^i}_{8p}^{8p}}^{1/8p}\\
        &\leq&\sqrt{8p-1}\cdot\sqrt{\frac{M}{t}}\cdot  k^{\frac{1}{8p}}+\br{8p-1}\br{\frac{\tau^{8p-2}\cdot  k\cdot M}{t}}^{1/8p}
      \end{eqnarray*}
      where the second inequality used $\sum_i\br{A^i}^2\preceq M\cdot \id$ for both terms and
      $0\preceq A^i\preceq \tau\id$  for upper bounding the second term.
    %   Setting $p=10\log  k$, we have
    %   \[
    %   \expec{\norm{Q}_{8p}^{8p}}^{1/8p}\leq \sqrt{\frac{M\log k}{t}}+100\log k\cdot \tau^{0.9}.
    %   \]
      Setting $p=10\log  k$, $t=1/\tau$ we have
      \[
      \br{\expec{\norm{Q}_{8p}^{8p}}}^{1/8p}\leq O\br{\sqrt{\log k}\cdot\sqrt{\tau} \cdot\sqrt{M}+ \log k \cdot \tau\cdot (M/\tau)^{1/(80\log k)}}=O\br{\log k \cdot\sqrt{\tau} \cdot\sqrt{M}}.
      \]
      Thus, we have
      \[
      \expec{\norm{Q}^6}\leq\br{\expec{\norm{Q}_{8p}^{8p}}}^{\frac{3}{4p}}\leq O\br{\log^6 k \cdot \tau^3 \cdot M^3},
      \]
      where  in the first inequality note that the LHS is the spectral norm and the RHS is the $(8p)$-Schatten norm. This proves the first inequality in the claim statement. The second inequality in the claim follows by the exact same argument (since Fact~\ref{fac:ros} applies to even $\sum_i {\bg}_iA^i$).
    \end{proof}
    The proof of this claim concludes the proof of the theorem. Additionally, observe that since the largest Schatten power of $Q$ that we use is $8p=80\log k$, the proof of this theorem also works for $\bx$ that is $(80\log k)$-wise uniform.
    \end{proof}

    \subsection{Invariance principle for positive spectrahedra}
    We are now ready to prove our main theorem, which involves combining our anti-concentration Theorem~\ref{thm:gauanticonc1spec} and our invariance principle for Bentkus mollifier in Lemma~\ref{lem:invphitheta}.\footnote{We remark that our theorem statements should also hold true for a larger class of \emph{proper distributions} as considered in~\cite{harsha2013invariance}, which requires one to extend our main Theorem~\ref{eq:usingsendovinmainthm} to show that even the $4$th order  spectral derivatives can be bounded by $\|f^{(4)}\|_1$. We believe this should be possible and leave this to be made rigorous for future work.}

    \begin{theorem}
    \label{thm:invspecfinal}
    Let $k\geq 1$, $M\geq 1,\gamma\geq 1, \tau \in [0,1],\delta\in [0,1]$.
     Let $S_1,S_2$ be $(\tau,M)$-regular positive spectrahedra specified by matrices $\{A^1_1,\ldots,A^n_1,B_1\}\in \sym{k}$ and  $\{A^1_2,\ldots,A^n_2,B_2\}\in \sym{k}$ respectively satisfying
     $\norm{B_1},\norm{B_2}\leq~\gamma$.
     Let $S=S_1\cap S_2$.  If $\mu$ is a $\br{80\log k}$-wise uniform distribution over $\pmset{n}$, then

    \[
    \abs{\E_{\bx\sim\mu}[\bx\in S]-\E_{\bg\sim\G^n}[\bg\in S]}\leq C\cdot\br{M+\gamma^2}^{1/5}\cdot\log^{7/5}k\cdot M^{3/10}\cdot\tau^{3/10},
    \]
    for some universal constant $C>0$.
    \end{theorem}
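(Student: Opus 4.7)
The plan is to combine the three main ingredients developed earlier in the paper: the invariance principle for the Bentkus mollifier (Lemma~\ref{lem:invphitheta}), the Gaussian anti-concentration for positive spectrahedra (Corollary~\ref{cor:gauanticonc1spec}), and the mollification-to-indicator bridge (Fact~\ref{fac:ostanti}). The first move is to reduce the intersection $S=S_1\cap S_2$ to a single spectral constraint by passing to block-diagonal matrices: set $A^i:=\diag(A^i_1,A^i_2)$ and $B:=\diag(B_1,B_2)$, so that $[x\in S]=\Psi(A(x))$ where $A(x)=\sum_i x_i A^i - B$ and $\Psi(M)=[\lambda_{\max}(M)\le 0]$ is exactly the spectral orthant function of Fact~\ref{fac:ostanti}. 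Since $A^i_1,-A^i_2\succeq 0$ and each of $S_1,S_2$ is $(\tau,M)$-regular, the new block-diagonal matrices satisfy $\sum_i(A^i)^2\preceq M\cdot\id_{2k}$ and $\|A^i\|\le\tau$, so both Lemma~\ref{lem:invphitheta} and Corollary~\ref{cor:gauanticonc1spec} apply in ambient dimension $2k$ (changing $\log k$ only by a constant).

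Next, I would invoke Fact~\ref{fac:ostanti} with $\bx:=\bg\sim\G^n$ and $\bx':\sim\mu$. Choosing the Gaussian as the \emph{first} distribution is essential: the anti-concentration probability appearing in the conclusion of Fact~\ref{fac:ostanti} is over $\bx$, and Corollary~\ref{cor:gauanticonc1spec} supplies this in the Gaussian setting, whereas no comparable Boolean statement is available for $(80\log k)$-wise uniform $\mu$. Fix parameters $\delta,\theta>0$ to be chosen, and let $\alpha=\Lambda=\Theta(\theta\sqrt{\log(k/\delta)})$ as in Fact~\ref{fac:ost}. Fact~\ref{fac:ostanti} then bounds the target quantity by
\[
\eta+3\delta+\Pr_{\bg\sim\G^n}\!\bigl[\lambda_{\max}(A(\bg))\in(-\Lambda,\Lambda]\bigr],
\]
where $\eta=\max_{\beta\in\{\pm\alpha\}}|\E_\mu[\Psi_\theta(A(\bx)+\beta\id)]-\E_{\G^n}[\Psi_\theta(A(\bg)+\beta\id)]|$. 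The mollifier closeness $\eta$ is controlled via Lemma~\ref{lem:invphitheta} applied to the shifted offset $B-\beta\id$ (noting $\|B-\beta\id\|\le\gamma+\Lambda$), which yields $\eta=O(\log^7 k\cdot(M+\gamma^2)\cdot(M\tau)^{3/2}/\theta^3)$; crucially, Lemma~\ref{lem:invphitheta} already holds under $(80\log k)$-wise uniformity. The anti-concentration term is $O(\Lambda)$ by Corollary~\ref{cor:gauanticonc1spec}, valid once $\Lambda\ge 20\tau\log k$, which is satisfied throughout our parameter regime.

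Finally, I would take $\delta$ to be a sufficiently small polynomial in the target error (so that $\log(k/\delta)=O(\log k)$) and choose $\theta$ to balance the remaining terms $\theta\sqrt{\log k}$ against $\log^7 k\cdot(M+\gamma^2)(M\tau)^{3/2}/\theta^3$, which after routine algebra yields the stated power-law bound $C(M+\gamma^2)^{1/5}\log^{7/5}k\cdot M^{3/10}\tau^{3/10}$. All the heavy lifting (the Fr\'echet-derivative bounds for the Bentkus mollifier in Section~\ref{sec:spectralderivatives}, the Littlewood--Offord-type anti-concentration in Section~\ref{sec:geometric}, and the matrix-Rosenthal-based hybrid argument in Lemma~\ref{lem:invphitheta}) has already been carried out. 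The only subtle point in this final assembly is the choice of $\bg$ as the first distribution in Fact~\ref{fac:ostanti}, which sidesteps the lack of a Boolean anti-concentration statement under limited independence; everything else reduces to parameter optimization.
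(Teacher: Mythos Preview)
Your proposal is correct and follows essentially the same approach as the paper: block-diagonalize to reduce to a single spectral constraint, apply Fact~\ref{fac:ostanti} with the Gaussian as the distribution on which the anti-concentration term lands, bound $\eta$ via Lemma~\ref{lem:invphitheta} (using its validity under $(80\log k)$-wise uniformity), bound the boundary probability via Corollary~\ref{cor:gauanticonc1spec}, and balance $\theta$. You in fact make explicit the one genuinely delicate point---that $\bg$ must be the \emph{first} argument in Fact~\ref{fac:ostanti} because only Gaussian anti-concentration is available under limited independence---which the paper's proof uses but does not spell out.
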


    \begin{proof}
     Again for notational simplicity, let $A^i=\diag\br{A^i_1,A^i_2}$ and $B=\diag\br{B_1, B_2}$ be block diagonal matrices. We conclude the result by combining Fact~\ref{fac:ostanti}, Lemma~\ref{lem:invphitheta} and Corollary~\ref{thm:gauanticonc1spec} as follows: first Lemma~\ref{lem:invphitheta} implies
    \begin{align*}
    \begin{aligned}
    % \nonumber % Remove numbering (before each equation)
      &\abs{\E_{\bx\sim\mu}\Br{\Psi_{\theta}\br{\sum_{i=1}^n\bx_iA^i-B}}-\E_{\bg\sim\G^{n}}\Br{\Psi_{\theta}\br{\sum_{i=1}^n\bg_iA^i-B}}} \leq O\br{\frac{\log^{7} k}{\theta^3}\cdot (M+\|B\|^2)\cdot (M\cdot \tau)^{1.5}},
      \end{aligned}
    \end{align*}
    In particular, using Fact~\ref{fac:ostanti} (for $D=B-\beta\cdot \id$ and $D=B+\beta\cdot \id$), the ``if" condition of Fact~\ref{fac:ostanti} is satisfied with
    $$
    \eta=O\br{\frac{\log^{7} k}{\theta^3}\cdot \br{M+(\gamma+\beta)^2}\cdot (M\cdot \tau)^{1.5}}
    $$
    where $\beta=O(\theta\cdot \sqrt{\log k/\delta})$.
       In particular, Fact~\ref{fac:ostanti} and Corollary~\ref{cor:gauanticonc1spec} now together imply that
      \begin{align*}
      &\abs{\E_{\bx\sim \mu}\Br{\Psi\br{\sum_{i=1}^n\bx_iA^i-B}}-\E_{\bg\sim \G^n}\Br{\Psi\br{\sum_{i=1}^n\bg_iA^i-B}}}\\
      &\leq\gamma+3\delta+\Pr_{\bg\sim \G^n}\Br{\lambda_{\max}\br{\sum_{i=1}^n\bg_iA^i-B}\in[-\Lambda,\Lambda]}\\
      &=O\br{\frac{\log^{7} k}{\theta^3}\cdot \br{M+\br{\gamma+\theta\cdot \sqrt{\log (k/\delta)}}^2}\cdot  (M\cdot \tau)^{1.5}+\delta+\Lambda}\\
       &\leq O\br{\frac{\log^{7} k}{\theta^3}\cdot \br{M+\br{\gamma+\sqrt{\log (k/\delta)}}^2}\cdot  (M\cdot \tau)^{1.5}+\delta+\Lambda}
      \end{align*}
      Let us fix
       \begin{align*}
     % \nonumber % Remove numbering (before each equation)
       \theta\leftarrow\delta, \quad
       \theta\leftarrow\Lambda, \quad
        \br{ (M\cdot \tau)^{1.5}\cdot\log^{7} k\cdot \br{M+\br{\gamma+\sqrt{\log k}}^2}}^{1/5}\leftarrow \theta.
     \end{align*}
      This gives us
       \begin{align*}
      &\abs{\E_{\bx\sim \mu}\Br{\Psi\br{\sum_{i=1}^n\bx_iA^i-B}}-\E_{\bg\sim \G^n}\Br{\Psi\br{\sum_{i=1}^n\bg_iA^i-B}}}\leq \br{ (M\cdot \tau)^{1.5}\cdot\log^{7} k\cdot (M+\gamma^2)}^{1/5}.
      \end{align*}
    \end{proof}

    \subsection{Application: Pseudorandom generators for positive spectrahedra.}
    \label{sec:prg}
    We are now ready to describe our pseudorandom generator for fooling positive spectrahedra. Our $\PRG$ is based on the well-known construction of Meka and Zuckerman~\cite{meka2013pseudorandom} which we describe~now.  We remark that the same $\PRG$ (with minor modifications and different parameter settings) was used in~\cite{meka2013pseudorandom,harsha2013invariance,servedio2017fooling} in order to obtain $\PRG$s for polytopes.

     \paragraph{Meka-Zuckerman PRG.}
     We begin by describing the Meka-Zuckerman $\PRG$. Let us fix the parameters $\delta\in (0,1)$,
     $
     \tau= \Omega(\delta^{10/3}/(\log^{5}k\cdot M\cdot (M+ \gamma^{2}) ))
     $ so that we have $\br{M+\gamma^2}^{1/5}\cdot\log^{7/5}k\cdot M^{3/10}\cdot\tau^{3/10}=\delta$ (where the LHS of this equality is the upper bound obtained in our invariable principle proof). Let $t=\lceil 1/\tau\rceil$ and consider the family of $(80\log k)$-wise uniform functions $\Hi=\{h:[n]\rightarrow [t]\}$, i.e., for every for every subset $I\subseteq [n]$ of size at most $80\log k$,  and $b\in [t]^I$, we have
     $$
     \Pr_{\bh\in \Hi}\Br{\bh(i)=b_i}=\frac{1}{t^{|I|}},
     $$
     where the probability is taken over a uniformly random function $h\in \Hi$. Efficient constructions of such hash function families are known with $|\Hi|=O(n^{80\log k})$. For simplicity (as in the proof of~\cite{meka2013pseudorandom,harsha2013invariance}), we also assume that for every $j\in [t]$, we have $|h^{-1}(j)|=n/t$.  Let $m=n/t$ and $G_0:\01^s\rightarrow \pmset{m}$ generate a $(80\log k)$-wise  uniform distribution over $\pmset{m}$, i.e., for every $I\subseteq [n]$ of size at most $80\log k$ and $b\in \pmset{I}$, we have
     $$
     \Pr_{\substack{\bz\in \01^s\\\bx=G_0(\bz)}}[\bx_i = \bb_i \text{ for all } i\in I]=\frac{1}{2^{|I|}},
     $$
     where the probability is taken over uniformly random $z\in \01^s$. It is well-known by~\cite{naor1993small} that efficient  constructions of generators $G_0$ are known for $s=O(\log k \log n)$.  Finally, we are ready to describe the Meka-Zuckerman generator: for a given hash function family $\Hi$ and generator $G_0$, define $G:\Hi\times (\01^{s})^t\rightarrow \pmset{n}$ by
     $$
     G(h,z^1,\ldots,z^t)=x, \qquad \text{ where } x_{\vert h^{-1}(i)}=G_0(z^i) \text{ for } i\in [t].
     $$
     Clearly the seed length of this generator is
     $$
     O\br{(\log n)(\log k)+(\log n)(\log k)\frac{1}{\tau}}=O((\log n)(\log k)/\tau)=(\log n)\cdot\poly(\log k,M,1/\delta,\gamma),
     $$
     where the first term is the logarithm of the  number of elements of the hash function family $|\Hi|$, the second term because we have  $s=O((\log n) (\log k))$ and recall that we picked $t=O(1/\tau)$ and the final equality used the bound on $\tau$ we fixed at the start of the proof.

     We now restate our main theorem and prove it.

    \begin{theorem}
    Let $\delta \in (0,1)$, $k,n,M\geq 1$ and $\tau\leq \delta^{10/3}/(\log^{5}k\cdot M\cdot (M+ \gamma^{2}) )$.
    %Let $\tau\leq 1$, $\lambda>0, \alpha\in [0,1]$.
     Let $S_1,S_2$ be $(\tau,M)$-regular positive spectrahedra specified by matrices $\{A^1_1,\ldots,A^n_1,B_1\}\in \sym{k}$ and  $\{A^1_2,\ldots,A^n_2,B_2\}\in \sym{k}$ with $\|B_1\|,\|B_2\|\leq \gamma$. Let $S=S_1\cap S_2$.
    There exists a $\PRG$ $G:\01^r\rightarrow \pmset{n}$ with
    $$
    r=(\log n)\cdot\poly(\log k,M,1/\delta,\gamma)
    $$
    that $\delta$-fools $S$ with respect to the uniform distribution.
    \end{theorem}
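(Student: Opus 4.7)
\begin{proofof}{Theorem (PRG for positive spectrahedra)}
The strategy is to show that the Meka--Zuckerman generator $G$ described just before the theorem statement $\delta$-fools $S$ by a double application of the invariance principle in Theorem~\ref{thm:invspecfinal}: once with $\mu = \U_n$ to compare uniform with Gaussian, and once with $\mu$ equal to the output distribution of $G$ to compare $G$ with Gaussian, followed by a triangle inequality. The reason we can hope for this to work is that Theorem~\ref{thm:invspecfinal} only requires the input distribution to be $(80\log k)$-wise uniform, and the MZ construction is designed to be ``bucket-wise $(80\log k)$-wise uniform'' in exactly the sense used by the proof of the underlying invariance principle (Lemma~\ref{lem:invphitheta}).

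First I would fix parameters. Choose $\tau = \Theta\!\br{\delta^{10/3}/(\log^{5}k\cdot M\cdot (M+\gamma^2))}$ so that the right-hand side of Theorem~\ref{thm:invspecfinal} is at most $\delta/2$, and set $t = \lceil 1/\tau\rceil$ and $m = n/t$ as in the MZ construction. Since $\U_n$ is trivially $(80\log k)$-wise uniform, Theorem~\ref{thm:invspecfinal} directly yields
\[
\abs{\E_{\bx\sim \U_n}[\bx\in S]-\E_{\bg\sim \G^n}[\bg\in S]}\leq \delta/2.
\]

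Next I would argue that the same invariance-principle bound holds when $\bx$ is drawn from the MZ distribution $\mu_G$, i.e.\ $\bx = G(\bh, \bz^1,\ldots,\bz^t)$ with $\bh \in \Hi$ a $(80\log k)$-wise uniform hash and each $\bz^i$ an independent seed for the $(80\log k)$-wise uniform generator $G_0$. The key point is to re-run the proof of Lemma~\ref{lem:invphitheta} verbatim using this $\bh$ as the bucketing hash: the hybrid swaps one bucket of Boolean bits for Gaussians at a time, and within a single bucket all that is invoked is (i) matching of first and second moments, and (ii) the matrix Rosenthal bound of Fact~\ref{fac:ros} on $\|\sum_i \bx_i A^i\|_{8p}^{8p}$ for $p = 10 \log k$, which (as explicitly noted in Fact~\ref{fac:ros} and in the statement of Lemma~\ref{lem:invphitheta}) holds whenever the coordinates within the bucket are $(80\log k)$-wise uniform. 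Since $G_0$ exactly supplies $(80\log k)$-wise uniform bits on each bucket independently of the other buckets and of the hash, every line of the proof of Lemma~\ref{lem:invphitheta} goes through, yielding
\[
\abs{\E_{\bx\sim \mu_G}\!\Br{\Psi_\theta\br{\sum_i \bx_i A^i - B}} - \E_{\bg\sim \G^n}\!\Br{\Psi_\theta\br{\sum_i \bg_i A^i - B}}} \leq O\!\br{\tfrac{\log^7 k}{\theta^3}(M+\|B\|^2)(M\tau)^{3/2}}.
\]
Combining this Bentkus-mollifier closeness with the Boolean anti-concentration statement of Corollary~\ref{cor:gauanticonc1spec} via Fact~\ref{fac:ostanti}, exactly as in the derivation of Theorem~\ref{thm:invspecfinal}, upgrades it to CDF closeness and gives $\abs{\E_{\bx\sim \mu_G}[\bx\in S]-\E_{\bg\sim \G^n}[\bg\in S]}\leq \delta/2$. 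A triangle inequality then produces the desired $\delta$-fooling bound.

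Finally, the seed length accounting: the $(80\log k)$-wise uniform hash $\Hi$ can be sampled with $O(\log k\cdot \log n)$ bits, and each of the $t$ independent invocations of $G_0$ costs $O(\log k\cdot \log m)$ bits (by~\cite{naor1993small}), for a total seed length of $O(t\cdot \log k\cdot \log n) = (\log n)\cdot \mathrm{poly}(\log k, M, 1/\delta, \gamma)$ after substituting the chosen value of $\tau$. The main conceptual obstacle is the one addressed in the middle paragraph: the MZ distribution is \emph{not} $(80\log k)$-wise uniform on all of $\pmset{n}$, so one cannot simply invoke Theorem~\ref{thm:invspecfinal} as a black box; one has to open up its proof and observe that the only global structure used is exactly the bucketed Rosenthal step, which survives the MZ construction by design.
\end{proofof}
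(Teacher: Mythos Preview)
Your proof is correct and follows essentially the same route as the paper: establish $\abs{\E_{\mu_G}[\bx\in S]-\E_{\G^n}[\bg\in S]}\le \delta/2$ via the Bentkus-mollifier invariance principle plus anti-concentration (Fact~\ref{fac:ostanti} and Corollary~\ref{cor:gauanticonc1spec}), and combine by triangle inequality with the $\U_n$-versus-$\G^n$ bound of Theorem~\ref{thm:invspecfinal}. The paper in fact glosses over the triangle-inequality step that you make explicit, so your write-up is slightly more complete on that point.

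However, the ``conceptual obstacle'' you flag is not actually present: the MZ output distribution \emph{is} globally $(80\log k)$-wise uniform on $\pmset{n}$. For any fixed hash $h$ and any $I\subseteq[n]$ with $|I|\le 80\log k$, partition $I$ as $I_j=I\cap h^{-1}(j)$; each $|I_j|\le 80\log k$, so $G_0(z^j)_{I_j}$ is uniform by the $(80\log k)$-wise property of $G_0$, and the $z^j$ are independent across buckets, hence $x_I$ is uniform. The paper uses exactly this observation and simply invokes Lemma~\ref{lem:invphitheta} as a black box with $\mu=\mu_G$, rather than re-running the hybrid with the MZ hash in the role of the Lindeberg bucketing. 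Your open-the-box argument (reusing the MZ hash as the hybrid hash) also works, but it is not needed.
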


    The proof of this theorem is a generic statement that allows one to go from invariance principles proven using the proof techniques  to construct $\PRG$s. The proof uses the same proof ideas of Harsha, Klivans and Meka~\cite[Section~7.2]{harsha2013invariance} (except that now we directly proved \emph{Boolean} anti-concentration instead of the weaker \emph{Gaussian} anti-concentration as proven by~\cite{harsha2013invariance}). We provide the  proof  below for completeness.

    \begin{proof}
      Again for notational simplicity, let $A^i=\diag\br{A^i_1,A^i_2}$ and $B=\diag\br{B_1, B_2}$ be block diagonal matrices.
      The $\PRG$ $G$ will be the Meka-Zuckerman $\PRG$ defined above, so the seed length $r=(\log n)\cdot\poly(\log k,M,1/\delta,\gamma)$ immediately follows.
    \begin{align}
    \label{eq:finalprganti}
    \begin{aligned}
      \abs{\E_{\bx\sim\U_r}\Br{\Psi_{\theta}\br{\sum_{i=1}^n\br{G(\bx)}_iA^i-B}}-\E_{\bg\sim\G^{n}}\Br{\Psi_{\theta}\br{\sum_{i=1}^n\bg_iA^i-B}}}
      \leq O\br{\frac{\log^{7} k}{\theta^3}\cdot (M+\|B\|^2)\cdot (M\cdot \tau)^{1.5}},
      \end{aligned}
    \end{align}
    where we used the fact that $G(x)$ for uniformly random $x\in \01^r$ generates a $(80\log k)$-wise uniform distribution and Lemma~\ref{lem:invphitheta} holds for every
     $(80\log k)$-wise uniform distribution $\mu$. Repeating the same calculation that we did in the proof of Theorem~\ref{thm:invspecfinal}, we get
      \begin{align*}
      &\abs{\E_{\bx\sim \U_r}\Br{\Psi\br{\sum_{i=1}^n\br{G(\bx)}_iA^i-B}}-\E_{\bg\sim \G^n}\Br{\Psi\br{\sum_{i=1}^n\bg_iA^i-B}}}\\
      &\leq\gamma+3\delta+\Pr_{\bg\sim \G^n}\Br{\lambda_{\max}\br{A\br{\bg}}\in(-\Lambda,\Lambda]}\\
      &=O\br{\frac{\log^{7} k}{\theta^3}\cdot (M+\|B\|^2)\cdot (M\cdot \tau)^{1.5} +\delta+\Lambda},
      \end{align*}
      and using our assumption on $\tau$ (and the same parameters as in Theorem~\ref{thm:invspecfinal}), this implies that
      $$
      \abs{\E_{\bx\sim\U_r}[G(\bx)\in S]-\E_{\bg\sim\G^n}[\bg\in S]}\leq \delta,
      $$
    hence proving our theorem statement. 
    \end{proof}
\newcommand{\etalchar}[1]{$^{#1}$}

    \appendix
    \addtocontents{toc}{\protect\setcounter{tocdepth}{1}}

    \section{Proof of Lemma~\ref{lem:casev}: Case 2}
    \label{app:claim5proof}
    Recall that the goal is to prove the following inequality
    \begin{equation}
    \label{eq:restatecasevversion2}
    \abs{\sum_{i_1\neq i_2\neq i_3}\frac{\frac{\ophi\br{x_{i_1}}-\ophi\br{x_{i_3}}}{x_{i_3}-x_{i_1}}-\frac{\ophi\br{x_{i_1}}-\ophi\br{x_{i_2}}}{x_{i_2}-x_{i_1}}}{x_{i_3}-x_{i_2}}G\br{x}H_{i_1,i_2}H_{i_2,i_3}H_{i_3,i_1}}\leq O\br{\Delta\log^2k\norm{H}^3}
    \end{equation}
       First observe that the LHS of the inequality above can be rephrased as follows.
    \begin{align}\label{eqn:casevversion2}
    &\abs{\sum_{i_1\neq i_2\neq i_3}\frac{\frac{\ophi\br{x_{i_1}}-\ophi\br{x_{i_3}}}{x_{i_3}-x_{i_1}}-\frac{\ophi\br{x_{i_1}}-\ophi\br{x_{i_2}}}{x_{i_2}-x_{i_1}}}{x_{i_3}-x_{i_2}}G\br{x}H_{i_1,i_2}H_{i_2,i_3}H_{i_3,i_1}}\nonumber\\
     &=\abs{2\sum_{i_1\neq i_2\neq i_3\atop x_{i_3}>x_{i_2}}\frac{\frac{g'\br{x_{i_1}}g\br{x_{i_3}}-g\br{x_{i_1}}g'\br{x_{i_3}}}{x_{i_3}-x_{i_1}}g\br{x_{i_2}}-\frac{g'\br{x_{i_1}}g\br{x_{i_2}}-g\br{x_{i_1}}g'\br{x_{i_2}}}{x_{i_2}-x_{i_1}}g\br{x_{i_3}}}{x_{i_3}-x_{i_2}}G\br{x_{-\set{i_1,i_2,i_3}}}H_{i_1,i_2}H_{i_2,i_3}H_{i_3,i_1}}
    \end{align}
     Providing an upper bound on this consists of several lemmas and the result is concluded by combing all of them via triangle inequalities. To keep the expressions short, we use the following notations to represent Eq.~\eqref{eq:restatecasevversion2}, which are clear in the context.
    \begin{align}
    \label{eq:mainequationwecare}
    \abs{2\sum_{\substack{i_1\neq i_2\neq i_3\\ x_{i_3}>x_{i_2}}}\frac{\frac{\braket{i_1}'\braket{i_3}-\braket{i_1}\braket{i_3}'}{[i_3-i_1]}\braket{i_2}-\frac{\braket{i_1}'\braket{i_2}-\braket{i_1}\braket{i_2}'}{[i_2-i_1]} \braket{i_3}}{[i_3-i_2]}},
    \end{align}
    where we implicitly hide the $G\br{x_{-\set{i_1,i_2,i_3}}}H_{i_1,i_2}H_{i_2,i_3}H_{i_3,i_1}$ term. We first give a sketch of how we are going to upper bound this inequality and break it into subsections.
    \begin{equation}\label{eqn:rk1}
    \eqref{eq:mainequationwecare}=\underbrace{\frac{\braket{i_1}\braket{i_3}'-\braket{i_1}'\braket{i_3}}{[i_3-i_1]}\cdot\frac{\braket{i_3}-\braket{i_2}}{[i_3-i_2]}}_{Section~\ref{sec:2.1},\hspace{1mm} Lemma~\ref{lem:3}}-\underbrace{\frac{\frac{\braket{i_1}\braket{i_3}'-\braket{i_1}'\braket{i_3}}{[i_3-i_1]}-\frac{\braket{i_1}\braket{i_2}'-\braket{i_1}'\braket{i_2}}{[i_2-i_1]}}{[i_3-i_2]}\braket{i_3}}_{(\star)}.    
    \end{equation}

    We now break up $(\star)$ into two cases
    \begin{align}
        \label{daggerstar}
    (\star)=\underbrace{(\star)\cdot  \mathbb{I}[\min \{x_{i_1},x_{i_3}\}>x_{i_2}]}_{(\dagger)}+ \underbrace{(\star) \cdot  \mathbb{I}[x_{i_1}<x_{i_2}<x_{i_3}]}_{(\dagger \dagger)}.
    \end{align}
    Note that there are the only two cases we need to handle since  by symmetry between $i_2$ and $i_3$, we can assume $x_{i_3}>x_{i_2}$, without loss of generality. Now we bound these two terms, separately.
    $$
    (\dagger)=\underbrace{\frac{\frac{\braket{i_3}'-\braket{i_1}'}{[i_3-i_1]}-\frac{\braket{i_2}'-\braket{i_1}'}{[i_2-i_1]}}{[i_3-i_2]}\braket{i_1}\braket{i_3}}_{Section~\ref{sec:2.2},\hspace{1mm} Lemma~\ref{lem:5}}-\underbrace{\frac{\frac{\braket{i_3}-\braket{i_1}}{[i_3-i_1]}-\frac{\braket{i_2}-\braket{i_1}}{[i_2-i_1]}}{[i_3-i_2]}\braket{i_1}'\braket{i_3}}_{ Section~\ref{sec:2.2},\hspace{1mm} Lemma~\ref{lem:333}}.
    $$
    and
    \begin{equation}\label{eqn:rk2}
        (\dagger\dagger )= \underbrace{\frac{\frac{\braket{i_1}\braket{i_3}'-\braket{i_3}\braket{i_3}'}{[i_3-i_1]}-\frac{\braket{i_1}\braket{i_2}'-\braket{i_2}\braket{i_2}'}{[i_2-i_1]}}{[i_3-i_2]}\braket{i_3}}_{Section~\ref{sec:2.3},\hspace{1mm} Lemma~\ref{lem:casev111}}+ \underbrace{\frac{\frac{\braket{i_3}'-\braket{i_1}'}{[i_3-i_1]}\braket{i_3}-\frac{\braket{i_2}'-\braket{i_1}'}{[i_2-i_1]}\braket{i_2}}{[i_3-i_2]}\braket{i_3}}_{(\P)\hspace{1mm} Section~\ref{sec:2.3}},
    \end{equation}
    
    and
    \begin{align}
        \label{eq:Pterm}
    (\P)= \underbrace{
    \frac{\braket{i_3}'-\braket{i_1}'}{[i_3-i_1]}\cdot\frac{\braket{i_3}-\braket{i_2}}{[i_3-i_2]}\cdot\braket{i_3}
    }_{Section~\ref{sec:2.4},\hspace{1mm} Lemma~\ref{lem:444}}+ \underbrace{\frac{\frac{\braket{i_3}'-\braket{i_1}'}{[i_3-i_1]}-\frac{\braket{i_2}'-\braket{i_1}'}{[i_2-i_1]}}{[i_3-i_2]}\braket{i_2}\braket{i_3}}_{Section~\ref{sec:2.4},\hspace{1mm}Lemma~\ref{lem:555}}
    \end{align}
    Finally in order to upper bound Eq.~\eqref{eq:mainequationwecare}, we simply bound each of these terms by $O\br{\Delta\log^3k\norm{H}^3}$ in the respective sections (as underbraced by the terms).

    \subsection{Upper bounding first term in Eq.~\eqref{eqn:rk1}}
    \label{sec:2.1}

    \begin{lemma}\label{lem:3}
    \[
    \abs{\sum_{i_1\neq i_2\neq i_3:\atop x_{i_3}>x_{i_2}}\frac{\braket{i_1}\braket{i_3}'-\braket{i_1}'\braket{i_3}}{[i_3-i_1]}\cdot\frac{\braket{i_3}-\braket{i_2}}{[i_3-i_2]}}\leq O\br{\Delta^2\cdot\log^2k\cdot \norm{H}^3}.
    \]
    \end{lemma}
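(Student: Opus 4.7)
My plan is to apply the mean-value theorem to the two divided differences $A_{i_1,i_3}:=\frac{\braket{i_1}\braket{i_3}'-\braket{i_1}'\braket{i_3}}{[i_3-i_1]}$ and $B_{i_2,i_3}:=\frac{\braket{i_3}-\braket{i_2}}{[i_3-i_2]}$ separately, then to bookkeep the resulting products against appropriate derivatives of Bentkus-type functions. Writing $A_{i_1,i_3} = g(x_{i_1})g(x_{i_3})\cdot\frac{\ophi(x_{i_3})-\ophi(x_{i_1})}{x_{i_3}-x_{i_1}} = g(x_{i_1})g(x_{i_3})\,\ophi'(\eta)$ for some $\eta\in[x_{i_1},x_{i_3}]$, and combining Fact~\ref{fac:ophimonotone} with Lemma~\ref{lem:ophiupperbound} (plus monotonicity of $\ophi$), I obtain $|A_{i_1,i_3}|\leq 3\Delta\bigl(g(x_{i_3})g'(x_{i_1})+g(x_{i_1})g'(x_{i_3})\bigr)$. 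Similarly $B_{i_2,i_3}=g'(\xi)$ for some $\xi\in(x_{i_2},x_{i_3})$, and using monotonicity of $g'$ on each half-line, the precise bound on $g'(\xi)$ depends on the signs of $x_{i_2}$ and $x_{i_3}$.

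Let $T=\{j:x_j<0\}$, so the Case~2 hypothesis gives $|T|\leq 3\log k$; note that $x_{i_3}>x_{i_2}$ together with $i_2\notin T$ forces $i_3\notin T$. The triple sum therefore splits into three subcases: Case~(A) $i_2,i_3\notin T$ where $B\leq g'(x_{i_2})$ by monotonicity of $g'$ on $[0,\infty)$; Case~(B.1) $i_2,i_3\in T$ where $B\leq g'(x_{i_3})$ by monotonicity of $g'$ on $(-\infty,0]$; and Case~(B.2) $i_2\in T$, $i_3\notin T$ where I only use $B\leq\frac{1}{\sqrt{2\pi}}$. In each subcase, I substitute the bound on $B$, combine with the two-term bound on $|A|$, and identify the resulting product as a derivative of either $G(x)$ (in Case~(A)) or the auxiliary Bentkus function $F_{i_2}(y):=\prod_{j\neq i_2}g(y_j)$ on $k-1$ variables (in Cases~(B.1) and (B.2)).

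Concretely, in Case~(A) the typical product $g(x_{i_3})g'(x_{i_1})g'(x_{i_2})\,G(x_{-\{i_1,i_2,i_3\}})$ arising from $|A|\cdot B\cdot G$ telescopes via the identity $g(x_{i_3})G(x_{-\{i_1,i_2,i_3\}})=G(x_{-\{i_1,i_2\}})$ into exactly $\partial_{i_1}\partial_{i_2}G(x)$; combined with the Schur-type bound $\sum_{i_3}|H_{i_2,i_3}H_{i_3,i_1}|\leq\sqrt{(H^2)_{i_1,i_1}(H^2)_{i_2,i_2}}\leq\|H\|^2$ from Eq.~\eqref{eqn:h3} and Fact~\ref{fac:benktus2} giving $\|G^{(2)}\|_1\leq O(\log k)$, this case contributes $O(\Delta\log k\,\|H\|^3)$. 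In Case~(B.1) the analogous product becomes $g(x_{i_3})\partial_{i_1}\partial_{i_3}F_{i_2}(x)\leq|\partial_{i_1}\partial_{i_3}F_{i_2}(x)|$; applying Fact~\ref{fac:benktus2} to $F_{i_2}$ gives $\|F_{i_2}^{(2)}\|_1\leq O(\log k)$ for each $i_2$, and summing over the at most $3\log k$ indices $i_2\in T$ yields the \emph{dominant} contribution $O(\Delta\log^2 k\,\|H\|^3)$. Case~(B.2) is analogous but only requires $\|F_{i_2}^{(1)}\|_1\leq O(\sqrt{\log k})$ and is subdominant. The second half of $|A|$ (with $g(x_{i_1})g'(x_{i_3})$ in place of $g(x_{i_3})g'(x_{i_1})$) is handled symmetrically; the only difference is in Case~(B.1), where we encounter $g'(x_{i_3})^2$, bounded by $g'(x_{i_3})/\sqrt{2\pi}$, which forces us to use only $\|F_{i_2}^{(1)}\|_1$ and thus contributes a subdominant $O(\Delta\log^{1.5}k\,\|H\|^3)$.

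The main obstacle is Case~(B.1): a naive sum over all $i_2\in[k]$ would produce a spurious factor of $k$ that the concentration of $G$-derivatives cannot absorb. The Case~2 hypothesis $|T|\leq 3\log k$ is what converts the sum over $i_2\in T$ into just another $\log k$ factor, and this is precisely where the second power of $\log k$ in the target bound originates. Adding all three subcases (and the symmetric analysis for the second half of $|A|$) gives the claimed bound $O(\Delta^2\log^2 k\,\|H\|^3)$; in fact my analysis yields the tighter $O(\Delta\log^2 k\,\|H\|^3)$.
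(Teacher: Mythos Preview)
Your argument is correct and follows essentially the same route as the paper: bound each divided-difference factor by a mean-value-type estimate, case-split on the signs of the relevant $x_{i_j}$, and control each subcase via the Case~2 hypothesis $|T|\leq 3\log k$ together with the Bentkus derivative bounds of Fact~\ref{fac:benktus2} and Eq.~\eqref{eqn:h3}. The only differences are organizational --- you bound the first factor through the $\ophi$ representation (retaining the $g(x_{i_1}),g(x_{i_3})$ factors) rather than via Claim~\ref{claim:gg'}, which lets you split on the signs of $(x_{i_2},x_{i_3})$ instead of $(x_{i_1},x_{i_2})$ and package the negative-index subcases cleanly through the auxiliary function $F_{i_2}$, yielding the same dominant $O(\Delta\log^2 k\,\|H\|^3)$ term.
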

    % \begin{remark}\label{rk:1}
    %   Combing Lemma~\ref{lem:3} and Eq.~\eqref{eqn:rk1}, it suffice to upper bound
    %   \[
    %   \abs{\sum_{i_1\neq i_2\neq i_3:\atop x_{i_3}>x_{i_2}}\frac{\frac{\braket{i_1}\braket{i_3}'-\braket{i_1}'\braket{i_3}}{[i_3-i_1]}-\frac{\braket{i_1}\braket{i_2}'-\braket{i_1}'\braket{i_2}}{[i_2-i_1]}}{[i_3-i_2]}\braket{i_3}}
    %   \]
    %   to conclude Eq.~\eqref{eq:restatecasevversion2}
    % \end{remark}

    \begin{proof}[Proof of Lemma~\ref{lem:3}]
      We apply Claim~\ref{claim:gg'} to the first sum and obtain $O\br{\Delta \max\set{g'\br{x_{i_1}},g'\br{x_{i_3}}}}$ (note that we have $\max\{\cdot,\cdot\}$ to compensate for the fact that $x_{i_1}\geq x_{i_3}$ or $x_{i_3}\geq x_{i_1}$). Therefore, the left hand side in Lemma~\ref{lem:3} can be upper bounded by
      \begin{eqnarray}
      % \nonumber % Remove numbering (before each equation)
        &&O\br{\sum_{i_1\neq i_2\neq i_3:\atop x_{i_3}>x_{i_2}}\Delta\abs{\max\set{
        g'\br{x_{i_1}},g'\br{x_{i_3}}}\cdot\frac{g\br{x_{i_3}}-g\br{x_{i_2}}}{x_{i_3}-x_{i_2}}\cdot G\br{x_{-\set{i_1,i_2,i_3}}}H_{i_1,i_2}H_{i_2,i_3}H_{i_3,i_1}}}\nonumber\\
        &\leq& O\br{\sum_{i_1\neq i_2\neq i_3:\atop x_{i_3}>x_{i_2}\geq 0,x_{i_1}\geq 0}\br{\cdots}+\sum_{i_1\neq i_2\neq i_3:\atop x_{i_3}>x_{i_2}\geq 0,x_{i_1}<0}\br{\cdots}+\sum_{i_1\neq i_2\neq i_3:\atop x_{i_3}>x_{i_2},x_{i_2}<0, x_{i_1}\geq 0}\br{\cdots}+\sum_{i_1\neq i_2\neq i_3:\atop x_{i_3}>x_{i_2}, x_{i_1}<0,x_{i_2}<0}\br{\cdots}}\nonumber\\
        &&\label{eqn:casev1}
      \end{eqnarray}

    \textbf{First term in Eq.~\eqref{eqn:casev1}.} Note that $g\br{x}\geq\frac{1}{2}$ if $x\geq 0$. Since $g'$ is monotone decreasing in the interval $[0,\infty)$, the first summation is upper bounded by
    \begin{eqnarray}
    % \nonumber % Remove numbering (before each equation)
      &&O\br{\abs{\sum_{i_1\neq i_2\neq i_3:\atop x_{i_3}>x_{i_2}\geq 0,x_{i_1}\geq 0}\Delta\abs{\max\set{g'\br{x_{i_1}}g'\br{x_{i_2}}G\br{x_{-i_3}},g'\br{x_{i_3}}g'\br{x_{i_2}}G\br{x_{-i_1}}}H_{i_1,i_2}H_{i_2,i_3}H_{i_3,i_1}}}}\nonumber\\
        &\leq&O\br{\Delta\cdot\|G^{(2)}\|_1\cdot\max_{i_1,i_2}\sum_{i_3}\abs{H_{i_1,i_2}H_{i_2,i_3}H_{i_3,i_1}}}\nonumber\\
      &\leq&O\br{\Delta\cdot\log k\cdot\max_{i_1,i_2}\sum_{i_3}\abs{H_{i_1,i_2}H_{i_2,i_3}H_{i_3,i_1}}}\leq O\br{\Delta\cdot\log k\cdot\norm{H^3}}\label{eqn:casev++}
    \end{eqnarray}
    where the second inequality is from Fact~\ref{fac:benktus2} and the last inequality follows by Eq.~\eqref{eqn:h3}.

    \textbf{Second term in Eq.~\eqref{eqn:casev1}.} The second summation is upper bounded as follows. Again by the mean value theorem, we observe that
    \begin{eqnarray*}
    % \nonumber % Remove numbering (before each equation)
      &&O\br{\sum_{i_1\neq i_2\neq i_3:\atop x_{i_3}>x_{i_2}\geq 0,x_{i_1}<0}\Delta\abs{\max\set{g'\br{x_{i_1}}g'\br{x_{i_2}},g'\br{x_{i_3}}g'\br{x_{i_2}}}G\br{x_{-i_1}}H_{i_1,i_2}H_{i_2,i_3}H_{i_3,i_1}}} \\
      &\leq&O\br{\Delta\cdot\sum_{i_1:x_{i_1}<0}\onenorm{G^{(1)}\br{x_{-i_1}}}\max_{i_2}\sum_{i_3}\abs{H_{i_1,i_2}H_{i_2,i_3}H_{i_3,i_1}}+\onenorm{G^{(2)}\br{x_{-i_1}}}\max_{i_2,i_3}\abs{H_{i_1,i_2}H_{i_2,i_3}H_{i_3,i_1}}}\\
      &\leq& O\br{\Delta\cdot\log^{1.5}k\cdot\norm{H}^3},
    \end{eqnarray*}
     where the last inequality is from Fact~\ref{fac:benktus2}, Eq.~\eqref{eqn:h3} and the assumption that $\abs{\set{i:x_i\leq 0}}\leq 3\log k$.

    \textbf{Third term in Eq.~\eqref{eqn:casev1}.} Using the fact that $g'(\cdot)$ is bounded by a constant, the third summation is upper bounded by
    \begin{eqnarray}
    % \nonumber % Remove numbering (before each equation)
      &&O\br{\sum_{i_1\neq i_2\neq i_3:\atop x_{i_3}>x_{i_2},x_{i_2}<0, x_{i_1}\geq 0}\Delta\abs{\max\set{
        g'\br{x_{i_1}},g'\br{x_{i_3}}}\cdot G\br{x_{-\set{i_2,i_3}}}H_{i_1,i_2}H_{i_2,i_3}H_{i_3,i_1}}}\nonumber\\
      &=&O\br{\sum_{i_1\neq i_2\neq i_3:\atop x_{i_3}>x_{i_2}, x_{i_1}\geq 0,x_{i_2}<0,x_{i_3}\geq 0}\br{\cdots}+\sum_{i_1\neq i_2\neq i_3:\atop x_{i_3}>x_{i_2}, x_{i_1}\geq 0,x_{i_2}<0,x_{i_3}< 0}\br{\cdots}}.\label{eqn:3term}
    \end{eqnarray}
    For the first summation in Eq.~\eqref{eqn:3term}, using the fact that $g\br{x}\geq\frac{1}{2}$ when $x\geq 0$, it is upper bounded~by
    \begin{eqnarray*}
    % \nonumber % Remove numbering (before each equation)
      &&O\br{\Delta\sum_{i_1\neq i_2\neq i_3:\atop x_{i_3}>x_{i_2}, x_{i_1}\geq 0,x_{i_2}<0,x_{i_3}\geq 0}\abs{\max\set{
        g'\br{x_{i_1}},g'\br{x_{i_3}}}\cdot G\br{x_{-\set{i_2}}}H_{i_1,i_2}H_{i_2,i_3}H_{i_3,i_1}}} \\
      &\leq&O\br{\Delta\sum_{i_2:x_{i_2}<0}\|G^{(1)}\|_1\max_{i_1}\sum_{i_3}\abs{H_{i_1,i_2}H_{i_2,i_3}H_{i_3,i_1}}}\\
      &\leq&O\br{\Delta\sum_{i_2:x_{i_2}<0}\sqrt{\log k}\max_{i_1}\sum_{i_3}\abs{H_{i_1,i_2}H_{i_2,i_3}H_{i_3,i_1}}}\\
      &\leq&O\br{\Delta\cdot\log^{1.5}k\cdot\norm{H}^3},
    \end{eqnarray*}
    where the second inequality is from Fact~\ref{fac:benktus2}, and the last inequality used Eq.~\eqref{eqn:h3} and the assumption that $\abs{\set{i:x_i\leq 0}}\leq 3\log k$.

     In order to upper bound the second summation in Eq.~\eqref{eqn:3term}, first observe that both $g\br{\cdot}$ and $G\br{\cdot}$ are positive and upper bounded by $1$. Thus, Eq.~\eqref{eqn:3term} can be bounded as
     \begin{eqnarray*}
     % \nonumber % Remove numbering (before each equation)
       O\br{\Delta\sum_{i_2\neq i_3:\atop x_{i_2}<0,x_{i_3}<0}\sum_{i_1}\abs{H_{i_1,i_2}H_{i_2,i_3}H_{i_3,i_1}}}\leq O\br{\Delta\cdot\log^2k\cdot\norm{H}^3}.
     \end{eqnarray*}
     where we again use  Eq.~\eqref{eqn:h3} and the assumption that $\abs{\set{i:x_i\leq 0}}\leq 3\log k$.

    \textbf{Fourth term in Eq.~\eqref{eqn:casev1}.} The last summation is upper bounded by $O\br{\Delta\cdot\log^2k\cdot\norm{H}^3}$ using the same arguments to upper bound the second summation in Eq.~\eqref{eqn:3term}.
    \end{proof}

    \subsection{Upper bounding $(\dagger)$ first term in $(\star)$  in Eq.~\eqref{daggerstar}}
    \label{sec:2.2}
    We upper bound the quantity in $(\star)$ in two cases that $x_{i_1}>x_{i_2}$ and $x_{i_2}>x_{i_1}$. In order to prove this lemma we need the following lemmas and claims.

    \begin{claim}\label{claim:xhxh}
    	For integer $k\geq 1$, $X\in\sym{k}$ and $H\in\mat{k}$ it holds that
    	\[
    	\big\|\br{XH+HX}e^{-\frac{X^2}{2}}\big\|_2\leq2\norm{X}\cdot\twonorm{He^{-X^2/2}}
    	\]
    	and
    	\[
    	\big\|e^{-\frac{X^2}{2}}\br{XH+HX}\big\|_2\leq2\norm{X}\cdot\twonorm{e^{-X^2/2}H}\]
    \end{claim}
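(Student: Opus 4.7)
The plan is to reduce the claim to two standard facts about Schatten norms: (i) the triangle inequality $\|A+B\|_2 \leq \|A\|_2 + \|B\|_2$, and (ii) the submultiplicativity $\|AB\|_2 \leq \|A\|\cdot \|B\|_2$ and $\|AB\|_2 \leq \|A\|_2\cdot \|B\|$ when mixing spectral and Frobenius norms. Since $X$ is symmetric, $e^{-X^2/2}$ is a function of $X$ (defined via its convergent power series), so $X$ and $e^{-X^2/2}$ commute.

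For the first inequality, I would apply the triangle inequality to split
\[
\bigl\|(XH+HX)e^{-X^2/2}\bigr\|_2 \leq \bigl\|XH e^{-X^2/2}\bigr\|_2 + \bigl\|HX e^{-X^2/2}\bigr\|_2.
\]
The first term is bounded by $\|X\|\cdot\|He^{-X^2/2}\|_2$ via submultiplicativity. For the second term, I use that $X$ commutes with $e^{-X^2/2}$ to rewrite $HXe^{-X^2/2}=He^{-X^2/2}X$, and then bound by $\|He^{-X^2/2}\|_2\cdot\|X\|$ via the other submultiplicativity inequality. Summing the two gives the factor of $2\|X\|$.

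The second inequality is entirely symmetric: split as $\|e^{-X^2/2}XH\|_2 + \|e^{-X^2/2}HX\|_2$, use commutation of $X$ and $e^{-X^2/2}$ on the first term to rewrite it as $\|Xe^{-X^2/2}H\|_2$, and then apply submultiplicativity to each term with $\|X\|$ factored out.

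There is no real obstacle here — the claim is essentially a bookkeeping exercise. The only point to be careful about is invoking commutation of $X$ with $e^{-X^2/2}$ (which is why symmetry of $X$ matters, even though the claim is stated for general $H\in\mat{k}$), and using the correct form of Schatten submultiplicativity in each step.
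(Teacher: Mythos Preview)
Your proof is correct. The paper takes a slightly different route: it diagonalizes $X$ by unitary invariance, computes the Frobenius norm entrywise as $\sum_{i,j}H_{i,j}^2(x_i+x_j)^2e^{-x_j^2}$, and bounds $(x_i+x_j)^2\leq 4\|X\|^2$ termwise. Your operator-level argument via the triangle inequality, submultiplicativity, and commutation of $X$ with $e^{-X^2/2}$ is an equally elementary alternative that avoids picking a basis; the paper's entrywise computation makes the bound completely explicit but is otherwise the same idea.
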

    \begin{proof}
    	As the Schattern norm is unitarily invariant, we assume that $X=\diag\br{x_1,\ldots,x_n}$ is diagonal without loss of generality. Then
    \[\twonorm{\br{XH+HX}e^{-X^2/2}}^2=\sum_{i,j}H_{i,j}^2\br{x_i+x_j}^2e^{-x_j^2}\leq 4\norm{X}^2\cdot\sum_{i,j}H_{i,j}^2e^{-x_j^2}=4\Delta^2\twonorm{He^{-X^2/2}}^2.\]
    The second inequality follows by the same argument.
    \end{proof}

    \begin{lemma}\label{lem:xhxh}
      Given an integer $k\geq 1$, $u_1,u_2,u_3\geq 0$ satisfying $u_1+u_2+u_3=1$ and $X\in\sym{k}, H_1, H_2,H_3\in\mat{k}$, if $u_1,u_3\leq\frac{1}{2}$, then it holds that
      \begin{eqnarray*}
      &&\abs{\Tr \Br{e^{-u_1X^2}H_1e^{-u_2X^2}H_2e^{-u_3X^2}H_3}}\\
      &\leq&\br{\twonorm{H_1e^{-\frac{1}{2}X^2}}+\twonorm{e^{-\frac{1}{2}X^2}H_1}}\cdot\br{\twonorm{H_2e^{-\frac{1}{2}X^2}}+\twonorm{e^{-\frac{1}{2}X^2}H_2}}\cdot\norm{H_3}.
      \end{eqnarray*}
    \end{lemma}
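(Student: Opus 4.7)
The plan is to combine the matrix Hölder inequality with a convexity-based interpolation, exploiting that $\{e^{-tX^2}\}_{t\ge 0}$ is a commuting family so that exponentials can be split freely.

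First I would use the hypotheses $u_1, u_3 \le 1/2$ and $u_1+u_2+u_3=1$ to decompose $u_2 = v + w$ where $v := 1/2 - u_1$ and $w := 1/2 - u_3$ are both non-negative. Then $e^{-u_2X^2} = e^{-vX^2}e^{-wX^2}$, and defining $A := e^{-u_1X^2} H_1 e^{-vX^2}$ and $B := e^{-wX^2} H_2 e^{-u_3X^2}$ (so that the total exponent of $X^2$ in each of $A$ and $B$ equals $-1/2$), the original trace becomes $\Tr[AB H_3]$. I would then apply the matrix Hölder inequality $|\Tr[ABH_3]| \le \|AB\|_1 \|H_3\| \le \|A\|_2 \|B\|_2 \|H_3\|$, reducing the problem to showing $\|A\|_2 \le \twonorm{H_1 e^{-X^2/2}} + \twonorm{e^{-X^2/2} H_1}$ together with the analogous bound for $\|B\|_2$.

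To establish this Schatten-$2$ bound I would pass to the eigenbasis of $X$: since the $2$-norm is unitarily invariant, I may assume $X = \diag(x_1, \ldots, x_k)$, which gives
$$\|A\|_2^2 = \sum_{i,j}(H_1)_{ij}^2 \, e^{-2u_1 x_i^2 - 2v x_j^2}.$$
Because $2u_1 + 2v = 1$ with both weights in $[0,1]$, convexity of $t \mapsto e^{-t}$ yields
$$e^{-2u_1 x_i^2 - 2v x_j^2} \le 2u_1\, e^{-x_i^2} + 2v\, e^{-x_j^2}.$$
Summing over $i,j$, the two resulting terms recognize as $2u_1 \twonorm{e^{-X^2/2} H_1}^2$ and $2v \twonorm{H_1 e^{-X^2/2}}^2$, so
$$\|A\|_2^2 \le \twonorm{e^{-X^2/2}H_1}^2 + \twonorm{H_1 e^{-X^2/2}}^2,$$
and the elementary bound $\sqrt{a+b} \le \sqrt{a} + \sqrt{b}$ then finishes the estimate on $\|A\|_2$. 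The very same argument applies to $\|B\|_2$, concluding the proof.

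The one subtle point is the choice of split of $u_2$: I need both $u_1+v = 1/2$ and $u_3+w = 1/2$, which forces $v = 1/2-u_1$ and $w = 1/2-u_3$, and these are non-negative precisely because of the hypothesis $u_1,u_3 \le 1/2$. Without this hypothesis, one of $v,w$ would have to be negative, preventing the decomposition $e^{-u_2X^2} = e^{-vX^2}e^{-wX^2}$ with both factors being contractions; equivalently, the Jensen step would fail because $2u_1$ or $2v$ would fall outside $[0,1]$. Once the split is made correctly, the remainder is a routine Hölder-plus-Jensen calculation.
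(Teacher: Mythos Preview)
Your proof is correct and follows essentially the same route as the paper: the identical split $u_2=(1/2-u_1)+(1/2-u_3)$ followed by the same H\"older step $|\Tr ABH_3|\le\|A\|_2\|B\|_2\|H_3\|$. The only difference is in the final interpolation bound on $\|A\|_2$: the paper invokes its Lemma~\ref{lem:diagonal} (a rearrangement-type inequality for diagonal matrices), whereas you use Jensen's inequality on $e^{-t}$ entrywise; both arguments are elementary and yield the same conclusion.
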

    \begin{proof}
    Using the inequality $\abs{\Tr ABC}\leq\twonorm{A}\cdot\twonorm{B}\cdot\norm{C}$ (where $\|\cdot \|_2$ is the standard Frobenius norm and $\|\cdot \|$ is the spectral norm), we have
      \begin{eqnarray*}
      % \nonumber % Remove numbering (before each equation)
        &&\abs{\Tr \Br{e^{-u_1X^2}H_1e^{-u_2X^2}H_2e^{-u_3X^2}H_3}}\ \\
        &\leq&\twonorm{e^{-u_1X^2}H_1e^{\br{u_1-\frac{1}{2}}X^2}}\cdot\twonorm{e^{-u_3X^2}H_2e^{\br{u_3-\frac{1}{2}}X^2}}\cdot \norm{H_3}.
      \end{eqnarray*}
      We conclude the result by Lemma~\ref{lem:diagonal}.
    \end{proof}
    \begin{lemma}\label{lem:diagonal}
    Given diagonal matrices $A=\diag\br{a_1,\ldots, a_k}, B=\diag\br{b_1,\ldots, b_k}$ with $a_1\geq\cdots\geq a_k\geq 0$ and $b_1\geq\cdots\geq b_k\geq 0$ and an arbitrary matrix $H$, it holds that
    \[\twonorm{AHB}^2+\twonorm{BHA}^2\leq\twonorm{HAB}^2+\twonorm{ABH}^2.\]
    In particular,
    \[\twonorm{AHB}\leq\twonorm{HAB}+\twonorm{ABH}.\]
    \end{lemma}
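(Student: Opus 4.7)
The plan is to exploit diagonality of $A$ and $B$ to reduce the matrix inequality to an entrywise scalar inequality, which then follows from the monotone ordering of the $a_i$'s and $b_i$'s (a Chebyshev/rearrangement-type statement).

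First, I would write out each Frobenius norm in coordinates. Since $A=\diag(a_1,\ldots,a_k)$ and $B=\diag(b_1,\ldots,b_k)$, the $(i,j)$-entries of the four matrices are:
\begin{equation*}
(AHB)_{ij}=a_i H_{ij} b_j,\quad (BHA)_{ij}=b_i H_{ij} a_j,\quad (HAB)_{ij}=H_{ij} a_j b_j,\quad (ABH)_{ij}=a_i b_i H_{ij}.
\end{equation*}
Squaring and summing, the inequality $\|AHB\|_2^2+\|BHA\|_2^2\leq \|HAB\|_2^2+\|ABH\|_2^2$ becomes
\begin{equation*}
\sum_{i,j} H_{ij}^2 \bigl(a_i^2 b_j^2 + b_i^2 a_j^2\bigr) \;\leq\; \sum_{i,j} H_{ij}^2 \bigl(a_j^2 b_j^2 + a_i^2 b_i^2\bigr),
\end{equation*}
so it suffices to verify the entrywise inequality $a_j^2 b_j^2 + a_i^2 b_i^2 - a_i^2 b_j^2 - a_j^2 b_i^2 \geq 0$ for every $i,j$.

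The key observation is that this difference factors as $(a_i^2-a_j^2)(b_i^2-b_j^2)$. Since both sequences $(a_i^2)$ and $(b_i^2)$ are monotone non-increasing (because $a_i,b_i\geq 0$ and $a_i\geq a_j$, $b_i\geq b_j$ whenever $i\leq j$), the two factors have the same sign for every pair $(i,j)$, so the product is non-negative. This proves the first inequality.

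For the ``in particular" statement, I would simply note that $\|BHA\|_2^2\geq 0$, so
\begin{equation*}
\|AHB\|_2^2 \;\leq\; \|AHB\|_2^2+\|BHA\|_2^2 \;\leq\; \|HAB\|_2^2+\|ABH\|_2^2 \;\leq\; \bigl(\|HAB\|_2+\|ABH\|_2\bigr)^2,
\end{equation*}
the last step using $x^2+y^2\leq(x+y)^2$ for $x,y\geq 0$; taking square roots gives the desired bound. There is no real obstacle here — the only conceptual point is recognizing the rearrangement-style factorization $(a_i^2-a_j^2)(b_i^2-b_j^2)$, after which the monotonicity hypotheses do all the work.
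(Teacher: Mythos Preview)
Your proof is correct and essentially identical to the paper's: both expand the Frobenius norms entrywise, factor the difference as $\sum_{i,j}H_{ij}^2(a_i^2-a_j^2)(b_i^2-b_j^2)$, and conclude by monotonicity. Your derivation of the ``in particular'' inequality via $x^2+y^2\le(x+y)^2$ is also fine (the paper leaves this step implicit).
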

    \begin{proof}
      Note that
      \begin{eqnarray*}
      % \nonumber % Remove numbering (before each equation)
        &&\br{\twonorm{HAB}^2+\twonorm{ABH}^2}-\br{\twonorm{AHB}^2+\twonorm{BHA}^2}\\
        &=&\sum_{i,j}H_{i,j}^2\br{a_i^2b_i^2+a_j^2b_j^2-a_i^2b_j^2-a_j^2b_i^2}\\
        &=&\sum_{i,j}\br{a_i^2-a_j^2}\br{b_i^2-b_j^2}\geq 0,
      \end{eqnarray*}
      where the first equality is from the symmetry.
    \end{proof}

    \begin{lemma}\label{lem:trd}
      Given an integer $k\geq 1$, matrices $A, B, C\in\mat{k}$ and $X\in\sym{k}$ with $\norm{X}\leq\Delta$, it holds that
      \begin{eqnarray*}
      % \nonumber % Remove numbering (before each equation)
        &&\abs{\Tr\Br{D^2\br{e^{-X^2/2}}\Br{A,B}C}} \\
        &\leq&4\Delta^2\cdot\max\begin{Bmatrix}\br{\twonorm{Ae^{-X^2/2}}+\twonorm{e^{-X^2/2}A}}\cdot\br{\twonorm{Be^{-X^2/2}}+\twonorm{e^{-X^2/2}B}}\cdot\norm{C},\\
        \br{\twonorm{Ae^{-X^2/2}}+\twonorm{e^{-X^2/2}A}}\cdot\br{\twonorm{Ce^{-X^2/2}}+\twonorm{e^{-X^2/2}C}}\cdot\norm{B},\\ \br{\twonorm{Be^{-X^2/2}}+\twonorm{e^{-X^2/2}B}}\cdot\br{\twonorm{Ce^{-X^2/2}}+\twonorm{e^{-X^2/2}C}}\cdot\norm{A}\end{Bmatrix}.
      \end{eqnarray*}
    \end{lemma}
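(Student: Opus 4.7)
The plan is to establish Lemma~\ref{lem:trd} by combining the Dyson-type expansion of $D^2(e^{-X^2/2})[A,B]$ from Lemma~\ref{lem:ex2derivative} with the trace inequality Lemma~\ref{lem:xhxh} and the commutator bound Claim~\ref{claim:xhxh}. Recall that Lemma~\ref{lem:ex2derivative} writes $D^2(e^{-X^2/2})[A,B]$ as a sum of three integrals: two double integrals whose integrands have the schematic form $e^{-\alpha_1 X^2/2}(XB+BX)e^{-\alpha_2 X^2/2}(XA+AX)e^{-\alpha_3 X^2/2}$ (with an $A\leftrightarrow B$ twin) where $\alpha_1+\alpha_2+\alpha_3=1$, and a single integral with integrand $e^{-\alpha_1 X^2/2}(AB+BA)e^{-\alpha_2 X^2/2}$. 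Multiplying each integrand on the right by $C$ and taking the trace reduces the theorem to bounding a finite family of traces of such five- and three-factor matrix products, uniformly over the integration simplex.

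For each five-factor integrand I will apply Lemma~\ref{lem:xhxh} (after an affine rescaling $W := X/\sqrt{2}$ that normalizes the sum of exponents of $W^2$ to $1$, accounting for the two $\sqrt{2}$ factors coming from $XH+HX=\sqrt{2}(WH+HW)$) to obtain
\[
|\Tr[\cdots C]| \leq \bigl(\|(XB+BX)e^{-X^2/2}\|_2+\|e^{-X^2/2}(XB+BX)\|_2\bigr)\bigl(\|(XA+AX)e^{-X^2/2}\|_2+\|e^{-X^2/2}(XA+AX)\|_2\bigr)\|C\|.
\]
Four invocations of Claim~\ref{claim:xhxh} then replace each $XH+HX$ by $H$ at a cost of $2\|X\|\leq 2\Delta$, contributing the cumulative $(2\Delta)^2 = 4\Delta^2$ constant and yielding the first alternative in the stated maximum. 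The integration over $(u,v)\in[0,1]^2$ against the weights $(1-u)$ and $u$ contributes only absolute constants bounded by $1/2$. The single-integral term, which lacks the commutator structure needed to extract $\Delta$-factors, is bounded by inserting Lemma~\ref{lem:diagonal} (applied with $A_{\mathrm{diag}}=e^{-(1-u)X^2/2}$, $B_{\mathrm{diag}}=e^{-uX^2/2}$, and $H=AB+BA$, which gives $\|E_1(AB+BA)E_2\|_2 \leq \|(AB+BA)e^{-X^2/2}\|_2+\|e^{-X^2/2}(AB+BA)\|_2$) followed by standard Cauchy--Schwarz against $C$; the resulting estimate is of strictly lower order and is absorbed into the $4\Delta^2$ prefactor using $\Delta\geq 1$.

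The other two alternatives in the maximum follow from the permutation symmetry
\[
\Tr\bigl[D^2(e^{-X^2/2})[A,B]\,C\bigr]=\Tr\bigl[D^2(e^{-X^2/2})[A,C]\,B\bigr]=\Tr\bigl[D^2(e^{-X^2/2})[B,C]\,A\bigr],
\]
which can be verified by Taylor-expanding $e^{-X^2/2}=\sum_n (-X^2/2)^n/n!$ and checking, at the level of each monomial $X^n$, that the sums of traces of words in $A,B,C,X$ arising on the three sides coincide after cyclic rotation of the trace; re-running the Dyson-plus-inequality argument on the two permuted expressions yields the $f(A,C)\|B\|$ and $f(B,C)\|A\|$ alternatives. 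The main technical obstacle is that the hypothesis $u_1,u_3\leq\tfrac{1}{2}$ of Lemma~\ref{lem:xhxh} does not hold uniformly over the integration simplex $\{\alpha_i\geq 0,\ \sum\alpha_i=1\}$: there are points where the smallest exponent is the middle one. I handle this by partitioning $(u,v)\in[0,1]^2$ into three subregions according to which of $\alpha_1,\alpha_2,\alpha_3$ is the \emph{largest}, and on each subregion cyclically rotating the trace so that the largest exponent occupies the middle slot; this forces the two outer exponents to be at most $\tfrac{1}{2}$ and restores the hypothesis of Lemma~\ref{lem:xhxh}.
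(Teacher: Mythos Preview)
Your high-level plan---expand via Lemma~\ref{lem:ex2derivative}, bound each five-factor integrand using Lemma~\ref{lem:xhxh}, pass from $XA+AX$ to $A$ via Claim~\ref{claim:xhxh}, and handle the constraint $u_1,u_3\le\tfrac12$ by partitioning the simplex and cyclically rotating the trace---is precisely the paper's. However, the permutation-symmetry step is misplaced. Once you rotate so that the largest exponent sits in the middle, the matrix that lands in Lemma~\ref{lem:xhxh}'s spectral-norm slot is whichever of $XB+BX,\,XA+AX,\,C$ was flanked by the two smaller exponents, not always $C$. Hence the three subregions of your partition already yield the three alternatives directly (and this is exactly why the paper also records $\norm{XA+AX}\le 2\Delta\norm{A}$ and $\norm{XB+BX}\le 2\Delta\norm{B}$: those are needed when $XA+AX$ or $XB+BX$, rather than $C$, occupies the spectral slot). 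The $\max$ bound therefore falls out of the rotation step alone. Your trace identity $\Tr\bigl[D^2f[A,B]\,C\bigr]=\Tr\bigl[D^2f[A,C]\,B\bigr]$ is true, but it is not how the three alternatives arise here, and your write-up implicitly assumes the rotated bound still has the form $(\cdots)\norm{C}$, which it does not.

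Your treatment of the single-integral term $-\tfrac12\int_0^1 e^{-(1-u)X^2/2}(AB+BA)e^{-uX^2/2}\,du$ has a genuine gap. Applying Lemma~\ref{lem:diagonal} and then Frobenius Cauchy--Schwarz against $C$ produces $\twonorm{E_1(AB+BA)E_2}\cdot\twonorm{C}$, and $\twonorm{C}$ is not controlled by $\norm{C}$ nor by $\twonorm{Ce^{-X^2/2}}+\twonorm{e^{-X^2/2}C}$; it can exceed each by a factor $\sqrt{k}$, so this cannot be absorbed into $4\Delta^2\max(\cdots)$ regardless of $\Delta\ge 1$. The fix is to split $AB$ and $BA$, insert $I=e^{-0\cdot X^2}$ between the two letters, and apply Lemma~\ref{lem:xhxh} to the resulting three-factor trace $\Tr\bigl[e^{-(1-u)X^2/2}A\,e^{0}\,B\,e^{-uX^2/2}C\bigr]$ with the same cyclic-rotation trick; depending on whether $u\le\tfrac12$ or $u\ge\tfrac12$ you obtain the second or third alternative without any $\Delta$-factor, and then the hypothesis $\Delta\ge 1$ (implicit in the ambient Theorem~\ref{thm:derivative}, though not in the lemma as stated) legitimately absorbs it.
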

    \begin{proof}
      Combining Lemma~\ref{lem:ex2derivative}, Lemma~\ref{lem:xhxh} and the inequality that
      \[\twonorm{\br{XA+AX}e^{-X^2/2}}+\twonorm{e^{-X^2/2}\br{XA+AX}}\leq2\Delta\br{\twonorm{Ae^{-X^2/2}}+\twonorm{e^{-X^2/2}A}}\]
      and
      \[\twonorm{\br{XB+BX}e^{-X^2/2}}+\twonorm{e^{-X^2/2}\br{XB+BX}}\leq2\Delta\br{\twonorm{Be^{-X^2/2}}+\twonorm{e^{-X^2/2}B}}\]
      and
      \[\norm{XA+AX}\leq2\Delta\norm{A},\hspace{4mm} \norm{XB+BX}\leq2\Delta\norm{B},\]
      we conclude the result.
    \end{proof}

    \begin{lemma}\label{lem:5}
      \[
      \abs{\sum_{i_1\neq i_2\neq i_3:\atop x_{i_1}>x_{i_2},x_{i_3}>x_{i_2}}\frac{\frac{\braket{i_3}'-\braket{i_1}'}{[i_3-i_1]}-\frac{\braket{i_2}'-\braket{i_1}'}{[i_2-i_1]}}{[i_3-i_2]}\braket{i_1}\braket{i_3}}\leq O\br{\Delta^2\cdot \log^{2.5}k \norm{H}^3}.
      \]
    \end{lemma}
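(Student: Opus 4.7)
The plan is to apply the mean value theorem for divided differences (Fact~\ref{fac:mvtdd}) to the double divided difference of $g'$ inside the summand. Setting $f=g'$ and observing that the numerator/denominator structure is precisely $f^{[2]}(x_{i_1},x_{i_2},x_{i_3})$, I get $f^{[2]}(x_{i_1},x_{i_2},x_{i_3})=g'''(\xi)/2$ for some $\xi$ in the convex hull of $\{x_{i_1},x_{i_2},x_{i_3}\}$. The restrictions $x_{i_1}>x_{i_2}$, $x_{i_3}>x_{i_2}$ force $\xi\in[x_{i_2},\max(x_{i_1},x_{i_3})]\subseteq[-\Delta,\Delta]$, so combining the identity $g'''(x)=(x^2-1)g'(x)$ with $|\xi|\le\Delta$ yields $|g'''(\xi)|\le 2\Delta^2 g'(\xi)$. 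The summand is then dominated by $O(\Delta^2)\cdot g'(\xi)g(x_{i_1})g(x_{i_3})G(x_{-\{i_1,i_2,i_3\}})\cdot|H_{i_1,i_2}H_{i_2,i_3}H_{i_3,i_1}|$.

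Next I split the analysis by the sign of $x_{i_2}$. In Case~(a) $x_{i_2}\ge 0$, we have $\xi\ge x_{i_2}\ge 0$ and the monotonicity of $g'$ on $[0,\infty)$ gives $g'(\xi)\le g'(x_{i_2})$, whence $g'(\xi)g(x_{i_1})g(x_{i_3})G(x_{-\{i_1,i_2,i_3\}})\le \ophi(x_{i_2})G(x)$. This reduces the contribution to exactly the weight structure treated in Claim~\ref{claim:phiG}: a $\ophi$-weighted Gaussian product times a cyclic $H$-triple. I apply $|H_{i_1,i_2}H_{i_3,i_1}|\le\tfrac12(H_{i_1,i_2}^2+H_{i_3,i_1}^2)$ to split the triple, sum one free index using Eqs.~\eqref{eqn:h3} and \eqref{eqn:HH2} to replace squared off-diagonal rows by $(H^2)_{\bullet,\bullet}\le\|H\|^2$, and finally use $\sum_{i_2}\ophi(x_{i_2})G(x)\le\|G^{(1)}\|_1=O(\sqrt{\log k})$ from Fact~\ref{fac:benktus2}. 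This yields the target bound $O(\Delta^2\sqrt{\log k}\|H\|^3)$ for this case.

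In Case~(b) $x_{i_2}<0$, the standing Case~2 assumption $|\{i:x_i<0\}|\le 3\log k$ restricts $i_2$ to a set $T$ of size $O(\log k)$, so I can afford to pay an extra $\log k$ factor from the $i_2$-sum; inside I use $g'(\xi)\le g'(0)=O(1)$. For each $i_2\in T$, I express the inner sum over $i_1,i_3$ using Fact~\ref{fac:fredivided}: the formula $[D^2 g'(X)[A,B]]_{i_1,i_3}=\sum_{i_2}(g')^{[2]}(x_{i_1},x_{i_2},x_{i_3})A_{i_1,i_2}B_{i_2,i_3}$ lets me write (after including the diagonal weight $V$ supported on $T$ that encodes the Gaussian factor $G(x_{-i_2})$ and the indicator $[x_{i_2}<0]$) a signed sum $\mathrm{Tr}(D^2 g'(X)[HV,H]\cdot H)$. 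Applying Lemma~\ref{lem:trd} with $g'=e^{-x^2/2}/\sqrt{2\pi}$ bounds this trace by $O(\Delta^2)$ times products of Frobenius-type norms and one spectral norm $\|H\|$. The sparsity of $V$ (supported on at most $3\log k$ indices) controls $\|HVe^{-X^2/2}\|_2$ in terms of $\|H\|\sqrt{\log k}$ via $(H^2)_{j,j}\le\|H\|^2$; combined with Fact~\ref{fac:bhatia} to pay only $O(\log k)$ for passing to off-diagonal spectral norms of $H$, the total contribution from this case is $O(\Delta^2\log^{2.5}k\|H\|^3)$.

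The main obstacle I anticipate is avoiding a spurious $\sqrt{k}$ factor when bounding the cyclic triple $|H_{i_1,i_2}H_{i_2,i_3}H_{i_3,i_1}|$ over free indices — naive Cauchy--Schwarz on absolute values of off-diagonal entries consistently introduces such a factor. The resolution is to keep signs through the Fr\'echet/trace representation of the second divided difference of $g'$ from Fact~\ref{fac:fredivided} (converting the case~(b) sum into a matrix trace handled by Lemma~\ref{lem:trd}) and, where entrywise estimates are unavoidable, to replace naive row-sums of $|H|$ by Fact~\ref{fac:bhatia}'s logarithmic bound on the spectral norm of off-diagonal truncations. Combining both cases yields the claimed $O(\Delta^2\log^{2.5}k\|H\|^3)$ bound.
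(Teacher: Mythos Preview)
Your Case~(a) contains a genuine gap. After applying the mean value theorem and bounding $g'(\xi)\le g'(x_{i_2})$, you are left with (up to the $\Delta^2$ factor)
\[
\sum_{i_1\neq i_2\neq i_3}\ophi(x_{i_2})\,G(x)\,\abs{H_{i_1,i_2}H_{i_2,i_3}H_{i_3,i_1}}.
\]
This is \emph{not} the structure handled by Claim~\ref{claim:phiG}: every item there carries a diagonal factor $H_{\bullet,\bullet}$, whereas here all three $H$-factors are genuinely off-diagonal and cyclic. Your AM--GM step $|H_{i_1,i_2}H_{i_3,i_1}|\le\tfrac12(H_{i_1,i_2}^2+H_{i_3,i_1}^2)$ leaves, say, $\sum_{i_2,i_3}\ophi(x_{i_2})G(x)\,(H^2)_{i_2,i_2}\,|H_{i_2,i_3}|$, and the remaining row-sum $\sum_{i_3}|H_{i_2,i_3}|$ has no better bound than $\sqrt{k}\,\|H\|$ in general (take $H$ to be a rescaled random $\pm1$ symmetric matrix with zero diagonal: $\|H\|=O(1)$ but each row of $|H|$ has $\ell_1$-norm $\Theta(\sqrt{k})$). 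So the entrywise route yields only $O(\Delta^2\sqrt{k\log k}\,\|H\|^3)$. You correctly identified this obstacle in your last paragraph, but you apply the signed trace machinery only in Case~(b); Case~(a) needs it just as much.

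The paper's proof differs exactly here. It never takes absolute values in the divided difference. Instead it encodes the ordering constraints $x_{i_1}>x_{i_2}$ (and symmetrically for $i_3$) by the triangular matrix $A_{i,j}=H_{i,j}\,[x_i<x_j]$, so the signed sum becomes $\frac{1}{\sqrt{2\pi}}\sum_{i_2}G(x_{-i_2})\bigl(D^2(e^{-X^2/2})[A,A^T]\,H\bigr)_{i_2,i_2}$. For $x_{i_2}\ge 0$ the factor $1/g(x_{i_2})\le 2$ is absorbed into a modified matrix $\tilde H$ with $\|\tilde H\|\le 2\|H\|$, giving a single trace $G(x)\,\Tr\,D^2(e^{-X^2/2})[A,A^T]\tilde H$ bounded via Lemma~\ref{lem:trd}; the $\log k$ loss comes from Fact~\ref{fac:bhatia} applied to $A$. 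Your Case~(b) sketch is closer in spirit but also misses this: the diagonal weight $V$ you introduce encodes only $[x_{i_2}<0]$, not the constraints $x_{i_1},x_{i_3}>x_{i_2}$, so your trace $\Tr(D^2 g'(X)[HV,H]\,H)$ picks up unwanted terms with $x_{i_1}<x_{i_2}$ or $x_{i_3}<x_{i_2}$. The paper's use of the triangular $A$ (rather than $H$) inside the Fr\'echet derivative is what enforces these constraints while keeping signs.
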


    \begin{proof}[Proof of Lemma~\ref{lem:5}]
      We break the summation into two summations
      \begin{equation}\label{eqn:twosum}
       \abs{\sum_{i_1\neq i_2\neq i_3\atop x_{i_1}>x_{i_3}>x_{i_2}}\br{\cdots}}+\abs{\sum_{i_1\neq i_2\neq i_3\atop x_{i_3}>x_{i_1}>x_{i_2}}\br{\cdots}}
      \end{equation}
      For the first summation, we define
      \begin{equation*}
        A_{i,j}=\begin{cases}
                  H_{i,j}, & \mbox{if $x_{i}<x_{j}$}  \\
                  0, & \mbox{otherwise}.
                \end{cases}
      \end{equation*}
    and
    Then $\norm{A}\leq\log k\cdot\norm{H}$ by Fact~\ref{fac:bhatia} (without loss of generality, we may assume that $x_i$s are sorted in increasing order. Further notice that all the diagonal entries of $H$ are zeros. Thus $A$ is the upper triangle part of $H$). We first bound the  first term in Eq.~\eqref{eqn:twosum}. In this direction, we first rewrite it as
      \begin{eqnarray}
      % \nonumber % Remove numbering (before each equation)
        &&\frac{1}{\sqrt{2\pi}}\sum_{i_2}G\br{x_{-i_2}}\br{\br{D^2\br{e^{-X^2/2}}[A,A^T]H}_{i_2,i_2}}=\frac{1}{\sqrt{2\pi}}\sum_{i_2{x_{i_2}< 0}}\br{\cdots}+\frac{1}{\sqrt{2\pi}}\sum_{i_2:x_{i_2}\geq 0}\br{\cdots}\nonumber\\
        &&\label{eqn:gi1}
        \end{eqnarray}
        where $X=\diag\br{x_1,\ldots,x_k}$ and we implicitly used that we are summing over terms with  $x_{i_2}< x_{i_3}$. Note that $A$ is obtained from $H$ by zeroing out part of entries. Thus 
        \begin{equation}\label{eqn:ahzero}
           \max\set{\twonorm{Ae^{-X^2/2}},\twonorm{A^Te^{-X^2/2}},\twonorm{e^{-X^2/2}A},\twonorm{e^{-X^2/2}A^T}}\leq\twonorm{He^{-X^2/2}}, 
        \end{equation}
        To upper bound first summation in Eq.~\eqref{eqn:gi1}, we apply Lemma~\ref{lem:trd}, Eq.~\eqref{eqn:ahzero} and  inequalities
        $\norm{A}\leq\log k\cdot\norm{H}$ and $\norm{HE_{i_2,i_2}}\leq\norm{H}$ and obtain
        \begin{equation}\label{eqn:trd2}
        % \nonumber % Remove numbering (before each equation)
          \abs{\Tr\br{D^2\br{e^{-X^2/2}}[A,A^T]HE_{i_2,i_2}}}
          \leq16\Delta^2\log k\twonorm{He^{-X^2/2}}^2\cdot\norm{H}. 
        \end{equation}

       Thus, the first summation in Eq.~\eqref{eqn:gi1} is upper bounded by
        \begin{eqnarray}
        % \nonumber % Remove numbering (before each equation)
          &&\frac{\Delta^2\cdot\log k}{\sqrt{2\pi}}\sum_{i_2:x_{i_2}<0}G\br{x_{-i_2}}\twonorm{He^{-X^2/2}}^2\cdot\norm{H}\nonumber\\
          &=&\br{\Delta^2 \log k\cdot \sum_{i_2:x_{i_2}<0}G\br{x_{-i_2}}\sum_{i_1,i_3}e^{-x_{i_3}^2}H_{i_1,i_3}^2\norm{H}}\nonumber\\
          &\leq&\br{\Delta^2 \log^2 k\cdot\max_{i_2}\sum_{i_1\neq i_3} g'\br{x_{i_3}}\cdot G\br{x_{-i_2}}\cdot H_{i_1,i_3}^2\norm{H}}\nonumber\\
          &\leq& O\br{\Delta^2\log^{2.5}k \norm{H}^3},\label{eqn:firstsummation}
        \end{eqnarray}
        where the first inequality is from the assumption that $\abs{\set{i:x_i<0}}\leq3\log k$ and the second inequality is from Fact~\ref{fac:benktus2}.

        For the second summation in Eq.~\eqref{eqn:gi1}, we define
        \begin{equation*}
          \tilde{H}_{i,j}=\begin{cases}
                                \frac{H_{i,j}}{g\br{x_j}}, & \mbox{if $x_j\geq 0$}  \\
                                0, & \mbox{otherwise}.
                              \end{cases}
        \end{equation*}
        Then $\norm{\tilde{H}}\leq2\norm{H}$ as $g\br{x_i}\geq\frac{1}{2}$ if $x_i\geq 0$.
        Again applying Eq.~\eqref{eqn:ahzero}, we can  verify that the second summation in Eq.~\eqref{eqn:gi1} is equal to
        \begin{eqnarray*}
        % \nonumber % Remove numbering (before each equation)
         \abs{\frac{1}{\sqrt{2\pi}}G\br{x}\Tr~D^2\br{e^{-X^2/2}}[A,A^T]\tilde{H}}\leq  \frac{16 \Delta^2\log k}{\sqrt{2\pi}}G\br{x}\twonorm{He^{-X^2/2}}^2\norm{H}\leq O\br{\Delta^2\cdot\log^{1.5}k\norm{H}^3}.
        \end{eqnarray*}
        where the first inequality is from Lemma~\ref{lem:trd} and the second inequality is from Fact~\ref{fac:benktus2}.

    Finally, the second summation in Eq.~\eqref{eqn:twosum} can be upper bounded using the verbatim same arguments by $O\br{\Delta^2\cdot\log^{2.5} k\cdot\norm{H}^3}$. This proves the lemma statement.
    \end{proof}

    \begin{lemma}\label{lem:333}
      \[
      \abs{\sum_{i_1\neq i_2\neq i_3:\atop x_{i_3}>x_{i_2},x_{i_1}>x_{i_2}}\frac{\frac{\braket{i_3}-\braket{i_1}}{[i_3-i_1]}-\frac{\braket{i_2}-\braket{i_1}}{[i_2-i_1]}}{[i_3-i_2]}\braket{i_1}'\braket{i_3}}\leq O\br{\Delta\cdot\log^{1.5}k\cdot\norm{H}^3}
      \]
    \end{lemma}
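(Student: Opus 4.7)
My plan is to interpret the inner sum over $(i_2,i_3)$ as a Fr\'echet-derivative trace using Fact~\ref{fac:fredivided}, and then apply Dyson-type integral formulas to bound it. Unlike Lemma~\ref{lem:5} (which invokes $D^2 e^{-X^2/2}$ with Bhatia's truncation Fact~\ref{fac:bhatia} incurring an extra $\log k$), here we only need a \emph{first}-order Fr\'echet derivative of $g'$, which is why the target bound is $\Delta\log^{1.5}k$ rather than $\Delta^2\log^{2.5}k$.

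To begin, I factor the weight $g'(x_{i_1})g(x_{i_3})G(x_{-\{i_1,i_2,i_3\}}) = \partial_{i_1}G(x)/g(x_{i_2})$, so the left-hand side equals $\sum_{i_1}\partial_{i_1}G(x)\cdot M_{i_1}$ where
\[
M_{i_1} \;=\; \sum_{i_2\neq i_3}g^{[2]}(x_{i_1},x_{i_2},x_{i_3})\cdot\frac{H_{i_1,i_2}}{g(x_{i_2})}H_{i_2,i_3}H_{i_3,i_1}.
\]
Using the integral identity $g^{[2]}(x_{i_1},y,z)=\int_0^1 t\cdot(g')^{[1]}\bigl((1-t)x_{i_1}+ty,(1-t)x_{i_1}+tz\bigr)\,dt$ together with Fact~\ref{fac:fredivided}, I rewrite $M_{i_1}$ as $\int_0^1 t\,u^T Dg'(Y_t^{(i_1)})[H]\,v\,dt$, where $Y_t^{(i_1)}=(1-t)x_{i_1}I+tX$, $u_{i_2}=H_{i_1,i_2}/g(x_{i_2})$, and $v_{i_3}=H_{i_3,i_1}$.

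Next, I apply Dyson's expansion (derived in the proof of Lemma~\ref{lem:ex2derivative}) to $g'(\cdot)=\frac{1}{\sqrt{2\pi}}e^{-(\cdot)^2/2}$, writing $Dg'(Y)[H]=-\frac{1}{2\sqrt{2\pi}}\int_0^1 e^{-(1-u)Y^2/2}(YH+HY)e^{-uY^2/2}\,du$, and bound each integrand via Cauchy-Schwarz. The key estimates are $\|Y_t^{(i_1)}H+HY_t^{(i_1)}\|\leq 2\Delta\|H\|$, $\|v\|_2\leq\|H\|$, and control of $\|e^{-(1-u)(Y_t^{(i_1)})^2/2}u\|_2$. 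For indices $j$ with $x_j\geq 0$ (comprising all but at most $3\log k$ indices by the Case~2 assumption), $g(x_j)\geq 1/2$ gives an immediate $O(\|H\|^2)$ bound. For the at-most $3\log k$ indices with $x_j<0$, the entries of $u$ can blow up through $1/g(x_j)$, but the outer factor $\partial_{i_1}G(x)=g'(x_{i_1})G(x_{-i_1})$ contains a matching $g(x_j)$ and the Gaussian damping $e^{-(1-u)y_j^2/2}$ from Dyson (with $y_j=(1-t)x_{i_1}+tx_j$) pairs with Fact~\ref{fac:tailboundGaussian} $g(x_j)\geq\frac{e^{-x_j^2/2}}{\sqrt{2\pi}}(\tfrac{1}{|x_j|}-\tfrac{1}{|x_j|^3})$ to cancel the worst-case exponential growth after integration over $(t,u)\in[0,1]^2$.

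Finally, summing over $i_1$ using $\sum_{i_1}\partial_{i_1}G(x)\leq\|G^{(1)}\|_1=O(\sqrt{\log k})$ from Fact~\ref{fac:benktus2} and combining all factors produces the claimed bound $O(\Delta\log^{1.5}k\cdot\|H\|^3)$. The main obstacle is the delicate cancellation between the Gaussian factors $g'(x_{i_1})$, $g(x_j)$ (from $\partial_{i_1}G$) and $e^{-(1-u)y_j^2/2}$ (from Dyson) against the $1/g(x_j)^2$ growth for the at-most $3\log k$ very-negative indices; this pairing must be carried out carefully using Fact~\ref{fac:tailboundGaussian} and the joint integration variables $(t,u)$, and is where one must exploit the Case~2 hypothesis $|\{i:x_i<0\}|\leq 3\log k$ to turn an exponential-in-$\Delta$ estimate into a polynomial-in-$\log k$ one.
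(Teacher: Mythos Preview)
Your route through Fr\'echet derivatives and Dyson's expansion is substantially more complicated than what the paper actually does, and it has a real gap.

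\textbf{What the paper does.} The paper's proof is elementary: by the mean value theorem for divided differences (Fact~\ref{fac:mvtdd}), the inner quotient equals $\tfrac{1}{2}g''(\xi)$ for some $\xi\in[x_{i_2},\max(x_{i_1},x_{i_3})]$. Since $g''(x)=-xg'(x)$, one has $|g''(\xi)|\leq\Delta g'(\xi)$. When $x_{i_2}\geq 0$ (so $\xi\geq x_{i_2}\geq 0$ and $g'$ is decreasing), this gives $|g''(\xi)|\leq\Delta g'(x_{i_2})$, and the sum is bounded by $\Delta\cdot\|G^{(2)}\|_1\cdot\|H\|^3=O(\Delta\log k\,\|H\|^3)$ via Fact~\ref{fac:benktus2} and Eq.~\eqref{eqn:h3}. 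When $x_{i_2}<0$, one simply uses the crude bound $|g''(\xi)|\leq\Delta$, then sums over the at most $3\log k$ choices of $i_2$ and applies $\|G^{(1)}(x_{-i_2})\|_1\leq O(\sqrt{\log k})$. No Fr\'echet calculus, no Dyson, no $1/g(x_j)$ factors.

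\textbf{The gap in your approach.} Your claimed identity $\text{LHS}=\sum_{i_1}\partial_{i_1}G(x)\cdot M_{i_1}$ with $M_{i_1}=\sum_{i_2\neq i_3}(\cdots)$ drops the constraints $x_{i_2}<x_{i_1}$ and $x_{i_2}<x_{i_3}$, so it is not the lemma's left-hand side (the summand is not symmetric under $i_2\leftrightarrow i_3$ because of the $g(x_{i_3})$ weight). To encode the constraint $x_{i_2}<x_{i_3}$ inside a Fr\'echet-derivative trace you would have to truncate $H$ to its upper triangle and invoke Fact~\ref{fac:bhatia}, which costs exactly the $\log k$ you claim to avoid. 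Moreover, the cancellation you sketch for the $1/g(x_j)$ blow-up does not go through: $\partial_{i_1}G(x)$ supplies only one factor of $g(x_j)$, while after Cauchy--Schwarz $\|e^{-(1-u')Y^2/2}u\|_2$ involves $1/g(x_j)$; the leftover $e^{x_j^2/2}$ is not uniformly killed by the Dyson damping $e^{-(1-u')y_j^2/2}$ because for $u'$ near $1$ (or $t$ away from $1$) the exponent $(1-u')y_j^2$ is much smaller than $x_j^2$. Integrating over $(t,u')$ does not repair this. The missing idea is that the constraint $x_{i_2}=\min(x_{i_1},x_{i_2},x_{i_3})$ lets you bound $g''(\xi)$ directly by $\Delta g'(x_{i_2})$ and then absorb that into $\|G^{(2)}\|_1$ without ever introducing $1/g(x_{i_2})$.
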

    \begin{proof}
    \begin{align}
      &&\abs{\sum_{i_1\neq i_2\neq i_3:\atop x_{i_3}>x_{i_2},x_{i_1}>x_{i_2}}\frac{\frac{\braket{i_3}-\braket{i_1}}{[i_3-i_1]}-\frac{\braket{i_2}-\braket{i_1}}{[i_2-i_1]}}{[i_3-i_2]}\braket{i_1}'\braket{i_3}}=\abs{\sum_{i_1\neq i_2\neq i_3:\atop x_{i_3}>x_{i_2},x_{i_1}>x_{i_2}\geq 0}\br{\cdots}+\sum_{i_1\neq i_2\neq i_3:\atop x_{i_3}>x_{i_2},x_{i_1}>0, x_{i_2}< 0}\br{\cdots}}\label{eqn:333}
    \end{align}
    To upper bound the first summation in Eq.~\eqref{eqn:333}, we apply Fact~\ref{fac:mvtdd} and upper bound the first summation by
    \begin{eqnarray*}
    % \nonumber % Remove numbering (before each equation)
      &&O\br{\sum_{i_1\neq i_2\neq i_3:\atop x_{i_3}>x_{i_2},x_{i_1}>x_{i_2}\geq 0}\abs{g''\br{\xi_{i_1,i_2,i_3}}g'\br{x_{i_1}}G\br{x_{-\set{i_1,i_2}}}H_{i_1,i_2}H_{i_2,i_3}H_{i_3,i_1}}}\\
      &\leq&O\br{\abs{\Delta\cdot\sum_{i_1\neq i_2\neq i_3:\atop x_{i_3}>x_{i_2},x_{i_1}>x_{i_2}\geq 0} g'\br{x_{i_2}}g'\br{x_{i_1}}G\br{x_{-\set{i_1,i_2}}}H_{i_1,i_2}H_{i_2,i_3}H_{i_3,i_1}}}\\
      &\leq&O\br{\twonorm{G^{(2)}\br{x}}\max_{i_1,i_2}\sum_{i_3}\abs{H_{i_1,i_2}H_{i_2,i_3}H_{i_3,i_1}}}\\
      &\leq&O\br{\Delta\cdot\log k\cdot\norm{H}^3}
      \end{eqnarray*}
      where the last inequality is from Fact~\ref{fac:benktus2} and Eq.~\eqref{eqn:h3}.   Note that $\abs{g''\br{\xi}}\leq \Delta$ for any $\xi\in[x_{i_2},\max\set{x_{i_1},x_{i_3}}]$ by Eq.~\eqref{eqn:g''}. Applying Fact~\ref{fac:mvtdd}, the second summation in Eq.~\eqref{eqn:333} is upper bounded by
      \begin{eqnarray*}
      % \nonumber % Remove numbering (before each equation)
        &&O\br{\Delta\sum_{i_2:x_{i_2}<0}\sum_{i_1,i_3}g'\br{x_{i_1}}G\br{x_{-\set{i_1,i_2}}}\abs{H_{i_1,i_2}H_{i_2,i_3}H_{i_3,i_1}}} \\
        &\leq&O\br{\Delta\cdot \log k\cdot\max_{i_2}\onenorm{G^{(1)}\br{x_{-i_2}}}\cdot\max_{i_1}\sum_{i_3}\abs{H_{i_1,i_2}H_{i_2,i_3}H_{i_3,i_1}}}\\
        &\leq& O\br{\Delta\cdot\log^{1.5}k\cdot\norm{H}^3}
      \end{eqnarray*}
      where the first inequality is from the assumption that $\abs{\set{i:x_i<0}}\leq 3\log k$ and the second inequality is from Fact~\ref{fac:benktus2} and Eq.~\eqref{eqn:h3}.
    \end{proof}

    \subsection{Upper bounding first term in $(\dagger\dagger)$ in Eq.~\eqref{eqn:rk2}}
    \label{sec:2.3}
    We now bound the first term in Eq.~\eqref{eqn:rk2} when  $x_{i_3}>x_{i_2}>x_{i_1}$. Recall that the goal is to upper bound the following lemma.

    \begin{lemma}\label{lem:casev111}
      \[\abs{\sum_{i_1\neq i_2\neq i_3:\atop x_{i_3}>x_{i_2}>x_{i_1}}\frac{\frac{\braket{i_1}\braket{i_3}'-\braket{i_3}\braket{i_3}'}{[i_3-i_1]}-\frac{\braket{i_1}\braket{i_2}'-\braket{i_2}\braket{i_2}'}{[i_2-i_1]}}{[i_3-i_2]}\braket{i_3}}\leq O\br{\Delta\cdot\log^{1.5}k\cdot\norm{H}^3}\]
    \end{lemma}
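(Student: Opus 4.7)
The plan is to first simplify the complicated coefficient of $H_{i_1,i_2} H_{i_2,i_3} H_{i_3,i_1}$ by peeling off divided differences one at a time. Since $\braket{i_1}\braket{i_3}' - \braket{i_3}\braket{i_3}' = -\braket{i_3}'\bigl(g(x_{i_3}) - g(x_{i_1})\bigr)$, the first inner fraction equals $-\braket{i_3}' \cdot g^{[1]}(x_{i_1}, x_{i_3})$, and likewise the second equals $-\braket{i_2}' \cdot g^{[1]}(x_{i_1}, x_{i_2})$. Applying the telescoping identity $\frac{ac-bd}{[i_3-i_2]} = \frac{a-b}{[i_3-i_2]}\, c + b \cdot \frac{c-d}{[i_3-i_2]}$ with $a=\braket{i_3}'$, $b=\braket{i_2}'$, $c=g^{[1]}(x_{i_1},x_{i_3})$, $d=g^{[1]}(x_{i_1},x_{i_2})$, and invoking Fact~\ref{fac:mvtdd} at each step, the entire coefficient rewrites as
\[
 -\, g''(\eta_1)\, g'(\eta_3) \;-\; \tfrac{1}{2}\, g'(x_{i_2})\, g''(\eta_2),
\]
for some $\eta_1 \in (x_{i_2}, x_{i_3})$ and $\eta_2, \eta_3 \in (x_{i_1}, x_{i_3})$.

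Next, I bound each summand's modulus using $g''(z) = -z \, g'(z)$, so $|g''(z)| \leq \Delta\, g'(z)$ whenever $|z| \leq \Delta$, together with the trivial bound $|g'(z)| \leq 1/\sqrt{2\pi}$. Each summand is thus dominated, up to a factor $g(x_{i_3})\, G(x_{-\{i_1,i_2,i_3\}})\, |H_{i_1,i_2} H_{i_2,i_3} H_{i_3,i_1}|$, by $O(\Delta)\, g'(\eta_1)$ from the first piece and $O(\Delta)\, g'(x_{i_2})$ from the second. Using $g(x_{i_3})\, G(x_{-\{i_1,i_2,i_3\}}) = G(x_{-\{i_1,i_2\}})$ and the standing assumption $|\{i : x_i < 0\}| \leq 3\log k$, I then split the sum into three cases according to the signs of $x_{i_1}$ and $x_{i_2}$.

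In Case A ($x_{i_1} \geq 0$, so all three are non-negative), monotonicity of $g'$ on $[0,\infty)$ combined with $\eta_1 \in (x_{i_2}, x_{i_3})$ and $\eta_2 \in (x_{i_1}, x_{i_3})$ gives $g'(\eta_1) \leq g'(x_{i_2})$ and $g'(\eta_2) \leq g'(x_{i_1})$. The summand is then bounded by $O(\Delta)\, g'(x_{i_1})\, g'(x_{i_2})\, G(x_{-\{i_1,i_2\}})\, |H_{i_1,i_2} H_{i_2,i_3} H_{i_3,i_1}|$; summing over $i_3$ via Eq.~\eqref{eqn:h3} contributes a factor $\norm{H}^3$, and summing over $i_1, i_2$ via $\onenorm{G^{(2)}} \leq O(\log k)$ (Fact~\ref{fac:benktus2}) gives a contribution of $O(\Delta \log k\, \norm{H}^3)$. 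In Case B ($x_{i_1} < 0 \leq x_{i_2}$), we still have $\eta_1 \in [0, \Delta]$ so $g'(\eta_1) \leq g'(x_{i_2})$, while $g'(\eta_2)$ and $g'(\eta_3)$ are each at most $1/\sqrt{2\pi}$. After summing over $i_3$ via Eq.~\eqref{eqn:h3} and over $i_2 \geq 0$ via $\onenorm{G^{(1)}(x_{-i_1})} \leq O(\sqrt{\log k})$, the at most $3\log k$ negative choices of $i_1$ produce the target bound $O(\Delta \log^{3/2} k\, \norm{H}^3)$. Finally, in Case C ($x_{i_2} < 0$) there are at most $(3\log k)^2$ admissible pairs $(i_1, i_2)$, and a direct estimate using $|H_{ij}| \leq \norm{H}$ together with the geometric decay of $G(x_{-\{i_1,i_2\}})$ in the number of remaining negative entries (each contributing $g(x_j) \leq 1/2$) yields a contribution of $O(\Delta\, \norm{H}^3)$, which is absorbed into the main bound.

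The main technical obstacle is the initial algebraic reduction: handling the two nested divided differences cleanly so that the resulting expression factors as a product of $g''$ and $g'$ evaluated at controllable intermediate points $\eta_1, \eta_2, \eta_3$. Once that factorization is in place, the case analysis mirrors the arguments in Lemmas~\ref{lem:3} and~\ref{lem:333}, with Case B being the tight case that dictates the $\log^{3/2} k$ dependence in the final bound.
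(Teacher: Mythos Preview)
Your proposal is correct and follows essentially the same approach as the paper. The algebraic decomposition you obtain---writing the coefficient as $-g''(\eta_1)g'(\eta_3) - \tfrac12 g'(x_{i_2})g''(\eta_2)$---is exactly the paper's splitting into the two terms in Eq.~\eqref{eqn:1111}: the first is the paper's ``second term'' $g^{[1]}(x_{i_1},x_{i_3})\cdot (g')^{[1]}(x_{i_2},x_{i_3})\cdot g(x_{i_3})$, and the second is the paper's ``first term'' $g^{[2]}(x_{i_1},x_{i_2},x_{i_3})\cdot g'(x_{i_2})\cdot g(x_{i_3})$, after applying Fact~\ref{fac:mvtdd}. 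Your three-way case split by the signs of $x_{i_1}$ and $x_{i_2}$ is a slightly finer partition than the paper's two-way split by the sign of $x_{i_1}$ alone (the paper handles your Cases B and C together), and your geometric-decay bound $\sum_{i_1,i_2\text{ neg}} G(x_{-\{i_1,i_2\}}) \le O(1)$ in Case~C is a clean way to dispose of that sub-case; otherwise the estimates are the same.
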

    % \begin{remark}\label{rk4}
    %   Combining with Remark~\ref{rk:1}, Lemma~\ref{lem:casev111} and Eq.~\eqref{eqn:rk2}, it suffices to upper bound
    %   \[
    %   \abs{\sum_{i_1\neq i_2\neq i_3:\atop x_{i_3}>x_{i_2}>x_{i_1}}\frac{\frac{\braket{i_3}'-\braket{i_1}'}{[i_3-i_1]}\braket{i_3}-\frac{\braket{i_2}'-\braket{i_1}'}{[i_2-i_1]}\braket{i_2}}{[i_3-i_2]}\braket{i_3}}
    %   \]
    %   to conclude Eq.~\eqref{eq:restatecasevversion2}.
    % \end{remark}
    \begin{proof}[Proof of Lemma~\ref{lem:casev111}]
      \begin{eqnarray}
      % \nonumber % Remove numbering (before each equation)
        &&\abs{\sum_{i_1\neq i_2\neq i_3:\atop x_{i_3}>x_{i_2}>x_{i_1}}\frac{\frac{\braket{i_1}\braket{i_3}'-\braket{i_3}\braket{i_3}'}{[i_3-i_1]}-\frac{\braket{i_1}\braket{i_2}'-\braket{i_2}\braket{i_2}'}{[i_2-i_1]}}{[i_3-i_2]}\braket{i_3}}\nonumber\\
        &\leq&\sum_{i_1\neq i_2\neq i_3:\atop x_{i_3}>x_{i_2}>x_{i_1}}\abs{\frac{\frac{\braket{i_1}\braket{i_2}'-\braket{i_3}\braket{i_2}'}{[i_3-i_1]}-\frac{\braket{i_1}\braket{i_2}'-\braket{i_2}\braket{i_2}'}{[i_2-i_1]}}{[i_3-i_2]}\braket{i_3}}+\sum_{i_1\neq i_2\neq i_3:\atop x_{i_3}>x_{i_2}>x_{i_1}}\abs{\frac{\frac{\braket{i_1}\braket{i_3}'-\braket{i_3}\braket{i_3}'}{[i_3-i_1]}-\frac{\braket{i_1}\braket{i_2}'-\braket{i_3}\braket{i_2}'}{[i_3-i_1]}}{[i_3-i_2]}\braket{i_3}}\nonumber\\
        &=&  \sum_{i_1\neq i_2\neq i_3:\atop x_{i_3}>x_{i_2}>x_{i_1}}\abs{\frac{\frac{\braket {i_1}-\braket{i_3}}{[i_3-i_1]}-\frac{\braket {i_1}-\braket{i_2}}{[i_2-i_1]}}{[i_3-i_2]}\braket{i_2}'\braket{i_3}}+\sum_{i_1\neq i_2\neq i_3:\atop x_{i_3}>x_{i_2}>x_{i_1}}\abs{\frac{\braket{i_1}-\braket{i_3}}{[i_3-i_1]}\cdot\frac{\braket{i_3}'-\braket{i_2}'}{[i_3-i_2]}\cdot\braket{i_3}}\label{eqn:1111}
      \end{eqnarray}
      The first term is upper bounded by $O\br{\Delta\cdot\log^{1.5}k\cdot\norm{H}^3}$ using the same argument in Lemma~\ref{lem:333}.  The second term can be rephrased as
      \begin{eqnarray}
      % \nonumber % Remove numbering (before each equation)
        &&\abs{\sum_{i_1\neq i_2\neq i_3:\atop x_{i_3}>x_{i_2}>x_{i_1}}\frac{g\br{x_{i_1}}-g\br{x_{i_3}}}{x_{i_3}-x_{i_1}}\cdot\frac{g'\br{x_{i_3}}-g'\br{x_{i_2}}}{x_{i_3}-x_{i_2}}g\br{x_{i_3}}G\br{x_{-i_1}}H_{i_1,i_2}H_{i_2,i_3}H_{i_3,i_1}} \nonumber\\
        &\leq&\abs{\sum_{i_1\neq i_2\neq i_3:\atop x_{i_3}>x_{i_2}>x_{i_1},x_{i_1}\geq 0}\br{\cdots}}+\abs{\sum_{i_1\neq i_2\neq i_3:\atop x_{i_3}>x_{i_2}>x_{i_1},x_{i_1}< 0}\br{\cdots}}\label{eqn:123}
      \end{eqnarray}
      For the first summation in Eq.~\eqref{eqn:123}, we apply the mean value theorem for both $g$ and $g'$. From
      Eq.~\eqref{eqn:g''} it is upper bounded by
     \begin{eqnarray*}
     % \nonumber % Remove numbering (before each equation)
       &&\sum_{i_1\neq i_2\neq i_3:\atop x_{i_3}>x_{i_2}>x_{i_1},x_{i_1}\geq 0}\Delta\abs{g'\br{x_{i_1}}g'\br{x_{i_2}}g\br{x_{i_3}}G\br{x_{-i_1}}H_{i_1,i_2}H_{i_2,i_3}H_{i_3,i_1}}
       \\
       &\leq&O\br{\Delta\cdot\onenorm{G^{(2)}\br{x}}\norm{H}^3}\\
       &\leq& O\br{\Delta\cdot\log k\cdot\norm{H}^3}.
     \end{eqnarray*}
     For the second term in Eq.~\eqref{eqn:123}, it is not hard to verify that
      \begin{equation}\label{eqn:mvtg'}
        \abs{\frac{g'\br{x_{i_3}}-g'\br{x_{i_2}}}{x_{i_3}-x_{i_2}}}\leq\Delta\max\set{g'\br{x_{i_3}},g\br{x_{i_2}}}
      \end{equation}
       Further notice that $\abs{g'\br{\cdot}}\leq 1$. Applying the mean value theorem to $g$, we upper bound the second summation in \ref{eqn:123} by
      \begin{eqnarray*}
      % \nonumber % Remove numbering (before each equation)
        &&O\br{\Delta\sum_{i_1\neq i_2\neq i_3:\atop x_{i_3}>x_{i_2}>x_{i_1},x_{i_1}< 0}\max\set{g'\br{x_{i_3}},g'\br{x_{i_2}}}g\br{x_{i_3}}G\br{x_{-i_1}}\abs{H_{i_1,i_2}H_{i_2,i_3}H_{i_3,i_1}}}\\
        &\leq&O\br{\Delta\cdot\log k\cdot\max_{i_1}\cdot\onenorm{G^{(1)}\br{x_{-i_1}}}\cdot\max_{i_2}\sum_{i_3}\abs{H_{i_1,i_2}H_{i_2,i_3}H_{i_3,i_1}}}\\
        &\leq&O\br{\Delta\cdot\log^{1.5}k\cdot\norm{H}^3}
      \end{eqnarray*}
    where the first inequality is from the assumption that $\abs{\set{i:x_i<0}}\leq 3\log k$ and the second inequality is from Fact~\ref{fac:benktus2} Eq.~\eqref{eqn:h3}.
    \end{proof}

    \subsection{Upper bounding the second term ($\P$) in Eq.~\eqref{eqn:rk2}}
    \label{sec:2.4}

    Let us rewrite ($\P$) as a sum of two term as in Eq.~\eqref{eq:Pterm}. We first upper bound the first easy term.
    \begin{lemma}\label{lem:444}
    \[
    \abs{\sum_{i_1\neq i_2\neq i_3:\atop x_{i_3}>x_{i_2}>x_{i_1}}\frac{\braket{i_3}'-\braket{i_1}'}{[i_3-i_1]}\cdot\frac{\braket{i_3}-\braket{i_2}}{[i_3-i_2]}\cdot\braket{i_3}}\leq O\br{\Delta\cdot\log^2 k\cdot\norm{H}^3}
    \]
    \end{lemma}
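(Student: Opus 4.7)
\begin{proofof}{Lemma~\ref{lem:444} (proposal).}
Unfolding the shorthand notation, the quantity we must bound is
\[
\sum_{\substack{i_1\neq i_2\neq i_3\\ x_{i_3}>x_{i_2}>x_{i_1}}} \frac{g'(x_{i_3})-g'(x_{i_1})}{x_{i_3}-x_{i_1}}\cdot \frac{g(x_{i_3})-g(x_{i_2})}{x_{i_3}-x_{i_2}}\cdot g(x_{i_3})\cdot G\bigl(x_{-\{i_1,i_2,i_3\}}\bigr)\cdot H_{i_1,i_2}H_{i_2,i_3}H_{i_3,i_1}.
\]
The plan is to apply the mean value theorem separately to each of the two divided differences, arriving at a pointwise product of Gaussian derivatives of $g$ that we can recognize as a term in $\onenorm{G^{(2)}(x)}$ (which is $O(\log k)$ by Fact~\ref{fac:benktus2}), and then to split the sum by the signs of $x_{i_1},x_{i_2}$ exactly as was done for Lemmas~\ref{lem:333} and~\ref{lem:casev111}.

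Concretely, for the first divided difference, the MVT gives $g''(\xi)$ for some $\xi\in[x_{i_1},x_{i_3}]$; by Eq.~\eqref{eqn:g''} we have $|g''(\xi)|\leq |\xi| g'(\xi)\sqrt{2\pi}\leq \Delta\cdot \max\{g'(x_{i_1}),g'(x_{i_3})\}\cdot O(1)$, where the last inequality uses that on the interval $[x_{i_1},x_{i_3}]$ the function $g'$ attains its minimum at one of the endpoints unless the interval straddles $0$ (and the value $g'(0)$ is a universal constant). For the second divided difference, the MVT plus monotonicity of $g'$ on each side of $0$ (cf.\ Eq.~\eqref{eqn:mvtg'}) gives that it is bounded by $\max\{g'(x_{i_2}),g'(x_{i_3})\}$. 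After these two substitutions the summand is bounded by
\[
O(\Delta)\cdot\max\{g'(x_{i_1}),g'(x_{i_3})\}\cdot \max\{g'(x_{i_2}),g'(x_{i_3})\}\cdot g(x_{i_3})\cdot G\bigl(x_{-\{i_1,i_2,i_3\}}\bigr)\cdot |H_{i_1,i_2}H_{i_2,i_3}H_{i_3,i_1}|.
\]

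Now I split the sum into the case $x_{i_1}\geq 0$ (so all three $x_{i_\ell}\geq 0$) and the case $x_{i_1}<0$. In the first case, $g(x_{i_3})\leq 1$ and the combined product $\max\{g'(x_{i_1}),g'(x_{i_3})\}\cdot\max\{g'(x_{i_2}),g'(x_{i_3})\}\cdot G(x_{-\{i_1,i_2,i_3\}})$ is majorised (up to a constant factor absorbing $g(x_{i_\ell})\geq 1/2$) by a term of the form $g'(x_{a})g'(x_{b})\prod_{j\neq a,b}g(x_j)$ which appears in $\onenorm{G^{(2)}(x)}$; hence Fact~\ref{fac:benktus2} and Eq.~\eqref{eqn:h3} yield the bound $O(\Delta\log k\cdot\norm{H}^3)$. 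In the second case, we use the standing assumption $|\{i:x_i<0\}|\leq 3\log k$ to pay a $\log k$ factor for fixing $i_1$, and then bound the resulting sum over $i_2,i_3$ by $\onenorm{G^{(1)}(x_{-i_1})}$ combined with Eq.~\eqref{eqn:h3}, giving $O(\Delta\log^{1.5}k\norm{H}^3)$. Since under $x_{i_2}>x_{i_1}$ the index $i_2$ may also be negative, one may split further $x_{i_2}<0$ vs.\ $x_{i_2}\geq 0$; the former contributes $\log^2 k$ by summing over $O(\log k)$ negative $i_1$ and $O(\log k)$ negative $i_2$, reducing to $\sum_{i_3}|H_{i_1,i_2}H_{i_2,i_3}H_{i_3,i_1}|\leq \norm{H}^3$ via Eq.~\eqref{eqn:h3}, and the latter is handled as above using $g(x_{i_2})\geq 1/2$.

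The main obstacle I anticipate is the same as in Lemmas~\ref{lem:casev111} and~\ref{lem:333}: carefully tracking where the factor $g(x_{i_3})$ gets ``absorbed'' when some of the $x_i$ are negative, since then $g(x_{i_3})$ need not be close to $1$ and the factor $G(x_{-\{i_1,i_2,i_3\}})$ is missing three terms from the full $G(x)$. The bookkeeping is routine but one has to be disciplined about which sign-region forces which Gaussian-derivative factor to be pulled into the $\onenorm{G^{(t)}}$ estimate. Summing the contributions of all sign-cases gives the claimed bound $O(\Delta\log^2 k\cdot\norm{H}^3)$.
\end{proofof}
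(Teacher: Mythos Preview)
Your proposal is correct and takes essentially the same approach as the paper: apply the mean value theorem to each divided difference, then split by the signs of $x_{i_1},x_{i_2}$ and invoke Fact~\ref{fac:benktus2}, Eq.~\eqref{eqn:h3}, and the standing hypothesis $|\{i:x_i<0\}|\le 3\log k$. The paper organizes the case split on the sign of $x_{i_2}$ first (rather than $x_{i_1}$), but the resulting sub-cases and bounds are the same as yours; one cosmetic slip in your write-up is that $g'$ attains its \emph{maximum}, not minimum, at an endpoint of an interval that does not straddle $0$.
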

    \begin{proof}[Proof of Lemma~\ref{lem:444}]
      We split the summation into two cases that $x_{i_2}\geq 0$ and $x_{i_2}<0$. For the case that $x_{i_2}\geq 0$, we apply the mean value theorem to $g\br{\cdot}$ and Eq.~\eqref{eqn:mvtg'}, it is upper bounded by $O\br{\Delta\cdot\log k\cdot\norm{H}^3}$. For the case that $x_{i_2}<0$, we have $x_{i_1}<0$. Note that $\abs{g'\br{\cdot}}\leq 1$. Thus
      it is upper bounded by
      \[O\br{\sum_{i_1\neq i_2:\atop x_{i_1}<x_{i_2}<0}\sum_{i_3}\abs{H_{i_1,i_2}H_{i_2,i_3}H_{i_3,i_1}}}\leq O\br{\log^2k\cdot\norm{H}^3}.\]
    \end{proof}
    Next, our goal is to prove an upper bound on the second term in Eq.~\eqref{eq:Pterm}.

    \begin{lemma}\label{lem:555}
      \[\abs{\sum_{i_1\neq i_2\neq i_3:\atop x_{i_3}>x_{i_2}>x_{i_1}}\frac{\frac{\braket{i_3}'-\braket{i_1}'}{[i_3-i_1]}-\frac{\braket{i_2}'-\braket{i_1}'}{[i_2-i_1]}}{[i_3-i_2]}\braket{i_2}\braket{i_3}}\leq O\br{\sqrt{\log k}\cdot\norm{H}^3}.\]
    \end{lemma}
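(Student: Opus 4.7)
\medskip

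The plan is to follow the same template as Lemmas~\ref{lem:3}--\ref{lem:casev111}: first apply the mean value theorem for divided differences (Fact~\ref{fac:mvtdd}) to collapse the double divided difference into a single third derivative of $g$, and then exploit the explicit identity $g'''(x)=(x^{2}-1)g'(x)$ together with the Bentkus bound $\|G^{(t)}\|_{1}=O(\log^{t/2}k)$ (Fact~\ref{fac:benktus2}) and the cubic-$H$ bound in Eq.~\eqref{eqn:h3}. Concretely, the summand's first factor is the second divided difference $(g')^{[2]}(x_{i_1},x_{i_2},x_{i_3})$, which by Fact~\ref{fac:mvtdd} equals $g'''(\xi)/2$ for some $\xi\in[x_{i_1},x_{i_3}]$. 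Using $g'''(\xi)=(\xi^{2}-1)g'(\xi)$ and the elementary bound $\xi^{2}g'(\xi)\le 2/(e\sqrt{2\pi})$ (plus $g'(\xi)\le 1/\sqrt{2\pi}$), we get $|g'''(\xi)|=O(1)$ uniformly. I also use the identity $g(x_{i_2})g(x_{i_3})G(x_{-\{i_1,i_2,i_3\}})=G(x_{-i_1})$ to rewrite the summand as $\tfrac12\,g'''(\xi)\,G(x_{-i_1})\,H_{i_1,i_2}H_{i_2,i_3}H_{i_3,i_1}$.

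The key structural observation is that $g'''(x_{i_1})G(x_{-i_1})=\partial_{i_1}^{3}G(x)$, so when we approximate $g'''(\xi)\approx g'''(x_{i_1})$, the sum has a nice diagonal form. I would write $g'''(\xi)=g'''(x_{i_1})+[g'''(\xi)-g'''(x_{i_1})]$ and treat the two pieces separately. For the leading piece, the $(i_2,i_3)$-sum can be done \emph{without absolute values}: since $H$ is symmetric,
\[
\sum_{i_2,i_3}H_{i_1,i_2}H_{i_2,i_3}H_{i_3,i_1}=(H^{3})_{i_1,i_1},\qquad |(H^{3})_{i_1,i_1}|\le\|H\|^{3},
\]
so the leading part is at most $\tfrac12\sum_{i_1}|\partial_{i_1}^{3}G(x)|\cdot\|H\|^{3}\le O(\log^{3/2}k)\|H\|^{3}$ by Fact~\ref{fac:benktus2}. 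For the error piece, I invoke the uniform boundedness $|g^{(4)}|=O(1)$ (derived from $g^{(4)}(u)=-u(u^{2}-3)g'(u)$) so that $|g'''(\xi)-g'''(x_{i_1})|\le O(|\xi-x_{i_1}|)$, and then split into sign cases on $(x_{i_1},x_{i_2},x_{i_3})$ following the pattern of Lemmas~\ref{lem:3}--\ref{lem:casev111}: when $x_{i_1}\ge 0$ I use $g'(\xi)\le g'(x_{i_1})$ to turn $g'(\xi)G(x_{-i_1})$ into $\partial_{i_1}G(x)$, giving a $\|G^{(1)}\|_{1}=O(\sqrt{\log k})$ bound; when $x_{i_1}<0$ I use that in the ambient Case 2 of Lemma~\ref{lem:casev} there are at most $3\log k$ such indices, and bound the sum over $(i_2,i_3)$ cheaply via Eq.~\eqref{eqn:h3}.

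The main obstacle is that if one prematurely takes absolute values inside the $(i_2,i_3)$-sum, then $\sum_{i_2,i_3}|H_{i_1,i_2}H_{i_2,i_3}H_{i_3,i_1}|$ can be as large as $\sqrt{k}\,\|H\|^{3}$ (by Cauchy--Schwarz), which would spoil the claimed bound by a $\sqrt{k}$ factor. The only way to avoid this loss is to keep the signs intact and use the exact identity $\sum_{i_2,i_3}H_{i_1,i_2}H_{i_2,i_3}H_{i_3,i_1}=(H^{3})_{i_1,i_1}$, which requires that the $(i_2,i_3)$-dependence of the ``scalar'' factor $g'''(\xi)G(x_{-i_1})$ be stripped off. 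That is precisely why the splitting $g'''(\xi)=g'''(x_{i_1})+[\,g'''(\xi)-g'''(x_{i_1})\,]$ is crucial: the leading term is $(i_2,i_3)$-free and lets us plug in $(H^{3})_{i_1,i_1}$, while the remainder carries a Lipschitz factor $|\xi-x_{i_1}|$ that cancels the divided-difference denominators in a way similar to the analysis of Lemma~\ref{lem:333}. Putting the pieces together and using that the ordering constraint $x_{i_3}>x_{i_2}>x_{i_1}$ only restricts the sum (so upper bounds transfer), I expect to obtain the stated $O(\sqrt{\log k}\,\|H\|^{3})$ bound, with the $\sqrt{\log k}$ coming from the application of $\|G^{(1)}\|_{1}$ after the Taylor error is controlled.
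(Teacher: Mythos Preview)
Your splitting $g'''(\xi)=g'''(x_{i_1})+[\,g'''(\xi)-g'''(x_{i_1})\,]$ is a natural idea, but the argument has a genuine gap in both pieces.

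For the leading piece, the statement that ``the ordering constraint $x_{i_3}>x_{i_2}>x_{i_1}$ only restricts the sum (so upper bounds transfer)'' is false for \emph{signed} sums: cancellation in $\sum_{i_2,i_3}H_{i_1,i_2}H_{i_2,i_3}H_{i_3,i_1}=(H^3)_{i_1,i_1}$ is exactly what makes the bound $\|H\|^3$ work, and restricting the index set can destroy it. (You could rescue this particular piece by symmetrizing in $(i_2,i_3)$ and using projections onto $\{j:x_j>x_{i_1}\}$ to get $(HP_{i_1}HP_{i_1}H)_{i_1,i_1}$, which is still $\le\|H\|^3$; but that is not the reasoning you gave.)

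The more serious issue is the error piece. After you apply the mean value theorem, the denominators are \emph{gone}: there is nothing left to ``cancel'' against the Lipschitz factor $|\xi-x_{i_1}|$. The remainder $g'''(\xi)-g'''(x_{i_1})$ still depends on $(i_2,i_3)$ through $\xi=\xi_{i_1,i_2,i_3}$, so it cannot be pulled out of the inner sum. At that point you are forced to take absolute values inside, and $\sum_{i_2,i_3}|H_{i_1,i_2}H_{i_2,i_3}H_{i_3,i_1}|$ can be as large as $\sqrt{k}\,\|H\|^3$ (take $H$ a symmetric random-sign matrix). The analogy with Lemma~\ref{lem:333} breaks because there one has \emph{two} $g'$-factors (the $\braket{i_1}'$ that was already present, plus the one from the MVT), which lets $\|G^{(2)}\|_1$ absorb two of the three index sums and leaves only a one-index $H$-sum controlled by Eq.~\eqref{eqn:h3}. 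In Lemma~\ref{lem:555} you only ever produce a single $g'$-type factor, so after $\|G^{(1)}\|_1$ you still face a two-index absolute $H$-sum.

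The paper avoids this by \emph{not} collapsing the divided difference pointwise. It recognises the entire expression as (a restriction of) $\Tr\big(D^2(e^{-X^2/2})[A,A^T]\,\widetilde H\big)$ for suitable matrices built from $H$ (encoding the ordering via the upper-triangular part, whose norm is controlled by Fact~\ref{fac:bhatia}), and then bounds the trace via Dyson's integral expansion (Lemma~\ref{lem:ex2derivative} and Lemma~\ref{lem:trd}). This keeps all the signs intact throughout and yields a bound of the form $\Delta^2\cdot G(x)\,\|He^{-X^2/2}\|_2^2\,\|H\|$, which then feeds into $\|G^{(1)}\|_1$. The matrix-analytic step is precisely what replaces your missing second smallness factor.
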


    Before we prove this lemma, we first prove a ``simpler" proposition which will be crucial in upper bound the above.

    \begin{proposition}\label{claim:v5}
    \[\abs{\sum_{i_1\neq i_2\neq i_3}\frac{\frac{\braket{i_1}'-\braket{i_3}'}{[i_3-i_1]}-\frac{\braket{i_1}'-\braket{i_2}'}{[i_2-i_1]}}{[i_3-i_2]}}\leq O\br{\Delta^2\cdot\sqrt{\log k}\cdot\norm{H}^3}\]
    \end{proposition}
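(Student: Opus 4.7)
The plan is to reduce the second-order divided difference appearing in the summand to a single third-derivative value via the mean value theorem, then exploit the Gaussian decay of $g'''$ together with the few-negatives assumption in force throughout this subsection.

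First, I observe that the inner expression (stripped of the implicit factor $G(x_{-\{i_1,i_2,i_3\}})H_{i_1,i_2}H_{i_2,i_3}H_{i_3,i_1}$) is, up to sign, the second-order divided difference $(g')^{[2]}(x_{i_1},x_{i_2},x_{i_3})$. By Fact~\ref{fac:mvtdd}, this equals $g'''(\xi_{i_1,i_2,i_3})/2$ for some $\xi_{i_1,i_2,i_3}$ in the convex hull of $\{x_{i_1},x_{i_2},x_{i_3}\}$, so in particular $|\xi_{i_1,i_2,i_3}|\leq \Delta$. Using Eq.~\eqref{eqn:g'''}, $|g'''(\xi)|=|\xi^2-1|g'(\xi)\sqrt{2\pi}$, and since $\Delta\geq 1$ I bound $|g'''(\xi_{i_1,i_2,i_3})|\leq 2\Delta^{2}g'(\xi_{i_1,i_2,i_3})$. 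This reduces the task to showing
\[
\sum_{i_1\neq i_2\neq i_3} g'(\xi_{i_1,i_2,i_3})\,G(x_{-\{i_1,i_2,i_3\}})\,|H_{i_1,i_2}H_{i_2,i_3}H_{i_3,i_1}|\leq O(\sqrt{\log k}\cdot\|H\|^{3}).
\]

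Next, I split the sum according to the sign pattern of $(x_{i_1},x_{i_2},x_{i_3})$. In the ``same-sign'' regime, $\xi_{i_1,i_2,i_3}$ lies on the same side of $0$ as the three points, and since $g'$ is unimodal with peak at $0$ and monotone on each half-line, one obtains $g'(\xi_{i_1,i_2,i_3})\leq \max_{j}g'(x_{i_j})\leq\sum_{j}g'(x_{i_j})$. By the symmetry of the summand under permutations of the three indices, it suffices (up to a factor of $3$) to bound the contribution with $g'(x_{i_1})$. The key identity is then $g'(x_{i_1})G(x_{-\{i_1,i_2,i_3\}})=|\partial_{i_1}G(x_{-\{i_2,i_3\}})|$, so summing over $i_1$ introduces a factor $\|G^{(1)}\|_1=O(\sqrt{\log k})$ via Fact~\ref{fac:benktus2}, and the remaining sum over $i_2,i_3$ against $|H_{i_1,i_2}H_{i_2,i_3}H_{i_3,i_1}|$ is controlled by Eq.~\eqref{eqn:h3} to yield the $\|H\|^{3}$ factor, closely mirroring the computations already used in Lemmas~\ref{lem:term4} and~\ref{lem:bound5}.

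In the ``opposite-sign'' regime, where some $x_{i_j}<0$ and some $x_{i_{j'}}>0$, the point $\xi$ can approach $0$ and $g'(\xi)$ is only bounded by the constant $g'(0)=1/\sqrt{2\pi}$. Here I exploit the standing assumption of this subsection (Case~2 of Lemma~\ref{lem:casev}) that $|\{i:x_i<0\}|\leq 3\log k$: any triple contributing to this regime must contain at least one index $i_j$ with $x_{i_j}<0$, so WLOG fix such an $i_j$ and pay a factor of $3\log k$ from this outer sum, then sum over the remaining two indices using $G\leq 1$ together with Eq.~\eqref{eqn:h3} to extract $\|H\|^{3}$. Collecting both regimes and multiplying by the prefactor $\Delta^{2}$ gives the claimed $O(\Delta^{2}\sqrt{\log k}\cdot\|H\|^{3})$.

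The main obstacle I anticipate is the opposite-sign regime: the straightforward bound $g'(\xi)\leq g'(0)$ discards all Gaussian decay, so one must be careful to use the few-negatives assumption to limit the number of contributing triples and, if the Eq.~\eqref{eqn:h3}-style bound is insufficient, to peel off an additional factor of $G$ from a positive index in analogy with the second summation of Eq.~\eqref{eqn:3term} in Lemma~\ref{lem:3}. The same-sign analysis, by contrast, is a clean adaptation of the $\|G^{(1)}\|_1$-based arguments already established in Sections~\ref{sec:b4}--\ref{sec:b5}.
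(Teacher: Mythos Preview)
Your reduction via the mean value theorem for divided differences is correct up to and including the bound $|g'''(\xi)|\leq 2\Delta^{2}g'(\xi)$, and the symmetry argument that lets you replace $\max_{j}g'(x_{i_{j}})$ by $3g'(x_{i_{1}})$ is also fine. The real gap is the next step: you claim that ``the remaining sum over $i_{2},i_{3}$ against $|H_{i_{1},i_{2}}H_{i_{2},i_{3}}H_{i_{3},i_{1}}|$ is controlled by Eq.~\eqref{eqn:h3}.'' But Eq.~\eqref{eqn:h3} bounds only a \emph{single}-index sum, $\sum_{i_{3}}|H_{i_{1},i_{2}}H_{i_{2},i_{3}}H_{i_{3},i_{1}}|\leq\|H\|^{3}$. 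After you have spent your lone $g'$ factor on $i_{1}$ and absorbed it into $\|G^{(1)}\|_{1}$, you are left with a \emph{double} sum $\sum_{i_{2},i_{3}}|H_{i_{1},i_{2}}H_{i_{2},i_{3}}H_{i_{3},i_{1}}|=(\abs{H}^{3})_{i_{1},i_{1}}$, which is controlled only by $\|\,\abs{H}\,\|^{3}$, not by $\|H\|^{3}$. For a symmetric $\pm 1$ matrix with zero diagonal this double sum is $\Theta(k^{2})$ while $\|H\|^{3}=\Theta(k^{3/2})$, so the argument fails by a factor $\sqrt{k}$. Your proposal is in fact exactly the strategy the paper uses in Case~1 (Lemma~\ref{lem:negative}), where the resulting $k$-blowup is absorbed because there $\|G^{(1)}(x)\|_{1}=O(1/k^{2})$; in Case~2 that compensation is unavailable, which is precisely why this proposition exists.

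The paper's proof takes a genuinely different route that avoids taking absolute values inside the triple sum. Using Fact~\ref{fac:fredivided} it identifies the entire expression as $\frac{G(x)}{\sqrt{2\pi}}\bigl|\Tr\bigl[D^{2}(e^{-X^{2}/2})[H,H]\,H\bigr]\bigr|$, then expands $D^{2}(e^{-X^{2}/2})$ via the Dyson integral formula (Lemma~\ref{lem:ex2derivative}) and bounds the resulting traces with Lemma~\ref{lem:xhxh} and Claim~\ref{claim:xhxh}. This yields $\bigl|\Tr[\cdots]\bigr|\leq O(\Delta^{2})\,\|He^{-X^{2}/2}\|_{2}^{2}\,\|H\|$. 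The crucial point is the Frobenius-squared structure $\|He^{-X^{2}/2}\|_{2}^{2}=\sum_{i,j}H_{i,j}^{2}e^{-x_{j}^{2}}$: two of the three $H$-entries collapse into $(H^{2})_{j,j}\leq\|H\|^{2}$, leaving only \emph{one} free index to be summed against $g'(x_{j})G(x_{-j})$, so a single $\|G^{(1)}\|_{1}=O(\sqrt{\log k})$ suffices. This also explains why the paper's proof needs neither the few-negatives assumption nor any sign-based case split.
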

    \begin{proof}
    Using Fact~\ref{fac:fredivided},
    \begin{eqnarray*}
    % \nonumber % Remove numbering (before each equation)
      \abs{\sum_{i_1\neq i_2\neq i_3}\frac{\frac{g'\br{x_{i_3}}-g'\br{x_{i_1}}}{x_{i_3}-x_{i_1}}-\frac{g'\br{x_{i_2}}-g'\br{x_{i_1}}}{x_{i_2}-x_{i_1}}}{x_{i_3}-x_{i_2}}G\br{x}H_{i_1,i_2}H_{i_2,i_3}H_{i_3,i_1}}=\frac{1}{\sqrt{2\pi}}\abs{\Tr\Br{ D^2\br{e^{-\frac{X^2}{2}}}[H,H]\cdot H}}G\br{x},
    \end{eqnarray*}
    where $X=\diag\br{x_1,\ldots, x_n}$.
    Using Lemma~\ref{lem:ex2derivative}, it suffices to upper bound
    \begin{equation}\label{eqn:ex2d1}
      G\br{x}\abs{\Tr \Br{ e^{-\frac{uX^2}{2}}\br{XH+HX}e^{-\frac{v\br{1-u}X^2}{2}}\br{XH+HX}e^{-\frac{\br{1-v}\br{1-u}X^2}{2}}H}}
    \end{equation}
    and
    \begin{equation}\label{eqn:ex2d2}
      G\br{x}\abs{\Tr\Br{e^{\frac{\br{u-1}X^2}{2}}H^2e^{-\frac{uX^2}{2}}H}}
    \end{equation}
    Note that $u+v\br{1-u}+\br{1-v}\br{1-u}$=1. At least two of these three quantities are at most $\frac{1}{2}$.  We upper bound Eq.~\eqref{eqn:ex2d1} in the following three cases.

    If $u\leq\frac{1}{2}$ and $\br{1-u}\br{1-v}\leq\frac{1}{2}$, using Claim~\ref{claim:xhxh}, Lemma~\ref{lem:xhxh} and the fact that $\twonorm{He^{-X^2/2}}=\twonorm{e^{-X^2/2}H}$ as $H$ is symmetric Eq.~\eqref{eqn:ex2d1} is upper bounded by
    \begin{eqnarray*}
    % \nonumber % Remove numbering (before each equation)
      \big\|\br{XH+HX}e^{-\frac{X^2}{2}}\big\|_2^2\norm{H}\leq16 \Delta^2\twonorm{He^{-X^2/2}}^2\cdot\norm{H}.
    \end{eqnarray*}

    If $u\leq\frac{1}{2}$ and $v\br{1-u}\leq\frac{1}{2}$, then the Eq.~\eqref{eqn:ex2d1} is upper bounded by
    \begin{eqnarray*}
    	\twonorm{\br{XH+HX}e^{-\frac{X^2}{2}}}\cdot\twonorm{He^{-\frac{X^2}{2}}}\norm{XH+HX}&\leq& 2\Delta\twonorm{He^{-X^2/2}}^2\cdot \norm{XH+HX}\\
    	&\leq& 4\Delta^2\twonorm{He^{-X^2/2}}^2\cdot\norm{H}.
    \end{eqnarray*}
    where the second last inequality is by Claim~\ref{claim:xhxh}. The case that $u\br{1-v}\leq\frac{1}{2}$ and $v\br{1-u}\leq\frac{1}{2}$ follows similarly.    Also Eq.~\eqref{eqn:ex2d2} can be upper bounded with similar arguments. Thus
    \begin{equation}\label{eqn:normx2h2x2}
      G\br{x}\cdot \abs{\Tr~e^{\br{u-1}X^2/2}H^2e^{-uX^2/2}H}\leq 16\Delta^2 G\br{x}\norm{H}\cdot \twonorm{He^{-X^2/2}}^2.
    \end{equation}
    Therefore,
    \begin{eqnarray}
    % \nonumber % Remove numbering (before each equation)
      &&\abs{\sum_{i_1\neq i_2\neq i_3}\frac{\frac{g'\br{x_{i_3}}-g'\br{x_{i_1}}}{x_{i_3}-x_{i_1}}-\frac{g'\br{x_{i_2}}-g'\br{x_{i_1}}}{x_{i_2}-x_{i_1}}}{x_{i_3}-x_{i_2}}G\br{x}H_{i_1,i_2}H_{i_2,i_3}H_{i_3,i_1}}\\
       \nonumber\\
       &\leq&\br{G(x)\cdot \Tr\Br{ D^2\br{e^{-\frac{X^2}{2}}}[H,H]\cdot H}}\nonumber\\
       &\leq&O\br{\Delta^2 G\br{x}\norm{H}\twonorm{He^{-X^2/2}}^2}\nonumber\\
      &=&O\br{\Delta^2\sum_{i_1,i_2}e^{-x_{i_2}^2}H_{i_1,i_2}^2G\br{x}\cdot\norm{H}}\nonumber\\
      &\leq &O\br{\Delta^2\sum_{i_1,i_2}g'\br{x_{i_2}}G\br{x}H_{i_1,i_2}^2\cdot\norm{H}}\quad\nonumber\\
      &\leq&O\br{\Delta^2\sum_{i_1,i_2}g'\br{x_{i_2}}G\br{x_{-i_2}}H_{i_1,i_2}^2\cdot\norm{H}}\nonumber\\
      &\leq&O\br{\Delta^2\onenorm{G^{(1)}\br{x}}\cdot\br{\max_{i_2}\sum_{i_1}H_{i_1,i_2}^2}\cdot\norm{H}}\nonumber\\
      &\leq&O\br{\Delta^2\onenorm{G^{(1)}\br{x}}\cdot\max_{i_2}\br{H^2}_{i_2,i_2}\cdot\norm{H}}\nonumber\\
      &\leq&O\br{\Delta^2\cdot\sqrt{\log k}\cdot\norm{H}^3},\label{eqn:3dgh}
    \end{eqnarray}
    where the second inequality used $e^{-x_i^2/2}\leq 1$, third inequality used $g(x)\in [0,1]$ and the last inequality is from Fact~\ref{fac:benktus2}.
    \end{proof}

    We are now ready to prove the main lemma.   Note that end of the day we need to bound the inequality in Lemma~\ref{lem:555} which can be written as 
    %Note that end of the day we need to bound the second term in Eq.~\eqref{eq:Pterm} when the quantity in Lemma~\ref{lem:444} contains $G(x_{-i_1})$ instead of $G(x_{-\{i_1,i_2,i_3\}})$. Observe that the inequality in Lemma~\ref{lem:555} can be written as
      \begin{align}
      \label{eq:toprovelastcase}
      \abs{\sum_{i_1\neq i_2\neq i_3\atop i_1<i_2,i_1<i_3}\frac{\frac{g'\br{x_{i_1}}-g'\br{x_{i_3}}}{x_{i_3}-x_{i_1}}-\frac{g'\br{x_{i_1}}-g'\br{x_{i_2}}}{x_{i_2}-x_{i_1}}}{x_{i_3}-x_{i_2}}G\br{x_{-\set{i_1}}}H_{i_1,i_2}H_{i_2,i_3}H_{i_3,i_1}}\leq O\br{\Delta^2\cdot\log^{2.5}k\cdot\norm{H}^3}
      \end{align}

    Observe that in this section we are concerned with $x_{i_1}<x_{i_2}<x_{i_3}$ so the summation in this lemma and the equation above are over the same indices.

    \begin{proof}[Proof of Lemma~\ref{lem:555}] By the paragraph above, proving this lemma is equivalent to proving Eq.~\eqref{eq:toprovelastcase}.
        The left hand side of Eq.~\eqref{eq:toprovelastcase} can be expressed as
      \begin{eqnarray}
      % \nonumber % Remove numbering (before each equation)
        &&\abs{\frac{1}{\sqrt{2\pi}}\sum_{i_1}G\br{x_{-i_1}}\br{\Tr~D^2\br{e^{-X^2/2}}\Br{A^{i_1},\br{A^{i_1}}^T}H}}\nonumber=\abs{\frac{1}{\sqrt{2\pi}}\sum_{i_1:x_{i_1}<0}\br{\cdots}+\frac{1}{\sqrt{2\pi}}\sum_{i_1:x_{i_1}\geq 0}\br{\cdots}}\nonumber\\
        &&\label{eqn:gi3}
        \end{eqnarray}
For the first summation above, let
        \begin{equation*}
          \br{A^{i_1}}_{i,j}=\begin{cases}
                                H_{i,i_1}, & \mbox{if $j=i_1$ and $i>i_1$}  \\
                                0, & \mbox{otherwise}.
                              \end{cases}
        \end{equation*}
        Note that $\twonorm{He^{-X^2/2}}=\twonorm{e^{-X^2/2}H}$ as $H$ is symmetric.  
        Using the same argument as Eq.~\eqref{eqn:ahzero}, we have
        \[\max\set{\twonorm{A^{i_1}e^{-X^2/2}},\twonorm{\br{A^{i_1}}^Te^{-X^2/2}},\twonorm{e^{-X^2/2}A^{i_1}},\twonorm{e^{-X^2/2}\br{A^{i_1}}^T}}\leq\twonorm{He^{-X^2/2}}.\]
        Further notice that  $\twonorm{A^{i_1}e^{-X^2/2}}\leq\twonorm{He^{-X^2/2}},\norm{A^{i_1}}\leq\norm{H},
    $. Following the same proof of Proposition~\ref{claim:v5},  we can upper bound the first summation in Eq.~\eqref{eqn:gi3}~by
    \begin{eqnarray*}
        % \nonumber % Remove numbering (before each equation)
          &&\frac{1}{\sqrt{2\pi}}\sum_{i_1:x_{i_1}<0}G\br{x_{-i_1}}\twonorm{He^{-X^2/2}}^2\cdot\norm{H}\leq O\br{\Delta^2\cdot \log^{2.5}k \norm{H}^3},
        \end{eqnarray*}
        where the inequality follows from the argument in Eq.~\eqref{eqn:firstsummation}.    For the second summation, define
    \begin{equation*}
      B_{i_1,i_2}=\begin{cases}
                          \frac{H_{i_1,i_2}}{\sqrt{g\br{x_{i_1}}}}, & \mbox{if $x_{i_1}\geq0$} \\
                          0, & \mbox{otherwise}.
                        \end{cases}
    \end{equation*}
    Note that 
    $$\max\set{\twonorm{Be^{-X^2/2}},\twonorm{B^Te^{-X^2/2}},\twonorm{e^{-X^2/2}B},\twonorm{e^{-X^2/2}B^T}}\leq\sqrt{2}\twonorm{He^{-X^2/2}}$$ and $\norm{B}\leq\sqrt{2}\norm{H}$ (since $g(x)\geq 1/2$ for $x\geq 0$).
    Then the second summation in Eq.~\eqref{eqn:gi3} is equal to
    \begin{eqnarray*}
    % \nonumber % Remove numbering (before each equation)
      \abs{G\br{x}\frac{1}{\sqrt{2\pi}}\Tr~D^2\br{e^{-X^2/2}}\Br{B,B^T}A}\leq\frac{1}{\sqrt{\pi}}G\br{x}\twonorm{He^{-X^2/2}}\norm{H}\leq O\br{\Delta^2\cdot\log^{1.5} k\cdot\norm{H}^3}
    \end{eqnarray*}
    where the inequality follows from the argument in Eq.~\eqref{eqn:firstsummation}.
    \end{proof}
    \end{document}